\newcounter{mynotes}
\declaretheorem[within=section]{theorem}
\declaretheorem[sibling=theorem]{corollary}
\declaretheorem[sibling=theorem]{lemma}
\declaretheorem[sibling=theorem]{claim}
\declaretheorem[sibling=theorem]{definition}
\declaretheorem[sibling=theorem]{proposition}
\declaretheorem[sibling=theorem]{remark}
\declaretheorem[sibling=theorem]{conjecture}
\declaretheorem[sibling=theorem]{fact}
\def\ba{{\mathbf a}}
\def\bb{{\mathbf b}}
\def\bu{{\mathbf u}}
\def\bv{{\mathbf v}}
\def\bx{{\mathbf x}}
\def\by{{\mathbf y}}
\newcommand{\R}{\mathbb{R}} 
\newcommand{\Z}{\mathbb{Z}} 
\newcommand{\F}{\mathbb{F}}
\newcommand{\cD}{\mathcal D}
\newcommand{\cF}{\mathcal F}
\DeclarePairedDelimiter\ceil{\lceil}{\rceil} 
\newcommand{\bigexp}[1]{\exp\left(#1\right)} 
\newcommand{\indicator}{\mathds{1}} 
\newcommand{\inpro}[2]{\left\langle #1,#2 \right\rangle} 
\newcommand{\linearspan}[1]{\mathsf{span}\{#1\}} 
\newcommand{\E}{\mathbb{E}} 
\newcommand{\poly}{\mathrm{poly}}
\newcommand{\set}[1]{\{#1\}}
\newcommand{\bits}{\set{0,1}}
\newcommand{\sbits}{\set{-1,1}}
\renewcommand{\epsilon}{\varepsilon}
\newcommand{\eps}{\epsilon}
  \newcommand{\beq}{\begin{equation}}
  \newcommand{\eeq}{\end{equation}}
  \newcommand{\beqn}{\begin{equation*}}
  \newcommand{\eeqn}{\end{equation*}}
  \newcommand{\beqr}{\begin{eqnarray}}
  \newcommand{\eeqr}{\end{eqnarray}}
  \newcommand{\beqrn}{\begin{eqnarray*}}
  \newcommand{\eeqrn}{\end{eqnarray*}}
  \newcommand{\bmline}{\begin{multline}}
  \newcommand{\emline}{\end{multline}}
  \newcommand{\bmlinen}{\begin{multline*}}
  \newcommand{\emlinen}{\end{multline*}}
\renewcommand{\le}{\leqslant}
\renewcommand{\ge}{\geqslant}
\newcommand{\modcsp}{\operatorname{\mathsf{MOD-CSP}}}
\newcommand{\Pol}{\operatorname{Pol}}
\newcommand{\FindMinimal}{\operatorname{\mathsf{FindMinimal}}}
\newcommand{\ar}{\mathsf{arity}}
\newcommand{\CSP}{\mathsf{CSP}}
\newcommand{\OR}{\mathsf{OR}}
\newcommand{\MAJ}{\mathsf{MAJ}}
\newcommand{\AND}{\mathsf{AND}}
\newcommand{\NAND}{\mathsf{NAND}}
\newcommand{\ETH}{\mathsf{ETH}}
\newcommand{\ot}{\leftarrow}
\newcommand{\Ham}{\mathsf{Ham}}
\newcommand{\allones}{\mathbf{1}}
\newcommand{\cov}{\operatorname{\mathsf{cov}}}
\newcommand{\ind}{\indicator}
\newcommand{\bset}{\bits}
\newcommand{\ones}{\allones}
\newcommand{\MOD}{\mathsf{MOD}}
\newcommand{\linmod}{\operatorname{\mathsf{LIN-2-MOD}}}
\newcommand{\hornsatmod}{\operatorname{\mathsf{HORN-SAT-MOD}}}
\newcommand{\dualhornsatmod}{\operatorname{\mathsf{DUAL-HORN-SAT-MOD}}}
\newcommand{\twosat}{\operatorname{\mathsf{2-SAT}}}
\newcommand{\threesat}{\operatorname{\mathsf{3-SAT}}}
\newcommand{\lintwo}{\operatorname{\mathsf{LIN-2}}}
\newcommand{\hornsat}{\operatorname{\mathsf{HORN-SAT}}}
\newcommand{\dualhornsat}{\operatorname{\mathsf{DUAL-HORN-SAT}}}
\newcommand{\threexor}{\operatorname{\mathsf{3-XOR}}}
\newcommand{\twosatmod}{\operatorname{\mathsf{2-SAT-MOD}}}
\newcommand{\submod}{\operatorname{\mathsf{SUBMOD-MIN-MOD}}}
\newcommand{\hornsatnoconst}{\operatorname{\mathsf{HORN-SAT-NO-CONSTANTS}}}
\newcommand{\dualhornsatnoconst}{\operatorname{\mathsf{DUAL-HORN-SAT-NO-CONSTANTS}}}
\newcommand{\lintwoevenzero}{\operatorname{\mathsf{LIN-2-EVEN-ZERO}}}
\newcommand{\coeffnorm}[1]{\left|#1\right|}
\newcommand{\ORdim}{\cD}
\title{CSPs with Global Modular Constraints: \\
	Algorithms and Hardness via Polynomial Representations}
\author{Joshua Brakensiek\thanks{Department of Computer Science, Stanford University, Stanford, CA. Email: {\tt jbrakens@stanford.edu}. Portions of this work were done while at Carnegie Mellon University and during a visit to Microsoft Research, Redmond. Research supported in part by NSF CCF-1526092 and an NSF Graduate Research Fellowship.} \and Sivakanth Gopi\thanks{Microsoft Research, Redmond, WA. Email: {\tt sigopi@microsoft.com}. Part of the research was done while the author was a student at Princeton University where he was supported by NSF CAREER award 1451191 and NSF grant CCF-1523816.} \and Venkatesan Guruswami\thanks{Computer Science Department, Carnegie Mellon University, Pittsburgh, PA 15213. Email: {\tt venkatg@cs.cmu.edu}. Research supported in part by NSF grants CCF-1422045 and CCF-1526092.} }
\date{}
\begin{document}

\maketitle
\thispagestyle{empty}

\begin{abstract}
We study the complexity of Boolean constraint satisfaction problems (CSPs) when the assignment must have Hamming weight in some congruence class modulo $M$, for various choices of the modulus $M$. Due to the known classification of tractable Boolean CSPs, this mainly reduces to the study of three cases: $\twosat$, $\hornsat$, and $\lintwo$ (linear equations mod $2$). We classify the moduli $M$ for which these respective problems are polynomial time solvable, and when they are not (assuming the ETH). Our study reveals that this modular constraint lends a surprising richness to these classic, well-studied problems, with interesting broader connections to complexity theory and coding theory. The $\hornsat$ case is connected to the covering complexity of polynomials representing the $\NAND$ function mod $M$. The $\lintwo$ case is tied to the sparsity of polynomials representing the $\OR$ function mod $M$, which in turn has connections to modular weight distribution properties of linear codes and locally decodable codes. In both cases, the analysis of our algorithm as well as the hardness reduction rely on these polynomial representations, highlighting an interesting algebraic common ground between hard cases for our algorithms and the gadgets which show hardness. These new complexity measures of polynomial representations merit further study.

The inspiration for our study comes from a recent work by N\"{a}gele, Sudakov, and Zenklusen on submodular minimization with a global congruence constraint. Our algorithm for $\hornsat$ has strong similarities to their algorithm, and in particular identical kind of set systems arise in both cases. Our connection to polynomial representations leads to a simpler analysis of such set systems, and also sheds light on (but does not resolve) the complexity of submodular minimization with a congruency requirement modulo a composite $M$.
\end{abstract}

\vspace{-3ex}
\tableofcontents
\thispagestyle{empty}

\pagebreak

\section{Introduction}\label{sec:introduction}

We study how the complexity of tractable cases of Boolean constraint satisfaction problems, namely $\twosat$, $\hornsat$ and $\lintwo$ (linear equations over $\mathbb F_2$), is affected when we seek a solution that obeys a global modular constraint, such as Hamming weight being divisible by some modulus $M$. As our work reveals, this seemingly simple twist lends a remarkable amount of richness to these classic problems, raising new questions concerning polynomial representations of simple Boolean functions that form the common meeting ground of both algorithmic and hardness results.

The inspiration for our study comes from a beautiful recent work on minimizing a submodular function in the presence of a global modular constraint~\cite{NageleSZ18}. This framework captures questions such as:  Given a graph $G$, find the minimum cut one of whose sides has size divisible by $6$. The complexity of this basic question remains open. Surprisingly, the same combinatorial set system that governed the complexity of the algorithms in \cite{NageleSZ18} arises in our study of $\hornsat$ with a modular constraint. We connect such set systems to polynomial representations of the NAND function, thereby shedding light on submodular minimization with a global constraint involving a composite modulus, a case not handled by \cite{NageleSZ18}. We describe this  connection, as well as the relation to integer programs of bounded modularity that partly motivated \cite{NageleSZ18} in Section~\ref{subsec:intro-IP}.

\subsection{CSPs and modular CSPs}

We now describe some of our other, intrinsic motivations to study the modular variant of constraint satisfaction problems (CSPs). CSPs have a storied place in computational complexity, spurring several of its most influential developments such as NP-completeness, Schaefer's dichotomy theorem~\cite{Schaefer:1978}, the PCP theorem and the Unique Games conjecture (which together have led to the very rich field of inapproximability), and the algebraic program for studying CSPs inspired by the Feder-Vardi~\cite{DBLP:journals/siamcomp/FederV98} and crystallized by~Bulatov, Jeavons and Krokhin~\cite{Bulatov2005}, which recently culminated in the resolution of the CSP dichotomy conjecture~\cite{DBLP:conf/focs/Bulatov17,DBLP:conf/focs/Zhuk17}.

One reason that CSPs receive so much attention is that the local nature  of their constraints offers just the right amount of structure to aid the development of novel algorithmic and hardness techniques, which then often extend to more general settings. For instance, semidefinite programming which was first used in approximating the Max-Cut CSP, has been one of the most influential algorithmic tools in approximating a whole variety of problems. On the hardness side, the PCP theorem, which is a statement about hardness of approximating Max-CSP, in combination with clever gadgets has led to inapproximability results for covering, packing, cut, routing, and other classes of problems. Further, for the problem of satisfiability of CSPs, we have a precise understanding of the interplay between mathematical structure and tractability: efficient algorithms exist iff the problem admits non-trivial ``polymorphisms" which are operations under which the solution space is closed.

One enhancement to the CSP framework would be to impose some \emph{global} constraint on the solution. For example, one could impose a global cardinality constraint, such as an equal number of $0$'s and $1$'s in the solution, or more generally a specified frequency for each value in the domain.
This global condition is quite strict, often making many tractable CSPs \textsf{NP}-complete once these constraints are added. In fact, in the Boolean case the ``hardest'' problem that can be solved in polynomial time is (weighted)graph $2$-coloring (by doing a simple dynamic program on the connected components)~\cite{Creignou:2010:NBC:1805950.1805954}.

In the case of approximating a $\twosat$ instance with a balanced cardinality constraint (an equal number of $0$s and $1$s), it is NP-hard to solve \cite{Guruswami2SAT2016}. In fact, the authors show that it is NP-hard to find an assignment satisfying a $(1-\epsilon_0)$ fraction of clauses for some absolute constant $\epsilon_0 > 0$. Further, this inapproximability holds for $\twosat$ instances with Horn clauses, and thus also implies hardness of $\hornsat$ with a balanced cardinality constraint.

This strictness of the constraints allowed for a full dichotomy (for any sized domain) to be proved long before the Feder-Vardi dichotomy was resolved, as most problems become NP-hard~\cite{lmcs:1025}. Bulatov and Marx showed that the only tractable problems are those that are ``non-crossing decomposable.'' Although the formal definition is a bit technical, informally such problems need to both be ``convex'' in that they are tractable in the second round of the Sherali-Adams hierarchy as well as ``linear'' in that they are solvable using a variant of Gaussian elimination (c.f., \cite{DBLP:conf/dagstuhl/BartoKW17}). As those two types of algorithms typically solve quite different problems (think $\twosat$ versus $\threexor$), the family of tractable cardinality CSPs is far less diverse than for ``ordinary'' CSPs.

The main focus of this work is to investigate CSPs with a much less strict global constraint, which we refer to as a \emph{modular} (or \emph{congruency}) constraint.  We will restrict ourselves to the Boolean case in this work, and impose the requirement that the number of $1$'s in the solution be congruent to $\ell$ modulo $M$, for some integers $\ell,M$.\footnote{Our actual setup is a bit more general, associating a weight from an abelian group for each variable-value pair, and requiring that the sum of the weights equals some value.}
We refer to this class of problems as (Boolean) Mod-CSPs.

 Informally, it is easy to see that any such Mod-CSP is at least as hard as the corresponding local CSP, as we can take a local CSP instance and add $M$ dummy variables not part of any clauses, so that the modular constraint is now trivially satisfiable.  Conversely, these Mod-CSPs are at least as easy (up to polynomial factors) as the corresponding cardinality problem, because we can brute force the cardinality of $1$s by trying all $c_1 =\ell\mod M$.

 By Schaefer's Boolean CSP dichotomy theorem~\cite{Schaefer:1978}, there are only three essentially different non-trivial tractable cases of Boolean CSP: $\twosat$, $\hornsat$, and $\lintwo$. We thus study each of these problems when we seek a solution of Hamming weight in some congruence class modulo $M$, for a fixed $M$. (When $M$ can grow with the input, these problems become NP-hard as one can encode a global cardinality constraint~\cite{Creignou:2010:NBC:1805950.1805954}.) Our goal is to classify the cases which are polynomial time solvable, as a function of $M$. In order to better appreciate the difficulty of this endeavor, we encourage the reader to not peek ahead and write down a guess for each of the cases listed in the table below. (Note that $\twosatmod_M$ refers to $\twosat$ with the global constraint modulo $M$, etc.)

\begin{center}
  \begin{tabular}{r|r|r|r|r|r}
    Name & Constraint types & $M = 3$ & $M = 4$ & $M = 6$ & $M=15$\\\hline
    $\twosatmod_M$ & $x \vee y$; $x = \neg y$ & & & & \\\hline
    $\hornsatmod_M$ & $x \wedge y \to z$ & & & &  \\\hline
    $\linmod_M$ & $x \oplus y \oplus z = 0\text{ or }1$ & & & &
  \end{tabular}
\end{center}

Our work hinges on several connections which makes our investigation interesting beyond the specific application to modular CSPs. The complexity of the problems (except $\twosatmod_M$) are tied to the parameters of certain polynomial representations --- lower bounds on such representations yields our algorithmic guarantees, and at the same the existence of efficient representations implies  hardness results. Thus the work illustrates an interesting duality between algorithms and hardness as originating from the same object. The relevant complexity measures for polynomial representations are novel and deserve further study. As the particular choice of complexity measure for each problem seems closely linked to the underlying polymorphisms of the CSP, we hope that initiating such a study could help bring together computational complexity theorists and specialists in the algebraic theory of CSPs.

The result for linear equations has interesting connections to coding theory, namely the extremal dimension of binary linear codes whose codewords have modular restrictions on their weights (which relates to concepts like doubly even codes that have been studied in coding theory), as well as to locally decodable codes, via relationship between polynomial representations and matching vector families and the Polynomial Freiman-Ruzsa (PFR) conjecture in additive combinatorics.

\subsection{Our Results}

We resolve the complexity of $\twosatmod_M$, $\hornsatmod_M$, and $\linmod_M$, namely whether it belongs to or is unlikely to be polynomial time solvable, for all moduli.\footnote{Unfortunately, one uncovered case is $\linmod_M$ for moduli $M=2^\ell p^s$ for an odd prime $p$ and $\ell \ge 3$.}
We will denote by $\twosatmod_M(n)$ an instance of $\twosat$ with $n$ variables and a global constraint modulo $M$, $\hornsatmod_M(n)$ and $\linmod_M(n)$ are similarly defined.

\begin{theorem}[Informal statement of main results]
	\label{thm:main-intro}
Suppose we have a single global modular constraint with a fixed modulus $M$.
	\begin{enumerate}
		\item $\twosatmod_M(n)$ is polytime solvable for all moduli $M$.
		\item $\hornsatmod_M(n)$ is polytime solvable when the modulus $M$ equals a prime power, and assuming $\ETH$, cannot be solved in $n^{o((\log n/\log \log n)^{r-1})}$ time when $M$ has $r$ distinct prime factors.
		\item $\linmod_M(n)$ is randomized polytime solvable when the modulus $M=2^\ell p^s$ for some odd prime $p$ and $\ell \le 2$, and assuming $\ETH$, cannot be solved in $n^{o((\log n/\log \log n)^{r-1})}$ when $M$ has $r$ distinct \emph{odd} prime factors. Further, assuming the Polynomial Freiman-Ruzsa conjecture, there is an algorithm with runtime $\exp(O_M(n/\log n))$ for all $M$.\footnote{We write $A=O_M(B)$ or $A\lesssim_M B$ to say that $A\le C_M B$ for some constant $C_M>0$ which depends only on $M$ and not on any other parameters. $A=\Omega_M(B), A\gtrsim_M B, A=o_M(B), A=\omega_M(B)$ are similarly defined.}
\end{enumerate}

\noindent
The polynomial runtime grows like $n^{O(M)}$, and assuming $\ETH$ this cannot be improved to $n^{o(M/\log M)}$ for the case of $\hornsatmod_M$ and $\linmod_M$. Moreover, for the $\linmod_M$ problem, the randomized polytime algorithms have quasipolynomial derandomizations.

\end{theorem}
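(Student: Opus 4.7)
The plan is to prove Theorem~\ref{thm:main-intro} as three essentially independent classifications --- one per CSP family --- following a common template: the algorithmic side exploits the solution structure of the underlying tractable CSP together with a polynomial-representation characterization of a key Boolean function modulo $M$, while the hardness reductions recycle those same polynomial representations as gadgets. For $\twosatmod_M$, I would exploit the median-closure of $\twosat$: satisfying assignments are closed under the ternary coordinatewise median, so after computing the strongly connected components of the implication graph and identifying forced vs.\ free variables, one runs a dynamic program over a topological ordering of the SCC condensation that carries a residue in $\Z/M\Z$ as state. Since $M$ is fixed this yields a polynomial-time algorithm, and this case is by far the easiest.

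For $\hornsatmod_M$ with $M = p^k$ a prime power, I would begin from the unique minimal $\hornsat$ solution (obtained by unit propagation) and note that every other satisfying assignment is obtained by turning on a set of variables closed under the Horn implications. The algorithm then reduces to enumerating such augmenting sets whose size realizes the desired residue modulo $M$. The key ingredient is a low covering complexity of $\NAND$ modulo $p^k$: this lets the reachable weight residues be tracked by a short sum of low-degree monomials, yielding a polynomial-size dynamic program. For the hardness direction, I would reduce from $\ThreeSAT$ via gadgets built from $\NAND$-modulo-$M$ representations when $M$ has $r$ distinct prime factors, producing the claimed $n^{o((\log n / \log \log n)^{r-1})}$ ETH lower bound through a standard blow-up.

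For $\linmod_M$, I would combine Gaussian elimination over $\F_2$ with a generating-function identity over $\Z/M\Z$: the indicator of ``Hamming weight $\equiv \ell \pmod M$'' decomposes as a short sum of multiplicative characters, and low-sparsity polynomial representations of $\OR$ mod $M$ reduce the resulting count to tractable subproblems when $M = 2^\ell p^s$ with $\ell \le 2$. The conditional $\exp(O_M(n/\log n))$ bound for general $M$ then follows by plugging in the quasi-optimal sparsity bounds implied by the Polynomial Freiman-Ruzsa conjecture. Derandomization proceeds by iterating over small hitting sets for the associated character sums, giving the claimed quasipolynomial bound. Hardness reductions once again use $\OR$-mod-$M$ gadgets, mirroring the $\hornsat$ case and yielding the $n^{o((\log n / \log \log n)^{r-1})}$ lower bound when $M$ has $r$ distinct odd prime factors.

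I expect the main obstacle to be delicate control of polynomial sparsity and covering complexity: matching the $n^{O(M)}$ algorithmic bounds against the ETH $n^{o(M/\log M)}$ lower bounds, and making the $\linmod$ algorithm derandomizable, both require sharp estimates on the number of monomials in minimal representations of $\NAND$ and $\OR$ over $\Z/M\Z$ for composite $M$. The uncovered regime $M = 2^\ell p^s$ with $\ell \ge 3$ for $\linmod$ appears intrinsically difficult for this machinery because the relevant character-sum and sparsity estimates degrade sharply at high powers of $2$, which is precisely why our techniques leave that case open.
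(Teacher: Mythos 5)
Your high-level architecture --- polynomial representations of $\NAND$ and $\OR$ modulo $M$ serving simultaneously as the obstructions to the algorithms and as the gadgets for the ETH reductions, with PFR/matching-vector bounds covering general $M$ --- is exactly the paper's, and your hardness sketches for $\hornsatmod_M$ and $\linmod_M$ match the actual reductions from $\threesat$. However, your algorithmic mechanisms contain a genuine gap, most severely for $\twosatmod_M$. A dynamic program over a topological ordering of the SCC condensation whose state is only a residue in $\Z/M\Z$ cannot be correct: its running time would be polynomial in both $n$ and $M$, and taking $M>n$ it would then solve $\twosat$ with an exact global cardinality constraint, which is NP-complete. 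The obstruction is that the satisfying assignments of a $\twosat$ DAG are the up-closed sets of a poset, so whether a partial assignment to a prefix of the condensation extends to a full solution depends on \emph{which} components were set, not merely on the accumulated weight; a residue-only state loses this. The paper's algorithm is instead a recursion on a sink literal $y$: try $y=1$ and recurse; if that fails, set $y=0$, propagate, and recurse with the allowed residue set $S$ \emph{enlarged} to account for the possibility of flipping $y$ back to $1$, pruning the branch whenever $|S|$ does not strictly grow. The exponent $|G|-|S|$ in the resulting recurrence is what makes the runtime $n^{O(M)}$ rather than polynomial in $M$, consistent with the cardinality-constrained hardness.

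The other two algorithms also deviate from what actually works. For $\hornsatmod_M$ the paper does not run a dynamic program tracking ``reachable weight residues'': it enumerates all seed sets of size at most $M-1$, calls $\FindMinimal$ on each, and checks the residue; the covering-number bound ($\le M-1$ for any polynomial representing $\NAND \bmod p^k$) is used to show that a minimal solution with the target residue must arise from some such seed, via the equivalence with $(M,R,d)$-systems. For $\linmod_M$, the character-sum identity you invoke reduces to evaluating $\sum_{x\in V}\omega^{\Ham(x)}$ for an affine subspace $V\subseteq\F_2^n$ and an $M$-th root of unity $\omega$, which is the mod-$M$ weight-enumerator problem and does not factor for $M>2$; the paper instead samples uniformly random solutions and proves an amplification lemma --- any affine subspace of dimension exceeding $\ORdim(n,M)$ that contains one point of the target residue contains a $2^{-\ORdim(n,M)-1}$ fraction of such points --- where $\ORdim(n,M)\le (M-1)\log_2(n+1)$ for odd prime powers follows from the $2^{d/(M-1)}$ sparsity lower bound for $\OR_d\bmod M$ over $\sbits^d$, and the quasipolynomial derandomization is brute force over subsets of a basis of size at most $\ORdim(n,M)$, not a hitting set for character sums. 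You would need to either repair each of these mechanisms along these lines or justify why your variants are sound.
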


For the one uncovered case of $\linmod_M(n)$, when $M=2^\ell p^s$ for an odd prime $p$ and $\ell > 2$, we give a quasi-polynomial time algorithm with runtime $\exp(O_M((\log n)^{2^{\ell-1}-1}))$. We are not sure if there should be a polynomial time algorithm also in this case.

The following table summarizes our results. We will assume $M\le cn$ for some sufficiently small constant $c>0$. All the hardness results are assuming $\ETH.$ For simplicity, the results for $\linmod_M$ are only stated for odd $M$ and the running times are for randomized algorithms, the algorithm for general $M$ is assuming PFR conjecture.
\renewcommand{\arraystretch}{1.5}
\begin{center}
\label{tab:MainResults}
\begin{tabular}{|c|c|c|c|c|}
\hline
\multirow{2}{*}{}&\multicolumn{2}{c|}{$M$ is a prime power}
& \multicolumn{2}{c|}{$M$ has $r$ distinct prime factors ($r\ge 2$)}\\
\cline{2-5}
&Algorithm & Hardness & Algorithm & Hardness\\
\hline
$\twosatmod_M(n)$&$n^{M+O(1)}$&-&$n^{M+O(1)}$&-\\
\hline
$\hornsatmod_M(n)$&$n^{M+O(1)}$&$n^{\Omega\left(\frac{M}{\log M}\right)}$&-&$\exp\left(\Omega_M\left(\left(\frac{\log n}{\log\log n}\right)^r\right)\right)$\\
\hline
$\linmod_M(n)\footnotemark$&$n^{M+O(1)}$&$n^{\Omega\left(\frac{M}{\log M}\right)}$&$\exp\left(O_M\left(\frac{n}{\log n}\right)\right)$&$\exp\left(\Omega_M\left(\left(\frac{\log n}{\log\log n}\right)^r\right)\right)$\\
\hline
\end{tabular}
\end{center}
\renewcommand{\arraystretch}{1}
\footnotetext{For $\linmod_M$, we are only stating the results for odd $M$ and the algorithms are randomized. The running time of the algorithm for non-prime-power $M$ is conditioned on Polynomial Freiman-Ruzsa conjecture.}

\smallskip\noindent\textbf{Extensions to multiple modular constraints.} We also consider natural extensions of the three problems, where we allow a small number of mod $M$ constraints or the more general version where we allow $\ell$ constraints with different moduli $M_1,M_2,\dots,M_\ell$. Our algorithm our $\twosatmod_M$ is presented in this more general model. For $\hornsat$ and $\lintwo$, we show that these general versions can be reduced to the basic version with a single mod $M$ constraint, without increasing the size of the instance too much. For example we can reduce an instance of $\hornsat$ over $n$ variables with $\ell$ constraints modulo $M$ to an instance of $\hornsatmod_M$ over $n^{O_M(\ell)}$ variables. Note that once $\ell$ becomes linear in $n$, these problems are \textsf{NP}-hard, so an exponential dependence in $\ell$ is necessary. A similar statement also holds for $\lintwo$ with multiple modular constraints.

\smallskip \noindent \textbf{Completing a classification of Boolean Mod-CSPs.}  Although $\hornsat$, $\lintwo$, and $\twosat$ are the most important CSPs that need to be analyzed, some additional care is needed to extend these classifications to all Boolean CSPs. This work is done in Appendix~\ref{app:dicot}, where we classify the computational complexity. The main observation is to use a classification by Post~\cite{post} of the polymorphisms of Boolean CSPs. Using this classification, we can show that when $M$ is a prime power, the Mod-CSP problem is always either in $\mathsf{RP}$ or is $\mathsf{NP}$-complete. On the other hand, when $M$ is divisible by distinct primes, the classification becomes a bit more difficult. The main difficulty is that we have an upper bound for $\twosatmod_M$, but lower bounds on $\hornsatmod_M$ and $\linmod_M$ (when $M$ is divisible by distinct odd primes). Because of the confluence of lower and upper bounds, there are some additional tractable cases that show up, which are similar in structure to $\twosat$. See Section~\ref{subsec:noprime} for more details.

\subsection{Our Techniques and Connections}

We now give brief overviews of our approach to establish Theorem~\ref{thm:main-intro}. We discuss each of the three constraint types in turn.

\subsubsection{$\twosatmod_M$ and recursive methods}
For $\twosatmod_M$, our algorithm is recursive. The one key idea is that we work with a more general form of modular constraint to make the recursion work, one that allows the Hamming weight to belong to a subset $S$ of congruence classes (rather than a single value). Standard methods, with some care to update the modular constraint, allow us to reduce to the case when the ``implication graph" on the literals of the $\twosat$ instance is a DAG. We then select a literal $y$ with no outgoing edge, and first set it to $1$ (which doesn't impact any other literal), and solve the $\twosat$ instance on the remaining variables with an updated modular constraint (that takes into account the setting of $y$).
If this succeeds, we can output this assignment and be done.

 Otherwise we set $y=0$, which forces all literals which have a path to $y$ in the DAG also to $0$. We can update the modular constraint accordingly, but note that in the end we are allowed to flip $y$ to $1$ and the $\twosat$ instance will still be satisfied. While there is no need to do this for normal $\twosat$, this flip might allow us to satisfy the modular constraint. As a result, we allow the set $S$ of congruence classes in the recursive call to also include this possibility. This is the reason why we need to work with the more general form of modular constraint. To implement this idea to run in polynomial time is a bit subtle, as naively we could reduce an instance with $n$ variables to two instances with $n-O(1)$ variables leading to exponential runtime. To avoid this pitfall, we track the size of the allowed moduli $S$, and argue that if it doesn't increase in the second recursive call (one where we set $y=0$), we can truncate that call and return no solution for that branch. This is justified because any valid solution with $y=0$ remains valid when $y=1$, and the former is already ruled out in the first call. The increase in $|S|$ in one of the recursive calls implies a polynomially bounded solution to the recurrence for the runtime, with exponent at most $M-1$.

\subsubsection{$\hornsatmod_M$ and covering number of $\NAND \mod M$ over $\bits$ basis} Our algorithm and analysis for $\hornsatmod_M$, i.e., $\hornsat$ with a single global linear constraint modulo $M$, is very different from the $\twosat$ case. An important property of the $\hornsat$ instance is that the set of solutions in intersection-closed. Given an instance $\Psi$ of $\hornsat$ and a subset $S$ of its variables, one can efficiently find the minimal solution among all solutions of $\Psi$ which set the variables of $S$ to 1, this is called the $\FindMinimal(\Psi,S)$ routine. Now we run $\FindMinimal(\Psi,S)$ on all subsets $S$ upto a give size $R$ and check if any of the outputs satisfy the modular constraint. If none of them satisfy the modular constraint, we claim that $\Psi$ has no solution which satisfies the modular constraint. The running time of this $R$-round algorithm is $n^{R+O(1)}$. 
If the $R$-round algorithm for $\hornsatmod_M$ fails, then we show that it is because of a special kind of obstruction. To describe these obstructions, we will need a few definitions.

\begin{definition}
A multilinear polynomial $p(x_1,\dots,x_d)$ is said to represent $\NAND_d\mod M$ over $\bits^d$ if it has integer coefficients and
\[
p(x)
\begin{cases}
=& 0 \mod M \text{ if } x=\allones\\
\neq& 0 \mod M \text{ if }x\in \bits^d\setminus\{\allones\}
\end{cases}
\]
where $\allones$ is the all ones vector.
\end{definition}

\begin{definition}[Covering number]
The covering number of a multilinear polynomial $p(x)$, denoted by $\cov(p)$, is the minimum number of monomials of $p$ one can choose such that every variable that appears in $p$ appears in one of them.
\end{definition}
Note that the covering number is the minimum set cover of the family of subsets of variables given by the monomials.
An obstruction for the $R$-round algorithm to solve $\hornsatmod_M(n)$ correctly is a polynomial $p$ which represents $\NAND \mod M$ with at most $n+1$ monomials and $\cov(p)> R$. Therefore we have the following proposition.
\begin{proposition}
If every polynomial $p$ which represents $\NAND \mod M$ over $\bits$ basis with $n+1$ monomials has $\cov(p)\le R(n)$, then there exists an $n^{R+O(1)}$-time algorithm for $\hornsatmod_M(n)$.
\end{proposition}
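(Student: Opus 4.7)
The plan is to prove the contrapositive. Suppose the $R$-round algorithm fails on some instance $\Psi$ of $\hornsatmod_M$ on $n$ variables; I will construct a polynomial $p$ representing $\NAND \mod M$ with at most $n+1$ monomials and $\cov(p) > R$. Fix a satisfying assignment $x^*$ of $\Psi$ whose Hamming weight lies in the target residue class $t$ modulo $M$, chosen so that its support $T := \supp(x^*)$ is minimal among all such witnesses. Because $\hornsat$ solutions are intersection-closed, $x^*$ equals the closure $\overline{T} := \FindMinimal(\Psi, T)$, and $\overline{S} \subseteq T$ for every $S \subseteq T$. Define $g : \bits^T \to \Z/M\Z$ by $g(y) := |\overline{\supp(y)}| - t \pmod{M}$. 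Then $g(\mathbf{1}) \equiv 0$; the failure of the algorithm forces $g(y) \not\equiv 0$ for all $y$ with $|\supp(y)| \le R$; and the minimality of $|T|$ forces $g(y) \not\equiv 0$ for any $y \ne \mathbf{1}$, since otherwise $\overline{\supp(y)}$ would be a strictly smaller witness. So $g$ represents $\NAND_d \mod M$ on $\bits^T$, where $d = |T|$.

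The heart of the argument is exhibiting a polynomial $p$ agreeing with $g$ modulo $M$ on $\bits^T$ that uses only $n+1$ monomials. The plan is to assign one monomial to each $\hornsat$ variable: for each $v_i$, pick a set $P_i \subseteq T$ and take $m_i(y) = \prod_{j \in P_i} y_j$, choosing $P_i$ so that the ``contribution'' of $v_i$ to $g$ can be encoded by a single monomial---for instance $P_i$ equal to the support of $\FindMinimal(\Psi, \{v_i\})$ inside $T$, or a minimal Horn premise deriving $v_i$. One then seeks coefficients $c_0, c_i \in \Z/M\Z$ making $p(y) = c_0 + \sum_{i} c_i\, m_i(y)$ match $g(y) \pmod M$ at every $y \in \bits^T$. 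This yields a polynomial with at most $n+1$ monomials satisfying $p(\mathbf{1}) \equiv 0$ and $p(y) \not\equiv 0$ for $y \ne \mathbf{1}$, i.e., representing $\NAND_d \mod M$.

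The main obstacle is exactly this sparse-representation step. The naive multilinear expansion of $g$ can be dense because each $\hornsat$ variable may admit several distinct minimal proofs, and each alternative proof contributes its own monomial over $\Z$; compressing the representation down to $n+1$ monomials requires leveraging modular cancellation among these alternative proofs. I expect to establish this via a M\"obius-inversion argument on the lattice of closed subsets of $T$ that distributes the contributions of alternative proofs into coefficient adjustments rather than into new monomials---a technique structurally similar to the set-system analysis in N\"agele--Sudakov--Zenklusen.

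Once such a $p$ is in hand, the bound $\cov(p) > R$ follows quickly. If $R$ monomials $m_{i_1}, \ldots, m_{i_R}$ together cover all variables of $p$ (i.e., $\bigcup_k P_{i_k} \supseteq T$), then set $S = \{v_{i_1}, \ldots, v_{i_R}\}$, a set of $\hornsat$ variables of size at most $R$. By construction each $v_{i_k}$ forces the variables of its designated monomial, so the Horn closure $\overline{S}$ contains $\bigcup_k P_{i_k} \supseteq T$, whence $\overline{S} = T$ and $|\overline{S}| \equiv t \pmod M$. But the $R$-round algorithm enumerates every such $S$ and would detect this witness, contradicting the failure assumption. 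Therefore $\cov(p) > R$, completing the construction of the obstruction.
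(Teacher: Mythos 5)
Your high-level plan (prove the contrapositive by extracting, from a minimal modular witness $T$ of a failing instance, a sparse $\NAND\bmod M$ representation with covering number exceeding $R$) is the same as the paper's, which routes through Propositions~\ref{prop:horn_sat_modm_correctness} and~\ref{prop:Mrd_system_to_polynomial}. But your concrete construction has a genuine error and a genuine gap. First, the function $g(y)=|\overline{\supp(y)}|-t$ does \emph{not} represent $\NAND_d$ on $\bits^T$: minimality of $T$ only rules out the case $\overline{\supp(y)}\subsetneq T$, and nothing prevents a proper subset of $T$ from having Horn closure equal to $T$ itself (take $T=\{a,b\}$ with the clause $a\to b$; then $\overline{\{a\}}=T$ and $g$ vanishes at $y=\indicator_{\{a\}}\ne\allones$). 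The paper avoids this by indexing the polynomial's \emph{variables} by the maximal closed proper subsets $F_1,\dots,F_N$ of $T$ (the maximal sets of the $(M,R,|T|)$-system) and the \emph{monomials} by the elements $a\in T$, via $\phi(x)=\bigl|\bigcap_{i:x_i=0}F_i\bigr|=\sum_{a\in T}\prod_{i:a\notin F_i}x_i$: any input other than $\allones$ produces an intersection of proper closed sets, which is again a proper closed set, so the bad case never arises; and each indicator $\indicator[a\in\bigcap_{i:x_i=0}F_i]$ is literally a single monomial, so sparsity ($\le |T|+1\le n+1$ monomials) is automatic.

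Second, the step you yourself identify as the heart of the argument --- writing $g$ with one monomial per Horn variable --- is not carried out, only conjectured via an unspecified M\"obius-inversion/cancellation argument. With your indexing, $\indicator[v\in\overline{\supp(y)}]$ is the indicator of an up-set of subsets of $T$ that is generally not principal (a variable can have several incomparable minimal derivations), so it is not a monomial, and no argument is given that the modular reductions of these contributions collapse to $n+1$ monomials. Finally, your covering-number argument needs $\overline{\{v_{i_k}\}}\supseteq P_{i_k}$ (the variable forces its monomial's support), whereas the representation step wants $P_i$ to be a \emph{premise} deriving $v_i$ (i.e., $v_i\in\overline{P_i}$); these are opposite inclusions and the proposal never reconciles them. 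In the paper's construction the covering bound instead falls out immediately: since the algorithm failed, $\FindMinimal(S)$ is a proper closed subset of $T$ for every $S\subseteq T$ with $|S|\le R$, hence every such $S$ lies in some maximal $F_i$, and then the variable $x_i$ appears in none of the $R$ monomials indexed by the elements of $S$.
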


\smallskip\noindent\textbf{When $M$ is a prime power.} In this case, we can show that any polynomial which represents $\NAND \mod M$ should have covering number at most $M-1$. Note that this bound is independent of the number of monomials in the polynomial. This implies that our algorithm with $R=M-1$ rounds solves $\hornsatmod_M(n)$ correctly.

\smallskip\noindent\textbf{When $M$ has $r\ge 2$ distinct prime factors.} In this case, it turns out that there are obstructions for any constant round algorithm. More precisely, there are polynomials which represent $\NAND \mod M$ with $n$ monomials, but their covering number is $\Omega\left((\log n/\log \log n)^{r-1}\right)$. Such polynomials can be obtained from polynomials of degree $O(d^{1/r})$ which represent $\NAND_d \mod M$ constructed by Barrington, Beigel and Rudich~\cite{BarringtonBR94}. This implies that the $R$-round algorithm will not work if we choose $R=o\left((\log n/\log \log n)^{r-1}\right)$. This by itself does not show hardness of $\hornsatmod_M$, it just shows that our algorithm doesn't work with constant rounds. But it turns out that we can use low-degree polynomial representations of $\NAND \mod M$ directly as a gadget to reduce $\threesat$ to $\hornsatmod_M$ without blowing up the size too much. More precisely, we show the following hardness result.
\begin{proposition}
Suppose there exists a polynomial which represents $\NAND_d \mod M$ over $\bits^d$ with degree $\Delta$. Assuming $\ETH$, $\hornsatmod_M(n)$ requires $2^
{\Omega(d)}$ time for some $n=(ed/\Delta)^{O(\Delta)}$.
\end{proposition}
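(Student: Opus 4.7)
The plan is to give a polynomial-time (in the output size) reduction from $\ThreeSAT$ on $d$ variables to $\hornsatmod_M$ on $n=(ed/\Delta)^{O(\Delta)}$ variables; combined with $\ETH$, this yields the claimed $2^{\Omega(d)}$ lower bound. The given polynomial $p(y_1,\ldots,y_d)=\sum_{|T|\le \Delta} c_T \prod_{i\in T} y_i$ will play the role of an ``equality gadget'' whose only Boolean zero modulo $M$ is $y=\mathbf{1}$.

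Let $\phi$ be a $\ThreeSAT$ instance on variables $v_1,\ldots,v_d$. First, I would introduce HORN variables $x_i$ and $x_i'$ for each $i\in[d]$, intended to encode $v_i$ and $\neg v_i$, together with the HORN clause $x_i \wedge x_i' \to 0$ (which forces $x_i x_i'=0$, so $x_i+x_i'\in\{0,1\}$). Each 3-SAT clause $\ell_1\vee \ell_2\vee \ell_3$ then becomes a genuine HORN clause $\bar\ell_1\wedge \bar\ell_2\wedge \bar\ell_3\to 0$, where I identify $\bar\ell_j\in\{x_i,x_i'\}$ according to whether $\ell_j$ is $\neg v_i$ or $v_i$. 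It remains to force $x_i+x_i'=1$ for every $i$, and this is where $p$ and the single modular constraint are used.

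Substituting $y_i\leftarrow x_i+x_i'$ and expanding distributively yields
\[
p(x_1+x_1',\ldots,x_d+x_d')=\sum_{|T|\le\Delta} c_T\sum_{S\subseteq T}\prod_{i\in S}x_i\prod_{i\in T\setminus S}x_i'.
\]
For each pair $(T,S)$ with $|T|\le\Delta$ and $S\subseteq T$, I introduce an auxiliary HORN variable $z_{T,S}$ together with the HORN clauses $z_{T,S}\to x_i$ for $i\in S$, $z_{T,S}\to x_i'$ for $i\in T\setminus S$, and $\bigwedge_{i\in S}x_i\wedge\bigwedge_{i\in T\setminus S}x_i'\to z_{T,S}$, which jointly force $z_{T,S}=\prod_{i\in S}x_i\prod_{i\in T\setminus S}x_i'$. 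The single modular constraint is $\sum_{T,\,S\subseteq T} c_T z_{T,S}\equiv 0\pmod{M}$.

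For correctness, a satisfying assignment $\sigma$ of $\phi$ lifts to $x_i:=\sigma_i$, $x_i':=1-\sigma_i$, making the modular sum equal $p(\mathbf{1})\equiv 0\pmod M$ while satisfying all the HORN clauses encoding $\phi$. Conversely, any valid $\hornsatmod_M$ assignment has $x_i+x_i'\in\{0,1\}$, and the modular constraint reads $p(x_1+x_1',\ldots,x_d+x_d')\equiv 0\pmod{M}$ on this Boolean vector, which by the defining property of $p$ forces $x_i+x_i'=1$ for all $i$; then $\sigma_i:=x_i$ is a 3-SAT assignment satisfying every clause. The variable count is bounded by $2d+\sum_{k=0}^{\Delta}\binom{d}{k}2^k\le (\Delta+1)(2ed/\Delta)^\Delta=(ed/\Delta)^{O(\Delta)}$, and the HORN clause count is polynomial in this. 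The main conceptual subtlety is the interplay between the clauses $x_i\wedge x_i'\to 0$ and the polynomial gadget: these clauses are precisely what ensures that the substitution $y_i\mapsto x_i+x_i'$ lands in $\bits$, so that the NAND-representation property of $p$ transfers faithfully into a constraint forcing $x_i+x_i'=1$ everywhere.
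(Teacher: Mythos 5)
Your reduction is correct, but it is genuinely different from the paper's. The paper composes the $\NAND$ polynomial with the $m$ clause-indicator polynomials, forming $\Gamma(x)=p(C_1(x),\dots,C_m(x))$ of degree $3\Delta$, so that the \emph{single modular constraint} certifies satisfaction of all clauses; the $\hornsat$ clauses are used only to linearize the monomials of $\Gamma$ via variables $z_S=\prod_{i\in S}x_i$ (and accordingly the paper's $d$ counts clauses, while yours counts variables — both yield $n=(ed/\Delta)^{O(\Delta)}$ and are compatible with $\ETH$). You instead encode each $\threesat$ clause directly as an all-negative Horn clause over a double-rail encoding $x_i,x_i'$, and use the $\NAND$ polynomial, applied to $y_i=x_i+x_i'$, solely to force \emph{totality} of the assignment; your linearization of $\prod_{i\in T}(x_i+x_i')$ via the $z_{T,S}$ and your soundness/completeness argument both check out. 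What the paper's route buys that yours does not: the instance it produces contains no constraints forcing a variable to a constant (see the Remark following the proof), which is exactly what is needed to derive hardness of $\hornsatnoconst$ for the dichotomy in Appendix~\ref{app:dicot}; your clauses of the form $x_i\wedge x_i'\to 0$ and $\bar\ell_1\wedge\bar\ell_2\wedge\bar\ell_3\to 0$ require the constant $0$ (or all-negative clauses), so your version proves the proposition as stated but would not directly substitute there. What your route buys is a cleaner separation of roles — the combinatorics of the formula live in ordinary Horn clauses and the polynomial gadget does only one job — and a composed polynomial of degree $\Delta$ rather than $3\Delta$.
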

Using the upper bound $\Delta=O(d^{1/r})$ from~\cite{BarringtonBR94}, we get $\exp\left(\Omega_M((\log n/\log\log n)^{r})\right)$ time $\ETH$-hardness for $\hornsatmod_M(n)$. Better upper bounds on $\Delta$ lead to better hardness results. The best known lower bound on $\Delta$ is $\Omega_M\left((\log d)^{1/(r-1)}\right)$ due to Barrington and Tardos~\cite{BarringtonT98}. If there is a polynomial whose degree matches this lower bound, then assuming $\ETH$, we can get $\bigexp{\bigexp{(\log n)^{1-1/r}}}$ hardness for $\hornsatmod_M(n)$.

What about sub-exponential time algorithms? We conjecture that the covering number of any polynomial which represents $\NAND \mod M$ with $n$ monomials should be $n^{o_M(1)}$ for any fixed $M$. If true, this would imply an $\exp(n^{o_M(1)})$-time algorithm for $\hornsatmod_M(n)$. We give evidence towards our conjecture by showing that the fractional covering number (which is an LP relaxation of covering number) of any such polynomial is indeed $n^{o_M(1)}$.

\subsubsection{$\linmod_M$ and sparsity of the $\OR \mod M$ over $\sbits$ basis} The algorithm and analysis for $\linmod_M$ has strong similarities to the analysis for $\hornsatmod_M$. The algorithm is quite simple to state: sample $T$ uniformly random solutions to the linear system over $\F_2$ and check if any of them satisfy the modular constraint.

To analyze how many samples $T$ to check so that the algorithm is correct with high probability, we need to prove that for any $d$-dimensional affine subspace of $\F_2^n$, there are either $0$ points in this space satisfying the modular constraint or there are at least $2^d/f(n, M)$ such points. Then $T=O(f(n,M))$ samples would suffice. By a simple reduction, it suffices to bound the maximal dimension $\cD(n, M)$ such that a $\cD$-dimensional affine subspace of $\F_2^n$ has \emph{exactly one} element whose Hamming weight is $a \mod M$ for some $a$. Quantitatively, we show that $f(n, M) \le O(2^{\cD(n, m)})$. In other words, the obstructions for our algorithm are large affine subspaces which have exactly one point which satisfies a linear constraint modulo $M$. When $M$ is a power of $2$, we prove that $\cD(n,M)\le M-1$. For $M=2^\ell M'$ for some odd $M'\ge 3$, we can get upper bounds on $\cD(n,M)$ from upper bounds on $\cD(n,M')$. So we can only focus on odd $M$. The obstructions for odd $M$ can be represented using polynomials like we did in the $\hornsatmod_M$ case.

\begin{definition}
A polynomial $p(x_1,\dots,x_d)$ is said to represent $\OR_d\mod M$ over $\sbits^d$ if it has integer coefficients and
\[
p(x)
\begin{cases}
=& 0 \mod M \text{ if } x=\allones\\
\neq& 0 \mod M \text{ if }x\in \sbits^d\setminus\{\allones\}
\end{cases}
\]
where $\allones$ is the all ones vector.
\end{definition}
The existence of a $d$-dimensional affine subspace of $\F_2^n$ with exactly one point satisfying the mod $M$ constraint is equivalent to an $(n+1)$-sparse polynomial representation of $\OR_d\mod M$ over $\sbits^d$. Thus, lower bounds on the sparsity of representations of $\OR \mod M$ in the $\sbits$ basis imply upper bounds on $\cD(n, M)$ and the runtime of our randomized algorithm. More precisely, we have the following proposition.
\begin{proposition}
Let $M$ be odd, if there is no polynomial which represents $\OR_d \mod M$ over $\sbits^d$ with at most $n+1$ monomials, then there exists a $O(2^dn^{O(1)})$ time randomized algorithm for $\linmod_M(n)$.
\end{proposition}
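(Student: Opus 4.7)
The algorithm is as follows: solve the underlying $\F_2$-linear system by Gaussian elimination to obtain a parametrization of the affine solution space $V\subseteq\F_2^n$, then sample $T=O(2^d)$ uniform random points from $V$ and output the first one whose Hamming weight lies in the prescribed residue class modulo $M$. Sampling and checking each candidate cost $\poly(n)$, so the runtime is clearly $O(2^d\,n^{O(1)})$. The proof thus reduces to showing: \emph{whenever $V$ contains at least one ``good'' point (i.e., one of Hamming weight $\equiv a\pmod M$), the fraction of good points in $V$ is at least $2^{-(d-1)}$.} I will establish this through the parameter $\cD(n,M)$ introduced earlier --- first deducing $\cD(n,M)\le d-1$ from the hypothesis, then converting this bound into the density estimate via a splitting argument.

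\textbf{Step 1: the hypothesis forces $\cD(n,M)\le d-1$.} Suppose for contradiction that some $d$-dimensional affine subspace $V=\{u_0+\sum_{i=1}^d z_i u_i:z\in\F_2^d\}\subseteq\F_2^n$ has exactly one good point, which after a translation may be taken to be $u_0$ itself. Under the parity encoding $y_i=(-1)^{z_i}\in\sbits$, the Hamming weight of a generic element of $V$ equals $\tfrac12\bigl(n-\sum_{j=1}^n(-1)^{u_{0,j}}\prod_i y_i^{u_{i,j}}\bigr)$. Since $M$ is odd, $2$ is invertible modulo $M$, and ``weight $\equiv a\pmod M$'' becomes $q(y)\equiv 0\pmod M$ for the multilinear integer polynomial
\[
q(y_1,\dots,y_d)\;=\;\sum_{j=1}^n(-1)^{u_{0,j}}\prod_{i=1}^d y_i^{u_{i,j}}\;-\;(n-2a),
\]
which has at most $n+1$ monomials after combining (the constant term absorbs the contributions from coordinates $j$ with $u_{\cdot,j}=\mathbf{0}$). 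By construction $q(\allones)\equiv 0\pmod M$ and $q(y)\not\equiv 0\pmod M$ for $y\in\sbits^d\setminus\{\allones\}$. If some $y_i$ failed to appear effectively in $q$ as a function on $\sbits^d$, flipping that coordinate of $\allones$ would produce a second zero, contradicting uniqueness; hence $q$ is a bona fide $\OR_d\bmod M$ representation with $\le n+1$ monomials, contradicting the hypothesis.

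\textbf{Step 2: density via separating hyperplanes.} I claim that any affine subspace $W\subseteq\F_2^n$ of dimension $D$ containing at least one good point contains at least $2^{D-\cD(n,M)}$ of them, proved by induction on $D$. The base case $D\le\cD(n,M)$ is trivial since the bound is $\le 1$. For $D>\cD(n,M)$, the definition of $\cD(n,M)$ forces $W$ to contain two distinct good points $x_1\ne x_2$; partitioning $W$ by any affine hyperplane separating them yields two $(D-1)$-dimensional affine subspaces $W_0\sqcup W_1$, each containing a good point. The inductive hypothesis applied to each half gives at least $2^{D-1-\cD(n,M)}$ good points per half, totaling $2^{D-\cD(n,M)}$.

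Combining the two steps, whenever the solution space $V$ is nonempty in good points, the density of good points in $V$ is at least $2^{-\cD(n,M)}\ge 2^{-(d-1)}$, so $T=O(2^d)$ independent samples find a good point with constant success probability (amplifiable to $1-\delta$ by $O(\log(1/\delta))$ repetitions) within the claimed $O(2^d\,n^{O(1)})$ runtime. The main obstacle is Step 1: arranging the sparse polynomial so that \emph{both} the monomial count stays $\le n+1$ \emph{and} every variable is certifiably active. The parity substitution $y_i=(-1)^{z_i}$ combined with the invertibility of $2$ modulo odd $M$ is precisely what lets a single multilinear integer polynomial with $\le n+1$ terms encode the modular weight condition exactly; for even $M$ this translation breaks and one must go through the reduction from general $M$ to its odd part as indicated earlier in the excerpt.
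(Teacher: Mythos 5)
Your proof is correct, and Step 1 (passing from a $d$-dimensional affine subspace with a unique good point to a $\le(n+1)$-term polynomial representing $\OR_d\bmod M$ over $\sbits^d$, via the $z\mapsto(-1)^z$ change of basis and the invertibility of $2$ modulo odd $M$) is the same construction as the paper's Proposition~\ref{prop:subspace_to_polynomial} combined with Proposition~\ref{prop:subspace_to_ormodm_poly}. Your Step 2, however, takes a genuinely different route from the paper's Proposition~\ref{prop:amplification}. The paper chooses a \emph{random} affine subspace $A\subset V$ of dimension $\cD(n,M)+1$ through the known good point $x^*$, observes that by maximality of $\cD$ every such $A$ contains a second good point, and uses linearity of expectation to deduce $N(V,a,M)\ge |V|/2^{\cD(n,M)+1}$. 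You instead run a \emph{deterministic} induction on dimension: any affine $W$ of dimension $D>\cD(n,M)$ with one good point must have two, and an affine hyperplane separating them splits $W$ into two halves each inheriting a good point, giving $N\ge 2^{D-\cD(n,M)}$. Both arguments are sound; yours avoids the probabilistic step and gives a constant that is better by a factor of $2$, while the paper's formulation isolates a reusable counting lemma. Two cosmetic omissions relative to the paper, both harmless: (i) you implicitly invoke the standard WLOG (carried out in Section~\ref{sec:lineq}) reducing the weighted modular constraint $\sum a_ix_i\equiv a_0$ to plain Hamming weight by variable duplication (costing a factor of $M$ in $n$); (ii) you assume from $\cD(n,M)\ge d$ the existence of a $d$-dimensional affine subspace with a unique good point — this follows by restricting a larger witness subspace to a $d$-dimensional affine subspace through its unique good point, which cannot introduce new good points, but this step deserves a sentence. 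Also, the paragraph verifying that each $y_i$ ``appears effectively'' in $q$ is unnecessary: the paper's notion of representing $\OR_d\bmod M$ is purely an evaluation condition, and the fact you prove there is already equivalent to the ``$q(y)\not\equiv 0$ off $\allones$'' clause.
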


\smallskip\noindent\textbf{Connections to circuit complexity and locally decodable codes.} Note that an $s$-sparse polynomial which represents $\OR_d \mod M$ over $\sbits^d$ is equivalent to a two layer $\MOD_M \circ \MOD_2$ circuit which computes $\OR$ over $\bits^d$ with $s$ $\MOD_2$-gates at the bottom level connected to a $\MOD_M$ gate at the top.\footnote{Here a $\MOD_M$ gate outputs $0$ if the number of $1$-inputs is a multiple of $M$ and outputs $1$ otherwise.} Circuits made up of such $\MOD$ gates are extensively studied in circuit lower bounds~\cite{ChattopadhyayGPT06,ChattopadhyayW09}. An $s$-sparse polynomial representation of $\OR_d \mod M$ over $\sbits^d$ can be used to construct a \emph{matching vector family} (MVF) of size $2^d$ over $(\Z/M\Z)^s$. MVFs of large size and small dimension can be used to construct good \emph{locally decodable codes} (LDCs) and private information retrieval (PIR) schemes~\cite{Yekhanin08, Efremenko12, DvirGY11, DvirG16}. So lower bounds for LDCs or upper bounds on the size of MVFs imply lower bounds on the sparsity of polynomials which represent $\OR_d \mod M$ over $\sbits^d$. Concretely, superpolynomial lower bounds for the length of constant-query LDCs would imply $2^{o(n)}$ time algorithms for $\linmod_M(n)$ when $M$ is any constant; we find this connection quite surprising!

\smallskip\noindent\textbf{When $M$ is an odd prime power.}
In this case, we can show that the sparsity of a polynomial which represents $\OR_d \mod M$ over $\sbits^d$ is at least $2^{d/(M-1)}$, which is nearly tight. In other words, $\cD(n,M)\le (M-1)\log n$. This immediately implies a $n^{M+O(1)}$-time randomized algorithm for $\linmod_M$.

\smallskip\noindent\textbf{When $M$ has $r\ge 2$ odd prime factors.}
The results of~\cite{BarringtonBR94} imply that there exist degree $O(d^{1/r})$ polynomials which represent $\OR_d \mod M$ over $\sbits^d$, therefore they have at most $n=\exp\left(O(d^{1/r}\log d)\right)$ monomials. In other words, $\cD(n,M)\gtrsim (\log n/\log\log n)^r$. Thus there are obstructions for our algorithm to run in polynomial time. This by itself does not show hardness. But, similar to the $\hornsatmod_M$ case, we can directly use low-degree polynomials which represent $\OR_d \mod M$ over $\sbits^d$ to reduce $\threesat$ to $\linmod_M$ without increasing the size too much.
\begin{proposition}
Suppose there exists a polynomial of degree $\Delta$ which represents $\OR_d \mod M$ over $\sbits^d$. Assuming $\ETH$, $\linmod_M(n)$ takes $2^{\Omega(d)}$ time for some $n=(ed/\Delta)^{O(\Delta)}$.
\end{proposition}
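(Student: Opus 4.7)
The plan is to give a polynomial-time reduction from $\threesat$ on $d$ variables to $\linmod_M(n)$ with $n=(ed/\Delta)^{O(\Delta)}$ variables, so that ETH's $2^{\Omega(d)}$ lower bound for $\threesat(d)$ transfers to $\linmod_M(n)$. I will take $M$ odd (the main setting in the paper; the even part of $M$ reduces to the odd part as noted earlier), and apply the Sparsification Lemma to assume the input instance $\phi$ has $O(d)$ clauses, padded by trivial clauses to have exactly $d$ of them.

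The central construction is a composition $Q=p\circ\xi$. For each clause $C_j$ with literal positions $i_{j,1},i_{j,2},i_{j,3}$ and polarities $\sigma_{j,k}\in\{\pm1\}$, define the $\sbits$-valued degree-$3$ polynomial
\[
\xi_j(y_1,\dots,y_d)\ =\ 1-\tfrac14\prod_{k=1}^3(1+\sigma_{j,k}y_{i_{j,k}}),
\]
whose coefficients lie in $\Z/M\Z$ because $1/4$ is well-defined modulo an odd $M$. Under $y_i=(-1)^{b_i}$, one checks that $\xi_j(y(b))=+1$ iff $b$ satisfies $C_j$ and $-1$ otherwise. Setting $Q(y):=p(\xi_1(y),\dots,\xi_d(y))$ then gives a polynomial of degree at most $3\Delta$ with at most $\binom{d}{\le 3\Delta}=(ed/\Delta)^{O(\Delta)}$ monomials, and by the defining property of $p$, $Q(y(b))\equiv 0\pmod M$ iff $\xi(y(b))=\allones$ iff $\phi(b)$ is satisfied.

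To realise this as a $\linmod_M$ instance, write $Q(y)=\sum_T q_T\prod_{i\in T}y_i$ with $q_T\in\Z/M\Z$ and introduce, for each $T$ in the support of $Q$, the $\F_2$-variable $u_T:=\bigoplus_{i\in T}b_i$. The $\sbits$-product-to-$\F_2$-XOR identity $\prod_{i\in T}(-1)^{b_i}=1-2u_T$ expands
$Q(y(b))\equiv K_Q-2\sum_T q_T u_T(b)\pmod M$
with $K_Q:=Q(\allones)$, which (for odd $M$) is $\equiv 0\pmod M$ iff $\sum_T q_T u_T\equiv K_Q/2\pmod M$. The output $\linmod_M$ instance therefore has variables $\{b_i\}\cup\{u_T\}$, the $\F_2$-linear constraints $u_T\oplus\bigoplus_{i\in T}b_i=0$ for each $T$, and a single mod-$M$ constraint with weight $q_T$ on $u_T$ and target $K_Q/2$. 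The variable count is $n=(ed/\Delta)^{O(\Delta)}$, and since every $b$ uniquely determines the $u_T$'s via the $\F_2$-linear constraints, the instance is satisfiable iff $\phi$ is. The main thing requiring careful verification is that the composition $Q=p\circ\xi$ indeed has degree $\le 3\Delta$ with $(ed/\Delta)^{O(\Delta)}$ monomials and that all the modular arithmetic (the $1/4$ in $\xi_j$, the $1/2$ in solving for the $u_T$'s) is valid for odd $M$; notably, no additional nonlinear gadget is required, in contrast with the analogous $\hornsatmod_M$ reduction where Boolean AND must be encoded natively by Horn clauses. Invoking ETH on $\threesat(d)$ then completes the proof.
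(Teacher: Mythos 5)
Your proposal is correct and follows essentially the same route as the paper's proof: compose the degree-$3$ clause polynomials (in the $\sbits$ basis) with the degree-$\Delta$ representation of $\OR\bmod M$, observe the composition has $(ed/\Delta)^{O(\Delta)}$ monomials, and turn each monomial into an $\F_2$-parity variable carrying a mod-$M$ weight, with satisfiability of $\phi$ equivalent to the weighted sum hitting $K_Q/2$. The only cosmetic difference is that the paper eliminates the original variables and expresses the reachable parity vectors as a subspace via $U^\perp\bx'=\allzeros$, whereas you keep the $b_i$'s and add explicit XOR-definition constraints for the $u_T$'s — these yield the same instance up to Gaussian elimination.
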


Using the upper bound $\Delta=O(d^{1/r})$ from~\cite{BarringtonBR94}, we get $\exp\left(\Omega_M((\log n/\log\log n)^{r})\right)$ time $\ETH$-hardness for $\linmod_M(n)$. Better upper bounds on $\Delta$ lead to better hardness results. The best known lower bound on $\Delta$ is $\Omega_M\left((\log d)^{1/(r-1)}\right)$ due to Barrington and Tardos~\cite{BarringtonT98}. If there is a polynomial whose degree matches this lower bound, then assuming $\ETH$, we can get $\bigexp{\bigexp{(\log n)^{1-1/r}}}$-time hardness for $\linmod_M(n)$.

What about subexponential time algorithms? Unfortunately, we do not know any unconditional superlinear (i.e. $\omega(d)$) lower bounds on the sparsity of polynomials which represent $\OR_d \mod M$ over $\sbits^d$. Such a lower bound would imply $2^{o(n)}$-time algorithms for $\linmod_M(n)$ for any constant $M$. But using the connection to MVFs, and an upper bound on the size of MVFS due to~\cite{BhowmickDL14} assuming Polynomial Freiman-Ruzsa (PFR) conjecture from additive combinatorics, we can conclude that the sparsity of a polynomial representing $\OR_d\mod M$ over $\sbits^d$ is at least $\Omega_M(d\log d)$ under the same conjecture. This implies an $\exp(O_M(n/\log n))$-time algorithm for $\linmod_M(n)$ assuming the PFR conjecture.

\subsubsection{Connections to submodular minimization and integer programming.}\label{subsec:intro-IP}

A classic result in combinatorial optimization is that integer linear programs which have a \emph{totally unimodular} constraint matrix, i.e., every square submatrix has determinant in $\{-1, 0, 1\}$, can be solved in polynomial time. This is because the vertices of the feasible polytope of such a constraint matrix are integral. Recently this result has been extended to \emph{totally bimodular} constraint matrices, where every square submatrix has determinant in $\{-2, -1, 0, 1, 2\}$~\cite{ArtmannWZ17}. Such results have inspired a conjecture that any integer program for which every square submatrix has determinant bounded in absolute value by $M$, can be solved in $n^{O_M(1)}$ time.

A recent paper by N\"agele, Sudakov and Zenklusen~\cite{NageleSZ18} tries to lay the groundwork for proving this conjecture by considering a special case. This special case is finding the minimum cut in a directed graph such that the number of vertices on one side of the cut satisfies a modular condition modulo $M$. As cut functions of directed graphs are submodular, they generalized this question to the following algorithmic problem, denoted by $\submod_M(n)$: Given $a,M$, minimize a submodular function $f:\bits^n \to \Z$ (given oracle access) over all $x$ such that $\sum_{i=1}^n x_i =a \mod M$.\footnote{A function $f:\bits^n \to \Z$ is called submodular if for all $x, y \in \{0, 1\}^n$, $f(x) + f(y) \ge f(x \vee y) + f(x \wedge y)$, where $\vee$ and $\wedge$ are bitwise OR and AND.}

 Our algorithm for $\hornsatmod_M$ is actually inspired by the algorithm from~\cite{NageleSZ18} for $\submod_M(n)$. They show that the only obstructions to their $R$-round algorithm to work correctly are certain set families they called $(M,R,d)$-systems. They then showed that when $M$ is a prime power, $(M,M-1,d)$-systems do not exist for any $d$. This implies an $n^{O(M)}$-time algorithm for $\submod_M$ when $M$ is a prime power. They asked if a polynomial time algorithm exists for $\submod_M$ when $M$ has multiple prime factors.

We observe that $(M,R,d)$-systems are equivalent to polynomial representations of $\NAND \mod M$ over $\bits$ basis with $d$ monomials and covering number greater than $R$. Thus the obstructions for their algorithm for $\submod_M$ and our algorithm for $\hornsatmod_M$ are exactly the same!
Thus we can give a simpler proof of their result for prime power $M$, and also throw new light on $\submod_M$ when $M$ has multiple prime factors. In particular, if $M$ has $r$ distinct prime factors, this implies that there are obstructions for their $R$-round algorithm to work for $\submod_M(n)$ for $R=o_M\left((\log n/\log \log n)^{r-1}\right)$. This answers an open question from their paper and explains why they couldn't extend their algorithm for any constant $M$. If true, our conjecture that the covering number of a polynomial which represents $\NAND \mod M$ with $n$ monomials is $n^{o_M(1)}$, implies an $\exp(n^{o_M(1)})$ time algorithm for $\submod_M$ for any $M$. We also make a conjecture about the existence of certain submodular functions, which would allow us to prove superpolynomial $\ETH$-hardness results for $\submod_M$ when $M$ has multiple prime factors. It would be interesting if our methods have any implications for hardness of solving integer linear programs with bounded-minor constraint matrices, our results suggest that something different can happen at $M=6$.

\subsection{Future directions}

Given that our work is the first to study the effect of global modular constraints on the tractability of CSPs, and our results unearth a rich picture rife with interesting connections to many central topics such as algebraic complexity measures of Boolean functions, coding theory, and combinatorial optimization, there are naturally many questions and directions for future work. We list a few below.

\begin{itemize}
\itemsep=0ex
	\item Our work raises some intriguing new questions about polynomial representations modulo a  composite number $M$. Can one prove an unconditional non-trivial (i.e, $\omega(d)$) lower bound on the sparsity of polynomial representing $\mathsf{OR}_d \mod M$ over the $\{1,-1\}$-basis? Can one construct polynomial representations that are sparser than what is obtainable by simply appealing to the best known low-degree representations?

For the covering complexity measure for polynomials representing $\mathsf{NAND}\mod M$ over the $\{0,1\}$ basis with $n$ monomials, can we prove an upper bound of $n^{o_M(1)}$ (Conjecture~\ref{conj:covering_number})? This would imply a sub-exponential time algorithm for $\hornsatmod_M$ and $\submod_M$ via our connections.
\item The analysis of our algorithms for $\linmod_M$ and
  $\hornsatmod_M$ use the sparsity and covering complexity of
  appropriate polynomial representations. On the other hand, our
  hardness results for these problems use the degree measure of these
  polynomials. Can one prove that sparsity and covering complexity
  directly dictate hardness as well? This would complete a very
  pleasing picture by giving matching algorithmic and hardness results and tying the complexity to some complexity parameter associated with polynomial representations.
	\item Can one show hardness of submodular minimization with a global mod $6$ constraint? Constructing certain submodular functions as described in Conjecture~\ref{conj:submodular} would imply such a hardness result. 
What about hardness of solving integer linear programs with bounded minors?
	\item Of the major cases for our classification of Boolean CSPs, $\twosatmod_M$ is special in that its complexity is not tied to the complexity of certain polynomial representations.
Given that the polynomials representations which play a role for $\hornsatmod_M$ and $\linmod_M$ are very closely related to the polymorphisms of $\hornsat$ and $\lintwo$, $\AND$ and $\mathsf{XOR}$ respectively, how does the $\mathsf{MAJ}$ polymorphisms of $\twosat$ play a role in the tractability of $\twosatmod_M$? Understanding this would give us clues about how to generalize our results to CSPs over larger domains.
        \item  Furthermore, $\twosatmod_M$ is also the only case in which we lack a lower bound of the form $\Omega(n^{f(M)})$, for some nontrivial function of $f$.  Could it be that $\twosatmod_M$ is fixed-parameter tractable? For instance, does there exist a $2^M n^{O(1)}$ algorithm? We note that the dependence on $M$ cannot be polynomial (unless $\mathsf{P} = \mathsf{NP}$), as setting $M$ greater than $n$ would solve $\twosat$ with a global cardinality constraint, which is NP-complete.
	\item How do Mod-CSPs behave on a non-Boolean domain? Even for a domain of size three, the classification of ordinary CSPs is much more complex~\cite{DBLP:journals/jacm/Bulatov06}. As such, in order for such a program to be carried out, one needs to better understand the interplay between the global modular constraints and the polymorphisms of these CSPs. In particular, how do notions like cores, bounded width, identities, etc., interplay with the modular constraints? See \cite{DBLP:conf/dagstuhl/BartoKW17} for definitions of these terms.
	\item What other interesting global constraints can we impose on CSPs like $\twosat,\hornsat,\lintwo$, while still keeping them tractable? What happens if we add a global constraint over a non-abelian group i.e. a global constraint of the form $\prod_{i=1}^n g_i^{x_i}=g_0$ for some $g_0,g_1,\dots,g_n\in G$ where $G$ is a non-abelian group.
\end{itemize}

\subsection{Organization}

In Section~\ref{sec:prelim}, we formally define what a Mod-CSP is as well as state some basic facts about polynomials. In Section~\ref{sec:hornsat}, we study $\hornsatmod_M$. In Section~\ref{sec:lineq}, we study $\linmod_M$. In Section~\ref{sec:2sat}, we prove our algorithmic results for $\twosatmod_M$. In Section~\ref{app:dicot}, we show how the results for these individual problems imply some dichotomy-type results for general Boolean Mod-CSPs.

\section{Preliminaries}\label{sec:prelim}
In this section, we give standard definitions of (local) CSPs as well as a formal definition of Mod-CSPs.

\subsection{CSP basics}

The type of CSP we study depends on a few parameters. The number of variables is often denoted by $n$ and the variables are written as $x_1, \hdots, x_n$. The number of constraints is denote by $m$. Although the number of constraints in an instance tends to $\infty$ and $n \to \infty$, there are only finitely many types of constraints, which are specified by the \emph{template} $\Gamma$. To specify the structure of the template, we have a \emph{signature} $\sigma = (I, \ar)$, where $I$ is an index set of the constraints, and $\ar : I \to \mathbb N$ gives the \emph{arity} of each constraint, how many variables it takes as arguments. We formally define the template as follows.

\begin{definition}
  A \emph{template} with signature $\sigma$ over \emph{domain} $D$ is an ordered tuple $\Gamma = (C_i \subset D^{\ar_i} \mid i \in I).$
\end{definition}

For this template, we say that a CSP over $\Gamma$ is a formula $\Psi$ over variables $x_1, \hdots, x_n$ which is a CNF of constraints from $\Gamma$. That is,

\[
  \Psi(x_1, \hdots, x_n) = \bigwedge_{i = 1}^m C_{f(i)}(x_{j_{i,1}}, \hdots, x_{j_{i,\ar_{f(i)}}})
\]
for some $f:[m]\to I$. Note that variables need not be distinct.
 From this, we can define the corresponding computational problem.

\begin{definition}
  The decision problem $\CSP(\Gamma)$ asks, given an instance $\Psi$ does there exist a solution?
\end{definition}

The following are a few concrete examples of tractable $\Gamma$ in the Boolean domain ($D = \{0, 1\}$).

\begin{itemize}
\item $\twosat$ can be encoded by $\Gamma_{\twosat} = \{C \subset \bits^2 : |C| = 3\}$. There are several efficient algorithms for 2-SAT (e.g., \cite{krom1967decision, DBLP:conf/dagstuhl/BartoKW17}).

\item $\lintwo$ can be encoded by $\Gamma_{\lintwo} = \{\{(x, y, z) : x \oplus y \oplus z = a\} : a \in \bits\}$, where $\oplus$ is addition modulo $2$. $\lintwo$ can be solved in polynomial time using Gaussian elimination.

\item $\hornsat$ is encoded by $\Gamma_{\hornsat} = \{\{(x, y, z) : x \wedge y \rightarrow z\}, \{(0)\}, \{(1)\}\}$. $\hornsat$ can be solved in polynomial time using a deduce-and-propagate style of algorithm.

  Despite the apparent simplicity of the constraint, $\Gamma_{\hornsat}$ is $\mathsf{P}$-complete~\cite{cook2010logical}.

\item $\dualhornsat$ is encoded by $\Gamma_{\dualhornsat} = \{\{(x, y, z) : x \rightarrow y \vee z\}, \{(0)\}, \{(1)\}\}$. Note that $\Psi(x)$ is a $\dualhornsat$ formula if and only if $\Psi'(x) = \Psi(\neg x)$ is a $\hornsat$ formula.
\end{itemize}

As previously mentioned, the computational complexity of any such $\Gamma$ has been classified by the algebraic dichotomy theorem \cite{DBLP:conf/focs/Bulatov17,DBLP:conf/focs/Zhuk17}. In the Boolean case, this is known as Schaefer's theorem.

\begin{theorem}[\cite{Schaefer:1978}]
  Let $\Gamma$ be a Boolean CSP template. Then, either $\CSP(\Gamma)$ is $\mathsf{NP}$-complete or one of the following tractable cases.  In the last four cases below, $\Psi(x)$ is true if and only if $\Psi'(x, y)$ is true some $y$ and $\Psi'$ can be constructed in polynomial time.
  \begin{enumerate}
  \item For all $\Psi(x) \in \CSP(\Gamma)$, $x = (0, \hdots, 0)$ is a solution.
  \item For all $\Psi(x) \in \CSP(\Gamma)$, $x = (1, \hdots, 1)$ is a solution.
  \item For all $\Psi(x) \in \CSP(\Gamma)$, there is a formula $\Psi'(x, y) \in \CSP(\Gamma_{\twosat})$.
  \item For all $\Psi(x) \in \CSP(\Gamma)$, there is a formula $\Psi'(x, y) \in \CSP(\Gamma_{\lintwo})$.
  \item For all $\Psi(x) \in \CSP(\Gamma)$, there is a formula $\Psi'(x, y) \in \CSP(\Gamma_{\hornsat})$.
  \item For all $\Psi(x) \in \CSP(\Gamma)$, there is a formula $\Psi'(x, y) \in \CSP(\Gamma_{\dualhornsat})$
  \end{enumerate}
 
\end{theorem}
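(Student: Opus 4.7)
The plan is to use the modern algebraic approach via polymorphisms, which is cleaner than Schaefer's original combinatorial argument. Recall that a polymorphism of $\Gamma$ is a function $f : \bits^k \to \bits$ such that, when applied componentwise to any $k$ tuples from any relation $C \in \Gamma$, it yields another tuple in $C$. These operations form a clone $\Pol(\Gamma)$, and the Jeavons--Cohen--Gyssens reduction lemma says that $\CSP(\Gamma)$ reduces in polynomial time to $\CSP(\Gamma')$ whenever $\Pol(\Gamma) \subseteq \Pol(\Gamma')$; the reduction is witnessed by pp-definability, which is precisely what produces the formula $\Psi'(x,y)$ in the theorem statement.

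For the tractability direction, I would identify six distinguished Boolean operations --- constant $0$, constant $1$, the ternary majority function, the ternary minority $x \oplus y \oplus z$, binary $\mathsf{AND}$, and binary $\mathsf{OR}$ --- each of which is the characteristic polymorphism of exactly one of the six templates in the statement: the constant $0$ (respectively $1$) polymorphism forces all-zeros (resp.\ all-ones) to be a solution; $\Gamma_{\twosat}$ is preserved by majority, $\Gamma_{\lintwo}$ by minority, $\Gamma_{\hornsat}$ by $\mathsf{AND}$, and $\Gamma_{\dualhornsat}$ by $\mathsf{OR}$. The tractability of the six cases then follows from classical algorithms: direct inspection for constants, the implication-graph / SCC algorithm for majority, Gaussian elimination for minority, and unit propagation (and its dual) for $\mathsf{AND}$/$\mathsf{OR}$.

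The hardness direction is the heart of the argument, and is where one invokes Post's classification of Boolean clones. The key claim to establish is that if $\Pol(\Gamma)$ contains none of the six distinguished operations, then $\Pol(\Gamma)$ is contained in the ``essentially unary'' clone generated by projections and negation; this can be read off Post's lattice by checking that the only maximal clones above the essentially-unary region are those generated by one of our six operations. A clone this small preserves only the very restricted class of relations expressible by unary literals, so every relation used in $\threesat$ becomes pp-definable from $\Gamma$, giving a polynomial-time reduction from $\threesat$ to $\CSP(\Gamma)$.

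The main obstacle is the traversal of Post's lattice itself: verifying that the six distinguished operations truly do cover every ``tractable'' maximal clone over $\bits$, and that the remaining clones all lie in the essentially-unary region, is a nontrivial case analysis (this is what Schaefer proved by hand in 1978 via explicit 3-SAT gadgets, before the algebraic framework was available). A secondary technical point is that pp-definability may require introducing auxiliary existential variables $y$, which is exactly why the theorem statement phrases the reduction as $\Psi(x) \Leftrightarrow \exists y\, \Psi'(x,y)$ rather than as a syntactic rewriting; one has to argue that the quantifier block introduces only polynomially many variables, which follows because each pp-defined relation uses a constant-size gadget depending only on the finite template $\Gamma$.
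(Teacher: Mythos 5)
The paper does not actually prove this statement: it is Schaefer's dichotomy, imported wholesale from \cite{Schaefer:1978}, and the paper merely restates it here (and again in polymorphism form in Appendix~\ref{app:dicot}, citing Post and Chen). So there is no internal proof to compare against. Your sketch is the standard modern algebraic route, and it is exactly the framework the paper itself leans on in the appendix (the Galois correspondence of Theorem~\ref{thm:galois} and Post's classification as in Theorem~\ref{thm:post}), so the approach is appropriate.

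That said, two steps in your outline are not right as written. First, the reduction lemma is stated backwards: $\Pol(\Gamma) \subseteq \Pol(\Gamma')$ gives a pp-definition of every relation of $\Gamma'$ over $\Gamma$, hence a polynomial-time reduction of $\CSP(\Gamma')$ \emph{to} $\CSP(\Gamma)$, not of $\CSP(\Gamma)$ to $\CSP(\Gamma')$ (compare the orientation in Theorem~\ref{thm:galois}). You apply the lemma in the correct direction in both halves of your argument, but the lemma itself, as you state it, is false. Second, the hardness endgame has a gap: if $\Pol(\Gamma)$ is essentially unary but contains the negation operation, then the relations of $\threesat$ are \emph{not} pp-definable from $\Gamma$ (they are not closed under complementation), so you cannot conclude a reduction from $\threesat$; you must instead reduce from a complement-invariant NP-hard problem such as Not-All-Equal $\threesat$, and you must separately dispose of essentially unary polymorphisms that are constant, which place you in tractable cases 1 and 2 rather than in the hard case. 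Finally, be aware that the tractability half silently uses nontrivial ``co-clone identification'' facts --- e.g., that every Boolean relation preserved by $\MAJ_3$ is a conjunction of $2$-clauses, and that every relation preserved by $\AND_2$ is Horn-definable; this is where the real case analysis of Schaefer and Post lives, and a complete proof must establish these equivalences rather than only exhibit the six distinguished polymorphisms.
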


Note that the first two cases are rather superficial and case 6 ($\dualhornsat$) is equivalent to $\hornsat$ up to a global negation of the variables. As such, $\twosat$, $\lintwo$, and $\hornsat$ will be the focus of our study.

\subsection{Mod-CSPs}

With CSPs formally defined, it is now easy to state what a Mod-CSP is. To assist the reader, we give multiple definitions of a Mod-CSP, each with an increasing level of generality.

\begin{definition}
  Let $\Gamma$ be a template over the domain $\{0, 1\}$. Let $M \ge 2$ be an integer. An instance of $\modcsp(\Gamma, M)$ consists of an instance $\Psi(x_1, \hdots, x_n)$ of $\CSP(\Gamma)$ with an additional constraint,
  \[
    x_1 + x_2 + \cdots + x_n = a \mod M
  \]
  for some integer $a$.
\end{definition}

The only structure of arithmetic modulo $M$ we are using is that $\mathbb Z/M\mathbb Z$ (the integers modulo $M$) is an abelian group and that $\{0, 1\}$ has a ``natural'' inclusion in $\mathbb Z/M\mathbb Z$. We generalize both of these aspects in the next definition.

\begin{definition}
  Let $\Gamma$ be a template over a domain $D$. Let $(G, +, 0)$ be an abelian group. An instance of $\modcsp(\Gamma, G)$ consists of an instance $\Psi(x_1, \hdots, x_n)$ of $\CSP(\Gamma)$ and an additional constraint
  \[
    g_1(x_1) + \cdots + g_n(x_n) = a
  \]
  for some $a \in G$ and arbitrary maps $g_1, \hdots, g_n : D \to G$.
\end{definition}

\begin{remark}
  When $\Gamma$ is a Boolean template, $\modcsp(\Gamma, M)$ and $\modcsp(\Gamma, \mathbb Z/M\mathbb Z)$ are subtlely different because $\modcsp(\Gamma, \mathbb Z/M\mathbb Z)$ allows for weighted instances. Even so, these problems are polynomial-time equivalent as long as we make the mild, standard (e.g., \cite{DBLP:conf/dagstuhl/BartoKW17}) assumption that we allow equality of variables because we can simulate the term $i_j(x_j)$ by having $i_j(1) - i_j(0)$ copies of $x_j$ all set equal to each other and subtracting $i_j(0)$ from $a$ to account for the additive shift. Equality can be built as a gadget in $\twosat$, $\lintwo$, and $\hornsat$, so this assumption does not hurt our algorithmic or hardness results for those problems.
\end{remark}

In this paper, $G$ is always finite, and unless otherwise specified the cardinality of $G$ is independent of the number of variables in the instance. By the classification of finite abelian groups, $G$ is a subgroup of $(\mathbb Z/M\mathbb Z)^k$ for some positive integer $M$ and $k$. Therefore $\modcsp(\Gamma, G)$ can then be written as an instance of $\CSP(\Gamma)$ along with a system of $k$ weighted equations modulo $M$. This is the perspective used in Sections \ref{sec:hornsat} and \ref{sec:lineq}.

For completeness, we also define a ``list'' version of Mod-CSPs, whether the modular constraint need not be an equality but rather an inclusion in a set of elements. Note that the list version can be reduced to the equality version by a brute force search over every element of the list. This version will only be used in the 2-SAT analysis in Section~\ref{sec:2sat} and in the hardness analysis in Section~\ref{app:dicot}.

\begin{definition}
  Let $\Gamma$ be a template over domain $D$. Let $(G, +)$ be an abelian group and let $S \subset G$ be nonempty. An instance of $\modcsp(\Gamma, G, S)$ consists of an instance $\Psi$ of $\CSP(\Gamma)$ over variables $x_1, \hdots, x_n$ with a single additional constraint
  \[
    g_1(x_1) + \cdots + g_n(x_n) \in S,
  \]
  for some choice of maps $g_1, \hdots, g_n : D \to G$.
\end{definition}

\subsection{Lucas theorem and other polynomial preliminaries}
We will collect some useful lemmas about polynomial representations of modular constraints. The reader may skip these and come back when needed.
\begin{lemma}[Lucas's theorem]
\label{lem:LucasTheorem}
Let $p$ be a prime and let $a,b$ be non-negative integers with base $p$ expansions given by $a=a_0+a_1p+a_2p^2+\dots$ and $b=b_0+b_1p+b_2p^2+\dots$ for some $0\le a_i,b_i \le p-1$. Then
$$\binom{a}{b}=\prod_{i\ge 0} \binom{a_i}{b_i}\mod p$$ where $\binom{m}{n}=\frac{m(m-1)\dots (m-(n-1))}{n!}$ for $n>0$ and $\binom{m}{0}=1$ for all $m$.
\end{lemma}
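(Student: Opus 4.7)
The plan is to prove Lucas's theorem via a generating-function argument based on the Frobenius identity $(1+x)^p \equiv 1 + x^p \pmod{p}$. First, I would establish this base identity by expanding $(1+x)^p$ with the binomial theorem and observing that for $1 \le k \le p-1$, the coefficient $\binom{p}{k} = \frac{p!}{k!(p-k)!}$ has a factor of $p$ in its numerator that cannot be canceled by the denominator, hence vanishes mod $p$. Iterating this identity (substituting $x \mapsto x^{p^{i-1}}$ and raising to the $p$-th power $i$ times) yields $(1+x)^{p^i} \equiv 1 + x^{p^i} \pmod{p}$ for every $i \ge 0$.

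Next I would write $a = a_0 + a_1 p + a_2 p^2 + \cdots$ and use the multiplicative structure of $(1+x)^a$ together with the iterated Frobenius identity to obtain the congruence
\[
(1+x)^a \;=\; \prod_{i \ge 0} (1+x)^{a_i p^i} \;\equiv\; \prod_{i \ge 0} \bigl(1 + x^{p^i}\bigr)^{a_i} \pmod{p}
\]
as polynomial identities in $\mathbb{F}_p[x]$. Then I would compare coefficients of $x^b$ on both sides. On the left, the coefficient is $\binom{a}{b}$ by the ordinary binomial theorem. On the right, expanding each factor $(1 + x^{p^i})^{a_i}$ gives $\sum_{k_i} \binom{a_i}{k_i} x^{k_i p^i}$, so the coefficient of $x^b$ is $\sum \prod_i \binom{a_i}{k_i}$ over tuples $(k_i)$ with $\sum_i k_i p^i = b$ and $0 \le k_i \le a_i \le p-1$.

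The final step is to observe that since each $k_i$ is constrained to $\{0,1,\ldots,p-1\}$, the equation $\sum_i k_i p^i = b$ has a \emph{unique} solution, namely $k_i = b_i$ for all $i$ (by uniqueness of the base-$p$ representation of $b$). Hence the sum collapses to the single term $\prod_{i \ge 0} \binom{a_i}{b_i}$, giving $\binom{a}{b} \equiv \prod_i \binom{a_i}{b_i} \pmod{p}$. I expect no real obstacle here beyond being careful about the convention $\binom{a_i}{b_i} = 0$ whenever $b_i > a_i$, which automatically makes the identity handle the case where the product vanishes; this matches the polynomial computation since in that regime $(1+x^{p^i})^{a_i}$ simply has no $x^{b_i p^i}$ term.
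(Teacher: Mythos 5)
Your proof is correct and complete: the Frobenius identity $(1+x)^p \equiv 1+x^p \pmod p$, its iteration to $(1+x)^{p^i} \equiv 1+x^{p^i}$, the factorization $(1+x)^a \equiv \prod_i (1+x^{p^i})^{a_i}$, and the coefficient comparison using uniqueness of the base-$p$ representation (with the convention $\binom{a_i}{b_i}=0$ when $b_i>a_i$, which is consistent with the falling-factorial definition in the statement) together constitute the standard generating-function proof of Lucas's theorem. The paper states this lemma as a classical fact without proof, so there is nothing to compare against; your argument would serve as a valid self-contained proof.
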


\noindent Denote by $s_k(x)$ the $k^{th}$ elementary symmetric polynomial given by $s_k(x)=\sum_{i_1<i_2<\dots<i_k}x_{i_1}x_{i_2}\dots x_{i_k}$.

\begin{lemma}\label{lem:primepower_divisibility_Hamming}
Let $p$ be some prime, $\ell\ge 1$ and $0\le a\le p^\ell-1$ be integers. Let $a=a_0+a_1 p+ \dots +a_{\ell-1}p^\ell$ be the base $p$ expansion of $a$. Then for every $x\in \bits^n$,$$\Ham(x)=a\mod p^\ell \iff s_{p^t}(x)=a_i \mod p\ \forall\ 0\le t\le \ell-1.$$
\end{lemma}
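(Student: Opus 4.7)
The plan is to prove this by combining two simple observations: first, that the elementary symmetric polynomial evaluated on a Boolean vector is a binomial coefficient of the Hamming weight; and second, that Lucas's theorem makes the required binomial coefficients very easy to compute modulo $p$.

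First I would observe that for any $x \in \bits^n$ with $h = \Ham(x)$, one has $s_k(x) = \binom{h}{k}$, since each monomial $x_{i_1}\cdots x_{i_k}$ equals $1$ iff all those indices lie in the support of $x$, and the number of such choices is $\binom{h}{k}$. Specializing to $k = p^t$, this reduces the lemma to a statement purely about the integer $h$: namely, that $h \equiv a \pmod{p^\ell}$ iff $\binom{h}{p^t} \equiv a_t \pmod{p}$ for every $0 \le t \le \ell-1$.

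Next I would compute $\binom{h}{p^t} \bmod p$ using Lucas's theorem (Lemma~\ref{lem:LucasTheorem}). Write $h = h_0 + h_1 p + h_2 p^2 + \cdots$ in base $p$. The base-$p$ digits of $p^t$ are all $0$ except for a $1$ in position $t$. Hence
\[
\binom{h}{p^t} \equiv \binom{h_t}{1}\prod_{i \ne t}\binom{h_i}{0} \equiv h_t \pmod{p},
\]
so $s_{p^t}(x) \equiv h_t \pmod{p}$ where $h_t$ is the $t$-th base-$p$ digit of $\Ham(x)$.

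To finish, I would note the elementary fact that, since $0 \le a \le p^\ell-1$ with base-$p$ expansion $a = a_0 + a_1 p + \cdots + a_{\ell-1}p^{\ell-1}$, the congruence $h \equiv a \pmod{p^\ell}$ is equivalent to $h_t = a_t$ for all $0 \le t \le \ell-1$ (each $a_t, h_t \in \{0,1,\dots,p-1\}$, so equality mod $p$ is genuine equality). Combining with the previous step, $\Ham(x) \equiv a \pmod{p^\ell}$ iff $h_t \equiv a_t \pmod p$ for each such $t$ iff $s_{p^t}(x) \equiv a_t \pmod p$ for each such $t$, which is exactly the claim. There is no real obstacle here; the only thing to be careful about is the bookkeeping between the digits of $a$, the digits of $h$, and the use of Lucas's theorem for the specific value $p^t$.
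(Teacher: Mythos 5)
Your proof is correct and follows essentially the same route as the paper's: identify $s_{p^t}(x)$ with $\binom{\Ham(x)}{p^t}$, apply Lucas's theorem to extract the $t$-th base-$p$ digit of $\Ham(x)$, and observe that congruence mod $p^\ell$ is equivalent to agreement of the first $\ell$ digits. Your write-up is slightly more explicit about why digit equality mod $p$ is genuine equality, but the argument is the same.
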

\begin{proof}
Let $\Ham(x)=b_0+b_1 p +b_2 p^2+\dots$ be the base $p$ expansion of $\Ham(x)$. $\Ham(x)=a\mod p^\ell$ iff $a_t=b_t$ for all $0\le t \le \ell-1$.
By Lemma~\ref{lem:LucasTheorem},  $\binom{\Ham(x)}{p^t}=\binom{b_t}{1}=b_t \mod p$. Note that $\binom{\Ham(x)}{p^t} = s_{p^t}(x)$. Thus $\Ham(x)=a\mod p^\ell$ iff $s_{p^t}(x)=a_t \mod p$ for every $0\le t\le \ell-1$.
\end{proof}

\begin{lemma}\label{lem:primepower_divisibility_Hamming_2}
Let $p$ be some prime. For every $\ell\ge 1$ and $0\le a\le p^\ell-1$, there exists a degree $p^\ell-1$ polynomial $\phi_{\ell,a}\in \F_p[x_1,\dots,x_n]$ such that for every $x\in \bits^n$,
\[\phi_{\ell,a}(x) \mod p =
\begin{cases}
0& \text{ if } \Ham(x)=a\mod p^\ell\\
1& \text{ if } \Ham(x)\ne a\mod p^\ell.
\end{cases}
\]
\end{lemma}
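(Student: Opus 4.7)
The plan is to combine the characterization from Lemma~\ref{lem:primepower_divisibility_Hamming} with Fermat's little theorem to turn each of the $\ell$ elementary-symmetric conditions into a polynomial indicator, and then multiply these indicators together.

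First, I would recall from Lemma~\ref{lem:primepower_divisibility_Hamming} that $\Ham(x) = a \bmod p^\ell$ holds if and only if $s_{p^t}(x) \equiv a_t \pmod{p}$ for all $0 \le t \le \ell-1$. For a single such condition, Fermat's little theorem gives
\[
(s_{p^t}(x) - a_t)^{p-1} \equiv \begin{cases} 0 \pmod p & \text{if } s_{p^t}(x) \equiv a_t \pmod p,\\ 1 \pmod p & \text{otherwise.}\end{cases}
\]
Hence $1 - (s_{p^t}(x)-a_t)^{p-1}$ is a $\{0,1\}$-valued polynomial that equals $1$ precisely when the $t$-th condition holds.

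Second, I would define
\[
\phi_{\ell,a}(x) \;=\; 1 - \prod_{t=0}^{\ell-1}\bigl(1 - (s_{p^t}(x) - a_t)^{p-1}\bigr).
\]
By construction, the product inside evaluates to $1\bmod p$ exactly when \emph{all} conditions $s_{p^t}(x) \equiv a_t \pmod p$ hold simultaneously, and to $0\bmod p$ otherwise. Therefore $\phi_{\ell,a}(x)\equiv 0\pmod p$ iff $\Ham(x)=a\bmod p^\ell$, and $\phi_{\ell,a}(x)\equiv 1\pmod p$ otherwise, as required.

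Finally, I would verify the degree bound. The elementary symmetric polynomial $s_{p^t}$ has degree $p^t$, so $(s_{p^t}(x)-a_t)^{p-1}$ has degree $p^t(p-1)$. Multiplying over $t=0,\dots,\ell-1$, the total degree is
\[
\sum_{t=0}^{\ell-1} p^t (p-1) \;=\; (p-1)\cdot \frac{p^\ell - 1}{p-1} \;=\; p^\ell - 1,
\]
matching the claimed bound. There is essentially no obstacle here; the only mild point to be careful about is that Lemma~\ref{lem:primepower_divisibility_Hamming} requires $x\in\bits^n$ (so that $s_{p^t}(x) = \binom{\Ham(x)}{p^t}$), which is exactly the hypothesis of the present lemma.
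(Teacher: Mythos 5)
Your proof is correct and matches the paper's argument essentially verbatim: the paper also combines Lemma~\ref{lem:primepower_divisibility_Hamming} with Fermat's little theorem, defines $\phi_{\ell,a}(x)=1-\prod_{t=0}^{\ell-1}\bigl(1-(a_t-s_{p^t}(x))^{p-1}\bigr)$, and computes the degree as $(p-1)(1+p+\cdots+p^{\ell-1})=p^\ell-1$. The only cosmetic difference is the sign inside the $(p-1)$-st power, which is immaterial.
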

\begin{proof}
Let $a=a_0+a_1 p+\dots+a_{\ell-1}p^{\ell-1}$ be the base $p$ expansion of $a$. By Lemma~\ref{lem:primepower_divisibility_Hamming} and Fermat's little theorem, $$\phi_
{\ell,a}(x)=1-\prod_{t=0}^{\ell-1}\left(1-(a_t-s_{p^t}(x))^{p-1}\right)$$ is the required polynomial of degree $(p-1)(1+p+\dots+p^{\ell-1})=p^\ell-1$.
\end{proof}

\begin{lemma}[DeMillo-Lipton-Schwartz-Zippel Lemma]\label{lem:SchwartzZippel}
Let $f\in \F_2[x_1,\dots,x_n]$ be a degree $r$ polynomial and let $x_0\in \F_2^n$ be some fixed point. Then $$\Pr_{x\in \F_2^n}[f(x)=f(x_0)]\ge \frac{1}{2^r}$$ where the probability is over a uniformly random point $x\in \F_2^n$.
\end{lemma}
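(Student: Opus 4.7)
The plan is to reduce the statement to the standard multilinear Schwartz--Zippel bound over $\F_2$, which says that a nonzero multilinear polynomial of degree $r$ in $n$ variables has at least $2^{n-r}$ nonzeros. The naive reformulation $\Pr_x[f(x) = f(x_0)] = \Pr_x[f(x)-f(x_0) = 0]$ does not directly work, since standard Schwartz--Zippel bounds \emph{zeros} of a nonzero polynomial from above, not from below; the key trick is to flip roles by working with the shifted polynomial $1 + f(x) + f(x_0)$ instead.

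First, since we evaluate over $\F_2^n$, I would assume without loss of generality that $f$ is multilinear (reducing $x_i^k$ to $x_i$ only decreases the degree). Then introduce
\[
  h(x) \;:=\; 1 + f(x) + f(x_0) \;\in\; \F_2[x_1,\dots,x_n],
\]
which is multilinear of degree at most $r$ and satisfies $h(x_0) = 1$; in particular $h \not\equiv 0$. Because $f$ is $\F_2$-valued, $h(x) = 1$ iff $f(x) = f(x_0)$, so the target probability equals $\Pr_x[h(x) \ne 0]$, and it suffices to lower-bound this quantity by $2^{-r}$.

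I would now establish the multilinear Schwartz--Zippel claim: for any nonzero multilinear $g \in \F_2[x_1,\dots,x_n]$ of degree exactly $r$, $\Pr_{x \in \F_2^n}[g(x) \ne 0] \ge 2^{-r}$. The proof is by induction on $r$. The base $r=0$ forces $g \equiv 1$ and gives probability $1$. For $r \ge 1$, pick a variable $x_i$ appearing in some degree-$r$ monomial of $g$ and decompose $g(x) = x_i\, g_1(x_{-i}) + g_0(x_{-i})$. By the choice of $x_i$, the polynomial $g_1$ is nonzero multilinear of degree exactly $r-1$ on the remaining variables, so the inductive hypothesis yields $\Pr_{x_{-i}}[g_1(x_{-i}) \ne 0] \ge 2^{-(r-1)}$. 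Whenever $g_1(x_{-i}) \ne 0$, exactly one of the two extensions $x_i \in \{0,1\}$ makes $g(x) \ne 0$, so
\[
  \Pr_x[g(x) \ne 0] \;\ge\; \tfrac{1}{2}\, \Pr_{x_{-i}}[g_1(x_{-i}) \ne 0] \;\ge\; 2^{-r},
\]
closing the induction. Applying this claim to $h$ yields $\Pr_x[f(x)=f(x_0)] = \Pr_x[h(x) \ne 0] \ge 2^{-r}$. I do not foresee any serious obstacle beyond spotting the reformulation via $h$; once that move is made, the induction is completely routine.
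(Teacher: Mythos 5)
The paper states this lemma without proof, treating it as a known variant of the DeMillo--Lipton--Schwartz--Zippel bound over small fields, so there is no in-paper argument to compare against. Your proof is correct and is essentially the standard inductive argument for the ``$\ge 2^{n-r}$ nonzeros'' form of the lemma over $\F_2$. The reformulation via $h(x)=1+f(x)+f(x_0)$ is a clean way to convert ``at least as many points agreeing with $x_0$'' into ``at least as many nonzeros of a nonzero polynomial,'' and the induction (split off a variable appearing in a top-degree monomial, note that $g_1$ is nonzero multilinear of degree $r-1$, and use that for each surviving $x_{-i}$ exactly one choice of $x_i$ gives $g\ne 0$) is sound. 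One cosmetic gap: you state the auxiliary claim for multilinear $g$ of degree \emph{exactly} $r$, but then apply it to $h$, which you only know has degree \emph{at most} $r$. This is harmless since the bound $2^{-\deg(h)}\ge 2^{-r}$ is monotone, but the claim should really be phrased as ``degree at most $r$'' (or you should note the monotonicity) to make the application immediate.
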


\section{$\hornsatmod_M$}
\label{sec:hornsat}
Consider a Boolean CSP on variables $x_1, \hdots, x_n$ with three kinds of constraints: $x_i = 0$, $x_i = 1$, or $x_{i_1} \vee \neg x_{i_2} \vee \cdots \vee \neg x_{i_k} = 1$ (with only the first variable not negated)\footnote{This is equivalent to $x_{i_2}\wedge \dots \wedge x_{i_k}\rightarrow x_{i_1}$.}. Such a CSP is known as $\hornsat$. Fix an instance $\Psi$ of $\hornsat$ and let $\mathcal F \subset \{0, 1\}^n$ be the family of solutions. Think of each vector $v \in \{0, 1\}^n$ as the set $\{i : v_i = 1\}$, so $\mathcal F$ is viewed as a family of subsets of $[n]$. The important property of $\hornsat$ is that $\mathcal F$ is an intersecting family: $A, B \in \mathcal F$ implies that $A \cap B \in \mathcal F$. \footnote{This is equivalent to $\hornsat$ having $\AND_2$ as a polymorphism.} In essence, $\hornsat$ is a white-box model for studying intersecting families.

\begin{definition}
$\hornsatmod_M(n)$ is the following algorithmic problem: Given an instance of $\hornsat$ on $n$ variables $x_1,x_2,\dots,x_n$ along with a modular constraint $\sum_i a_i x_i = a_0 \mod M$, decide if there is a solution in $x\in \bits^n.$
\end{definition}

In this section, we will present an algorithm for $\hornsatmod_M$ and analyze its running time.
\subsection{Algorithm for $\hornsatmod_M$}

Fix an instance $\Psi$ of $\hornsat$ and let $\mathcal F \subset \{0, 1\}^n$ be the family of solutions.
The algorithm $\FindMinimal$ shows that given any $A \subset [n]$, we can efficiently find the unique minimal $B \in \mathcal F$ such that $B \supset A$. Note that the uniqueness of $B$ follows from the intersection-closed property of $\cF$.

\begin{algorithm}[h]
\caption{Algorithm $\FindMinimal$}
\label{alg:findminimal}
  \begin{itemize}
  \item Input: Instance $\Psi$ of $\hornsat$, $x \in \{0, 1\}^n$.
  \item Output: $y \in \{0, 1\}^n$ with $y\ge x$, $y$ satisfies $\Psi$, and $y$ minimal.
  \item $\FindMinimal(x)$
    \begin{enumerate}
    \item Set $y \ot x$.
    \item While $y$ does not satisfy $\Psi$,
      \begin{enumerate}
      \item Find a clause $C(y_{i_1}, \hdots y_{i_k})$ which fails.
      \item Find the minimal $z \ge y$ which satisfies $C$.
      \item If $z$ does not exist, return $\operatorname{NO-SOLUTION}$.
      \item Set $y \ot z$.
      \end{enumerate}
    \item Return $y$.
    \end{enumerate}
  \end{itemize}
\end{algorithm}
Note that $\FindMinimal$ will converge in $n$ rounds, because in each round the Hamming weight of $y$ will strictly increase.
We are now ready to present our algorithm for $\hornsatmod_M$. Our algorithm uses this $\FindMinimal$ routine, and is directly inspired by the algorithm in N\"agle, Sudakov, and Zenklusen~\cite{NageleSZ18} for submodular minimization with modular constraints. Wlog, we can assume that the coefficients $a_1,\dots,a_n=1$ in the modular constraint. This is because, we can create $a_i$ copies of variable $x_i$ with equality constraints among these copies. And equality can be implement using $\hornsat$ clauses as: $a=b$ iff $a\rightarrow b$ and $b\rightarrow a$. This will increase the variables by a factor of $M$. For a bit vector $x\in \bits^n$, $\Ham(x)$ denotes the Hamming weight of $x$ i.e. number of 1's in $x$.

\begin{algorithm}[H]
\caption{Algorithm for $\hornsatmod_M(n)$ with $R$ rounds}
\label{alg:horn_sat_modm}
  \begin{itemize}
  \item Input: instance $\Psi$ of $\hornsat$ with a global constraint $\sum x_i = r \mod M$.
  \item Output: either a solution $x \in \{0, 1\}^n$ or $\operatorname{NO-SOLUTION}$.
  \item Method:
    \begin{enumerate}
    \item For all $x \in \{0, 1\}^n$ with $\Ham(x) \le R$, test if $\FindMinimal(x)$ returns a satisfying assignment.
    \item Else, return $\operatorname{NO-SOLUTION}$.
    \end{enumerate}
  \end{itemize}
\end{algorithm}

We will relate the number of rounds needed in Algorithm~\ref{alg:horn_sat_modm} to certain combinatorial families called $(M,r,d)$-systems.

\begin{definition}[\cite{NageleSZ18}]
\label{def:Mrd_system}
A collection of subsets $\cF \subset 2^{[d]}$ is called an $(M,r,d)$-system if
\begin{enumerate}
\item $\cF$ is closed under intersections i.e. if $F,G\in \cF$ then $F\cap G\in \cF$.
\item For every $F\in \cF$, $|F|\neq d \mod M$.
\item For every subset $S\subset [d]$ of size at most $r$, there exists an $F\in \cF$ which contains it.
\end{enumerate}
\end{definition}

\begin{proposition}
\label{prop:horn_sat_modm_correctness}
  If Algorithm~\ref{alg:horn_sat_modm} with $R$ rounds fails on some instance of $\hornsatmod_M(n)$ then there exists an $(M,R,d)$-system  for some $d\le n$.
\end{proposition}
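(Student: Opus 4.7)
The plan is to show that when Algorithm~\ref{alg:horn_sat_modm} fails, a minimum-weight valid solution $y^*$ gives rise, via its proper sub-solutions, to an $(M,R,d)$-system on $\supp(y^*)$. Since the algorithm fails, there exists an assignment satisfying both $\Psi$ and the modular constraint, yet $\FindMinimal(x)$ does not produce such a solution for any $x\in\bits^n$ with $\Ham(x)\le R$. Let $y^*$ be a valid combined solution of \emph{minimum} Hamming weight, and set $d=\Ham(y^*)$, $T=\supp(y^*)\subseteq[n]$, so $d\le n$ and $d\equiv r\pmod M$. Note $d>R$: otherwise $|T|\le R$ and $\FindMinimal$ on the indicator vector of $T$ would return $y^*$, contradicting failure.

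After identifying $T$ with $[d]$, define
\[\cF=\bigl\{\supp(y) \st y\in\bits^n,\ \supp(y)\subsetneq T,\ y\text{ satisfies }\Psi\bigr\}\subseteq 2^{[d]}.\]
I verify the three conditions of Definition~\ref{def:Mrd_system}. For intersection-closure, $\hornsat$ solutions are closed under coordinatewise $\wedge$ (equivalently, support intersection), and if $F_1,F_2\subsetneq T$ then $F_1\cap F_2\subseteq F_1\subsetneq T$. For the weight condition, any $F\in\cF$ corresponds to a $y$ satisfying $\Psi$ with $\supp(y)=F\subsetneq T$, so $|F|=\Ham(y)<d$; if additionally $|F|\equiv d\equiv r\pmod M$, then $y$ itself would be a valid combined solution of strictly smaller Hamming weight, contradicting the minimality of $y^*$.

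For the covering condition, let $S\subseteq [d]$ with $|S|\le R$, and let $z\in\bits^n$ be the indicator vector of $S$ (viewed inside $T$). Since $z\le y^*$ and $y^*$ satisfies $\Psi$, the set of solutions dominating $z$ is nonempty and closed under $\wedge$, so $\FindMinimal(z)$ exists and its output $y$ is $\le y^*$ coordinatewise. Because the algorithm failed on input $z$, the output $y$ does not meet the modular constraint, whence $y\ne y^*$; therefore $\supp(y)\in\cF$ and $\supp(y)\supseteq S$, as desired.

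The one delicate point is arranging that $[d]$ itself (which corresponds to $y^*$) is \emph{excluded} from $\cF$, so that condition~2 of Definition~\ref{def:Mrd_system} can hold. This is precisely why $\cF$ is restricted to proper sub-solutions of $y^*$, and why $y^*$ must be of minimum Hamming weight: if one chose a non-minimum $y^*$, some strictly smaller valid $y$ with $\Ham(y)\equiv d\pmod M$ could reappear in $\cF$ and violate condition~2.
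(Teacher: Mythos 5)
Your proof is correct and follows essentially the same path as the paper: take a minimal valid solution $A$ (you use minimum Hamming weight, the paper uses inclusion-minimality, but both suffice since either one rules out a proper $\hornsat$-sub-solution meeting the congruence), restrict the $\hornsat$-solution family to proper subsets of $\supp(A)$, and verify the three $(M,R,d)$-system axioms via intersection-closure, the minimality of $A$, and $\FindMinimal$ applied to small subsets of $\supp(A)$. The argument that $d>R$ and that $\FindMinimal(z)\le y^*$ because the solution family above $z$ is nonempty and intersection-closed is the same reasoning the paper leaves implicit.
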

\begin{proof}
  Let $\Psi$ be some $\hornsatmod_M(n)$ instance on which the algorithm with $R$ rounds fails to find a solution even though one exists. Let $\cF$ be the set of solutions of the $\hornsat$ instance in $\Psi$ without the modulo $M$ constraint, $\cF$ is an intersection-closed family. Let $A$ be a minimal solution for $\Psi$ that obeys the modular constraint. Therefore $\FindMinimal(B) \subsetneq A$ for all $B \subset A$ with $|B| \le R$. Consider the family $\mathcal F^{A} = \{B : B\subset A,\ B \in \mathcal F\}$. We claim that $\mathcal H = \mathcal F^A \setminus \{A\}$ is a $(M, R, |A|)$-system.
\begin{enumerate}
\item $\mathcal H$ is a intersecting family since $\mathcal F^{A}$ is an intersecting family whose universe is $A$.
\item $|H| \neq |A| \mod M$ for all $H \in \mathcal H$ as $A$ is a minimal solution to $\Psi$.
\item Finally for all $S \subset A$ with $|S| \le R$, there is $H \in \mathcal H$ with $S \subset H$ because $\FindMinimal(S) \in \mathcal H$.
\end{enumerate}

\end{proof}

The following proposition from~\cite{NageleSZ18} gives bounds on $(M,R,d)$-systems when $M$ is a prime power.
\begin{proposition}[\cite{NageleSZ18}]
\label{prop:mrd_system_primepower}
Let $M$ be a prime power. Then there does not exist an $(M,R,d)$-system with $R\ge M-1$ for any $d$.
\end{proposition}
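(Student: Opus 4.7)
I plan to prove Proposition~\ref{prop:mrd_system_primepower} by contradiction, combining a counting identity with Lucas's theorem. Suppose toward contradiction that $\cF$ is an $(M,M-1,d)$-system for $M=p^\ell$.

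The starting observation is that intersection-closure (1) and covering (3) together yield a canonical ``closure map'' $S\mapsto F_S:=\bigcap\{F\in\cF:S\subseteq F\}\in\cF$ defined on all $S\subseteq[d]$ with $|S|\le M-1$; in particular $|F_S|\not\equiv d\pmod M$ by property (2). Double-counting pairs $(S,F)$ with $|S|=M-1$ and $F_S=F$ gives the identity
\[
\binom{d}{M-1}\;=\;\sum_{F\in\cF}\mu(F),\qquad \mu(F):=\bigl|\{S:|S|=M-1,\,F_S=F\}\bigr|,
\]
with $\mu(F)\le\binom{|F|}{M-1}$ for every $F$.

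The key step is to reduce this identity modulo $p$ using Lucas's theorem (Lemma~\ref{lem:LucasTheorem}). Since $M-1=p^\ell-1$ has base-$p$ digits all equal to $p-1$, Lucas tells us that $\binom{n}{M-1}\not\equiv 0\pmod p$ iff $n\equiv M-1\pmod M$. So for every $F\in\cF$ with $|F|\not\equiv M-1\pmod M$ we have $\mu(F)\equiv 0\pmod p$. In the special case $d\equiv M-1\pmod M$, property (2) gives $|F|\not\equiv M-1$ for all $F\in\cF$, so the right-hand side of the identity vanishes mod $p$ while $\binom{d}{M-1}\not\equiv 0\pmod p$ --- an immediate contradiction.

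The main obstacle is handling arbitrary residues $d\bmod M$. For this I would weight the pair count by the polynomial $\phi_{\ell,d}(\indicator_F)$ from Lemma~\ref{lem:primepower_divisibility_Hamming_2}, which, expanded as $\phi_{\ell,d}=\sum_{k=0}^{M-1}a_ks_k$ in the basis of elementary symmetric polynomials, acts as a mod-$p$ indicator of $|\cdot|\equiv d\pmod M$. Property (2) forces $\phi_{\ell,d}(\indicator_F)\equiv 1\pmod p$ for every $F\in\cF$, so $\sum_F\mu(F)\phi_{\ell,d}(\indicator_F)\equiv\binom{d}{M-1}\pmod p$; re-expanding in the $s_k$ basis produces a combination of quantities $\sum_F\mu(F)\binom{|F|}{k}$ which, upon another application of Lucas combined with property (2), I expect to vanish identically mod $p$ for every $k$. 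The delicate part is verifying that this $\F_p$-linear combination truly cancels for \emph{every} residue of $d$, and identifying where intersection-closure (not just (2)--(3)) enters the argument: without closure, the exact count $\sum_F\mu(F)=\binom{d}{M-1}$ degrades to an inequality, breaking the mod-$p$ cancellation. A cleaner alternative would be a shifting reduction --- augmenting $[d]$ with phantom universal elements to move $d\bmod M$ to $M-1$ and invoking the first case --- which I would attempt if the direct $\phi_{\ell,d}$-weighted identity resists verification.
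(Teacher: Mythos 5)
Your route is genuinely different from the paper's: the paper converts an $(M,R,d)$-system into a polynomial representing $\NAND \mod M$ (Proposition~\ref{prop:Mrd_system_to_polynomial}) and then bounds the covering number of any such representation by $M-1$ via the degree-$(M-1)$ indicator polynomial of Lemma~\ref{lem:primepower_divisibility_Hamming_2} and the uniqueness of the multilinear representation of $\NAND\bmod p$ (Proposition~\ref{prop:covering_nand_primepower}). You instead work directly on the set system, counting closures of $(M-1)$-subsets modulo $p$. The skeleton is salvageable and, once repaired, more elementary than the paper's argument, but as written there is a genuine gap at the central step.

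The gap is the sentence deducing $\mu(F)\equiv 0\pmod p$ from $\mu(F)\le\binom{|F|}{M-1}$ and $\binom{|F|}{M-1}\equiv 0\pmod p$: an upper bound on an integer says nothing about its residue, so this is a non sequitur, and the per-$F$ claim is in fact false in general (if some $G\subsetneq F$ in $\cF$ has $|G|\equiv M-1\pmod M$, then $\mu(F)$ need not vanish mod $p$). What you actually need is the refinement of your double count: every $(M-1)$-subset of $F$ has its closure inside $F$, so $\binom{|F|}{M-1}=\sum_{G\in\cF,\,G\subseteq F}\mu(G)$, and \emph{if every} $G\in\cF$ has $|G|\not\equiv M-1\pmod M$, then induction up the containment poset (M\"obius inversion) gives $\mu(G)\equiv 0\pmod p$ for all $G$, yielding your contradiction with $\binom{d}{M-1}\not\equiv 0\pmod p$. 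Property (2) supplies this global hypothesis exactly when $d\equiv M-1\pmod M$, so the special case closes once you add the induction. For general $d$, the $\phi_{\ell,d}$-weighted identity is left explicitly unverified, but the ``phantom element'' reduction you mention as a fallback does work and is the cleaner fix: adjoin $t$ new universal elements with $d+t\equiv M-1\pmod M$ to the ground set and to every member of $\cF$; intersection-closure, the covering property, and $|F|+t\not\equiv d+t\pmod M$ are all preserved, so you land in the special case. With the inductive step supplied and the shift carried out, the proof is complete.
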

We will give a simpler proof of the above proposition in Section~\ref{sec:OR_modm_covering} by a connection to polynomials representing $\NAND\mod M$ over $\bits$ basis. Combining Propositions~\ref{prop:horn_sat_modm_correctness} and \ref{prop:mrd_system_primepower}, we have the following corollary.

\begin{corollary}
\label{cor:horn_sat_mod_primepower}
Let $M$ be a prime power. Then Algorithm~\ref{alg:horn_sat_modm} with $R=M-1$ rounds solves $\hornsatmod_M(n)$ correctly.
\end{corollary}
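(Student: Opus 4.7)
The corollary is essentially a two-line combination of the two preceding propositions, so my plan is to set up that combination carefully and to handle the one-sided nature of the algorithm's output correctly. First I would observe that Algorithm~\ref{alg:horn_sat_modm} is sound in the trivial direction: any output $y = \FindMinimal(x)$ that is declared a solution has been explicitly verified to satisfy $\Psi$ and the global modular constraint, so a false positive is impossible. The only way the algorithm can be incorrect is to return $\operatorname{NO-SOLUTION}$ on an instance that admits a satisfying assignment.

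The plan is then to rule out this possibility by contradiction. Suppose, toward a contradiction, that there is some instance of $\hornsatmod_M(n)$ on which Algorithm~\ref{alg:horn_sat_modm} with $R = M-1$ rounds fails in the above sense. By Proposition~\ref{prop:horn_sat_modm_correctness}, this failure implies the existence of an $(M, R, d)$-system, i.e., an $(M, M-1, d)$-system, for some $d \le n$. However, since $M$ is a prime power, Proposition~\ref{prop:mrd_system_primepower} asserts that no $(M, R', d)$-system exists for any $d$ whenever $R' \ge M-1$. Applying this with $R' = M-1$ gives a contradiction, so no such failing instance exists.

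Thus the algorithm is correct on every input; together with the soundness observation above, Algorithm~\ref{alg:horn_sat_modm} with $R = M-1$ rounds solves $\hornsatmod_M(n)$ correctly. I would also briefly remark on the runtime for completeness: Step~1 of Algorithm~\ref{alg:horn_sat_modm} enumerates $\binom{n}{\le R} = n^{O(M)}$ subsets $x$, and each call to $\FindMinimal$ runs in polynomial time (it converges within $n$ rounds, as the Hamming weight of $y$ strictly increases each iteration), yielding overall runtime $n^{M + O(1)}$.

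There is no substantive obstacle here, since all the combinatorial work is encapsulated in the two propositions already available. The only minor care needed is to handle the reduction to unit coefficients in the modular constraint (noted in the text, via $M$-fold duplication of variables with $\hornsat$-enforced equality), which affects only constants in the exponent of the runtime and does not affect correctness. The genuinely non-trivial statement, Proposition~\ref{prop:mrd_system_primepower}, is assumed here and will be proved separately via the $\NAND \bmod M$ polynomial-representation viewpoint in Section~\ref{sec:OR_modm_covering}.
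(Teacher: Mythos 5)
Your proposal is correct and matches the paper exactly: the paper derives Corollary~\ref{cor:horn_sat_mod_primepower} by combining Proposition~\ref{prop:horn_sat_modm_correctness} (failure implies an $(M,R,d)$-system) with Proposition~\ref{prop:mrd_system_primepower} (no $(M,M-1,d)$-systems exist for prime-power $M$), precisely as you do. Your added remarks on one-sided soundness and runtime are harmless elaborations beyond what the paper states.
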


When $M$ is has multiple prime factors, we show that $\hornsatmod_M$ cannot be solved in polynomial time assuming $\ETH$. But we believe that for any fixed $M$, Algorithm~\ref{alg:horn_sat_modm} with $R=n^{o(1)}$ rounds should solve $\hornsatmod_M$. We make a conjecture about polynomial representations of $\NAND\mod M$ in Section~\ref{sec:OR_modm_covering} (Conjecture~\ref{conj:covering_number}) which would imply this.
\begin{proposition}
\label{cor:horn_sat_mod_nonprimepower}
Conjecture~\ref{conj:covering_number} implies that Algorithm~\ref{alg:horn_sat_modm} with $R=O_M(n^{o(1)})$ rounds solves $\hornsatmod_M(n)$ correctly for any $M$.
\end{proposition}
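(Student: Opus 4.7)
The plan is to argue by contrapositive: a failure of Algorithm~\ref{alg:horn_sat_modm} produces, via Proposition~\ref{prop:horn_sat_modm_correctness}, an $(M,R,d)$-system with $d \le n$; under the equivalence noted in the introduction this translates to a polynomial representation of $\NAND \mod M$ over the $\bits$ basis with $d$ monomials and covering number strictly greater than $R$; Conjecture~\ref{conj:covering_number} then caps this covering number by a sub-polynomial function of the number of monomials, forcing $R$ to be sub-polynomial in $n$ as well.

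Concretely, I would let $\kappa_M(s)$ denote the (conjectured) upper bound on the covering number of any polynomial that represents $\NAND \mod M$ over $\bits$ with at most $s$ monomials, so that $\kappa_M(s) = s^{o_M(1)}$. Without loss of generality I may assume $\kappa_M$ is monotone nondecreasing (replacing it by the running maximum $\max_{s' \le s} \kappa_M(s')$ if necessary, or observing monotonicity directly by adjoining trivial monomials that do not change the NAND representation). Now set
\[
R \;=\; \kappa_M(n)+1 \;=\; O_M\bigl(n^{o(1)}\bigr),
\]
and run Algorithm~\ref{alg:horn_sat_modm} with this value of $R$ on an arbitrary $\hornsatmod_M(n)$ instance $\Psi$. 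Suppose for contradiction that the algorithm answers $\operatorname{NO-SOLUTION}$ although $\Psi$ has a satisfying assignment. Proposition~\ref{prop:horn_sat_modm_correctness} then yields an $(M,R,d)$-system for some $d \le n$, and the equivalence between $(M,R,d)$-systems and polynomial representations of $\NAND \mod M$ produces a polynomial with $d$ monomials and covering number exceeding $R$. Since $d \le n$ and $\kappa_M$ is monotone, its covering number is at most $\kappa_M(d) \le \kappa_M(n) = R-1$, which contradicts the bound $>R$. Therefore the algorithm is correct, with $R = O_M(n^{o(1)})$ as required.

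The only non-mechanical step is the appeal to the $(M,R,d)$-system $\leftrightarrow$ polynomial-representation equivalence, which the paper asserts as an observation and treats in Section~\ref{sec:OR_modm_covering}; we use it as a black box. A very minor bookkeeping issue is the monotonicity of $\kappa_M$, handled as above. Once both are in place, the result is essentially a one-line contradiction against the conjectured sub-polynomial growth of $\kappa_M$.
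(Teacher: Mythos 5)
The paper does not include an explicit proof of this proposition (it is stated after Proposition~\ref{prop:horn_sat_modm_correctness} and Conjecture~\ref{conj:covering_number} without a proof block), and your argument is exactly the intended chain: failure of the $R$-round algorithm $\Rightarrow$ $(M,R,d)$-system (Proposition~\ref{prop:horn_sat_modm_correctness}) $\Rightarrow$ sparse $\NAND \bmod M$ representation with covering number exceeding $R$ (Proposition~\ref{prop:Mrd_system_to_polynomial}) $\Rightarrow$ contradiction with the conjectured bound. Your argument is correct; the only small bookkeeping imprecision is that Proposition~\ref{prop:Mrd_system_to_polynomial} yields a polynomial with $\coeffnorm{p}\le d+M-1$ (hence at most $d+M-1\le n+M-1$ monomials), not literally $d$ monomials, and the Wlog-reduction to unit coefficients in the modular constraint may blow $n$ up to $Mn$. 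Since $M$ is a fixed constant, both of these still give $\kappa_M(O_M(n))=O_M(n^{o(1)})$, so your conclusion and your monotonicity-of-$\kappa_M$ fix go through unchanged.
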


\subsection{$\hornsatmod_M$ with multiple modular constraints}
A natural extension of $\hornsatmod_M$ is to allow $k$ linear equations modulo $M$, which we will denote by $\hornsatmod_{M,k}$.
We can show that our algorithm and its analysis can be naturally extended to show that $\hornsatmod_{M,k}(n)$ can be solved in time $n^{k(M-1)+O(1)}$ when $M$ is a prime power. Note that once $k$ becomes linear in $n$, $\hornsatmod_{M,k}$ becomes NP-hard for any $M\ge 2$, thus exponential dependence in $k$ is necessary. A further generalization is to allow a bounded number of linear equations modulo different $M's$. A clean way to capture all these generalizations is to look at $\hornsat_G$ which is a $\hornsat$ instance along with a linear equation with coefficients from some finite abelian group $G$. This is because having $k$ equations modulo $M_1,M_2,\dots,M_t$ is equivalent to a single equation over $G=\Z/M_1\Z\times \Z/M_2\Z \times \dots \times \Z/M_k\Z$.
Our algorithm and its analysis for $\hornsatmod_M$ can be extended easily to work for $\hornsat_G$ for any finite abelian group. 
This is because any finite abelian group is a product of cyclic groups.
\begin{fact}[Structure theorem of finite abelian groups]
\label{fact:finite_abelian_structure}
Any finite abelian group $G$ is a finite product of cyclic groups of prime power order i.e. $G=\prod_i \Z/p_i^{k_i}\Z$ for some prime numbers $p_i$ (which may not be distinct) and $k_i\ge 1$.
\end{fact}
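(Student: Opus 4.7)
The plan is to prove this classical structure theorem in two independent stages: first decompose $G$ into primary (i.e., prime-power order) components, and then decompose each primary component into a product of cyclic groups.

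\textbf{Stage 1 (primary decomposition).} Write $|G|=p_1^{a_1}\cdots p_r^{a_r}$ with distinct primes $p_i$, and define the $p_i$-primary subgroup $G_{p_i}=\{g\in G: p_i^k g=0 \text{ for some }k\ge 0\}$, which is a subgroup since $G$ is abelian. The goal is $G\cong G_{p_1}\times\cdots\times G_{p_r}$. Set $m_i=|G|/p_i^{a_i}$; the tuple $(m_1,\dots,m_r)$ has overall $\gcd$ equal to $1$, so Bezout yields $c_i\in\Z$ with $\sum_i c_i m_i=1$, whence $g=\sum_i c_i m_i g$ with each summand in $G_{p_i}$ (since $p_i^{a_i}\cdot c_i m_i g=c_i|G|g=0$). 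Uniqueness of this decomposition follows because elements of distinct $G_{p_i}$ have coprime prime-power orders and so can only coincide at $0$.

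\textbf{Stage 2 (decomposition of $p$-groups).} It remains to show that any finite abelian $p$-group $H$ is a direct product of cyclic $p$-groups, by induction on $|H|$. The base case $|H|=1$ is trivial. For the inductive step, pick $h\in H$ of maximum order $p^k$ and set $C=\langle h\rangle$. The crux is a splitting lemma asserting the existence of $K\le H$ with $H=C\oplus K$. Given such a $K$, it is itself an abelian $p$-group of strictly smaller order, so the induction hypothesis decomposes it. To produce $K$, I would take it to be maximal among subgroups of $H$ satisfying $K\cap C=\{0\}$; such a maximum exists by finiteness. One then argues $C+K=H$ by contradiction: if some $x\notin C+K$ exists, choose one whose image in $H/(C+K)$ has order exactly $p$ (so $px\in C+K$); the maximality of the order of $h$ in $H$ forces $px$ to lie in $pC+K$, which in turn lets one replace $x$ by a shifted $x'\in x+C$ with $px'\in K$ and $x'\notin C+K$. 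Then $\langle K,x'\rangle$ still intersects $C$ trivially but is strictly larger than $K$, contradicting maximality.

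\textbf{Main obstacle.} Stage 1 is a routine Bezout manipulation, so the real work is in Stage 2, specifically the splitting lemma that guarantees $C=\langle h\rangle$ is a direct summand when $h$ has maximum order in $H$. The subtle point is that the argument genuinely requires this \emph{global} maximality of the order of $h$ (rather than merely local maximality within some cyclic subgroup), since this is precisely what forces $px\in pC+K$ in the contradiction step and thereby enables the shift to $x'$. Once the splitting is secured, induction completes Stage 2, and combining Stages 1 and 2 yields the structure theorem in the stated form $G=\prod_i \Z/p_i^{k_i}\Z$.
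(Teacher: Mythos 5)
Your proof is correct: Stage~1 is the standard Bezout-based primary decomposition, and Stage~2 is the standard splitting argument showing that the cyclic subgroup generated by an element of globally maximal order is a direct summand of a finite abelian $p$-group, with the crucial point (that maximality of the order of $h$ forces $p \mid a$ in $px = ah + k$, enabling the shift to $x'$) correctly identified. The paper states this as a classical fact and gives no proof of its own, so there is nothing to compare against; your argument is the standard textbook one and is sound.
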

Instead of redoing everything for general groups, in the rest of this subsection we will reduce any instance of $\hornsat_G$ for any finite abelian group to a instance of $\hornsat_{\Z/M\Z}$ which is the same as $\hornsatmod_M$ without increasing the size too much.

 We will need the following lemma which shows how to convert a linear equation modulo a prime power to a polynomial equation modulo a prime.
\begin{lemma}\label{lem:covert_primepower_prime}
Let $p$ be any prime and $k\ge 1$. Let $a_0,a_1,\dots,a_n\in \set{0,1,\dots,p^k-1}$ be some integers. Then there exists a polynomial $f(x_1,\dots,x_n)$ of degree $p^k-1$ such that for every $x\in \bits^n$,
\[
f(x)=
\begin{cases}
&0 \mod p \text{ if } \sum_{i=1}^n a_ix_i=a_0 \mod p^k\\
&1 \mod p \text{ if } \sum_{i=1}^n a_ix_i\ne a_0 \mod p^k
\end{cases}.
\]
\end{lemma}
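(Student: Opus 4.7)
The plan is to reduce the weighted linear equation $\sum_{i=1}^n a_i x_i = a_0 \bmod p^k$ to a pure Hamming-weight statement, and then quote Lemma~\ref{lem:primepower_divisibility_Hamming_2} to produce a polynomial over $\F_p$ that detects it. Concretely, I would introduce an extended tuple of Boolean variables
\[
  y = \bigl( y_{1,1}, \ldots, y_{1,a_1},\; y_{2,1}, \ldots, y_{2,a_2},\; \ldots,\; y_{n,1}, \ldots, y_{n,a_n} \bigr)
\]
of length $N = a_1 + a_2 + \cdots + a_n$, set $y_{i,j} = x_i$ for every $j \in [a_i]$, and observe that for any $x \in \bits^n$ we have $\Ham(y) = \sum_{i=1}^n a_i x_i$ as an integer. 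Consequently $\sum_i a_i x_i = a_0 \bmod p^k$ if and only if $\Ham(y) = a_0 \bmod p^k$.

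Next I would apply Lemma~\ref{lem:primepower_divisibility_Hamming_2} with $\ell = k$ to obtain the polynomial $\phi_{k,a_0}(y) \in \F_p[y]$ of degree $p^k-1$ with the property that $\phi_{k,a_0}(y) = 0 \bmod p$ iff $\Ham(y) = a_0 \bmod p^k$, and $1 \bmod p$ otherwise. Substitute $y_{i,j} \mapsto x_i$ and let $f(x_1,\ldots,x_n)$ be the resulting polynomial, multilinearized using $x_i^2 = x_i$ (which is valid on $\bits$). The multilinearization step can only decrease the degree, and variable substitution does not increase it, so $\deg(f) \le \deg(\phi_{k,a_0}) = p^k - 1$. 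By the computation in the proof of Lemma~\ref{lem:primepower_divisibility_Hamming_2}, the degree is in fact exactly $p^k - 1$ (coming from $\sum_{t=0}^{k-1}(p-1)p^t = p^k - 1$), matching the bound claimed.

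The correctness of $f$ on $\bits^n$ is immediate from the chain of equivalences
\[
  f(x) = 0 \bmod p \iff \phi_{k,a_0}(y) = 0 \bmod p \iff \Ham(y) = a_0 \bmod p^k \iff \sum_{i=1}^n a_i x_i = a_0 \bmod p^k,
\]
and in the contrary case $f(x) = 1 \bmod p$. There is no real obstacle here; the only mild points to check are that (i) the $y$-to-$x$ substitution preserves the degree bound after multilinearization, which is automatic, and (ii) we are free to assume $0 \le a_i \le p^k - 1$, which is given in the hypothesis. This completes the sketch.
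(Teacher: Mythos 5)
Your proposal is correct and follows essentially the same route as the paper: both replace each $x_i$ by $a_i$ duplicate copies so that the weighted sum $\sum_i a_i x_i$ becomes a Hamming weight, and then invoke Lemma~\ref{lem:primepower_divisibility_Hamming_2} to get the degree-$(p^k-1)$ indicator polynomial before substituting back. The only cosmetic difference is that you spell out the multilinearization step and remark on the exact degree, neither of which changes the argument.
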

\begin{proof}
 Let $\hat{x}$ be the vector where each $x_i$ appears with multiplicity $a_i$. Then $\sum_{i=1}^n a_i x_i = \Ham(\hat{x})$. By Lemma~\ref{lem:primepower_divisibility_Hamming_2}, there exists a polynomial $\phi(y)$ of degree $p^k-1$ such that  for every bit vector $y$,
\[
\phi(y)=
\begin{cases}
&0 \mod p \text{ if } \Ham(y)=a_0 \mod p^k\\
&1 \mod p \text{ if } \Ham(y)\ne a_0 \mod p^k
\end{cases}.
\]
Therefore $f(x)=\phi(\hat{x})$ is the required polynomial.
\end{proof}

\begin{proposition}
Let $\Psi$ be an instance of $\hornsat_G(n)$ for some finite abelian group. Let $G=G_1\times G_2 \times \dots \times G_t$ where each $G_i=\prod_j \Z/p_i^{k_{ij}}\Z$ for some distinct primes $p_1,p_2,\dots,p_t$. Let $M=\prod_{i=1}^t p_i$ and $d=\max_{i\in [t]}\left(\sum_j (p_i^{k_{ij}}-1)\right)$. Then we can construct an instance $\Psi'$ of $\hornsatmod_M(N)$ where $N=\binom{n}{\le d}$ in $\poly(N)$ time such that $\Psi$ is satisfiable iff $\Psi'$ is satisfiable.
\end{proposition}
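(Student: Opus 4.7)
The plan is to reduce the group-valued constraint to a single modular equation in four stages: (i) decompose $G$ using Fact~\ref{fact:finite_abelian_structure} into cyclic prime-power components and convert each resulting equation into a low-degree polynomial identity modulo the corresponding prime via Lemma~\ref{lem:covert_primepower_prime}; (ii) collapse the equations for each fixed prime into a single polynomial identity by an inclusion–exclusion style product; (iii) glue the different primes together by the Chinese Remainder Theorem; and (iv) \emph{linearize} the resulting polynomial constraint by introducing one fresh Boolean variable per monomial of degree at most $d$, enforced by Horn clauses.

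Concretely, Fact~\ref{fact:finite_abelian_structure} shows that the single equation over $G$ that defines $\Psi$ is equivalent to a collection of linear equations $\sum_\ell a^{(i,j)}_\ell x_\ell \equiv b^{(i,j)} \pmod{p_i^{k_{ij}}}$, one per cyclic factor. Lemma~\ref{lem:covert_primepower_prime} turns each of these into an integer polynomial $f_{i,j}(x)$ of degree at most $p_i^{k_{ij}}-1$ that is $\{0,1\}$-valued mod $p_i$, taking the value $0$ mod $p_i$ iff the equation holds. For each prime $p_i$ I then take
\[
F_i(x)\;=\;1-\prod_{j}\bigl(1-f_{i,j}(x)\bigr),
\]
which, since every factor $1-f_{i,j}$ is $\{0,1\}$-valued mod $p_i$, is $\equiv 0\pmod{p_i}$ precisely when every $f_{i,j}\equiv 0\pmod{p_i}$, and which has total degree at most $\sum_j(p_i^{k_{ij}}-1)\le d$. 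Setting $F(x)=\sum_{i=1}^t (M/p_i)\,F_i(x)$ and invoking CRT (noting that $M/p_i$ is a unit modulo $p_i$ and vanishes modulo every other $p_{i'}$) yields a polynomial of degree $\le d$ with $F(x)\equiv 0\pmod M$ iff every $F_i(x)\equiv 0\pmod{p_i}$, iff the original equation over $G$ holds.

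To finish, I linearize $F$. Expand $F(x)=c_0+\sum_{\emptyset\ne S\subseteq[n],\,|S|\le d}c_S\prod_{\ell\in S}x_\ell$, introduce a fresh Boolean variable $y_S$ for each such $S$ (identifying $y_{\{\ell\}}$ with $x_\ell$), and append the Horn clauses
\[
y_S\;\to\;x_\ell\ \ (\ell\in S),\qquad \bigwedge_{\ell\in S}x_\ell\;\to\;y_S.
\]
These clauses force $y_S=\prod_{\ell\in S}x_\ell$ on every satisfying assignment, and the polynomial identity becomes the single linear modular constraint $\sum_S c_S y_S\equiv -c_0\pmod M$. The resulting $\hornsatmod_M$ instance $\Psi'$ has $N=\binom{n}{\le d}$ variables and is constructible in $\poly(N)$ time. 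The equisatisfiability is then immediate: a satisfying assignment $x$ of $\Psi$ extends by $y_S:=\prod_{\ell\in S}x_\ell$ to a satisfying assignment of $\Psi'$; conversely the added Horn clauses in $\Psi'$ force this identity, so $F(x)\equiv 0\pmod M$ holds and hence $x$ satisfies the original group equation.

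The only genuinely delicate step is (ii): one must package a conjunction of several polynomial identities mod $p_i$ into a \emph{single} identity of degree $\le d$, which is exactly where the product-of-negations $\prod_j(1-f_{i,j})$ is used, exploiting that each factor is $\{0,1\}$-valued mod $p_i$. Once this is in place, CRT glues the different primes with no further loss in degree, and the monomial-variable gadget is a standard Horn encoding, so the remaining bookkeeping is routine.
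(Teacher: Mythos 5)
Your proof is correct and follows essentially the same route as the paper's: decompose $G$ into cyclic prime-power factors, convert each linear constraint $\bmod\ p_i^{k_{ij}}$ to a $\{0,1\}$-valued polynomial $\bmod\ p_i$ via Lemma~\ref{lem:covert_primepower_prime}, combine the factors for a fixed prime by $f_i = 1 - \prod_j(1-f_{ij})$, glue across primes by CRT, and then linearize the degree-$\le d$ polynomial with one fresh Horn-gadget variable $y_S$ per monomial. The only cosmetic differences are that you spell out the CRT gluing explicitly as $F = \sum_i (M/p_i) F_i$ (the paper just invokes CRT) and move the constant term to the right-hand side rather than using $y_\emptyset = 1$; do make sure to state explicitly that the original Horn clauses of $\Psi$ are carried into $\Psi'$ (on the variables $y_{\{\ell\}}$), since otherwise the "conversely" direction of your equisatisfiability argument only establishes the modular constraint and not satisfaction of $\Psi$'s Horn clauses.
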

\begin{proof}
We have a constraint of the form $\sum_{\ell=1}^n g_\ell x_\ell =g_0$ for some $g_0,g_1,\dots,g_n\in G$. This is equivalent to a set of conditions of the form $\sum_{\ell=1}^n a_\ell x_\ell=a_0 \mod p_i^{k_{ij}}$ for each cyclic component in $G$. By Lemma~\ref{lem:covert_primepower_prime}, there exists a polynomial $f_{ij}(x)$ of degree $p_i^{k_{ij}}-1$ such that for every $x\in \bits^n$,
\[
f_{ij}(x)=
\begin{cases}
&0 \mod p \text{ if } \sum_{i=1}^n a_ix_i=a_0 \mod p_i^{k_{ij}}\\
&1 \mod p \text{ if } \sum_{i=1}^n a_ix_i\ne a_0 \mod p_i^{k_{ij}}
\end{cases}.
\]
Let $f_i(x)=1-\prod_j\left(1-f_{ij}(x)\right)$. The degree of $f_i$ is $d_i=\sum_j (p_i^{k_{ij}}-1)$. For every $x\in \bits^n$, $\sum_i g_ix_i=g_0$ in $G$ iff $\forall\ i\in [t]\ f_i(x)=0 \mod p_i$. We can combine the $t$ polynomial conditions $f_i(x)=0 \mod p_i$ for $i\in [t]$ into a single polynomial condition $f(x)=0 \mod M$ by Chinese remainder theorem. The degree of $f$ is $d=\max_{i\in [t]} d_i$.

$\Psi'$ will have $N=\binom{n}{\le d}$ variables corresponding to monomials of degree at most $d$ in the variables $x_1,\dots,x_n$ i.e. for every monomial $\prod_{i\in S} x_i$ with $|S|\le d$, we create a variable $y_S$ in $\Psi'$. Intuitively we would want $y_S=\prod_{i\in S} x_i$. To impose this, we will add $\hornsat$ constraints of the form $y_S\implies y_i$ for each $i\in S$ and $\wedge_{i\in S}y_i \implies y_S$. We will set $y_\phi=1$. We will also add all the original $\hornsat$ constraints of $\Psi$ into $\Psi'$ by replacing $x_i's$ with $y_i's$. Let $f(x)=\sum_{|S|\le d} b_S \prod_{i\in S} x_i$ for some integer coefficients $b_S$. We will impose the modular constraint $\sum_{|S|\le d} b_Sy_S =0 \mod M$ to $\Psi'$. Now it is clear that $\Psi'$ is satisfiable iff $\Psi$ is satisfiable. Moreover the reduction only takes $\poly(N)$ time.
\end{proof}

We have the following immediate corollary.
\begin{corollary}
Let $M=p^\ell$ be a prime power. Then $\hornsatmod_{M,k}$ can be solved in time $n^{k(M-1)(p-1)+O(1)}$.
\end{corollary}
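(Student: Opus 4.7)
The plan is to derive this corollary as a direct corollary of the preceding reduction proposition combined with Corollary~\ref{cor:horn_sat_mod_primepower}. First I would rephrase an instance of $\hornsatmod_{M,k}(n)$ with $M=p^\ell$ as an instance of $\hornsat_G(n)$ where $G=(\Z/p^\ell\Z)^k$, using that a system of $k$ equations modulo $p^\ell$ is equivalent to a single linear equation over the product group. This puts us precisely in the setting of the preceding proposition.

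Next I would read off the parameters of that reduction for this specific $G$. Since the only prime dividing $|G|$ is $p$, the index set in the decomposition has $t=1$ with $p_1=p$, so the target modulus produced by the reduction is just $M'=p$. The degree parameter becomes
\[
d \;=\; \sum_{j=1}^{k}\bigl(p^\ell - 1\bigr) \;=\; k(M-1),
\]
and the reduction produces, in $\poly(N)$ time, an equivalent instance $\Psi'$ of $\hornsatmod_{p}(N)$ on
\[
N \;=\; \binom{n}{\le k(M-1)} \;=\; O\!\left(n^{k(M-1)}\right)
\]
variables.

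Then I would apply Corollary~\ref{cor:horn_sat_mod_primepower} to $\Psi'$: since $p$ is (trivially) a prime power, Algorithm~\ref{alg:horn_sat_modm} with $R=p-1$ rounds correctly decides $\hornsatmod_p(N)$ in time $N^{(p-1)+O(1)}$. Substituting the bound on $N$ gives a total running time of
\[
\poly(N) + N^{(p-1)+O(1)} \;=\; n^{k(M-1)(p-1)+O(1)},
\]
which is the claimed bound.

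There is no real obstacle here; all the work is already done in the preceding reduction and in the prime-power algorithm. The only point worth verifying is that the polynomial-degree parameter $d$ scales linearly in $k$, so that each additional modular constraint contributes a multiplicative $n^{(M-1)(p-1)}$ factor to the runtime; this is immediate from the sum over the $k$ cyclic factors of $G$ in the definition of $d$.
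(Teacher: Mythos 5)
Your proof is correct and matches the paper's intended derivation exactly: encode the $k$ modular constraints as a single constraint over $G=(\mathbb Z/p^\ell\mathbb Z)^k$, apply the reduction proposition (which here yields target modulus $p$ and degree parameter $d=k(M-1)$, hence $N = O(n^{k(M-1)})$), and then invoke Corollary~\ref{cor:horn_sat_mod_primepower} with $R=p-1$ rounds, giving $N^{p-1+O(1)} = n^{k(M-1)(p-1)+O(1)}$. The paper presents this as an "immediate corollary" without an explicit proof, and your argument is precisely the intended one.
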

By analyzing Algorithm~\ref{alg:horn_sat_modm} directly, one can actually reduce the running time in the above corollary to $n^{k(M-1)+O(1)}$.

\subsection{Covering number of $\NAND_d \mod M$ over $\bits^d$}
\label{sec:OR_modm_covering}
We will relate the existence of $(M,r,d)$-systems to polynomials which represent $\NAND \mod M$ over $\bits$ basis.

\begin{definition}
A polynomial $p(x_1,\dots,x_n)$ is said to represent $\NAND_n\mod M$ over $\bits^n$ if it has integer coefficients and
\[
p(x)
\begin{cases}
=& 0 \mod M \text{ if } x=\allones\\
\neq& 0 \mod M \text{ if }x\in \bits^n\setminus\{\allones\}
\end{cases}
\]
where $\allones$ is the all ones vector.
\end{definition}

We will now define the notion of covering number of polynomials.
\begin{definition}[Covering number]
The covering number of a multilinear polynomial $p(x)$, denoted by $\cov(p)$, is the minimum number of monomials of $p$ one can choose such that every variable that appears in $p$ appears in one of them.
\end{definition}
Monomials of a multilinear polynomial can be thought of subsets of variables. Thus $\cov(p)$ is the usual covering number of the set system corresponding to the monomials of $p$. We are now ready to prove the relation between $(M,r,d)$-systems and polynomials which represent $\NAND \mod M$ over $\bits$ basis. For a polynomial $p$, define $\coeffnorm{p}$ as the sum of the absolute value of its coefficients.

\begin{proposition}
\label{prop:Mrd_system_to_polynomial}
Suppose there exists an $(M,r,d)$-system with $n$ maximal sets, then there exists a polynomial $p(x_1,\dots,x_n)$ which represents $\NAND_n \mod M$ with $\coeffnorm{p}\le d+M-1$ and $\cov(p) > r$. Conversely given a polynomial $p(x_1,\dots,x_n)$ with non-negative coefficients which represents $\NAND_n \mod M$ with $|p|\le d$ and $\cov(p)> r$, there exists an $(M,r,d)$-system with at most $n$ maximal sets.
\end{proposition}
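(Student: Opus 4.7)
The plan is to construct explicit dictionaries in both directions, where monomials of the polynomial correspond to elements of the ground set $[d]$ and variables $x_1,\dots,x_n$ correspond to the maximal sets $F_1,\dots,F_n$ of the system. Under this correspondence, property 2 of the $(M,r,d)$-system (that $|F|\not\equiv d\bmod M$) will translate into the $\NAND$ condition at all non-$\allones$ inputs, and property 3 (that every $r$-subset lies in some maximal $F_i$) will translate into the lower bound $\cov(p)>r$.

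For the forward direction, given maximal sets $F_1,\dots,F_n\subseteq[d]$, I would set $T_j=\{i\in[n]:j\in F_i\}$ for each $j\in[d]$ and take
\[
p(x)\;=\;\sum_{j=1}^{d}\prod_{i\notin T_j}x_i\;-\;d.
\]
A direct evaluation gives $p(\allones)=d-d=0$, and for $S\subsetneq[n]$, $p(x_S)=|F_{[n]\setminus S}|-d$ where $F_I=\bigcap_{i\in I}F_i$ and $x_S$ denotes the indicator of $S$; intersection-closure places $F_{[n]\setminus S}$ in $\cF$, so property 2 gives $p(x_S)\not\equiv 0\bmod M$. The coefficient norm is $d$ from the $d$ monomials plus $|-d|$; replacing the constant $-d$ by its residue in $[0,M-1]$ (allowed since $p$ only needs to represent $\NAND$ mod $M$) brings the total to at most $d+M-1$. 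The monomial indexed by $j$ is supported on $[n]\setminus T_j$, so a family $J\subseteq[d]$ of monomials covers every variable iff $\bigcap_{j\in J}T_j=\emptyset$ iff no single $F_i$ contains $\{j:j\in J\}$; property 3 guarantees this fails whenever $|J|\le r$, so $\cov(p)>r$.

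For the reverse direction, write $p(x)=\sum_k c_k\prod_{i\in S_k}x_i$ with $c_k\ge 0$ and (WLOG) every variable appearing in some $S_k$. Set $D=\sum_k c_k=p(\allones)$, which is divisible by $M$; partition $[D]$ into blocks $B_k$ of sizes $c_k$, append padding $Y$ of size $d-D$, and define
\[
F_i\;=\;Y\,\cup\,\bigcup_{k:\,i\notin S_k}B_k,\qquad i=1,\dots,n,
\]
taking $\cF$ to be the intersection-closure. A short calculation gives $\bigcap_{i\in I}F_i=Y\cup\bigcup_{k:\,S_k\cap I=\emptyset}B_k$, whose cardinality equals $(d-D)+p(x_{[n]\setminus I})$; since $D\equiv 0\bmod M$ and $p(x_{[n]\setminus I})\not\equiv 0\bmod M$ for $I\ne\emptyset$, property 2 follows. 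For property 3, any $S\subseteq[d]$ of size at most $r$ meets at most $t\le r$ blocks $B_{k_1},\dots,B_{k_t}$, so is contained in $F_i$ whenever $i\notin S_{k_1}\cup\cdots\cup S_{k_t}$; existence of such an $i$ is guaranteed by $\cov(p)>r\ge t$.

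The main subtlety is the modular bookkeeping relating $d$ and $D$: on the forward side one shifts the constant by a multiple of $M$ to achieve the coefficient-norm bound, while on the reverse side the padding of size $d-D$ together with $D\equiv 0\bmod M$ is precisely what converts ``$p\not\equiv 0\bmod M$ off $\allones$'' into ``$|F|\not\equiv d\bmod M$ on $\cF$.'' Once this translation is pinned down, every remaining verification reduces to a direct evaluation of $p$ at an indicator vector.
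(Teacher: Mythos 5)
Your proof is correct and follows essentially the same route as the paper's: the forward construction $p(x)=\sum_{j\in[d]}\prod_{i:\,j\notin F_i}x_i$ (your $T_j$ are complementary to the paper's index sets but give the identical polynomial), and the reverse construction building the set family from the monomial incidence structure. The only difference is that you handle the padding-to-size-$d$ bookkeeping in the reverse direction explicitly (via $Y$ and the observation $D=p(\allones)\equiv 0\bmod M$), which the paper leaves implicit.
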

\begin{proof}
Let $\cF$ be an $(M,r,d)$-system with maximal sets $F_1,\dots,F_n$. 
Define the polynomial $\phi(x_1,x_2,\dots,x_n)$ as follows: $$\phi(x)=\sum_{a\in [d]}\prod_{i: a\notin F_i}x_i.$$
From the definition, it is clear that $\coeffnorm{\phi}\le d$.
For $x\in \bset^n$, $$\left|\cap_{i: x_i=0} F_i\right|=\sum_{a\in [d]}\prod_{i:x_i=0}\ind(a\in F_i)=\sum_{a\in [d]}\prod_{i:a\notin F_i}x_i=\phi(x).$$
Therefore for every $x\ne \ones$, we have $\phi(x)\ne d\mod M$ and $\phi(\ones)= d \mod M.$ Since any subset $S\subset [d]$ of size $r$ is contained in some $F_i$, any $r$ monomials in $\phi$ cannot cover all the $n$ variables $x_1,\dots,x_n$ i.e. $\cov(\phi)> r$. Now $p(x)=\phi(x)-(d\mod M)$ is the required polynomial which represents $\NAND_n \mod M$.

We will now show the converse. Let $p(x_1,\dots,x_n)$ be a multilinear polynomial with non-negative coefficients which represents $\NAND_n \mod M$ with $\coeffnorm{p}\le d$ and $\cov(p)> r$. Define the map $\phi:\bset^n \to \bset^d$ as $\phi(x)=(\prod_{i\in T}x_i)_T$ where the coordinates of $\phi$ are monomials of $p$ occurring with multiplicity equal to their coefficient in $p$. So, $$\Ham(\phi(x))=\sum_{i=1}^d \phi_i(x)= p(x).$$ If $a\odot b$ is the coordinate-wise product, then $\phi(x)\odot \phi(y)=\phi(x\odot y)$. Define
$$\cF=\{\phi(x):x\in \bset^n\setminus \{\ones\}\}.$$ We claim that $\cF$ is an $(M,r,d)$-system.
\begin{enumerate}
\item Let $A,B\in \cF$ and let $A=\phi(x),B=\phi(y)$ for some $x,y\in \bits^n\setminus\{\ones\}$. Then $A\cap B=\phi(x)\odot\phi(y)=\phi(x\odot y)$. Since $x\odot y\ne \ones$, $A\cap B \in \cF$.

\item  $d=\Ham(\phi(\ones))=p(\ones)=0 \mod M.$ For $A\in \cF$, $A=\phi(x)$ for some $x\in \bits^n\setminus \{\ones\}$. So $|A|=\Ham(\phi(x))=p(x)\ne 0 \mod M.$ Thus $|A|\ne d \mod M$.

\item Let $S\subset [d]$ be of size at most $r$. Let $x\in \bits^n$ be such that the variables appearing in the monomials $\{\phi_i(x):i\in S\}$ are set to $1$ and the rest of the variables are set to $0$. Since $\cov(p)> r$, $x\ne \allones$. Then the set $\phi(x)\in \cF$ contains $S$. \qedhere
\end{enumerate}
\end{proof}

Thus proving upper bounds on the covering number of polynomials which represent $\NAND \mod M$ over $\bits$ basis implies upper bounds on the number of rounds needed in Algorithm~\ref{alg:horn_sat_modm}. In the next few subsections, we will focus on proving covering number upper bounds.

\subsubsection{When $M$ is a prime power}
In this section, we will prove bounds on the covering number of polynomials which represent $\NAND \mod M$ over $\bits$ basis when $M$ is a prime power. We will collect some facts that we will need.

\begin{fact}
If $p$ is any prime, any function $f:\bits^n \to \F_p$ has a unique representation as a multilinear polynomial.
\end{fact}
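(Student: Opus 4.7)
The plan is to prove the existence and uniqueness of the multilinear representation by combining an explicit interpolation formula with a dimension count.

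For existence, I would write down a Lagrange-style interpolation. For each point $a \in \bits^n$, define
\[
  \ind_a(x_1, \dots, x_n) \;=\; \prod_{i : a_i = 1} x_i \,\cdot\, \prod_{i : a_i = 0} (1 - x_i),
\]
which is a multilinear polynomial in $\F_p[x_1, \dots, x_n]$ that evaluates to $1$ at $a$ and to $0$ at every other point of $\bits^n$. Then
\[
  f(x) \;=\; \sum_{a \in \bits^n} f(a) \, \ind_a(x)
\]
is a multilinear polynomial over $\F_p$ which agrees with $f$ on $\bits^n$, proving existence.

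For uniqueness, I would argue by a dimension count. The $\F_p$-vector space $V$ of all functions $\bits^n \to \F_p$ has dimension $2^n$. The $\F_p$-vector space $W$ of multilinear polynomials in $n$ variables has as a spanning set the $2^n$ monomials $\{\prod_{i \in S} x_i : S \subseteq [n]\}$, hence $\dim W \leq 2^n$. The evaluation map $W \to V$ is $\F_p$-linear, and the construction above shows it is surjective. Therefore $\dim W \geq \dim V = 2^n$, so $\dim W = 2^n$, the spanning monomials form a basis, and the evaluation map is a bijection; in particular every $f$ has a \emph{unique} preimage, giving uniqueness.

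There is essentially no obstacle here since both ingredients are standard; the one mild point to verify is that the monomial family $\{\prod_{i \in S} x_i\}_{S \subseteq [n]}$ really does span all multilinear polynomials (immediate from the definition of multilinearity and distributivity), so that the upper bound $\dim W \leq 2^n$ actually holds before invoking surjectivity of evaluation. One could alternatively replace the dimension count by an inductive argument: if $p \in \F_p[x_1,\dots,x_n]$ is multilinear and vanishes on $\bits^n$, write $p = x_n \cdot q(x_1,\dots,x_{n-1}) + r(x_1,\dots,x_{n-1})$; setting $x_n = 0$ forces $r \equiv 0$ on $\bits^{n-1}$ and then setting $x_n = 1$ forces $q \equiv 0$ on $\bits^{n-1}$, and induction finishes the argument.
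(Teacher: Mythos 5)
Your proof is correct. The paper states this as a \emph{Fact} with no accompanying proof (it is a standard textbook result), so there is no paper argument to compare against; your interpolation-plus-dimension-count argument (and the alternative inductive uniqueness argument) is exactly the standard justification one would supply.
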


The following lemma explicitly gives the polynomial which calculates $\NAND_n \mod p$ exactly over $\bits^n$.
\begin{fact}\label{fact:exact_NAND_modp}
Suppose a multilinear polynomial $f(z_1,\dots,z_n)$ exactly represents $\NAND_n \mod p$ over $\bits^n$ for some prime $p$ i.e.
\[
f(z)
\begin{cases}
=& 0 \mod p \text{ if } z=\allones\\
=& 1 \mod p \text{ if }x\in \bits^n\setminus\{\allones\}
\end{cases}.
\] Then $f(z)=1-\prod_{i\in [n]} z_i.$
\end{fact}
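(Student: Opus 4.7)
The statement is essentially an immediate consequence of the uniqueness of multilinear representations of functions $\bits^n \to \F_p$, which is stated just before as a fact. My plan is to exhibit the candidate polynomial, verify it has the required values on $\bits^n$, and then invoke uniqueness.

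First, I would define $g(z) = 1 - \prod_{i \in [n]} z_i$ and observe that $g$ is already multilinear (every variable appears with degree at most $1$). Next, I would check the two cases: at $z = \allones$, the product $\prod_i z_i$ equals $1$, so $g(\allones) = 0$; for any $z \in \bits^n \setminus \{\allones\}$ there is some index $i$ with $z_i = 0$, so $\prod_i z_i = 0$ and hence $g(z) = 1$. Thus $g$ agrees on every point of $\bits^n$ with the function that $f$ is specified to compute.

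Finally, since both $f$ and $g$ are multilinear polynomials in $\F_p[z_1,\dots,z_n]$ that agree as functions on the whole Boolean cube $\bits^n$, the preceding uniqueness fact for multilinear representations of functions $\bits^n \to \F_p$ forces $f = g$ as polynomials. This gives the claimed formula $f(z) = 1 - \prod_{i \in [n]} z_i$.

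There is essentially no obstacle here; the only subtlety worth flagging is that uniqueness requires the representation to be \emph{multilinear} (otherwise one could, for instance, add a multiple of $z_1^2 - z_1$ without changing values on $\bits^n$), so one should explicitly note that $g$ as written is multilinear before applying the uniqueness fact.
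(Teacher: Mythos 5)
Your proposal is correct and is exactly the argument the paper intends: the paper states this as a Fact immediately after the uniqueness of multilinear representations over $\F_p$, and your verification that $1-\prod_{i\in[n]}z_i$ is multilinear and matches $f$ pointwise on $\bits^n$, followed by invoking that uniqueness, is the whole proof. Your remark that multilinearity is the essential hypothesis for uniqueness is a worthwhile clarification but does not change the approach.
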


\begin{proposition}
\label{prop:covering_nand_primepower}
Let $M$ be a prime power, then any polynomial $f(x_1,\dots,x_n)$ which represents $\NAND_n \mod M$ over $\bits^n$ has $\cov(f)\le M-1.$
\end{proposition}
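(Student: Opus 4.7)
The plan is to exhibit a low-degree polynomial identity modulo $p$ that forces the full monomial $\prod_i x_i$ to appear with nonzero coefficient in an expression of bounded $f$-degree, and then read off the covering bound from the resulting expansion. The key algebraic object is the single-variable polynomial
\[
Q(y) \;=\; \prod_{j=0}^{k-1}\Bigl(1 - \tbinom{y}{p^j}^{p-1}\Bigr),
\]
of total degree $\sum_{j=0}^{k-1}(p-1)p^j = p^k-1 = M-1$ in $y$. First I would verify that $Q(y) \equiv \indicator[y \equiv 0 \pmod{p^k}] \pmod p$ for every integer $y$: by Lucas's theorem (Lemma~\ref{lem:LucasTheorem}), $\binom{y}{p^j} \bmod p$ equals the $j$-th base-$p$ digit $y_j$ of $y$, and Fermat then gives $1 - \binom{y}{p^j}^{p-1} \equiv \indicator[y_j = 0] \pmod p$, so the product over $j=0,\dots,k-1$ detects whether all $k$ low-order digits of $y$ vanish.

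Second, I would substitute $y = f(x)$. By hypothesis $f(x) \equiv 0 \pmod{p^k}$ exactly at $x = \mathbf{1}$, so $Q(f(x)) \equiv \prod_{i=1}^{n} x_i \pmod p$ as a function $\bits^n \to \F_p$. Uniqueness of multilinear representations over $\F_p$ (the Fact immediately preceding Fact~\ref{fact:exact_NAND_modp}) then implies that after reducing $Q(f(x))$ modulo $p$ and multilinearizing via $x_j^2 = x_j$, the resulting multilinear polynomial is literally $\prod_{i=1}^n x_i$; in particular, the coefficient of $\prod_i x_i$ in this multilinearization is $\not\equiv 0 \pmod p$.

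Third, I would exploit the structure of $Q(f)$ as a polynomial of degree at most $p^k-1$ in $f$: write $Q(f(x)) = \sum_{i=0}^{p^k-1} c_i\, f(x)^i$. Each contribution to $\prod_{j=1}^n x_j$ in the multilinearization of $f(x)^i$ comes from an ordered $i$-tuple of monomials of $f$ whose variable sets union to $[n]$, so the coefficient of $\prod_j x_j$ in multilinearized $f(x)^i$ vanishes whenever $\cov(f) > i$. If we had $\cov(f) \ge p^k$, every term $f(x)^i$ with $i \le p^k-1$ would contribute zero, forcing the overall coefficient of $\prod_i x_i$ in $Q(f(x))$ to be $0 \bmod p$ --- contradicting the previous step and thereby yielding $\cov(f) \le p^k - 1 = M-1$.

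The main subtlety is really the choice of $Q$: a priori one might want a single-variable polynomial $g$ with $g(0) \equiv 0$ and $g(c) \equiv 1$ for $c \in \{1,\dots,p^k-1\}$ modulo $p^k$, but because $\Z/p^k\Z$ has zero divisors no such $g$ of degree $\le p^k-1$ exists (e.g.\ for $p=2, k=2$ already no cubic in $\Z/4\Z$ works). Dropping down to mod $p$, where uniqueness of multilinear representations is available, and using the Lucas-based digit extractor $\binom{y}{p^j}$, is what lets the degree in $f$ land exactly at $p^k-1$, which is precisely what the covering bound needs.
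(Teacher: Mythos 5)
Your proof is correct and is essentially the same argument as the paper's: both apply a degree-$(M-1)$ Lucas/Fermat indicator of divisibility by $p^k$ (the paper's Lemma~\ref{lem:primepower_divisibility_Hamming_2}) to $f$, invoke uniqueness of multilinear $\F_p$-representations to force the monomial $\prod_i x_i$ to appear, and then use the degree-$(M-1)$ bound to extract at most $M-1$ monomials of $f$ whose union is $[n]$. The only cosmetic difference is that the paper packages this via a multivariate $\phi\in\F_p[z_1,\dots,z_N]$ applied to the monomial-indicator vector $\Psi_f(x)$, which keeps everything over $\F_p$, whereas your univariate $Q(y)$ has non-$p$-integral rational coefficients (e.g.\ for $p=2,k=2$ the $y^3$ coefficient is $1/2$); you should state explicitly that one first multilinearizes $Q(f(x))$ over $\Q$, observes that the result has integer coefficients because $Q\circ f$ is integer-valued on $\bits^n$, and only then reduces modulo $p$ --- the phrase ``reducing $Q(f(x))$ modulo $p$'' as written does not literally parse, though the intended and correct reading is clear.
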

\begin{proof}
Let $M=p^k$ for some prime $p$. Wlog, we can assume that $f$ has coefficients in $\set{0,1,\dots,M-1}$. Let $\Psi_f(x)$ be the vector of monomials of $f$ where each monomial occurs with multiplicity equal to its coefficient in $f$. Therefore for $x\in \bits^n$, $\Ham(\Psi_f(x))=f(x)$. By Lemma~\ref{lem:primepower_divisibility_Hamming_2}, there exists a polynomial $\phi$ of degree $p^k-1=M-1$ such that
\[
\phi(\Psi_f(x))
\begin{cases}
=& 0 \mod p \text{ if } f(x)=0\mod p^k\\
=& 1 \mod p \text{ if } f(x)\ne 0 \mod p^k
\end{cases}.
\]
Therefore $\phi(\Psi_f(x))$ exactly represents $\NAND_n \mod p$ over $\bits^n$. By Fact~\ref{fact:exact_NAND_modp}, $\phi(\Psi_f(x))=1-\prod_{i\in [n]}x_i$. In particular, $\phi(\Psi_f(x))$ contains the monomial $\prod_{i\in [n]}x_i$. Since $\phi$ has degree $M-1$, every monomial in $\phi(\Psi_f(x))$ is the product of at most $M-1$ monomials in $f$. Thus there should be at most $M-1$ monomials in $f$ whose union contains all the variables and so $\cov(f)\le M-1$.
\end{proof}

\begin{proof}[Proof of Proposition~\ref{prop:mrd_system_primepower}]
This follows immediately by combining Propositions~\ref{prop:Mrd_system_to_polynomial} and \ref{prop:covering_nand_primepower}.
\end{proof}

\subsubsection{When $M$ has multiple prime factors}

We will now focus on the case, when $M$ has multiple prime factors. In this case, we will first show that there cannot be a constant bound on the covering number. For this, we need the following proposition by Barrington, Beigel and Rudich~\cite{BarringtonBR94}, from the paper where they first introduced polynomial representations modulo composites. It shows that there are non-trivial low degree polynomials representing $\NAND \mod M$ if $M$ has multiple prime factors.
\begin{proposition}[\cite{BarringtonBR94}]\label{prop:ORdegree_upperbound_bits}
Suppose $M$ has $r$ distinct prime factors. There exists an explicit degree $O_M(t^{1/r})$ polynomial which represents $\NAND_t \mod M$ over $\bits^t$. Moreover it can computed in time polynomial in its size.
\end{proposition}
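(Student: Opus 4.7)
The plan is to construct $P$ explicitly via the Chinese Remainder Theorem combined with Lemma~\ref{lem:primepower_divisibility_Hamming_2}, which already provides degree-controlled polynomials modulo $p^\ell$ detecting Hamming weight constraints. Write $M = p_1^{e_1} \cdots p_r^{e_r}$ with distinct primes $p_1, \ldots, p_r$. The key observation is that $x = \allones$ iff $y := t - \Ham(x) = 0$, and since $y \in \{0, 1, \ldots, t\}$, we have $y = 0$ iff $y$ is divisible by any fixed integer strictly larger than $t$. So I only need to detect divisibility of $y$ by such a large integer, in a modular way that blows up the degree by only $O_M(t^{1/r})$.

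Choose $s_1, \ldots, s_r$ so that each $p_i^{s_i} = O_M(t^{1/r})$ and $\prod_i p_i^{s_i} > t$ strictly; concretely, take $s_i$ minimal with $p_i^{s_i} \geq \lceil t^{1/r}\rceil$, then inflate one exponent by $1$ if the product happens to equal $t$. Setting $a_i := t \bmod p_i^{s_i}$, Lemma~\ref{lem:primepower_divisibility_Hamming_2} provides an explicit $\phi_i(x) \in \Z[x_1, \ldots, x_t]$ of degree $p_i^{s_i} - 1$ such that $\phi_i(x) \equiv 0 \pmod{p_i}$ iff $\Ham(x) \equiv a_i \pmod{p_i^{s_i}}$, equivalently iff $p_i^{s_i} \mid y$. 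Define $\Phi_i(x) := \phi_i(x)^{e_i}$. Since $p_i \mid \phi_i(x)$ iff $p_i^{e_i} \mid \phi_i(x)^{e_i}$, we get $\Phi_i(x) \equiv 0 \pmod{p_i^{e_i}}$ iff $p_i^{s_i} \mid y$, and $\deg \Phi_i = e_i(p_i^{s_i} - 1) = O_M(t^{1/r})$.

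Now assemble $P$ via CRT: let $q_i := M/p_i^{e_i}$, choose $c_i \in \Z$ with $c_i q_i \equiv 1 \pmod{p_i^{e_i}}$, and define
\[
  P(x) \ := \ \sum_{i=1}^r c_i\, q_i\, \Phi_i(x).
\]
Reducing modulo $p_i^{e_i}$ kills all terms with $j \neq i$ (since $p_i^{e_i} \mid q_j$ for $j \neq i$), giving $P(x) \equiv \Phi_i(x) \pmod{p_i^{e_i}}$. Hence $P(x) \equiv 0 \pmod M$ iff $\Phi_i(x) \equiv 0 \pmod{p_i^{e_i}}$ for every $i$, iff $p_i^{s_i} \mid y$ for every $i$, iff $\left(\prod_i p_i^{s_i}\right) \mid y$. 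Because $0 \le y \le t < \prod_i p_i^{s_i}$, the only possibility is $y = 0$, i.e., $x = \allones$. The degree of $P$ is $\max_i \deg \Phi_i = O_M(t^{1/r})$, and the construction is polynomial-time in the output size, since $\phi_i$ is given by an explicit formula and the $c_i$ come from the extended Euclidean algorithm.

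The main thing to get right is the balancing of the $s_i$'s: each $p_i^{s_i}$ must stay at $O_M(t^{1/r})$ (to control degree) while $\prod_i p_i^{s_i}$ must strictly exceed $t$ (so that the combined divisibility condition isolates $\allones$ rather than admitting an unwanted multiple of the product in $[0,t]$). Beyond this balancing, every step is either an invocation of Lemma~\ref{lem:primepower_divisibility_Hamming_2} or a routine CRT calculation.
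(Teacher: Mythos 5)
Your construction is correct and is essentially the standard Barrington--Beigel--Rudich argument that the paper cites (without reproducing): pick balanced residue moduli $p_i^{s_i} = O_M(t^{1/r})$ whose product exceeds $t$, use the Lucas-based degree-$(p_i^{s_i}-1)$ indicator polynomials modulo each $p_i$ (Lemma~\ref{lem:primepower_divisibility_Hamming_2}), raise to the $e_i$-th power to lift the detection from $\bmod\ p_i$ to $\bmod\ p_i^{e_i}$, and glue via CRT. The degree accounting, the strictness of $\prod_i p_i^{s_i} > t$, the $\Phi_i = \phi_i^{e_i}$ lifting step, and the explicitness/poly-time claim are all handled correctly.
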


\begin{proposition}
\label{prop:covering_nand_nonprimepower_construction}
Let $M$ be some fixed positive integer with $r\ge 2$ distinct prime factors. Then there exists a polynomial $p$ with $d$ monomials which represents $\NAND \mod M$ over $\bits$ basis such that $$\cov(f)\gtrsim_M \left(\frac{\log d}{\log\log d}\right)^{r-1}$$ for infinitely many $d$.
\end{proposition}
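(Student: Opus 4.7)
The idea is to take the Barrington--Beigel--Rudich polynomial directly as our $p$ and compare its covering number against its sparsity. No additional gadget or lifting is needed.

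\textbf{Step 1 (Construction).} Let $t$ be a large integer and apply Proposition~\ref{prop:ORdegree_upperbound_bits} to obtain a polynomial $q(y_1,\dots,y_t)$ representing $\NAND_t\bmod M$ over $\bits^t$ of degree $\Delta\le C_M\,t^{1/r}$ for a constant $C_M$ depending only on $M$. Take $p=q$. Then $p$ has $d$ monomials where
\[
d \;\le\; \binom{t}{\le \Delta} \;\le\; (et/\Delta)^{\Delta} \;\le\; 2^{C'_M\, t^{1/r}\log t}
\]
for a constant $C'_M$. Letting $t$ range over all sufficiently large integers produces infinitely many values of $d$.

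\textbf{Step 2 (All variables appear).} I claim every $y_i$ must appear in $q$. Otherwise, flipping the $i$-th coordinate from $1$ to $0$ would leave $q$ unchanged, but $\NAND_t(\allones)=0$ while $\NAND_t(\allones-e_i)=1$, contradicting the fact that $q(\allones)\equiv 0\bmod M$ and $q(\allones-e_i)\not\equiv 0\bmod M$. So the cover of $p$ must cover all $t$ variables.

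\textbf{Step 3 (Cover number lower bound).} Every monomial of $p$ contains at most $\Delta$ variables, so any covering subset of monomials must have size at least $t/\Delta$. Therefore
\[
\cov(p) \;\ge\; t/\Delta \;\gtrsim_M\; t^{\,1-1/r}.
\]

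\textbf{Step 4 (Convert to a bound in $d$).} Write $s=t^{1/r}$, so $t=s^r$ and the sparsity bound reads $\log d \le C'_M\, s\log(s^r) = rC'_M\, s\log s$. For $s$ large this forces
\[
s \;\gtrsim_M\; \frac{\log d}{\log\log d},
\]
since $\log\log d = \log s + O_M(\log\log s)$. Plugging back,
\[
\cov(p) \;\gtrsim_M\; s^{\,r-1} \;\gtrsim_M\; \left(\frac{\log d}{\log\log d}\right)^{r-1},
\]
which is the desired bound. The main (and only) conceptual input is the existence of the BBR polynomial; the rest is a counting argument matching monomial count against degree. No obstacle beyond these bookkeeping estimates is expected.
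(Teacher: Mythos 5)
Your proposal is correct and follows essentially the same route as the paper: take the Barrington--Beigel--Rudich polynomial itself, bound its monomial count by $\binom{t}{\le \Delta}$, and use $\cov(p)\ge t/\deg(p)$ to get the $t^{1-1/r}\gtrsim_M(\log d/\log\log d)^{r-1}$ bound. Your Steps 2 and 4 merely make explicit two points the paper leaves implicit (that every variable appears, and the inversion of $\log d\lesssim_M t^{1/r}\log t$), and both are handled correctly.
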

\begin{proof}
Let $p(x_1,\dots,x_t)$ be a polynomial of degree $O_M(t^{1/r})$ which represents $\NAND_t \mod M$ over $\bits^t$ as given by Proposition~\ref{prop:ORdegree_upperbound_bits}. The number of monomials in $p$ is $d\le \binom{t}{\le O_M(t^{1/r})}$. Now note that $$\cov(p)\ge \frac{t}{\deg(p)} \gtrsim_M t^{1-1/r} \gtrsim_M (\log d/\log\log d)^{r-1}.\qedhere$$
\end{proof}

\begin{corollary}
Let $M$ be some fixed positive integer with $t$ distinct prime factors. Then there exists an $(M,r,d)$-system with $r\gtrsim_M \left(\frac{\log d}{\log\log d}\right)^{t-1}$ for infinitely many $d$.
\end{corollary}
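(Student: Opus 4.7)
The plan is to chain together the two results immediately preceding the corollary. Proposition~\ref{prop:covering_nand_nonprimepower_construction} hands us a polynomial $p$ over $\bits^N$ (for infinitely many choices of $N$) representing $\NAND_N \mod M$ with $d$ monomials and covering number $\cov(p)\gtrsim_M(\log d/\log\log d)^{t-1}$. Proposition~\ref{prop:Mrd_system_to_polynomial}, in its converse direction, turns a non-negative-coefficient polynomial representing $\NAND \mod M$ with $\coeffnorm{p}\le D$ and $\cov(p)>r$ into an $(M,r,D)$-system. So the whole argument is a one-line combination, provided we massage $p$ into the form the converse direction requires.

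First I would reduce every coefficient of $p$ modulo $M$, obtaining a polynomial $p'$ with coefficients in $\{0,1,\dots,M-1\}$ that still represents $\NAND_N\mod M$. Deleting monomials whose coefficient becomes $0$ can only decrease the family of available covering sets, so $\cov(p')\ge \cov(p)\gtrsim_M (\log d/\log\log d)^{t-1}$. Meanwhile, the number of (distinct) monomials of $p'$ is at most $d$ and each coefficient is at most $M-1$, so $\coeffnorm{p'}\le (M-1)d$.

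Now I apply the converse direction of Proposition~\ref{prop:Mrd_system_to_polynomial} with the polynomial $p'$ and the parameter choice $D=(M-1)d$ and $r=\cov(p')-1$. This yields an $(M,r,D)$-system with $r\gtrsim_M (\log d/\log\log d)^{t-1}$ and $D\le (M-1)d$. For fixed $M$, the transformation $d\mapsto D=(M-1)d$ only changes $\log d$ by an additive $O_M(1)$ term and changes $\log\log d$ negligibly, so $(\log d/\log\log d)^{t-1}$ and $(\log D/\log\log D)^{t-1}$ agree up to a multiplicative constant depending only on $M$. Renaming $D$ back to $d$ gives the claimed bound.

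There is essentially no obstacle here; the only thing to be slightly careful about is the non-negativity hypothesis in Proposition~\ref{prop:Mrd_system_to_polynomial}, which is why the reduction modulo $M$ is the first step, and the distinction between number of monomials and coefficient norm, which only costs a multiplicative factor of $M$ absorbed by the $\gtrsim_M$ notation. The statement holds for the same infinite family of values of $d$ as in Proposition~\ref{prop:covering_nand_nonprimepower_construction} (up to the $O_M(1)$ rescaling), completing the proof.
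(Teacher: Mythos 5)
Your proof is correct and matches the paper's intended argument: the corollary is stated without proof precisely because it is the immediate combination of Proposition~\ref{prop:covering_nand_nonprimepower_construction} with the converse direction of Proposition~\ref{prop:Mrd_system_to_polynomial}, and your extra care with the non-negativity and coefficient-norm hypotheses (which the paper glosses over) is sound. The one implicit step worth noting is that reducing coefficients modulo $M$ cannot make any variable disappear (a polynomial representing $\NAND$ must depend on every variable, since otherwise fixing that variable to $0$ would produce a second root), which is what guarantees $\cov(p')\ge\cov(p)$.
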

This addresses the open problem raised in~\cite{NageleSZ18}, where they asked if $(M,r,d)$-systems exist with $r=\omega(1)$ when $M$ has multiple prime factors. Though we show that the covering number can grow with $d$, we also conjecture that it shouldn't grow too quickly.

\begin{conjecture}
\label{conj:covering_number}
Let $M$ be some fixed constant. Any polynomial $f$ with at most $d$ monomials that represents $\NAND \mod M$ over $\bits$ basis should have $\cov(f)=d^{o_M(1)}$.
\end{conjecture}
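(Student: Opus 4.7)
The plan is to prove the conjecture in two steps: first bound a fractional relaxation of the covering number by $d^{o_M(1)}$, then pass to the integral covering number via a rounding argument that exploits the intersection-closed structure. I would work with the equivalent formulation via $(M, R, d)$-systems (Proposition~\ref{prop:Mrd_system_to_polynomial}): a polynomial with $d$ monomials and covering number $>R$ yields an intersection-closed family $\cF \subseteq 2^{[d]}$ with $|F| \not\equiv d \pmod M$ for every $F \in \cF$, such that every $R$-subset of $[d]$ lies in some $F \in \cF$. The goal is then to show that any such family forces $R \leq d^{o_M(1)}$.

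For the fractional bound, the plan is to use LP duality. Viewing monomials as sets of variables, the dual of the fractional covering LP asks for nonnegative weights $w_i$ with $\sum_{i \in m_j} w_i \le 1$ for every monomial $m_j$, maximizing $W := \sum_i w_i$; we need to show $W = d^{o_M(1)}$. A matching of weight $W$ can be converted into a probability distribution on $\bits^n$ by sampling each $x_i = 0$ independently with probability proportional to $w_i/W$. For a typical sample, the value $f(x) = \sum_{j : m_j \cap \{i : x_i = 0\} = \emptyset} c_j$ can be compared to $f(\allones) \equiv 0 \pmod M$, producing nontrivial mod-$M$ identities involving small subcollections of the $c_j$'s. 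Combining these identities across the prime powers dividing $M$ via the Chinese Remainder theorem, and using the prime-power bound $\cov(f) \le M-1$ from Proposition~\ref{prop:covering_nand_primepower} as the atomic building block, the hope is to bootstrap to $W \le d^{o_M(1)}$.

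For the integral cover, naive randomized rounding only gives cover size at most $O(\log n)$ times the fractional cover. After pruning variables with identical monomial-incidence patterns one may assume $n \le 2^d$, so $\log n \le d$; this yields the too-weak bound $\cov(f) \leq d^{1+o_M(1)}$. The intersection-closed structure of the set system should, however, permit a tighter rounding by iterating a $\FindMinimal$-style closure: at each step choose a monomial whose variables are not yet covered and take the minimal cover extending it in $\cF$. Because the algebraic identities driving the fractional bound remain available when restricted to arbitrary sub-families of monomials, the generic $O(\log n)$ loss should be replaceable by a structural quantity tied to the mod-$M$ identities rather than to $n$.

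The main obstacle is the fractional bound itself. For prime-power $M$, Lucas's theorem yields the clean constant bound $M-1$ (Proposition~\ref{prop:covering_nand_primepower}), but this argument does not extend to composite $M$ because polynomial representations modulo composites are genuinely more expressive --- witness the Barrington-Beigel-Rudich constructions underlying Proposition~\ref{prop:covering_nand_nonprimepower_construction}. Extracting a quantitative tradeoff showing covering grows only subpolynomially in $d$ likely requires techniques beyond what is currently available for composite-modulus circuit lower bounds, and similar barriers have been resistant to attack in the related settings of $\mathsf{ACC}^0$ and locally decodable codes. As such, an unconditional proof of the conjecture appears out of reach without substantial new ideas; the realistic first target is the fractional-cover statement alluded to in the paragraph preceding the conjecture, which would already yield the conjectured algorithmic consequences up to the $O(\log n)$ integrality-gap factor.
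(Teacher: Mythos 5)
This statement is a \emph{conjecture} in the paper: the authors do not prove it, and they explicitly identify the case $n = 2^{\Omega(d)}$ (many variables relative to the number of monomials) as open. Your proposal, as you yourself concede in the final paragraph, is a research plan rather than a proof, so it cannot close the statement. Beyond that honest self-assessment, both halves of the plan have concrete problems. For the fractional bound, your idea of ``combining identities across the prime powers dividing $M$ via the Chinese Remainder theorem, using the prime-power bound $\cov(f)\le M-1$ as the atomic building block'' cannot work as stated: a polynomial representing $\NAND \bmod M$ for composite $M$ does \emph{not} represent $\NAND$ modulo any prime-power factor of $M$ individually, since $p(x)\neq 0 \bmod M$ only requires nonvanishing modulo \emph{some} prime-power component, and which component witnesses this can vary with $x$. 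This is precisely the phenomenon exploited by the Barrington--Beigel--Rudich construction, so there is no prime-power atom to bootstrap from. The paper's actual partial result (Proposition~\ref{prop:covering_LP}) instead takes the dual packing solution, applies a random restriction that leaves $\Omega(L^{1-\eps})$ variables while collapsing every monomial to degree $O(\log d/(\eps\log L))$ via a Chernoff bound, and then invokes the Barrington--Tardos degree lower bound $\deg \gtrsim_M (\log t)^{1/(r-1)}$ --- a deep external input that your sketch does not supply a substitute for.

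For the integral step, your accounting is right that randomized rounding costs a factor $O(\log n)$ and that after merging variables with identical monomial-incidence patterns one may take $n\le 2^d$, giving only $\cov(f)\le d^{1+o_M(1)}$. But the claim that the intersection-closed structure ``should'' permit replacing the $O(\log n)$ loss by a quantity independent of $n$ is exactly the open problem: the paper states that Proposition~\ref{prop:covering_LP} resolves the conjecture only when $n = 2^{d^{o(1)}}$ and that the regime $n=2^{\Omega(d)}$ remains open. A $\FindMinimal$-style greedy closure has no known analysis that beats the generic set-cover integrality gap here, and you do not propose one. In short, the proposal correctly locates the two obstacles but overcomes neither, and its route to the fractional bound diverges from (and is weaker than) the argument the paper actually gives for that relaxation.
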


We can prove a weaker form of Conjecture~\ref{conj:covering_number}. We will show that the natural LP relaxation of covering number is indeed small. For this we need the following lower bound on the degree of polynomials which represent $\NAND\mod M$.
\begin{proposition}[\cite{BarringtonT98}]
\label{prop:degree_OR_modm_lowerbound}
Let $M$ be a fixed constant with $r$ distinct prime factors. Let $p(x_1,\dots,x_t)$ be a polynomial representing $\NAND_t \mod M$ over $\bits^t$, then $\deg(p)\gtrsim_M (\log t)^{1/(r-1)}$.
\end{proposition}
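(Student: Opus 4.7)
The plan is to follow the inductive ``prime-by-prime restriction'' approach of Barrington and Tardos. By the substitution $x_i \mapsto 1 - x_i$, representing $\NAND_t \bmod M$ on $\bits^t$ is equivalent to representing $\OR_t \bmod M$ on the same cube, so I will work with $\OR$. Factor $M = p_1^{a_1}\cdots p_r^{a_r}$ with distinct primes, let $\Delta = \deg(p)$, and for each $i$ set $Z_i = \{x \in \bits^t : p(x) \equiv 0 \pmod{p_i^{a_i}}\}$. By the Chinese Remainder Theorem the representation property is equivalent to $\bigcap_{i=1}^r Z_i = \{\allzeros\}$, and the task is to show that this joint constraint forces $\Delta$ to grow with $t$ at the claimed rate.

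The technical heart of the argument is a \emph{subcube restriction lemma}: for any polynomial $q$ of degree $\Delta$ on $t$ Boolean variables, considered modulo a prime power $p^a$, one can find an affine combinatorial subcube $C \subseteq \bits^t$ containing a distinguished origin $x_0$, of dimension $t' \geq t^{1/\eta_M(\Delta)}$ for a suitable $M$-dependent, polynomial-in-$\Delta$ function $\eta_M$, on which $q - q(x_0) \equiv 0 \pmod{p^a}$ identically. To prove the lemma, one uses Lemma~\ref{lem:primepower_divisibility_Hamming_2} to rewrite $q$ as a low-degree polynomial in the elementary symmetric functions $s_1, s_p, s_{p^2}, \dots$ of subsets of the inputs, partitions $[t]$ into equally-sized blocks, and applies a Ramsey-style pigeonhole on the values of these symmetric functions across blocks to pin down a block-structured sub-collection on which every symmetric contribution vanishes.

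Given this lemma I proceed by induction on $r$. The base case $r = 1$ is the prime-power case, where a sparsity argument using Lemma~\ref{lem:primepower_divisibility_Hamming} expresses $p$ in terms of the elementary symmetric functions $s_{p_1^j}$ and bounds the number of attainable residue patterns; this yields the seed bound $\Delta \gtrsim_M \log t$. For the inductive step $r \geq 2$, apply the restriction lemma to $p$ modulo $p_1^{a_1}$ to obtain a subcube $C$ on which $p$ is congruent to a constant $c$ modulo $p_1^{a_1}$; after a translation one may take $x_0 = \allzeros \in C$, and since $p(\allzeros) \equiv 0 \pmod M$ the constant $c$ vanishes. The restriction $p|_C$ then represents $\OR_{t'} \bmod M'$ for $M' = M/p_1^{a_1}$, which has $r - 1$ distinct prime factors. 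The inductive hypothesis gives $\Delta \geq \deg(p|_C) \gtrsim_{M'} (\log t')^{1/(r-2)}$, and combining with the quantitative dependence of $t'$ on $t$ and $\Delta$ from the restriction lemma yields, after rearrangement, $\Delta \gtrsim_M (\log t)^{1/(r-1)}$.

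The principal obstacle is the restriction lemma, specifically the quantitative trade-off between $t'$, $t$, and $\Delta$. The induction will close at exactly the claimed exponent $1/(r-1)$ only if this trade-off has the right form: a trade-off that is too strong would produce a better exponent and contradict the known upper bounds of Barrington--Beigel--Rudich, while a trade-off that is too weak will fail to give any nontrivial bound. Carrying out the symmetric-function argument over $\Z/p^a\Z$ rather than over $\F_p$ is also delicate, because the ring is not a field and the structure of zero sets is richer; one must exploit Lemma~\ref{lem:primepower_divisibility_Hamming_2} carefully to pass between the two settings while tracking $M$-dependent constants through the recursion without losing control of the degree.
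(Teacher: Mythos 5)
This proposition is imported verbatim from Barrington--Tardos \cite{BarringtonT98}; the paper provides no proof, so there is no in-paper argument to compare against. Judged on its own terms, your sketch identifies a reasonable high-level strategy --- peel off prime factors one at a time using subcube restrictions --- but two concrete problems prevent it from being a proof.

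The induction does not close at its boundary, and this traces back to a mis-stated base case. You take the $r=1$ seed to be $\Delta\gtrsim_M\log t$ and then, at level $r$, invoke the hypothesis for $M'=M/p_1^{a_1}$ (which has $r-1$ prime factors) in the form $\deg(p|_C)\gtrsim_{M'}(\log t')^{1/(r-2)}$; at $r=2$ the exponent $1/(r-2)$ is meaningless. The prime-power case is in fact much stronger than $\log t$: for $M=p^k$, any polynomial representing $\NAND_t\bmod M$ over $\bits^t$ has degree at least $t/(M-1)$. (Compose $p$ with a degree-$(M-1)$ polynomial as in the proof of Proposition~\ref{prop:covering_nand_primepower} to obtain the exact $\F_p$-representation of Fact~\ref{fact:exact_NAND_modp}, which has degree $t$.) It is exactly this \emph{linear} seed that, after $r-1$ rounds of restriction with a loss of the rough form $t'\approx t^{1/\poly(\Delta)}$, is supposed to produce $\Delta\gtrsim_M(\log t)^{1/(r-1)}$; a logarithmic seed gives a strictly weaker exponent and in particular cannot reach $\Omega(\log t)$ already at $r=2$.

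The restriction lemma, which you correctly identify as the heart of the matter, is asserted rather than established, and its quantitative trade-off is left open; you acknowledge yourself that you have not verified it has the right form. There is also a pointwise constraint you have not built in: the subcube must actually contain $\allzeros$, since otherwise there is no point of $C$ at which $p\equiv 0\bmod M$, and $p|_C$ need not represent $\OR\bmod M'$ at all. Writing ``after a translation one may take $x_0=\allzeros\in C$'' does not fix this --- a translation changes the parametrization of $C$, not which points of $\bits^t$ lie in it --- so the lemma you need is a ``subcube through a prescribed point'' statement, which is a genuinely stronger requirement than what a generic pigeonhole over blocks supplies. Until the lemma is pinned down and the recursion rerun with the linear base case, the sketch does not yield the claimed bound.
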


\begin{fact}[Chernoff bound]
\label{fact:chernoff}
Let $Z_1,Z_2,\dots,Z_n$ be independent random variables taking values in $[0,1]$ and let $Z=Z_1+Z_2+\dots+Z_n$. Then for any $t>1$, $$\Pr[Z\ge t]\le \left(\frac{e\E[Z]}{t}\right)^t.$$
\end{fact}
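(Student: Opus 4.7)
The plan is to use the classical Chernoff–Cramér moment generating function method, which pairs Markov's inequality applied to the exponential of $Z$ with the independence of the $Z_i$ to factor the moment generating function, and then optimizes the free parameter. Since $Z_i \in [0,1]$ (rather than $\{0,1\}$), I need the version of the per-variable MGF bound that uses convexity of $z \mapsto e^{\lambda z}$ on $[0,1]$, not the identity $E[e^{\lambda Z_i}] = 1 - p_i + p_i e^\lambda$ that is sometimes stated for Bernoulli inputs.

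First, for any $\lambda > 0$, Markov's inequality gives
\[
\Pr[Z \ge t] \;=\; \Pr[e^{\lambda Z} \ge e^{\lambda t}] \;\le\; e^{-\lambda t}\, \E[e^{\lambda Z}].
\]
By independence, $\E[e^{\lambda Z}] = \prod_{i=1}^n \E[e^{\lambda Z_i}]$. Next, for $z \in [0,1]$ the convexity of $z \mapsto e^{\lambda z}$ yields the linear upper bound $e^{\lambda z} \le 1 + z(e^\lambda - 1)$, so taking expectations and using $1+x \le e^x$,
\[
\E[e^{\lambda Z_i}] \;\le\; 1 + \E[Z_i]\,(e^\lambda - 1) \;\le\; \exp\!\bigl(\E[Z_i]\,(e^\lambda - 1)\bigr).
\]
Writing $\mu := \E[Z]$ and multiplying over $i$, this gives $\E[e^{\lambda Z}] \le \exp(\mu(e^\lambda - 1))$, and hence
\[
\Pr[Z \ge t] \;\le\; \exp\!\bigl(\mu(e^\lambda - 1) - \lambda t\bigr).
\]

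Finally I optimize over $\lambda$. Assuming $\mu < t$ (otherwise $(e\mu/t)^t \ge e^t > 1$ and the bound is trivial), the right-hand side is minimized at $\lambda = \ln(t/\mu) > 0$. Plugging in, $\mu(e^\lambda - 1) - \lambda t = (t - \mu) - t \ln(t/\mu)$, so
\[
\Pr[Z \ge t] \;\le\; \bigl(\mu/t\bigr)^{t}\, e^{\,t-\mu} \;\le\; \bigl(\mu/t\bigr)^t\, e^{t} \;=\; \bigl(e\mu/t\bigr)^t,
\]
which is the claimed bound. The only potentially delicate point is this edge case $\mu \ge t$, but there the right-hand side is $\ge e^t > 1$ and so the bound holds vacuously; everywhere else the optimization of $\lambda$ is unconstrained. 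No further obstacles are anticipated since every ingredient (Markov's inequality, independence of MGFs, the convex upper bound on $e^{\lambda z}$ for $z \in [0,1]$, and the elementary inequality $1+x\le e^x$) is standard.
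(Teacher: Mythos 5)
The paper states Fact~\ref{fact:chernoff} without proof, treating it as a standard textbook result, so there is no in-paper argument to compare against. Your proof is the standard and correct moment-generating-function (Chernoff--Cram\'er) derivation: Markov on $e^{\lambda Z}$, factoring by independence, the convexity bound $e^{\lambda z}\le 1+z(e^\lambda-1)$ for $z\in[0,1]$ followed by $1+x\le e^x$, optimization at $\lambda=\ln(t/\mu)$, and the observation that the bound is vacuous when $\mu\ge t$. Each step checks out; the only unstated corner case is $\mu=0$, where the bound reads $0^t=0$ and holds trivially since $Z=0$ almost surely, but this is negligible.
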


\begin{proposition}
\label{prop:covering_LP}
Let $M$ be some fixed constant with $r$ distinct prime factors and let $p(x_1,\dots,x_n)$ be a multilinear polynomial with $d$ monomials representing $\NAND_n \mod M$ over $\bits^n$. Then $$\cov(p)\le \log(n)\cdot \exp\left(C_M(\log d)^{1-1/r}\right)$$ where $C_M>0$ is a constant depending only on $M$.
\end{proposition}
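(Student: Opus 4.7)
The plan is to argue via the LP relaxation of set cover. By standard randomized rounding (include each monomial $S$ independently with probability $\min(1, c(\log n)y^*_S)$ for an optimal fractional cover $y^*$), one obtains $\cov(p) \le O(\log n) \cdot \tau^*(H_p)$, where $H_p$ is the hypergraph on the variables with the monomials of $p$ as hyperedges and $\tau^*$ is its fractional covering number. By LP duality $\tau^*(H_p) = \nu^*(H_p)$, so the heart of the proof is to bound the fractional matching number by $\exp(C_M(\log d)^{1-1/r})$.

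Suppose $\nu^*(H_p) = T$, witnessed by $z \in [0,1]^n$ with $\sum_i z_i = T$ and $\sum_{i \in S} z_i \le 1$ for every monomial $S$. I would sample a random restriction $U \subseteq [n]$ by placing $i$ into $U$ independently with probability $\alpha z_i$, for a scaling parameter $\alpha \in (0,1]$ to be chosen. Setting $x_j = 1$ for $j \notin U$ gives the restricted polynomial $p|_U$, which still represents $\NAND_{|U|}\mod M$ on $\bits^U$ since the only input $p$ sends to $0 \bmod M$ is $\allones$. Moreover $\E|U| = T\alpha$ and $\E|S\cap U| \le \alpha$ for every monomial.

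The main quantitative step is a careful two-parameter optimization. Fact~\ref{fact:chernoff} gives $\Pr[|S\cap U|\ge t] \le (e\alpha/t)^t$, so a union bound over the $d$ monomials yields $\max_S |S\cap U|\le t$ with positive probability as long as $t\log(t/(e\alpha)) \gtrsim \log d$. A standard Chernoff lower-tail also gives $|U|\ge T\alpha/2$ with high probability when $T\alpha$ is a large constant; otherwise $T \le 2/\alpha$ already, and we are done. For such $U$, Proposition~\ref{prop:degree_OR_modm_lowerbound} applied to $p|_U$ yields $c_M(\log|U|)^{1/(r-1)} \le \deg(p|_U) \le \max_S|S\cap U| \le t$, and therefore
\[
T \le \frac{2}{\alpha}\exp\!\left(C_M\, t^{r-1}\right).
\]

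The obstacle is choosing $t$ and $\alpha$ to minimize $\log(1/\alpha) + C_M t^{r-1}$ subject to $t\log(t/(e\alpha)) \gtrsim \log d$. Writing $s = \log(t/(e\alpha))$, the constraint reads $ts \gtrsim \log d$ and the objective becomes essentially $s + C_M t^{r-1}$ (up to lower-order $\log t$ terms). Balancing $s \asymp t^{r-1}$ against $ts\asymp \log d$ gives $t\asymp (\log d)^{1/r}$, $s \asymp (\log d)^{1-1/r}$, and $\alpha \asymp t\cdot e^{-s}$. Substituting back produces $T \le \exp(C_M (\log d)^{1-1/r})$, which combined with the $O(\log n)$ rounding factor yields the stated bound on $\cov(p)$.
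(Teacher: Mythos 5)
Your proposal is correct and follows essentially the same route as the paper's proof: LP-relax the cover, round with an $O(\log n)$ loss, pass to the dual fractional matching, use it to define a random restriction, control the restricted degree via the Chernoff bound of Fact~\ref{fact:chernoff}, and invoke the degree lower bound of Proposition~\ref{prop:degree_OR_modm_lowerbound}. Your two-parameter optimization in $(t,\alpha)$ is just a reparametrization of the paper's choice $\alpha = L^{-\eps}$ with $\eps = 1/2$, and yields the same bound.
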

\begin{proof}
We will think of the monomials of $p$ as sets $S\subset [n]$ and let $\cF$ be the collection of monomials in $p$. We are interested in the minimum set cover from $\cF$ which covers all of $[n]$. We can write the following linear programming relaxation for this problem.

\begin{equation}
\label{eqn:CoverLP}
\begin{aligned}
\min \sum_{S\in \cF} w_S\\
w_S\ge 0\\
\forall i\in [n]\ \sum_{S\ni i} w_S \ge 1
\end{aligned}
\end{equation}

Let $L$ be the optimum value of the LP~(\ref{eqn:CoverLP}) attained for some $(w_S^*)_{S\in\cF}$. Clearly $L$ is a lower bound on the minimum set cover. By picking each subset $S\in \cF$ in the cover with probability $w_S^*$, it is not hard to see that any fixed element in $[n]$ is covered with a constant probability and the number of sets picked is $O(L)$. By repeating this $O(\log n)$ times, with high probability, all the elements of $[n]$ will be covered. Therefore $\cov(p)\le O(L\log n)$. We will now prove an upper bound on $L$.
We can write the dual of the LP~(\ref{eqn:CoverLP}) as follows:
\begin{equation}
\label{eqn:PackingLP}
\begin{aligned}
\max \sum_{i\in [n]} p_i\\
p_i\ge 0\\
\forall S\in\cF\ \sum_{i\in S}p_i \le 1
\end{aligned}
\end{equation}
By LP duality the optimum value of the LP~(\ref{eqn:PackingLP}) is also $L$ and is achieved for some $p_1^*,p_2^*,\dots,p_n^*$. Now let $\rho$ be a random restriction the variables $x_1,\dots,x_n$ where each $x_i$ is set to $1$ with probability $1-p_i^*/L^\eps$ where $\eps>0$ is a small constant that we will choose later. The restricted polynomial $p|_\rho$ represents $\NAND \mod M$ on the remaining variables. The expected number of remaining variables in $p|_\rho$ is $(\sum_{i\in [n]} p_i^*)/L^{\eps} = L^{1-\eps}$. So $p|_\rho$ has $\Omega(L^{1-\eps})$ variables left with probability $1-o(1)$.
 \begin{claim}
 $p|_\rho$ has degree $O(\frac{\log d}{\eps\log L})$ with probability $1-o(1)$.
 \end{claim}
 \begin{proof}
 Fix some $S\in \cF$ and let $t=10\frac{\log d}{\eps\log L}$. Let $Z$ be the number of variables left in $S$ after the random restriction. By Chernoff bound (Fact~\ref{fact:chernoff}),
\begin{align*}
 \Pr[Z \ge t] \le \left(\frac{e\E[Z]}{t}\right)^t = \left(\frac{e(\sum_{i\in S}p_i^*)}{L^\eps t}\right)^t \le \left(\frac{e}{L^\eps t}\right)^t =\exp (-t\log(L^\eps t/e)) \le \frac{1}{d^2}
\end{align*}
By union bounding over all the $d$ sets in $\cF$, we can conclude that every monomial in $p|_\rho$ has degree at most $t$ with probability at least $1-1/d$.
 \end{proof}
 So there exists a restriction $\rho$ such that $p|_\rho$ has degree $O(\frac{\log d}{\eps\log L})$ and $\Omega(L^{1-\eps})$ variables. So by Proposition~\ref{prop:degree_OR_modm_lowerbound}, we get $$(\log (L^{1-\epsilon}))^{1/(r-1)}\lesssim_M \frac{\log d}{\eps \log L}.$$
Choosing $\eps=1/2$, gives $L\le \exp\left(C_m(\log d)^{1-1/r}\right)$ for some constant $C_m>0$ depending only on $m$.
\end{proof}

So Proposition~\ref{prop:covering_LP} proves Conjecture~\ref{conj:covering_number} for polynomials which represent $\NAND_n \mod M$ over $\bits^n$ with $d$ monomials if $n=2^{d^{o(1)}}$. But there can be such polynomials where $n=2^{\Omega(d)}$. Showing that the covering number is small for such polynomials is open.

\subsection{Hardness of $\hornsatmod_M$}
\label{sec:hornsat_hardness}
In this section, we will show hardness for $\hornsatmod_M$ when $M$ has multiple prime factors assuming exponential time hypothesis ($\ETH$). We will show that if there are low degree polynomials representing $\OR_d \mod M$, then solving $\hornsatmod_M$ is hard.

\begin{conjecture}[Exponential time hypothesis ($\ETH$)~\cite{ImpagliazzoP01,ImpagliazzoPZ01}]
There is no $2^{o(m)}$ time algorithm for $\threesat$ with $m$ clauses.
\end{conjecture}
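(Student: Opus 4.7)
The plan is to establish an unconditional lower bound of the form $2^{\Omega(m)}$ on the deterministic time complexity of $\threesat$ on instances with $m$ clauses. First, I would attempt a direct combinatorial attack: construct an explicit family of hard $\threesat$ instances and show that every decision procedure must inspect $2^{\Omega(m)}$ partial assignments. Natural candidates are random instances drawn from the critical density region, where the clustering geometry of the solution space conjecturally prevents efficient local search. The first concrete step would be to formalize a ``query lower bound'' by analyzing the decision-tree complexity of any algorithm whose access to the instance is combinatorially nontrivial, then attempt to remove the query restriction via a simulation argument.

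Second, I would exploit the sparsification lemma of~\cite{ImpagliazzoPZ01} to reduce the problem to showing $\mathsf{SAT}\notin \mathsf{DTIME}(2^{o(n)})$ on $n$ variables. This recasts the task as a uniform circuit lower bound against general Boolean computation. One route here would be to isolate an algebraic invariant---say, a suitable rank or dimension measure on the satisfying-assignment vector space of a Tseitin-like instance---that is preserved under subexponential simulation and provably large for explicit hard inputs. A complementary route is to lift resolution and DPLL lower bounds, which are already exponential on pebbling and Tseitin formulas, to the unrestricted computational model by showing that preprocessing a $\threesat$ instance into any more powerful data structure cannot shrink its intrinsic hardness by more than a polynomial factor.

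The main obstacle, and the reason the statement appears as a conjecture rather than a theorem, is that completing either route would in particular imply $\mathsf{P}\neq \mathsf{NP}$, together with a quantitatively much stronger exponential separation. All presently available techniques for lower bounds against general computation are blocked by at least one of the three barriers of relativization (Baker--Gill--Solovay), natural proofs (Razborov--Rudich), or algebrization (Aaronson--Wigderson); to the best of current knowledge, no technique simultaneously evades all three while producing an exponential-time lower bound on a problem in $\mathsf{NP}$. Thus the plan above cannot be carried out with present tools, and $\ETH$ is assumed in the sequel precisely so that it can underwrite the conditional hardness reductions in Section~\ref{sec:hornsat_hardness} rather than be derived from them.
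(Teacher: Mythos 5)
The statement is the Exponential Time Hypothesis itself, which the paper states as a conjecture with citations to Impagliazzo--Paturi and Impagliazzo--Paturi--Zane and never proves; your conclusion that it cannot be derived with current techniques and must simply be assumed as the basis for the conditional hardness results is exactly the paper's treatment. The speculative attack routes you sketch are not pursued anywhere in the paper (and, as you note, would not succeed), so there is nothing to compare beyond agreeing that $\ETH$ is an assumption, not a theorem.
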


\begin{proposition}\label{prop:hornsatmod_hardness}
Suppose $f(\cdot)$ is some function such that for every $d$, there exists a degree $f(d)$ polynomial which represents $\NAND_d \mod M$ over $\bits^d$ which can be computed efficiently\footnote{It should be computable in time which is polynomial in its size.}. Then assuming $\ETH$, solving $\hornsatmod_M(n)$ requires at least $2^{\Omega(m)}-\poly(n)$ time for some $m$ such that $f(m)\log(m/f(m)) \gtrsim \log n$.
\end{proposition}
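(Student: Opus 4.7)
The plan is to reduce $\threesat$ to $\hornsatmod_M$, using the low-degree polynomial representation of $\NAND \bmod M$ as a gadget that compresses the conjunction of $m$ clauses into a single linear congruence modulo $M$.

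Concretely, I would start with a $\threesat$ instance $\varphi$ with $m$ clauses $C_1,\dots,C_m$ on $n' \le 3m$ variables, set $\Delta := f(m)$, and let $p(y_1,\dots,y_m) \in \Z[y_1,\dots,y_m]$ be the efficiently computable degree-$\Delta$ polynomial representing $\NAND_m \bmod M$ over $\bits^m$ guaranteed by hypothesis. Arithmetize each clause as $c_j(x) := 1-\prod_{k=1}^{3}(1-\ell_{j,k})$; this is $\bits$-valued on $\bits^{n'}$ with $c_j(x)=1$ iff $C_j$ is satisfied. Define $q(x) := p(c_1(x),\dots,c_m(x))$; by the $\NAND \bmod M$ property, $q(x) \equiv 0 \pmod M$ iff every clause is satisfied, i.e.\ iff $x$ satisfies $\varphi$.

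Now translate $q(x) \equiv 0 \pmod M$ into a $\hornsatmod_M$ instance. Multilinearize $q$ over $\bits^{n'}$ and introduce an auxiliary Boolean variable $z_S$ for every monomial $\prod_{i\in S}x_i$ that appears (identifying $z_{\{i\}}$ with $x_i$). Enforce $z_S = \bigwedge_{i\in S}x_i$ using the standard Horn gadget: the clauses $z_S \to x_i$ for each $i\in S$ together with $\bigwedge_{i\in S}x_i \to z_S$, each of which has at most one positive literal and is therefore a Horn clause, and a direct case analysis shows that these clauses force $z_S = \bigwedge_{i\in S}x_i$ on every Boolean assignment. Finally, impose the single linear modular constraint $\sum_S q_S\, z_S \equiv 0 \pmod M$, where $q_S \in \Z$ is the coefficient of $\prod_{i\in S}x_i$ in $q$. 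By construction the resulting $\hornsatmod_M$ instance is satisfiable iff $\varphi$ is.

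For the size accounting: each of the at most $(em/\Delta)^\Delta$ monomials of $p$ expands, under the substitution $y_j \mapsto c_j(x)$, into at most $8^\Delta$ multilinear monomials in the $x_i$, so $q$ has at most $N \le (Km/\Delta)^\Delta$ distinct monomials for an absolute constant $K$. The total number of variables is $n \le n' + N \le (Km/\Delta)^\Delta$, and since $p$ is efficiently computable the whole reduction runs in $\poly(n)$ time. Taking logarithms gives $\log n \lesssim \Delta \log(m/\Delta) = f(m)\log(m/f(m))$, which is the relation between $m$ and $n$ claimed in the proposition. If $\hornsatmod_M(n)$ admitted an algorithm running in time $T(n)$, then $\threesat$ on $m$ clauses would be solvable in time $T(n)+\poly(n)$; $\ETH$ forces this to be $2^{\Omega(m)}$, giving $T(n) \ge 2^{\Omega(m)} - \poly(n)$. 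The main technical point to get right is the size bookkeeping --- checking that the literal-level expansion of $p(c_1,\dots,c_m)$ really is bounded by $(O(m/\Delta))^\Delta$ monomials rather than something larger --- and verifying that the Horn conjunction gadget genuinely pins $z_S$ to $\bigwedge_{i\in S}x_i$; everything else is routine.
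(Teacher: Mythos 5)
Your proposal is correct and is essentially the paper's own reduction: arithmetize the clauses, compose with the low-degree $\NAND_m \bmod M$ representation, introduce a Horn-constrained variable $z_S$ for each monomial, and push the whole congruence into the single global modular constraint, with the same $(O(m/\Delta))^{O(\Delta)}$ size bookkeeping. No gaps.
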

\begin{proof}
Choose the largest $m$ such that $n\ge \binom{3m}{\le 3f(m)}$, such an $m$ will satisfy $f(m)\log(m/f(m))=\Omega(\log n)$. Suppose $\phi(x_1,\dots,x_t)=C_1(x)\wedge C_2(x)\wedge \dots \wedge C_m(x)$ is some $\threesat$ instance with $m$ clauses and $t\le 3m$ variables where each $C_i(x)$ depends on at most 3 variables. The variables $x_1,\dots,x_t$ take $\bits$ values and each $C_i(x)$ is a polynomial of degree at most $3$ which takes these $\bits$ values and outputs $1$ if the $i^{th}$ clause is satisfied and $0$ if it is not.
So $\phi$ is satisfiable iff there exists some $x\in \bits^t$ such that $C_1(x)=\dots=C_m(x)=1$. Now let $p(z_1,\dots,z_m)$ be a polynomial of degree $f(m)$ which represents $\NAND_m \mod M$. Then $\phi$ is satisfiable iff there exists some $x\in \bits^t$ such that the polynomial $\Gamma(x)=p(C_1(x),\dots,C_m(x)) = 0 \mod M$. The polynomial $\Gamma$ has degree at most $3f(m)$ and $t$ variables, so it has at most $N=\binom{t}{\le 3f(m)}$ monomials. Let $\Gamma(x)=\sum_{S\subset [t]:|S|\le 3f(m)} a_S \prod_{i\in S}x_i$. Wlog we can assume that $a_S\in {0,1,2,\dots,M-1}$ because we only care about its values modulo $M$. We will now create an instance $\Psi$ of $\hornsatmod_M(N)$ on $N\le n$ variables such that $\Psi$ has a solution iff $\phi$ is satisfiable.
The variables in $\Psi$ will be indexed by subsets $S\subset [t]$ with $|S|\le 3f(m)$, let us denote them by $z_S$. Intuitively, we would want $z_S=\prod_{i\in S} x_i$. To enforce this, we add the following $\hornsat$ clauses to $\Psi$.
\begin{itemize}
\item For each variable $z_S$, add the clause $z_S \rightarrow z_i$ for every $i\in S$.
\item For each variable $z_S$, add the clause $\wedge_{i\in S} z_i \rightarrow z_S$.
\end{itemize}
Finally, to $\Psi$ we add the modular constraint $\sum_S a_Sz_S= 0 \mod M$. $\Psi$ will have at most $O(tN)$ clauses and $\Psi$ has a solution iff there exists an $x\in \bits^n$ such that $\Gamma(x) = 0 \mod M$. Therefore the $\hornsatmod_M(N)$ instance $\Psi$ has a solution iff the $\threesat$ formula $\phi$ is satisfiable. 
The running time of the reduction is $\poly(N)$. Therefore we can solve $\threesat$ with $m$ clauses in time $\poly(N)+T(N)$ where $T(N)$ is the time it takes to solve $\hornsatmod_M(N)$. This proves the required claim.
\end{proof}

\begin{remark}
  Note that the instance of $\hornsat$ produces by the $\ETH$ reduction does not have any constraints which force a variable to be a particular constant. This observation is needed for the dichotomy result in Appendix~\ref{app:dicot}.
\end{remark}

\begin{remark}
While the obstructions to our algorithm are polynomials which represent $\NAND \mod M$ over $\bits$ basis with high covering number, the gadgets used in the hardness proof are low degree polynomials which represent $\NAND \mod M$. Though these are clearly related, it is tempting to believe that the obstructions for the optimal algorithm should be the right gadgets to prove tight hardness results. Can we use polynomials which represent $\NAND \mod M$ over $\bits$ basis with high covering number in the hardness reduction? Can we start with something else other than $\threesat$ in the reduction?
\end{remark}

\begin{proposition}\label{prop:ORdegree_upperbound_primepower_bits}
For any integer $M\ge 2$, there exists a degree $\ceil{d/(M-1)}$ polynomial which represents $\NAND_d \mod M$ over $\bits^d$.
\end{proposition}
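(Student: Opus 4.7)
The idea is to partition $[d]$ into $M-1$ disjoint blocks of roughly equal size, sum the block indicators, and subtract a constant to zero out the all-ones point. Let $r = \lceil d/(M-1) \rceil$ and choose any partition $[d] = B_1 \sqcup \cdots \sqcup B_{M-1}$ with $|B_j| \le r$ for every $j$ (some blocks may be empty when $d < M-1$; we use the convention that the empty product equals $1$). Define
\[
p(x_1, \ldots, x_d) \;=\; \sum_{j=1}^{M-1} \prod_{i \in B_j} x_i \;-\; (M - 1).
\]
Since each monomial $\prod_{i \in B_j} x_i$ has degree $|B_j| \le r$, the polynomial $p$ has degree at most $r$, as required.

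To verify that $p$ represents $\NAND_d \bmod M$ on $\bits^d$, note that each product $\prod_{i \in B_j} x_i$ is $\bits$-valued and equals $1$ iff every variable in $B_j$ is set to $1$. Let $c(x)$ denote the number of blocks $j$ for which this product is $1$, so that $p(x) = c(x) - (M-1)$. Clearly $c(\allones) = M-1$ (all blocks, including the vacuous empty ones, qualify), giving $p(\allones) = 0 \equiv 0 \pmod{M}$. For any $x \neq \allones$, there is some index $i^\star$ with $x_{i^\star} = 0$, and the (nonempty) block containing $i^\star$ fails the all-ones condition; hence $c(x) \le M - 2$, and
\[
p(x) \in \{-(M-1),\, -(M-2),\, \ldots,\, -1\},
\]
every element of which is nonzero modulo $M$.

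The only design choice that matters is taking exactly $M-1$ blocks: with $t < M-1$ blocks the forced sizes $\lceil d/t\rceil$ exceed $r$, while with $t \ge M$ blocks the residual value $c(x) - t$ can reach $-M \equiv 0 \pmod{M}$, destroying the representation. Thus no serious obstacle arises — the construction is elementary once the correct block count $M-1$ is identified, and the degree bound matches the proposition's claim.
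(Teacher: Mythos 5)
Your proof is correct and is essentially the paper's own argument: partition the variables into $M-1$ blocks of size at most $\ceil{d/(M-1)}$, compute the AND of each block exactly by a single monomial, and combine so that the all-ones input gives $0$ while every other input gives a value in a range of $M-1$ consecutive nonzero residues. Your polynomial is just the negation of the paper's $\sum_{j}\bigl(1-\prod_{i\in B_j}x_i\bigr)$, so the two constructions coincide up to sign.
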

\begin{proof}
Partition the variables $x_1,x_2,\dots,x_d$ into $M-1$ parts of size at most $d'=\ceil{d/(M-1)}$. We can compute the $\NAND$ of each part exactly with a degree $d'$ polynomial of the form $1-\prod_{i=1}^{d'}x_i$. Adding these polynomials which compute $\NAND$ on each part exactly, we get a polynomial which represents $\NAND_d \mod M$ over $\bits^d$.
\end{proof}

We have shown that Algorithm~\ref{alg:horn_sat_modm} runs in time $n^{M+O(1)}$ when $M$ is a prime power. Combining Propositions \ref{prop:ORdegree_upperbound_primepower_bits} and \ref{prop:hornsatmod_hardness} we have the following corollary, which shows that our algorithm is nearly tight assuming $\ETH$ when $M$ is a prime power.
\begin{corollary}
Suppose $M\le n$. Assuming $\ETH$, solving $\hornsatmod_M(n)$ requires at least $n^{\Omega(M/\log M)}$ time.
\end{corollary}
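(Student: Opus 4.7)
The plan is to invoke Proposition \ref{prop:hornsatmod_hardness} with the explicit polynomial construction from Proposition \ref{prop:ORdegree_upperbound_primepower_bits}, and then solve the resulting inequality for $m$ in terms of $n$ and $M$. Proposition \ref{prop:ORdegree_upperbound_primepower_bits} gives us a polynomial representing $\NAND_d \bmod M$ over $\bits^d$ of degree $f(d) = \lceil d/(M-1) \rceil$, and this polynomial is trivially computable in time polynomial in its size since it is just a sum of $M-1$ single-monomial polynomials of the form $1 - \prod_{i \in P} x_i$. So the hypothesis of Proposition \ref{prop:hornsatmod_hardness} is satisfied with $f(d) \asymp d/M$.

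Plugging this $f$ into Proposition \ref{prop:hornsatmod_hardness}, I get that $\ETH$ implies $\hornsatmod_M(n)$ requires at least $2^{\Omega(m)} - \poly(n)$ time for some $m$ satisfying
\[
\frac{m}{M-1} \cdot \log\!\left(\frac{m}{\,m/(M-1)\,}\right) \;=\; \frac{m}{M-1} \cdot \log(M-1) \;\gtrsim\; \log n.
\]
Since $M \le n$, this condition is equivalent (up to constants, and assuming $M \ge 2$ so that $\log(M-1)$ makes sense; the case $M=2$ being standard $\hornsat$-style $\ETH$-hardness) to
\[
m \;\gtrsim\; \frac{M \log n}{\log M}.
\]

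The runtime lower bound then becomes $2^{\Omega(m)} = 2^{\Omega(M \log n / \log M)} = n^{\Omega(M/\log M)}$, which absorbs the additive $\poly(n)$ term as long as $M/\log M$ exceeds a sufficiently large constant (the small-$M$ regime reduces to standard $\ETH$-hardness for $\hornsat$ with a constant modular constraint, which is immediate). The only thing to verify carefully is the choice of the largest feasible $m$ in Proposition \ref{prop:hornsatmod_hardness}: one needs $n \ge \binom{3m}{\le 3 f(m)}$, and with $f(m) \asymp m/M$ this binomial is at most $(3eM)^{3m/M}$, so the constraint $\log n \gtrsim (m/M) \log M$ is indeed the binding one. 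The main (and only) obstacle is this bookkeeping in the binomial bound and making the constants come out consistently; everything else is immediate from the cited propositions.
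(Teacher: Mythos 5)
Your proposal is correct and takes exactly the same route as the paper: plug the degree-$\lceil d/(M-1)\rceil$ construction of Proposition~\ref{prop:ORdegree_upperbound_primepower_bits} into Proposition~\ref{prop:hornsatmod_hardness} and solve $f(m)\log(m/f(m)) \gtrsim \log n$ for $m$. You just carry out the algebra and binomial bookkeeping more explicitly than the paper, which states the corollary without a written-out calculation.
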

We will now show hardness for $M$ which is not a prime power.

\begin{corollary}
Suppose $M$ has $r$ distinct prime factors. Assuming $\ETH$, solving $\hornsatmod_M(n)$ requires at least $\exp(\Omega_M((\log n /\log\log n)^r))$ time.
\end{corollary}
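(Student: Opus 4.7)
The plan is to combine Proposition~\ref{prop:hornsatmod_hardness} with the Barrington--Beigel--Rudich upper bound (Proposition~\ref{prop:ORdegree_upperbound_bits}) by solving for the largest value of $m$ allowed by the parameter relation of the former. Proposition~\ref{prop:ORdegree_upperbound_bits} furnishes, for every $d$, an efficiently computable polynomial of degree $f(d) = C_M d^{1/r}$ (with $C_M$ depending only on $M$) representing $\NAND_d \bmod M$ over $\bits^d$, so the hypothesis of Proposition~\ref{prop:hornsatmod_hardness} is satisfied with this $f$.

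Under $\ETH$, Proposition~\ref{prop:hornsatmod_hardness} then says that $\hornsatmod_M(n)$ cannot be solved in $2^{o(m)} - \poly(n)$ time for any $m$ satisfying $f(m)\log(m/f(m)) \gtrsim \log n$. Substituting $f(m) = C_M m^{1/r}$, this condition becomes
\[
C_M\, m^{1/r}\log\!\left(\tfrac{m^{1-1/r}}{C_M}\right) \;\gtrsim\; \log n,
\]
which, up to constants depending on $M$, is equivalent to $m^{1/r}\log m \gtrsim_M \log n$. Solving for the largest such $m$ gives $m = \Omega_M\bigl((\log n/\log\log n)^r\bigr)$: indeed, plugging $m = c_M (\log n/\log\log n)^r$ for a small enough $c_M > 0$ yields $m^{1/r} \log m \asymp_M (\log n/\log\log n) \cdot \log\log n = \log n$, as required.

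For this choice of $m$, the lower bound $2^{\Omega(m)} - \poly(n)$ becomes $\exp\bigl(\Omega_M((\log n/\log\log n)^r)\bigr) - \poly(n)$, which is still $\exp\bigl(\Omega_M((\log n/\log\log n)^r)\bigr)$ since $r \ge 2$ makes the exponential term dominate the polynomial $\poly(n)$ term (the exponential is superpolynomial in $n$). This yields the claimed $\ETH$ lower bound.

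The only routine obstacle is the calculation that $m = \Theta_M\bigl((\log n/\log\log n)^r\bigr)$ satisfies $f(m)\log(m/f(m)) = \Theta_M(\log n)$; this is standard algebra with logs, but one has to be mildly careful to verify that for $r \ge 2$ the $\poly(n)$ overhead from the reduction in Proposition~\ref{prop:hornsatmod_hardness} is negligible compared to the main exponential lower bound, so that it can be absorbed into the $\Omega_M(\cdot)$.
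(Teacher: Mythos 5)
Your proposal is correct and follows essentially the same route as the paper: instantiate Proposition~\ref{prop:hornsatmod_hardness} with the Barrington--Beigel--Rudich degree bound $f(m) = O_M(m^{1/r})$ from Proposition~\ref{prop:ORdegree_upperbound_bits} and solve $m^{1/r}\log m \gtrsim_M \log n$ for $m$, obtaining $m \gtrsim_M (\log n/\log\log n)^r$. Your extra care in checking that the $\poly(n)$ overhead is dominated (since $r \ge 2$ makes the exponent superlogarithmic in $n$) is a detail the paper leaves implicit but is handled correctly.
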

\begin{proof}
By Proposition \ref{prop:ORdegree_upperbound_bits}, we can take $f(m)= O_M(m^{1/r})$ in Proposition~\ref{prop:hornsatmod_hardness}. $m^{1/r}\log m = \Omega_M(n)$ implies that $$m\gtrsim_M \left(\frac{\log n}{\log\log n}\right)^r$$ which implies the required bound.
\end{proof}

When $M$ has $r>1$ prime factors, the lowest degree needed to represent $\OR_d \mod M$ is not well understood. The best upper known upper bound is $O_M(d^{1/r})$ as in Proposition~\ref{prop:ORdegree_upperbound_bits}~\cite{BarringtonBR94}. The best lower bound on the degree is $\Omega_M\left((\log d)^{1/(r-1)}\right)$ due to Barrington and Tardos~\cite{BarringtonT98}. If there is a polynomial whose degree matches this lower bound, then assuming $\ETH$, we can get $\bigexp{\bigexp{(\log n)^{1-1/r}}}$ hardness for $\hornsatmod_M(n)$.

\subsection{Submodular minimization with modular constraints}
\label{sec:submodular}

A function $f:2^{[n]}\to \R$ is called submodular if for every $S,T\subset [n]$, $f(S\cup T)+f(S\cap T)\le f(S)+f(T)$. We will identify $2^{[n]}$ with $\bits^n$ below, by identifying subsets with their indicator vectors.

\begin{definition}
$\submod_M(n)$ denotes the following problem\footnote{\cite{NageleSZ18} don't have coefficients in their original definition, they only look at $\Ham(x)$. But by making copies of variables, one can reduce the more general problem with coefficients to their version.}. Given an evaluation oracle to a submodular function $f:\bits^n\to \R$ and integers $m\in \Z_{>0}$ and $0\le a_0,a_1,\dots,a_n \le M-1$, find $\min\{f(x):x\in \bits^n, \sum_{i=1}^n a_ix_i =a\mod M\}.$
\end{definition}

\cite{NageleSZ18} showed that $\submod_M(n)$ can be solved in $n^{O(M)}$ time when $M$ is a prime power. They asked if their methods can be extended to prove that $\submod_M(n)$ can be solved in $n^{O_M(1)}$ for $M$ which are not prime powers. Their algorithm and its analysis is closely related to the existence of $(M,r,d)$-systems and so to $\hornsatmod_M$. We showed that, assuming $\ETH$ , $\hornsatmod_M$ cannot solved in polynomial time if $M$ is not a prime power. Can we show a similar hardness result for $\submod_M$? The following conjecture will imply such a hardness result.

Let $0\le r\le d$ be integers and let $n=\binom{d}{\le r}$. Let $\phi:\bits^d \to \bits^n$ given by $\phi(x)=(\prod_{i\in S} x_i)_{S\subset [d]:|S|\le r}$. Let $\cF_{r,d}=\phi(\bits^d)$. Note that if we think of $\cF_{r,d}$ as a collection of subsets of $[n]$, $\cF_{r,d}$ is an intersection-closed family.
\begin{conjecture}
\label{conj:submodular}For every $0\le r \le d$ and $n=\binom{d}{\le r}$, there exists a submodular function $f:\bits^n \to \Z$ such that:
\begin{enumerate}
\item $f(x)$ can be evaluated in $\poly(n)$ time for every $x\in \bits^n$.
\item For all $x\in \cF_{r,d}$ (defined as above), $f(x)\le -1$ and for all $x\notin \cF_{r,d}$, $f(x)\ge 0$.
\end{enumerate}
\end{conjecture}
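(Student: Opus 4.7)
\textbf{Proof proposal for Conjecture~\ref{conj:submodular}.} The plan is to construct the required $f$ in the form $f(y) = -1 + P(y)$, where $P : \bits^n \to \Z_{\ge 0}$ is an efficiently evaluable submodular penalty that vanishes on $\cF_{r,d}$ and is at least $1$ everywhere else. Since $y \in \cF_{r,d}$ if and only if $y_S = \prod_{i \in S} y_{\{i\}}$ for every $S$ with $|S| \le r$, membership violations split into two classes: \emph{downward} violations, where $y_S = 1$ but $y_{\{i\}} = 0$ for some $i \in S$, and \emph{upward} violations, where $y_{\{i\}} = 1$ for every $i \in S$ but $y_S = 0$. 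Downward violations are easy: the pseudo-Boolean penalty $\sum_{|S| \le r,\ i \in S} y_S(1 - y_{\{i\}})$ expands into linear terms plus quadratic terms $-y_S y_{\{i\}}$ with negative off-diagonal coefficients, hence is nonnegative, submodular, integer-valued, polynomially evaluable, and strictly positive precisely when a downward violation occurs.

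The main difficulty is penalizing \emph{upward} violations. A direct penalty $(1 - y_S)\prod_{i \in S} y_{\{i\}}$ contains positive higher-degree monomials in $y$, which are supermodular and so cannot simply be added on. My proposed workaround is a \emph{partial-minimization} construction, exploiting the fact that $\min_z g(y,z)$ is submodular in $y$ whenever $g$ is submodular in $(y,z)$ jointly. I would introduce auxiliary variables $z \in \bits^d$ and design a submodular $g : \bits^{n+d} \to \Z$ built from directed-cut and coverage gadgets that enforce the equivalences $z_i \Leftrightarrow y_{\{i\}}$ and $\bigwedge_{i \in S} z_i \Leftrightarrow y_S$, each contributing a nonnegative cost which vanishes exactly on the intended configuration. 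If the intended assignment $z_i^{\ast} = y_{\{i\}}$ is always optimal, then $f(y) = g(y, z^{\ast})$ can be evaluated in $\poly(n)$ time without actually running a minimization subroutine.

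The main obstacle is that submodularity imposes strong global constraints on how deeply negative $f$ must be across $\cF_{r,d}$. Indeed, for any $y, y' \in \cF_{r,d}$ with $y \vee y' \notin \cF_{r,d}$, the submodular inequality $f(y \vee y') + f(y \wedge y') \le f(y) + f(y')$ combined with $f(y), f(y') \le -1$ and $f(y \vee y') \ge 0$ already forces $f(y \wedge y') \le -2$; and this cascades along longer chains of pairs with non-$\cF_{r,d}$ joins, potentially forcing $f$ to take values of exponentially large absolute value at some points of $\cF_{r,d}$. Thus the crux of the proof is to verify that the partial-minimization construction above can absorb these cascading constraints without requiring super-polynomially many gadgets --- equivalently, to show that the intersection-closed but not union-closed family $\cF_{r,d}$ admits a \emph{cheap} submodular certificate of non-membership. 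I suspect this is where the conjecture will stand or fall, and it may connect back to the covering-number bounds of Conjecture~\ref{conj:covering_number}, since both concern efficient combinatorial certificates for the same intersection-closed family.
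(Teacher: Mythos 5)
You have not proved this statement, and neither does the paper: it is stated as Conjecture~\ref{conj:submodular} and left open. The paper never attempts a construction; it only uses the conjectured existence of such an $f$ as a hypothesis in Proposition~\ref{prop:submodular_hardness} to derive conditional $\ETH$-hardness for $\submod_M$. So there is no proof in the paper to compare your proposal against, and your write-up should be evaluated purely on whether it closes the conjecture --- which, by your own admission, it does not.

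The concrete gap is the upward-violation penalty, which is where all of the mathematical content lives. Your downward-violation term $\sum_{|S|\le r,\, i\in S} y_S(1-y_{\{i\}})$ is indeed submodular (negative off-diagonal quadratic coefficients) and handles that direction correctly. But for the upward direction you only gesture at ``directed-cut and coverage gadgets'' enforcing $\bigwedge_{i\in S} z_i \Rightarrow y_S$ without writing any of them down, and there is a structural reason to doubt that this route can work as described: partial minimization over auxiliary variables preserves submodularity but does not enlarge the class of functions you can realize --- $\min_z g(y,z)$ is submodular in $y$, and conversely every submodular $f$ is trivially such a minimization --- so the approach merely restates the question of whether a submodular $f$ separating $\cF_{r,d}$ from its complement exists. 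The constraint you want to penalize is the indicator of ``all $z_i$, $i\in S$, equal $1$ but $y_S=0$,'' whose leading term $\prod_{i\in S} z_i$ is supermodular, and no specific gadget in your sketch neutralizes this. Your closing observation about cascading lower bounds ($f(y\wedge y')\le -2$, and so on down intersection chains) is a genuine structural constraint on any candidate $f$, though not by itself fatal, since exponentially large integer values are still representable and evaluable in polynomial time; what it does show is that your construction, as written, has not been verified against the very obstruction you identify. In short: the proposal is a plausible research plan, not a proof, and the statement remains a conjecture.
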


\begin{proposition}\label{prop:submodular_hardness}
Suppose $\phi(\cdot)$ is some function such that for every $d$, there exists a degree $\phi(d)$ polynomial which represents $\NAND_d \mod M$ over $\bits^d$ which can be efficiently computed. Then assuming $\ETH$ and Conjecture~\ref{conj:submodular}, solving $\submod_M(n)$ requires at least $2^{\Omega(m)}/\poly(n)$ time for some $m$ such that $\phi(m)\log (m/\phi(m)) =\Omega(\log n)$.
\end{proposition}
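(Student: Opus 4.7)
The plan is to mirror the proof of Proposition~\ref{prop:hornsatmod_hardness} almost verbatim, replacing the $\hornsat$ gadget that enforced $z_S=\prod_{i\in S}x_i$ via implication clauses by the ``submodular membership gadget'' provided by Conjecture~\ref{conj:submodular}. Start from a $\threesat$ instance $\psi(x_1,\dots,x_t)=C_1(x)\wedge\cdots\wedge C_m(x)$ with $t\le 3m$, and let $m$ be the largest value for which $n\ge N:=\binom{t}{\le 3\phi(m)}$; this choice yields the claimed $\phi(m)\log(m/\phi(m))=\Omega(\log n)$. Using the hypothesized degree-$\phi(m)$ polynomial $p(z_1,\dots,z_m)$ representing $\NAND_m\mod M$, form
\[
\Gamma(x)\;=\;p(C_1(x),\dots,C_m(x))\;=\;\sum_{S\subset[t],\;|S|\le 3\phi(m)} a_S\prod_{i\in S}x_i \pmod{M},
\]
so that $\psi$ is satisfiable iff some $x\in\bits^t$ makes $\Gamma(x)\equiv 0\pmod{M}$.

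To turn this into an instance of $\submod_M(N)$, introduce $N$ Boolean variables $z_S$ indexed by subsets $S\subset[t]$ with $|S|\le 3\phi(m)$, intended to represent the monomial $\prod_{i\in S}x_i$. Apply Conjecture~\ref{conj:submodular} with parameters $r=3\phi(m)$, $d=t$ to obtain a polynomial-time evaluable submodular function $f:\bits^N\to\Z$ for which $f(z)\le -1$ iff $z\in\cF_{r,d}$, i.e.\ iff $z$ is a consistent monomial vector of some $x\in\bits^t$. Feed into the $\submod_M(N)$ black box the problem of minimizing $f(z)$ subject to the single modular constraint $\sum_S a_S z_S\equiv 0 \pmod{M}$.

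Correctness is then immediate. If $\psi$ is satisfied by $x^*$, then $z^*=(\prod_{i\in S}x_i^*)_S$ lies in $\cF_{r,d}$, satisfies $\sum_S a_S z_S^*=\Gamma(x^*)\equiv 0\pmod{M}$, and has $f(z^*)\le -1$, so the optimum is negative. Conversely, any feasible $z$ with $f(z)<0$ must lie in $\cF_{r,d}$ and hence equals the monomial vector of some $x\in\bits^t$; the modular constraint then forces $\Gamma(x)\equiv 0\pmod{M}$, making $x$ a satisfying assignment. Thus $\psi$ is satisfiable iff the $\submod_M(N)$ optimum is $\le -1$.

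The reduction itself runs in $\poly(N)\le\poly(n)$ time, and each oracle query the $\submod_M$ algorithm makes to $f$ is answered in $\poly(n)$ time. So any algorithm for $\submod_M(n)$ running in time $T(n)$ yields a $\threesat$ algorithm in time $T(n)\cdot\poly(n)$, which under $\ETH$ must be at least $2^{\Omega(m)}$; absorbing the polynomial overhead gives $T(n)\ge 2^{\Omega(m)}/\poly(n)$, as required. The entire structure is a mechanical port of the $\hornsatmod_M$ hardness argument; the one genuinely new ingredient, and the reason the result is conditional, is Conjecture~\ref{conj:submodular}, which is also where all the technical risk sits since it asks for a polynomial-time submodular indicator of the quite rigid family $\cF_{r,d}$.
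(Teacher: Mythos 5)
Your proof is correct and essentially the same as the paper's: the paper phrases it as ``reuse the hard $\hornsatmod_M$ instances from Proposition~\ref{prop:hornsatmod_hardness}, observing their Horn-constraint solution set is exactly $\cF_{3\phi(m),t}$, then replace the Horn membership check by the submodular indicator $f$ from Conjecture~\ref{conj:submodular},'' while you unfold that same construction directly from $\threesat$; the gadget, the choice of $m$, the correctness argument (minimum is negative iff $\psi$ is satisfiable), and the oracle-cost accounting all match.
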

\begin{proof}
Note that the hard instances of $\hornsatmod_M(n)$ constructed in~\ref{prop:hornsatmod_hardness} all have the same $\hornsat$ constraints, and only differ in the modular constraint. These instances need $2^{\Omega(m)}-\poly(n)$ time to solve assuming $\ETH$ for some $m$ such that $\phi(m)\log (m/\phi(m)) =\Omega(\log n)$. And the set of solutions to the $\hornsat$ constraints in these hard instances is $\cF_{3\phi(m),t}$ for some $t=O(m)$.
Given an instance $\Psi$ of $\hornsatmod_M(n)$ from these set of hard instances we can reduce it to a $\submod_M(n)$ instance $\Psi'$ where the modular constraint remains the same and the submodular function takes negative values on $\cF_{3\phi(m),t}$ and non-negative values else where, as given by Conjecture~\ref{conj:submodular}. This is a valid reduction because the value of $\Psi'$ is negative iff $\Psi$ is satisfiable. If $\submod_M(n)$ can be solved in $T(n)$ time (assuming unit time evaluation oracle), then $\Psi$ can be solved in $\poly(n)T(n)$ time (because each evaluation oracle access now costs $\poly(n)$ time). This implies that $T(n)\gtrsim 2^{\Omega(m)}/\poly(n)$.
\end{proof}

\begin{corollary}
Suppose $M$ has $r$ distinct prime factors. Assuming $\ETH$ and Conjecture~\ref{conj:submodular}, solving $\submod_M(n)$ requires at least $\exp(\Omega_M((\log n /\log\log n)^r))$ time.
\end{corollary}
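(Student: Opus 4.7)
The plan is to obtain this corollary as a direct combination of Proposition~\ref{prop:submodular_hardness} with the low-degree representation of $\NAND_d \bmod M$ from Barrington--Beigel--Rudich (Proposition~\ref{prop:ORdegree_upperbound_bits}). Since $M$ has $r$ distinct prime factors, the latter supplies, for every $d$, an efficiently computable polynomial of degree $\phi(d) = O_M(d^{1/r})$ that represents $\NAND_d \bmod M$ over $\bits^d$. This is exactly the hypothesis needed to invoke Proposition~\ref{prop:submodular_hardness} (whose validity we may take for granted by the problem statement), which under $\ETH$ and Conjecture~\ref{conj:submodular} rules out algorithms for $\submod_M(n)$ running in time $2^{o(m)}/\poly(n)$ whenever $m$ satisfies $\phi(m)\log(m/\phi(m)) = \Omega(\log n)$.

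It then remains to extract the largest $m$ for which the constraint $\phi(m)\log(m/\phi(m)) = \Omega(\log n)$ is satisfied. Substituting $\phi(m) = O_M(m^{1/r})$, the constraint becomes
\[
m^{1/r} \log m \;\gtrsim_M\; \log n .
\]
I would solve this asymptotically by the standard bootstrapping: a first crude estimate $m \gtrsim_M (\log n)^r$ gives $\log m = \Theta(r \log\log n) = \Theta_M(\log\log n)$; plugging this back in yields
\[
m \;\gtrsim_M\; \left(\frac{\log n}{\log\log n}\right)^{r},
\]
and one can choose $m$ achieving this lower bound. Feeding this $m$ into the $2^{\Omega(m)}$ runtime lower bound of Proposition~\ref{prop:submodular_hardness} produces the claimed $\exp\bigl(\Omega_M((\log n/\log\log n)^r)\bigr)$ bound; the $\poly(n)$ denominator is absorbed since $m = \omega(\log n)$ so $2^{\Omega(m)}/\poly(n) = \exp(\Omega(m))$.

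There is essentially no technical obstacle here: the heavy lifting has already been done in Proposition~\ref{prop:submodular_hardness} (hardness reduction built on Conjecture~\ref{conj:submodular}) and Proposition~\ref{prop:ORdegree_upperbound_bits} (the BBR construction). The only thing that deserves care is the arithmetic of inverting $m^{1/r}\log m \gtrsim \log n$, which is the same computation that appears in the preceding corollary for $\hornsatmod_M$, so I would simply reference that derivation rather than redo it, and cite Proposition~\ref{prop:submodular_hardness} and Proposition~\ref{prop:ORdegree_upperbound_bits} explicitly.
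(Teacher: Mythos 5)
Your proposal matches the paper's proof exactly: both invoke Proposition~\ref{prop:ORdegree_upperbound_bits} to set $f(m)=O_M(m^{1/r})$ in Proposition~\ref{prop:submodular_hardness} and then solve $m^{1/r}\log m \gtrsim_M \log n$ to obtain $m \gtrsim_M (\log n/\log\log n)^r$. No meaningful differences; the bootstrapping you describe is the same arithmetic implicit in the paper's one-line derivation.
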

\begin{proof}
By Proposition \ref{prop:ORdegree_upperbound_bits}, we can take $f(m)= O_M(m^{1/r})$ in Proposition~\ref{prop:submodular_hardness}. $m^{1/r}\log m = \Omega_M(n)$ implies that $$m\gtrsim_M \left(\frac{\log n}{\log\log n}\right)^r$$ which implies the required bound.
\end{proof}

Since the running time of the algorithm for $\submod_M(n)$ from~\cite{NageleSZ18} depends on the existence of $(M,R,d)$-systems, Conjecture~\ref{conj:covering_number} will imply non-trivial algorithms for $\submod_M(n)$ for any fixed $M$.
\begin{proposition}
\label{prop:submod_alg}
Conjecture~\ref{conj:covering_number} implies that for any fixed $M$, $\submod_M(n)$ can be solved in $\exp(n^{o_M(1)})$ time.
\end{proposition}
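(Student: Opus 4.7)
The plan is to instantiate the original N\"agele--Sudakov--Zenklusen (NSZ) $R$-round algorithm for $\submod_M(n)$ with $R = n^{o_M(1)}$. Recall that their algorithm iterates over all subsets $S\subseteq [n]$ with $|S|\le R$, for each $S$ runs a submodular minimization subroutine to produce a candidate solution extending $S$, and returns the best candidate satisfying the modular constraint. The overall running time is $\binom{n}{\le R}\cdot \poly(n) = \exp(O(R\log n))$, so as long as we can afford $R = n^{o_M(1)}$, we obtain the desired $\exp(n^{o_M(1)})$ bound.

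To justify this choice of $R$, I would invoke NSZ's completeness analysis, which, exactly in parallel with our Proposition~\ref{prop:horn_sat_modm_correctness} in the $\hornsat$ setting, shows that if the $R$-round algorithm fails on an $n$-variable instance, then an $(M,R,d)$-system must exist for some $d\le n$ (arising from a minimal feasible $A\subseteq[n]$, with universe $|A|\le n$). By Proposition~\ref{prop:Mrd_system_to_polynomial}, such a system yields a multilinear integer polynomial $p$ representing $\NAND \mod M$ over $\bits$ with $\coeffnorm{p}\le d+M-1\le n+M-1$ and $\cov(p)>R$. Since the coefficients are integers, the number of monomials of $p$ is at most $\coeffnorm{p}=O_M(n)$.

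Now I would apply Conjecture~\ref{conj:covering_number} to this $p$: for fixed $M$, any polynomial representing $\NAND\mod M$ over $\bits$ with at most $O_M(n)$ monomials has covering number $(O_M(n))^{o_M(1)}=n^{o_M(1)}$. Choosing $R$ to exceed this threshold (which is itself $n^{o_M(1)}$) rules out every obstructing $(M,R,d)$-system with $d\le n$, so the NSZ algorithm is correct for this $R$. Its runtime becomes $\exp(O(R\log n))=\exp(n^{o_M(1)}\log n)=\exp(n^{o_M(1)})$, as claimed.

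The main thing to check carefully, and the only nontrivial input beyond the conjecture, is that NSZ's analysis genuinely produces an obstruction whose universe size satisfies $d\le n$; this is the precise analog of our Proposition~\ref{prop:horn_sat_modm_correctness} and is implicit in the tight parallel emphasized in Section~\ref{subsec:intro-IP}, so no genuine obstacle is expected. Apart from this verification, the proof is a black-box combination of the NSZ $R$-round framework, the polynomial reformulation of $(M,R,d)$-systems from Proposition~\ref{prop:Mrd_system_to_polynomial}, and the conjectured covering bound.
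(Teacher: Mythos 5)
Your proof is correct and takes the same route the paper intends; the paper in fact does not spell out a proof for this proposition — it only states the observation that the NSZ $R$-round algorithm's correctness is governed by the existence of $(M,R,d)$-systems and then asserts the result. You correctly fill in the details: the obstruction produced by NSZ's completeness analysis is an $(M,R,d)$-system over a universe of size $d\le n$; Proposition~\ref{prop:Mrd_system_to_polynomial} converts it into a $\NAND\bmod M$ representation with at most $n+1$ monomials and $\cov > R$; Conjecture~\ref{conj:covering_number} then bounds the covering number by $n^{o_M(1)}$, so choosing $R$ at that threshold rules out all obstructions, and the $\binom{n}{\le R}\poly(n)=\exp(O(R\log n))=\exp(n^{o_M(1)})$ runtime follows.
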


\section{$\linmod_M$}
\label{sec:lineq}
A $\lintwo$ instance is a system of linear equations modulo $2$ in $n$ variables $x_1,x_2,\dots,x_n \in \bits$ i.e. each equation is of the form $\sum_i a_ix_i =a_0$ for some $a_0,a_1,\dots,a_n\in \bits$. Satisfiability of a $\lintwo$ instance can be solved in polynomial time by Gaussian elimination.

\begin{definition}
$\linmod_M(n)$ is the following algorithmic problem: Given an instance of $\lintwo$ on $n$ variables $x_1,x_2,\dots,x_n$ along with a modular constraint $\sum_i a_i x_i = a_0 \mod M$, decide if there is a solution in $x\in \bits^n.$
\end{definition}
\noindent In this section, we will present an algorithm for $\linmod_M$ and analyze its running time.
\subsection{Algorithm for $\linmod_M$}
Wlog, we can assume that coefficients $a_1,\dots,a_n=1$ in the modular constraint. This is because we can make $a_i$ copies of $x_i$ and add equality constraints among the copies. And equality is a $\lintwo$ constraint as $a=b$ iff $a\oplus b=0$. Since we can assume that original coefficients $a_1,\dots,a_n\in {0,1,\dots,M-1}$, this increases the number of variables by a factor of $M$. Consider the following algorithm for this problem which depends on the parameter $R$, the number of rounds. We can calculate a basis for the set of solutions of a $\lintwo$ instance in polynomial time, so we will start with such a basis.

\begin{algorithm}[H]

\caption{Algorithm for $\linmod_M(n)$ with $R$ rounds}
\label{alg:LIN2MODm}
  \begin{itemize}
  \item Input: An affine subspace $V$ of $\F_2^n$ given by $V= \linearspan{v_1,v_2,\dots,v_d}+b$ for some linearly independent vectors $v_1,\dots,v_d\in \F_2^n$ and some vector $b\in V$, and positive integers $a,M$.
  \item Output: either a solution $x\in V$ with $\Ham(x) = a \mod M$ or $\operatorname{NO-SOLUTION}$ if no such $x$ exists.
  \item Method:
    \begin{enumerate}
    \item If there exists a subset $S\subset [d]$ of size at most $R$ such that $\Ham(b+\sum_{i\in S} v_i) = a \mod M$, output this solution.
    \item Else, return $\operatorname{NO-SOLUTION}$.
    \end{enumerate}
  \end{itemize}
\end{algorithm}

We will now prove that if we choose the number of rounds $R$ appropriately depending on $n,M$, then Algorithm~\ref{alg:LIN2MODm} solves $\linmod_M(n)$ correctly. Since the running time of the algorithm is $O(n^R)$, the smaller the $R$ the better. Surprisingly, the value of $R$ required depends crucially on the prime factor decomposition of $M$! Let us start with a simple proposition which shows that if Algorithm~\ref{alg:LIN2MODm} fails, then there should be a special kind of obstruction.
\begin{proposition}\label{prop:lin2modm_correctness}
If Algorithm~\ref{alg:LIN2MODm} with $R$ rounds fails on an instance of $\linmod_M(n)$, then there exists an affine subspace $U$ of $\F_2^n$ with dimension greater than $R$ with exactly one point $x^*\in U$ such that $\Ham(x^*)= a \mod M$.
\end{proposition}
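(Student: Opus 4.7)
The plan is to construct the affine subspace $U$ explicitly from a minimum-weight witness for the failure of the algorithm. Since the algorithm outputs \texttt{NO-SOLUTION} but a valid solution exists, there is some nonempty subset $T \subset [d]$ with $\Ham(b + \sum_{i \in T} v_i) \equiv a \bmod M$, and every such $T$ satisfies $|T| > R$. Choose $T$ so that $|T|$ is minimum among all subsets giving a solution, and set $x^* = b + \sum_{i \in T} v_i$.

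Next, I would take
\[
U \;=\; x^* + \linearspan{v_i : i \in T}.
\]
Because $v_1, \dots, v_d$ are linearly independent, the vectors $\{v_i : i \in T\}$ are also linearly independent, so $U$ is an affine subspace of $\F_2^n$ of dimension $|T| > R$. This handles the dimension requirement.

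The heart of the argument is the uniqueness claim: I must show $x^*$ is the only point of $U$ whose Hamming weight is $a \bmod M$. An arbitrary point of $U$ has the form $x^* + \sum_{i \in S} v_i$ for some $S \subset T$, and since we work over $\F_2$, the identity $v_i + v_i = 0$ gives
\[
x^* + \sum_{i \in S} v_i \;=\; b + \sum_{i \in T \setminus S} v_i.
\]
Because $v_1, \dots, v_d$ are linearly independent, this is precisely the unique representation in $V$ in terms of the subset $T \setminus S \subset [d]$. If such a point satisfied the modular Hamming-weight condition, minimality of $|T|$ would force $|T \setminus S| \ge |T|$, i.e., $S = \emptyset$, so the only solution in $U$ is $x^*$ itself.

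I do not anticipate a serious obstacle here; the construction is essentially forced once one observes that passing from the original basis at $b$ to the translated basis at $x^*$ converts "adding $v_i$ for $i \in S$" into "removing $v_i$ for $i \in S \subset T$," so the mod-$M$ weight witnesses inside $U$ are in bijection with subsets of $T$ of size strictly less than $|T|$ giving a solution in $V$. The minimality of $|T|$ then kills all of these except the trivial one.
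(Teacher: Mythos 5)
Your proof is correct and essentially identical to the paper's: both take a minimum-size witness set $T$ and let $U$ be the span of $\{v_i : i \in T\}$ translated into $V$ (your $x^* + \linearspan{v_i : i \in T}$ equals the paper's $b + \linearspan{v_i : i \in T}$ since $\sum_{i\in T} v_i$ lies in that span), and both conclude uniqueness from the minimality of $|T|$.
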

\begin{proof}
Suppose Algorithm~\ref{alg:LIN2MODm} failed to find a solution after $R$ rounds. Therefore there exists a solution $x^*=b+\sum_{i\in S} v_i$ given a subset $S\subset [d]$ of size $|S|>R$ such that $\Ham(x^*)= a \mod M$. Wlog, we can assume that $S$ has the minimum size among such sets. Now let $U$ be the affine subspace given by $U=b+\linearspan{v_i:i\in S}$. The dimension of $U$ is $|S|$ which is greater than $R$. By minimality of $S$, every point in $y\in U$ other than $x^*$ has $\Ham(y) \neq a \mod M$.
\end{proof}

\begin{definition}
Let $n,M$ be some positive integers. $\ORdim(n,M)$ denotes the largest dimension of an affine subspace $C$ in $\F_2^n$ such that for some $0\le a \le M$, there exists exactly one point $x_0\in C$ such that $\Ham(x_0)=a \mod M$. 
\end{definition}
Therefore Proposition~\ref{prop:lin2modm_correctness} implies the following corollary.

\begin{corollary}\label{cor:lin2mod_rounds}
Algorithm~\ref{alg:LIN2MODm} with $R$ rounds solves $\linmod_M(n)$ correctly if $R\ge \ORdim(n,M)$.
\end{corollary}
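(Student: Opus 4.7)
The plan is to prove the contrapositive, invoking Proposition~\ref{prop:lin2modm_correctness} as essentially the entire content. First I would observe that there are only two ways Algorithm~\ref{alg:LIN2MODm} can be wrong on a given instance: it could falsely return $\operatorname{NO-SOLUTION}$, or it could return a vector that does not lie in $V$ or does not satisfy the modular constraint. The second possibility is ruled out by inspection of Step~1, since any $b + \sum_{i \in S} v_i$ with $\Ham(b+\sum_{i\in S} v_i)=a\mod M$ is by construction a valid witness. So the only failure mode to worry about is returning $\operatorname{NO-SOLUTION}$ when the true answer is a satisfying assignment.

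Now suppose $R \ge \ORdim(n,M)$ and, for contradiction, that Algorithm~\ref{alg:LIN2MODm} fails on some instance with target residue $a$. By Proposition~\ref{prop:lin2modm_correctness}, there exists an affine subspace $U \subseteq \F_2^n$ with $\dim U > R$ and exactly one point $x^* \in U$ satisfying $\Ham(x^*) = a \mod M$. Since the definition of $\ORdim(n,M)$ takes the maximum over all choices of the residue class $a$, we have $\dim U \le \ORdim(n,M) \le R$, contradicting $\dim U > R$. Hence the algorithm cannot fail, completing the proof.

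There is no real obstacle here: the corollary is just the rephrasing of Proposition~\ref{prop:lin2modm_correctness} in terms of the complexity measure $\ORdim(n,M)$, packaged as a correctness guarantee for a specific round budget. The substantive work has already been done in establishing the proposition; the remaining subsections of the paper will then be devoted to bounding $\ORdim(n,M)$ from above, which in combination with this corollary is what yields the actual runtime guarantees of the form $n^{O(M)}$ (for prime power $M$) and $\exp(O_M(n/\log n))$ (for general $M$, under PFR).
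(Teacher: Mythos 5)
Your proposal is correct and matches the paper, which states the corollary as an immediate consequence of Proposition~\ref{prop:lin2modm_correctness}: a failure would produce an affine subspace of dimension greater than $R\ge\ORdim(n,M)$ with exactly one point of the required weight, contradicting the definition of $\ORdim(n,M)$. Your additional observation that the only possible failure mode is a false $\operatorname{NO-SOLUTION}$ is a correct (if implicit in the paper) sanity check.
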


The following proposition gives upper bounds on $\ORdim(n,M)$ which in turn imply upper bounds on the number of rounds sufficient for our algorithm. The bounds depend crucially on the prime factor decomposition of $M$. Our bounds when $M$ is a prime power are nearly tight. The bound for general $M$ is conditional on a conjecture in additive combinatorics called the Polynomial Freiman-Ruzsa (PFR) conjecture, which we will define in Section~\ref{sec:sparsityOR}.
\begin{proposition}
\label{prop:weight_modM_subspace}
Let $n,M$ be positive integers. Then:
\begin{enumerate}
\item $\ORdim(n,M)\le M-1$ if $M$ is a power of 2.
\item $\ORdim(n,M)\le (M-1)\log_2(n+1)$ if $M$ is an odd prime power.
\item $\ORdim(n,M)\le 1+(M'-1)\log_2(n+1)$ if $M=2M'$ for some odd prime power $M'$.
\item $\ORdim(n,M)\lesssim_\ell (M'\log n)^{2^{\ell-1}-1}$ if $M=2^\ell M'$ for some odd prime power $M'$ and $\ell \ge 2$ where $c>0$ is some absolute constant.
\item $\ORdim(n,M)\le c_M n/\log n$ for some sufficiently large constant $c_M>0$ depending only on $M$ assuming the PFR conjecture.
\end{enumerate}
\end{proposition}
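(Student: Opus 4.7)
The common engine for parts~(1)--(4) is to parametrize $V = \{Ay+b : y \in \F_2^d\}$ and set $w_j := (-1)^{y_j} \in \sbits$, so that $(-1)^{x_i(y)} = (-1)^{b_i}\prod_j w_j^{A_{ij}}$ is a signed monomial $\pm m_i(w)$. When $2$ is invertible in the coefficient field, the substitution $x_i = (1-m_i(w))/2$ into the Lucas-type indicator $\phi_{\ell,a}$ from Lemma~\ref{lem:primepower_divisibility_Hamming_2} (degree $\le M-1$ in $x$) yields a multilinear polynomial $Q(w)$ vanishing at exactly one point $w_0 \in \sbits^d$ and equal to $1$ elsewhere. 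The unique multilinear representation of this function, $1 - 2^{-d}\prod_j (1+w_{0,j}w_j)$, has at least $2^d - 1$ nonzero monomials, while the substitution produces at most $\sum_{k \le D}\binom{n}{k}$ distinct $w$-monomials (indexed by $A^\top \mathbf{1}_T \bmod 2$ for $T \subseteq [n]$ with $|T| \le D$, where $D = \deg \phi_{\ell,a}$). Comparing these two bounds gives the main inequality.

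Parts~(1)--(3) follow quickly from this engine. For part~(1), $M = 2^\ell$ forces $\F_2$-coefficients so the $w$-substitution is unavailable; instead we apply the $\F_2$-affine substitution $x(y) = Ay + b$ directly to get $g(y) \in \F_2[y]$ of degree $\le M-1$ with a single zero, and Lemma~\ref{lem:SchwartzZippel} gives $2^{d-(M-1)} \le 1$, i.e.\ $d \le M-1$. For part~(2), $M = p^\ell$ with $p$ odd, we run the argument over $\F_p$ with $D = M-1$, yielding $2^d-1 \le \sum_{k \le M-1}\binom{n}{k} \le (n+1)^{M-1}$ and hence $d \le (M-1)\log_2(n+1)$. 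For part~(3), $M = 2M'$ with $M'$ odd prime power, restrict $V$ to $V_0 := V \cap \{\Ham \equiv a \bmod 2\}$, which has $\dim \ge d-1$ and contains $x_0$; by CRT (since $\gcd(2,M')=1$) the mod-$M$ constraint on $V_0$ reduces to the mod-$M'$ constraint, and applying part~(2) to $V_0$ gives $d \le 1 + (M'-1)\log_2(n+1)$.

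Part~(4), $M = 2^\ell M'$ with $\ell \ge 2$, is the hard case; the plan is to stay over $\F_q$ (where $M' = q^s$) and build a single $\F_q$-polynomial $\Phi(x)$ whose zero locus in $\bits^n$ equals $\{\Ham \equiv a \bmod M\}$. By CRT, $\Phi$ can be taken as an $\F_q$-AND of the mod-$M'$ indicator (degree $\le (q-1)(M'-1)$ via Fermat applied to $\phi_{s,a \bmod M'}$) and an $\F_q$-indicator for $\Ham \equiv a \bmod 2^\ell$; the latter is the conjunction of the $\ell$ mod-$2$ equations $s_{2^t}(x) \equiv a_t$ from Lemma~\ref{lem:primepower_divisibility_Hamming}. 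Lifting each mod-$2$ equation to an $\F_q$-valued indicator costs a $(q-1)$-th power blowup, and combining the $\ell$ indicators via the recursion $\mathbf{1}[P=0]\cdot\mathbf{1}[Q=0] = 1-(1-\mathbf{1}[P=0])(1-\mathbf{1}[Q=0])$ compounds the degrees; careful tracking of the doubling $\deg s_{2^t} = 2^t$ is meant to produce the exponent $2^0+2^1+\cdots+2^{\ell-2}=2^{\ell-1}-1$. The substitution-and-counting bound then gives $2^d \lesssim_\ell \binom{n}{\le D}$ with $D \lesssim_\ell (M')^{2^{\ell-1}-1}$, yielding $d \lesssim_\ell (M'\log n)^{2^{\ell-1}-1}$.

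For part~(5), we invoke the equivalence (noted just before the proposition) between $(n+1)$-sparse $\sbits^d$-representations of $\OR_d \bmod M$ and matching vector families of size $2^d$ in $(\Z/M\Z)^n$; combined with the PFR-conditional MVF upper bound of Bhowmick-Dvir-Lovett~\cite{BhowmickDL14}, this gives sparsity $\gtrsim_M d\log d$ and hence $d \lesssim_M n/\log n$. The chief obstacle I anticipate is part~(4): making the degree accounting in the $\F_q$-AND recursion precise enough to recover the exponent $2^{\ell-1}-1$ rather than a naive $O(2^\ell)$ or $O((M')^\ell)$ bound, which will likely require exploiting symmetry of the $s_{2^t}$'s and structure of the $\F_q$-lift beyond the generic AND construction.
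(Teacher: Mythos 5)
Parts (1), (2), (3), and (5) of your proposal are essentially correct and take the same routes as the paper: part (1) uses $\phi_{\ell,a}$ from Lemma~\ref{lem:primepower_divisibility_Hamming_2} together with DeMillo--Lipton--Schwartz--Zippel; part (2) is the paper's sparsity argument (Proposition~\ref{prop:OR_mod_oddprimepower_sparity}) essentially inlined, though note $1 - 2^{-d}$ may vanish mod $p$, so you only get $2^d - 1 \le (n+1)^{M-1}$, which is a hair weaker than the stated bound; part (3) is exactly the paper's Proposition~\ref{prop:reduction_M=2M'}; and part (5) is exactly the paper's MVF/PFR argument.

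Part (4), however, is not a matter of "careful degree accounting" --- the plan is fundamentally broken. You propose to build a single $\F_q$-polynomial $\Phi$ (with $q$ the odd prime dividing $M'$) whose zero locus on $\bits^n$ is $\{x : \Ham(x) \equiv a \bmod M\}$, and in particular an $\F_q$-polynomial indicator of the condition $s_{2^t}(x) \equiv a_t \bmod 2$. But $s_{2^t}(x) = \binom{\Ham(x)}{2^t}$ as an $\F_q$-polynomial computes this binomial coefficient \emph{mod $q$}, not mod $2$, and these are entirely different functions. More to the point, the indicator of $\Ham(x) \equiv a \bmod 2$ (the parity function) provably has multilinear degree $n$ over $\F_q$ for odd $q$ --- this is the same phenomenon underlying $\mathsf{AC}^0[q] \not\supseteq \mathsf{MOD}_2$ lower bounds --- so no symmetry trick or Fermat-power blowup will produce a degree bounded in terms of $M$ alone. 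No single-field construction of $\Phi$ can work.

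The paper's Proposition~\ref{prop:reduction_M=2ellM'} avoids the issue entirely by never lifting the mod-$2^\ell$ condition to $\F_q$. Instead, it treats $s_1, s_2, \dots, s_{2^{\ell-1}}$ as $\F_2$-polynomials of degree $2^t$ restricted to the affine subspace $C$, and invokes Lemma~\ref{lem:lowdegree_subspace_constant} (Cohen--Tal) to find an affine subspace $A \subseteq C$ through $x_0$ of dimension $\Omega\bigl((d/\ell)^{1/(2^{\ell-1}-1)}\bigr)$ on which all the $s_{2^t}$ are constant. Since $x_0 \in A$ and $s_{2^t}(x_0) = a_t$, every $x \in A$ satisfies $\Ham(x) \equiv a \bmod 2^\ell$, so $A$ has a unique point with $\Ham \equiv a \bmod M'$ and hence $\dim A \le \ORdim(n, M')$. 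Combined with part (2) this yields the claimed exponent $2^{\ell-1}-1$. You should replace the $\F_q$-AND plan with this restriction-to-a-subspace step.
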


We will prove Proposition~\ref{prop:weight_modM_subspace} in Section~\ref{sec:sparsityOR} by a connection to sparsity of polynomials which represent $\OR\mod M$ over $\sbits$ basis.

So when $M$ is a power of 2, we have a polynomial time algorithm for $\linmod_M(n)$. And when $M$ is a product of a power of 2 and an odd prime power, we have a quasipolynomial time algorithm. For general $M$, we have a slightly non-trivial running time of $\exp(O(n\log\log n/\log n))$, whereas the trivial algorithm which checks every solution takes $\exp(\Omega(n))$ time. By using randomization, we can considerably speed up the above algorithms. For this we make use the following proposition, which uses an amplification trick. It allows us to conclude that if there is one solution, then there should be many solutions.

\begin{proposition}\label{prop:amplification}
Let $V$ be an affine subspace of $\F_2^n$ and let $N(V,a,M)=\left|\{x\in V: \Ham(x)= a \mod M\}\right|.$ Then, $$N(V,a,M)\ne 0 \Rightarrow N(V,a,M)\ge \frac{|V|}{2^{\ORdim(n,M)+1}}.$$
\end{proposition}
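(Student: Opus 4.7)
The plan is a straightforward halving argument: if $V$ has $k \ge 1$ solutions, I will iteratively cut $V$ down to an affine subspace $V^\ast$ with exactly one solution, losing at most one dimension per halving, and then invoke the definition of $\ORdim(n,M)$ to control the final dimension.

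Write $d = \dim V$ and $k = N(V,a,M)$, and assume $k \ge 1$; we want $k \ge 2^{d - \ORdim(n,M) - 1}$. The key single step is: \emph{if the current affine subspace $V'\subseteq \F_2^n$ has $N(V',a,M) \ge 2$, then there is an affine hyperplane $V''\subset V'$ with $1 \le N(V'',a,M) \le \lceil N(V',a,M)/2\rceil$}. To see this, fix any $x_0\in V'$ and write $V' = x_0 + V'_0$ with $V'_0$ linear. Pick two distinct solutions $y_1,y_2 \in V'$; since $y_1 - y_2 \neq 0$ in $V'_0$, there is a linear functional $\ell : V'_0 \to \F_2$ with $\ell(y_1 - y_2) = 1$. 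The two cosets $V'_b = \{x \in V' : \ell(x - x_0) = b\}$ for $b\in\{0,1\}$ are affine hyperplanes of $V'$, they partition $V'$, and they contain $y_1$ and $y_2$ respectively (in some order). Hence each side contains at least one solution, and one side contains at most $\lceil N(V',a,M)/2\rceil$; take that side as $V''$.

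Starting from $V_0 := V$ and iterating this step as long as the solution count exceeds $1$ produces a nested chain $V = V_0 \supsetneq V_1 \supsetneq \cdots \supsetneq V_t$ of affine subspaces with $\dim V_i = d - i$ and $N(V_i,a,M) \le \lceil k/2^i\rceil$, terminating at the first $t$ with $N(V_t,a,M) = 1$. Since the bound $\lceil k/2^i\rceil$ hits $1$ once $2^i \ge k$, we have $t \le \lceil \log_2 k\rceil$. The final subspace $V_t$ is an affine subspace of $\F_2^n$ of dimension $d - t$ containing exactly one point of Hamming weight $\equiv a \pmod M$, so by definition of $\ORdim(n,M)$,
\[
d - t \;\le\; \ORdim(n,M).
\]
Combining with $t \le \lceil \log_2 k\rceil \le \log_2 k + 1$ gives $\log_2 k \ge d - \ORdim(n,M) - 1$, i.e., $k \ge |V|/2^{\ORdim(n,M)+1}$, as desired.

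There is no real obstacle here beyond being careful about the base case $k=1$ (in which the chain is trivial and the conclusion $k \ge 2^{d - \ORdim - 1}$ follows from $d \le \ORdim(n,M)$, which is immediate from the definition applied to $V$ itself) and about the ceiling in $\lceil \log_2 k\rceil$. The subtlety worth flagging is that the halving step requires a hyperplane that truly splits the solution set; picking an arbitrary hyperplane could leave all $k$ solutions on one side and make no progress, which is why the argument produces $\ell$ from a specific pair $y_1,y_2$ of solutions rather than choosing a hyperplane at random.
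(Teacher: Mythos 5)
Your proof is correct, but it takes a genuinely different route from the paper's. The paper fixes one solution $x^\ast$, lets $D=\ORdim(n,M)+1$, and averages over a uniformly random $D$-dimensional affine subspace $A\subseteq V$ through $x^\ast$: by the definition of $\ORdim$, every such $A$ must contain a \emph{second} solution, so $\E_A[|T\cap A|]\ge 1$, and computing this expectation by linearity gives $|T|\ge \frac{|V|-1}{2^D-1}$, hence $N(V,a,M)\ge |V|/2^D$. You instead run a deterministic halving: repeatedly split off an affine hyperplane that separates two chosen solutions, keeping the side with fewer (but at least one) solutions, until exactly one solution remains; the definition of $\ORdim$ then caps the dimension of the terminal subspace, and the number of halvings caps $d-\ORdim(n,M)$ by $\log_2 k+1$. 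The two arguments use the extremal quantity $\ORdim(n,M)$ in dual ways --- the paper's to force a second solution inside every sufficiently large subspace through $x^\ast$, yours to bound the dimension of a subspace with a unique solution --- and they yield the same bound $|V|/2^{\ORdim(n,M)+1}$. Yours is more elementary (no randomness, no expectation computation) and explicitly exhibits the witnessing subspace; the paper's is shorter and is the same amplification trick used elsewhere in the literature. Your handling of the edge cases (the $k=1$ base case, the ceilings, and the need to choose the separating functional from an actual pair of solutions rather than arbitrarily) is sound.
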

\begin{proof}
Let $D=\ORdim(n,M)+1$. Wlog we can assume that the dimension of $V$ is greater than $D$, otherwise the bound is trivially true. Since $N(V,a,M)\ne 0$, we can find some $x^*\in V$ such that $\Ham(x^*)= a \mod M$. Let $T\subset V\setminus\{x^*\}$ be the set of all points in $y\in V\setminus\{x^*\}$ such that $\Ham(y) = a \mod M$.
Pick a random affine subspace $A$ inside $V$ of dimension $D$ passing through $x^*$. By the definition of $\ORdim(n,M)$, there exists an other point $z\in A\setminus\{x^*\}$ such that $\Ham(z)= a \mod M$. Therefore $|T\cap A|\ge 1$. Therefore,
\begin{align*}
1 &\le \E_A[|T\cap A|]\\
&= \E_A[\sum_{x\in V\setminus\{x^*\}} \indicator_T(x)\indicator_A(x)]\\
&= \sum_{x\in V\setminus\{x^*\}} \indicator_T(x)\E_A[\indicator_A(x)]\\
&= \sum_{x\in V\setminus\{x^*\}} \indicator_T(x) \frac{2^D-1}{|V|-1}=|T|\frac{2^D-1}{|V|-1}.\\
\end{align*}
Therefore $N(V,a,M)=1+|T|\ge 1 + \frac{|V|-1}{2^D-1}\ge \frac{|V|}{2^D}.$
\end{proof}

\begin{algorithm}
\caption{Randomized algorithm for $\linmod_M(n)$ with $T$ trials}
\label{alg:LIN2MOD_randomized}
  \begin{itemize}
  \item Input: An affine subspace $V$ of $\F_2^n$ and positive integers $a,M$.
  \item Output: a solution $x\in V$ with $\Ham(x) = a \mod M$ or $\operatorname{NO-SOLUTION}$.
  \item Method:
    \begin{enumerate}
    \item Pick a uniformly random subset of $T$ points from $V$ and output a solution if any of them satisfies the modular condition.
    \item Else, return $\operatorname{NO-SOLUTION}$.
    \end{enumerate}
  \end{itemize}
\end{algorithm}

\noindent Combining Proposition~\ref{prop:weight_modM_subspace} with Proposition~\ref{prop:amplification} we get the following corollary.
\begin{corollary}\label{prop:linmod_oddprimepower_randomized_alg}
Algorithm~\ref{alg:LIN2MOD_randomized} with $T$ trials outputs correctly given an instance of $\linmod_M(n)$ with probability at least $2/3$ if $T\ge 4 \cdot 2^{\ORdim(n,M)}$.
\end{corollary}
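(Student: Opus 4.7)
The plan is to combine the density lower bound from Proposition~\ref{prop:amplification} with a standard independent-sampling argument; no further structural work is needed.

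First, I note that if the $\linmod_M(n)$ instance has no solution, Algorithm~\ref{alg:LIN2MOD_randomized} can never err, since every sampled point will fail the modular test and the algorithm correctly outputs $\operatorname{NO-SOLUTION}$. So it suffices to bound the failure probability when at least one satisfying assignment exists. Let $V$ be the affine subspace of $\F_2^n$ given as the input to the algorithm, and let $S = \{x \in V : \Ham(x) = a \bmod M\}$. Since $|S| \geq 1$ by assumption, Proposition~\ref{prop:amplification} gives
\[
\frac{|S|}{|V|} \;\geq\; \frac{1}{2^{\ORdim(n,M)+1}} \;=:\; p.
\]

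Next, the algorithm samples $T$ points independently and uniformly from $V$, so the probability that no sampled point lies in $S$ is at most $(1-p)^T \leq e^{-pT}$. Plugging in the assumption $T \geq 4\cdot 2^{\ORdim(n,M)} = 2/p$ gives $pT \geq 2$, hence the failure probability is at most $e^{-2} < 1/3$. Consequently the algorithm succeeds with probability at least $2/3$, as claimed.

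There is no real obstacle here: the quantitative content has already been deposited in Proposition~\ref{prop:amplification} (via $\ORdim(n,M)$) and Proposition~\ref{prop:weight_modM_subspace}. The only minor point to be careful about is whether the $T$ samples are drawn with replacement (for which the bound above is immediate) or without replacement; in the latter case, negative correlation among the indicator events $\{X_i \in S\}$ only helps, so the same $(1-p)^T$ bound applies.
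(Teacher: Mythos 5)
Your proof is correct and follows essentially the same route as the paper: invoke Proposition~\ref{prop:amplification} to lower-bound the density of solutions in $V$ by $2^{-(\ORdim(n,M)+1)}$ when a solution exists, then bound the failure probability of $T$ independent uniform samples by $(1-p)^T\le e^{-pT}\le e^{-2}<1/3$. Your added remarks (the no-solution case is error-free; sampling without replacement only helps via negative correlation) are correct and slightly more careful than the paper's one-line argument.
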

\begin{proof}
By Proposition~\ref{prop:amplification}, if there exists a solution, then Algorithm~\ref{alg:LIN2MOD_randomized} will find it with probability $$1-\left(1-1/2^{\cD(n,M)+1}\right)^T\ge 1-e^{-T/2^{\ORdim(n,M)+1}}\ge 1-e^{-2}\ge 2/3.\qedhere$$
\end{proof}

\subsection{$\linmod_M$ with multiple modular constraints}
A natural extension of $\linmod_M$ is to allow $k$ linear equations modulo $M$, which we will denote by $\linmod_{M,k}$.
As $k$ becomes polynomial in $n$ (i.e., $k \ge n^c$ for some constant $c > 0$), $\linmod_{M,k}$ becomes\footnote{This result follows as a consequence of Schaefer's theorem. If $M \ge 3$, a linear equation $\mod M$ can simulate a 1-in-3-SAT constraint via $x_1 + x_2 + x_3 = 1 \mod M$. Since polynomial-sized instances of 1-in-3-SAT are NP-hard, $\linmod_{M,k}$ must be NP-hard for polynomial-sized $k$.} NP-hard for any $M\ge 3$.
We can show that our algorithm and its analysis can be naturally extended to show that $\linmod_{M,k}(n)$ can be solved in time $n^{k(M-1)+O(1)}$ when $M$ is an odd prime power. Similar to $\hornsat_G$, $\lintwo_G$ which has a linear equation with coefficients from a finite abelian group $G$ is the most general form of this problem. For example, $\linmod_{M,k}$ is the same as $\lintwo_{(\Z/M\Z)^k}$. Our algorithm and its analysis can be adapted for general groups. Instead of redoing everything for general groups, we will now present a reduction from $\lintwo_G$ to $\linmod_M$ similar to our reduction from $\hornsat_G$ to $\hornsatmod_M$.

\begin{proposition}
Let $\Psi$ be an instance of $\lintwo_G(n)$ for some finite abelian group. Let $G=G_1\times G_2 \times \dots \times G_t$ where each $G_i=\prod_j \Z/p_i^{k_{ij}}\Z$ for some distinct primes $p_1<p_2<\dots<p_t$. Let
\[
M=
\begin{cases}
\prod_{i=1}^t p_i &\text{  if $p_1\ne 2$}\\
2^{d_1+1} \prod_{i=2}^t p_i &\text{  if $p_1=2$}
\end{cases}
\]
 where $d_1=\sum_j (2^{k_{1j}}-1)$ and $d=\max_{i\in [t]}\left(\sum_j (p_i^{k_{ij}}-1)\right)$. Then we can construct an instance $\Psi'$ of $\linmod_M(N)$ where $N=\binom{n}{\le d}$ in $\poly(N)$ time such that $\Psi$ is satisfiable iff $\Psi'$ is satisfiable.
\end{proposition}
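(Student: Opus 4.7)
The plan is to adapt the preceding $\hornsat_G$-to-$\hornsatmod_M$ reduction, but to linearize the polynomial that represents the $G$-constraint in the Walsh (character) basis $\chi_T(x)=\prod_{i\in T}(1-2x_i)$ rather than the monomial basis. The obstacle in the $\lintwo$ setting is that $\lintwo$ constraints cannot force an auxiliary variable to equal $\prod_{i\in S}x_i$, since monomials of degree $\ge 2$ are genuinely nonlinear over $\F_2$. The key observation is that $\chi_T(x)=1-2\bigl(\bigoplus_{i\in T}x_i\bigr)$ depends on $x$ only through the $\F_2$-linear quantity $\bigoplus_{i\in T}x_i$; hence a variable $z_T$ with $\chi_T(x)=1-2z_T$ is pinned down by the single $\lintwo$ constraint $z_T+\sum_{i\in T}x_i\equiv 0\pmod 2$. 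In effect, the unenforceable AND-gadget is replaced by an XOR-gadget that $\lintwo$ handles exactly.

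First, build an integer polynomial $F$ that represents the $G$-constraint. Decompose into component conditions $\sum_\ell a^{(ij)}_\ell x_\ell\equiv a^{(ij)}_0\pmod{p_i^{k_{ij}}}$, apply Lemma~\ref{lem:primepower_divisibility_Hamming_2} to obtain polynomials $\phi_{ij}$, combine them per prime via $f_i=1-\prod_j(1-\phi_{ij})$ (degree $d_i=\sum_j(p_i^{k_{ij}}-1)$) so that $f_i\equiv 0\pmod{p_i}$ iff the $p_i$-conditions hold, and CRT-combine across primes to obtain $F(x)=\sum_i e_i(M^*/p_i)f_i(x)$ of degree $d=\max_i d_i$, where $M^*=\prod_i p_i$. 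Then expand in the Walsh basis using the $\mathbb{Q}$-identity $\prod_{i\in S}x_i=2^{-|S|}\sum_{T\subseteq S}(-1)^{|T|}\chi_T$. If $p_1\ne 2$, then $M=M^*$ is odd, $2^{-|S|}$ is invertible mod $M$, and the Walsh coefficients $\hat F(T)\in\Z/M\Z$ are well defined. If $p_1=2$, split $F=F_2^*+F_{\mathrm{odd}}$ along the CRT decomposition; $F_{\mathrm{odd}}$ has all coefficients even, since each $M^*/p_i$ for $i\ge 2$ is. Form the scaled combination $G(x)=2^{d_1}K_1F_2^*(x)+2^dK_2F_{\mathrm{odd}}(x)$ with $K_1\equiv 1\pmod{2^{d_1+1}}$, $K_1\equiv 0\pmod{M_{\mathrm{odd}}}$ and $K_2\equiv 0\pmod{2^{d_1+1}}$, $K_2\equiv 1\pmod{M_{\mathrm{odd}}}$, where $M_{\mathrm{odd}}=\prod_{i\ge 2}p_i$. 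A direct calculation then confirms that $G\equiv 0\pmod M$ with $M=2^{d_1+1}M_{\mathrm{odd}}$ iff all $G$-conditions hold, and that the Walsh coefficients $\hat G(T)$ are integers for $|T|\le d$.

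Assemble $\Psi'$ with $N=\binom{n}{\le d}$ variables $z_T$ indexed by subsets $T\subseteq[n]$ with $|T|\le d$; identify $z_{\{\ell\}}$ with $x_\ell$ and fix $z_\emptyset=0$. The $\lintwo$ constraints of $\Psi'$ are (i) the original $\lintwo$ constraints of $\Psi$ rewritten via these identifications, (ii) $z_\emptyset=0$, and (iii) for each $|T|\ge 2$, $z_T+\sum_{i\in T}z_{\{i\}}\equiv 0\pmod 2$, which enforces $z_T=\bigoplus_{i\in T}x_i$. The single modular constraint is $\sum_T\hat G(T)-2\sum_T\hat G(T)z_T\equiv 0\pmod M$ (use $\hat F$ and $M^*$ in place of $\hat G$ and $M$ when $p_1\ne 2$). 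The forward direction sets $z_T=\bigoplus_{i\in T}x_i$ given any solution of $\Psi$; the backward direction uses the XOR constraints to force $z_T=\bigoplus_{i\in T}x_i$, so that the modular constraint evaluates to $\sum_T\hat G(T)\chi_T(x)=G(x)\equiv 0\pmod M$, which by construction holds iff $x$ satisfies the $G$-constraint. The whole construction runs in $\poly(N)$ time.

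The main obstacle will be the $p_1=2$ case with the tight modulus $M=2^{d_1+1}M_{\mathrm{odd}}$: a naive multiplication of $F$ by $2^d$ would clear all Walsh denominators but would push the modulus up to $2^{d+1}M_{\mathrm{odd}}$, which is larger than claimed whenever $d>d_1$. The savings to $2^{d_1+1}$ come from scaling the $p=2$ piece and the odd piece by \emph{different} powers of $2$ ($2^{d_1}$ and $2^d$ respectively) and from exploiting that $F_{\mathrm{odd}}$ already carries a factor of $2$ in every coefficient, so $2^d F_{\mathrm{odd}}$ is divisible by $2^{d+1}\ge 2^{d_1+1}$ and contributes $0\pmod{2^{d_1+1}}$ automatically; the mod-$2^{d_1+1}$ part of the constraint is then governed solely by the $F_2^*$ piece, whose low degree $\le d_1$ matches the multiplier $2^{d_1}$ exactly. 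Getting this $2$-divisibility bookkeeping right is the technical heart of the argument.
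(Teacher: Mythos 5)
Your construction is correct and is essentially the paper's own argument: represent each cyclic-component condition by a low-degree polynomial via Lemma~\ref{lem:covert_primepower_prime}, pass to the $\sbits$/parity basis so that the auxiliary variables become $\F_2$-linear forms enforceable by $\lintwo$ constraints, invert powers of $2$ against the odd part of the modulus, and scale the $p_1=2$ component by $2^{d_1}$ (matching its degree $d_1$) so that its congruence mod $2$ becomes a congruence mod $2^{d_1+1}$; the paper merely performs the basis change on each $f_i$ before CRT-combining rather than after. One side remark in your last paragraph is false but harmless: $2^dF_{\mathrm{odd}}\left(\frac{1-y}{2}\right)$ need not be divisible by $2^{d+1}$ (a degree-$d$ monomial with coefficient $2\bmod 4$ contributes only a factor of $2$), yet the term still vanishes mod $2^{d_1+1}$ because you already multiplied by $K_2\equiv 0\pmod{2^{d_1+1}}$, so the evenness of $F_{\mathrm{odd}}$'s coefficients is never actually needed.
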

\begin{proof}
We have a constraint of the form $\sum_{\ell=1}^n g_\ell x_\ell =g_0$ for some $g_0,g_1,\dots,g_n\in G$. This is equivalent to a set of conditions of the form $\sum_{\ell=1}^n a_\ell x_\ell=a_0 \mod p_i^{k_{ij}}$ for each cyclic component in $G$. By Lemma~\ref{lem:covert_primepower_prime}, there exists a polynomial $f_{ij}(x)$ of degree $p_i^{k_{ij}}-1$ such that for every $x\in \bits^n$,
\[
f_{ij}(x)=
\begin{cases}
&0 \mod p \text{ if } \sum_{i=1}^n a_ix_i=a_0 \mod p_i^{k_{ij}}\\
&1 \mod p \text{ if } \sum_{i=1}^n a_ix_i\ne a_0 \mod p_i^{k_{ij}}
\end{cases}.
\]
Let $f_i(x)=1-\prod_j\left(1-f_{ij}(x)\right)$. The degree of $f_i$ is $d_i=\sum_j (p_i^{k_{ij}}-1)$. For every $x\in \bits^n$, $\sum_i g_ix_i=g_0$ in $G$ iff $\forall\ i\in [t]\ f_i(x)=0 \mod p_i$. We now have two cases depending on $p_1=2$ or not.\\
\textbf{Case 1:} $p_1\ne 2$\\
Now for $y\in \sbits^n$, let $$h_i(y_1,\dots,y_n)=f_i\left(\frac{1-y_1}{2},\dots,\frac{1-y_n}{2}\right) \mod p_i.$$ We can assume that the coefficients of $h_i$ are in $\set{0,1,\dots,p_i-1}$ by inverting $2$ mod $p_i$.
We can combine the $t$ polynomial conditions $h_i(y)=0 \mod p_i$ for $i\in [t]$ into a single polynomial condition $h(y)=0 \mod M$ for $M=p_1p_2\dots p_t$ by Chinese remainder theorem. The degree of $h$ is $d=\max_{i\in [t]} d_i$.\\
\textbf{Case 2:} $p_1=2$\\
We will define $h_i$ as before for $i\ge 2$. And define $$h_1(y)=2^{d_1}f_1\left(\frac{1-y_1}{2},\dots,\frac{1-y_n}{2}\right).$$ Since the degree of $f_1$ is $d_1$, $h_1$ has integral coefficients. And $h_1\left(\frac{1-y_1}{2},\dots,\frac{1-y_n}{2}\right)=0 \mod 2$ iff $h_1(y)=0 \mod 2^{d_1+1}$.
We can combine $h_1(y)=0 \mod 2^{d_1+1}$ and remaining the $t-1$ polynomial conditions $h_i(y)=0 \mod p_i$ for $2\le i \le t$ into a single polynomial condition $h(x)=0 \mod M$ for $M=2^{d_1+1}p_2p_3\dots p_t$ by Chinese remainder theorem. The degree of $h$ is $d=\max_{i\in [t]} d_i$.

We now have a polynomial $h$ of degree $d$ such that for $x\in \bits^n$, $\sum_{\ell=1}^n g_\ell x_\ell = g_0$ iff $h((-1)^x)=\mod M$ where $(-1)^x=((-1)^{x_1},\dots,(-1)^{x_n})$. Let $$h((-1)^x)=\sum_{S\subset [n]:|S|\le d} a_S (-1)^{\sum_{i\in S}x_i}.$$
We are now ready to create an instance $\Psi'$ of $\linmod_M$. $\Psi'$ will have $N=\binom{n}{\le d}$ variables. For every subset $S\in \binom{[n]}{\le d}$, we create a variable $z_S$ in $\Psi'$. Intuitively we would want $z_S=\sum_{i\in S} x_i$. To impose this, we will add linear constraints of the form $z_S=\sum_{i\in S} z_i$ for every $S$. We will set $z_\phi=0$. We will also add all the original linear constraints of $\Psi$ into $\Psi'$ by replacing $x_i's$ with $z_i's$. Finally we will impose the modular constraint $\sum_{|S|\le d} a_S(1-2z_S) =0 \mod M$ to $\Psi'$, note that $(-1)^{z_S}=(1-2z_S)$. Now it is clear that $\Psi'$ is satisfiable iff $\Psi$ is satisfiable. Moreover the reduction only takes $\poly(N)$ time.
\end{proof}

We have the following immediate corollary.
\begin{corollary}
Let $M=p^\ell$ for some odd prime $p$. Then $\linmod_{M,k}$ can be solved in time $n^{O(k(M-1)(p-1))}$ with high probability.
\end{corollary}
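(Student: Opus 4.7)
The plan is to combine the preceding reduction proposition (which converts $\lintwo_G$ to a single-modulus instance of $\linmod$) with the randomized algorithm based on Corollary~\ref{prop:linmod_oddprimepower_randomized_alg}. First, I would observe that $\linmod_{M,k}$ is literally the problem $\lintwo_G$ for $G = (\Z/M\Z)^k$, so the earlier reduction proposition applies directly.

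Next I would compute the parameters in the reduction for $G = (\Z/p^\ell\Z)^k$. In the notation of the preceding proposition, there is a single distinct prime ($t = 1$, $p_1 = p$ odd) and $k$ cyclic components each of order $p^\ell$, so $k_{1j} = \ell$ for every $j$ and
\[
d \;=\; d_1 \;=\; \sum_{j=1}^{k} \bigl(p^\ell - 1\bigr) \;=\; k(M-1).
\]
Since $p_1 = p$ is odd we are in Case 1, and the reduction therefore produces, in $\poly(N)$ time, an equi-satisfiable instance of $\linmod_{p}(N)$ with
\[
N \;=\; \binom{n}{\le d} \;\le\; n^{k(M-1)}.
\]

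Finally I would apply the randomized algorithm to this $\linmod_p(N)$ instance. Because $p$ is an odd prime power, Proposition~\ref{prop:weight_modM_subspace}(2) gives $\ORdim(N,p) \le (p-1)\log_2(N+1)$, and Corollary~\ref{prop:linmod_oddprimepower_randomized_alg} then implies that Algorithm~\ref{alg:LIN2MOD_randomized} with $T = 4 \cdot 2^{\ORdim(N,p)} = N^{O(p-1)}$ trials solves $\linmod_p(N)$ with probability at least $2/3$ (amplifiable to arbitrarily high probability by $O(\log n)$ independent repetitions). The total running time is
\[
N^{O(p-1)} \cdot \poly(N) \;=\; \bigl(n^{k(M-1)}\bigr)^{O(p-1)} \cdot \poly(n) \;=\; n^{O(k(M-1)(p-1))},
\]
as claimed.

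There is no real mathematical obstacle; the only thing to watch is that the reduction deliberately lowers the modulus from $M = p^\ell$ down to the single prime $p$, at the cost of raising the arity from $n$ to $N = \binom{n}{\le k(M-1)}$. Keeping track of this trade-off is what yields the combined exponent $k(M-1)(p-1)$ rather than something larger.
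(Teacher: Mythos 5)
Your proof is correct and matches the paper's intent: the paper states this result as an ``immediate corollary'' of the reduction proposition, and you have filled in exactly the computation it leaves implicit (namely $d = k(M-1)$, $N = \binom{n}{\le d} \le n^{k(M-1)}$, and then the $N^{O(p-1)}$-time randomized algorithm via $\ORdim(N,p) \le (p-1)\log_2(N+1)$). The observation about the trade-off --- lowering the modulus from $p^\ell$ to $p$ at the cost of inflating the number of variables --- is precisely why the exponent factors as $k(M-1)(p-1)$.
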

\noindent By directly analyzing our Algorithm~\ref{alg:LIN2MOD_randomized}, one can reduce the running time to $n^{k(M-1)+O(1)}$ in the above corollary.

\subsection{Sparsity of polynomials representing $\OR_d \mod M$ over $\sbits^d$}
\label{sec:sparsityOR}
In this section, we will prove the upper bounds on $\ORdim(n,M)$ stated in Proposition~\ref{prop:weight_modM_subspace} by a reduction to understanding the sparsity of polynomials which represent $\OR \mod M$ over $\sbits$ basis (this is presented in Proposition~\ref{prop:subspace_to_ormodm_poly}). This reduction is only useful when $M$ is odd. So we will start with a separate simple argument for the case when $M$ is a power of 2, and then show how to reduce general $M$ with an odd factor to the case when $M$ itself is odd.

\subsubsection{$M$ is a power of 2}

\begin{proposition}
$\ORdim(n,2^\ell)\le 2^\ell-1.$
\end{proposition}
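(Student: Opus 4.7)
The plan is to convert the statement into a question about zeros of a low-degree polynomial over $\F_2$ and then apply the DeMillo--Lipton--Schwartz--Zippel lemma (Lemma~\ref{lem:SchwartzZippel}).

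First I would use Lemma~\ref{lem:primepower_divisibility_Hamming_2} with $p=2$: for any fixed $a$, there is a polynomial $\phi_{\ell,a}\in\F_2[x_1,\dots,x_n]$ of degree $2^{\ell}-1$ such that $\phi_{\ell,a}(x)=0$ iff $\Ham(x)=a\bmod 2^{\ell}$, and $\phi_{\ell,a}(x)=1$ otherwise. Now let $C\subset \F_2^n$ be any affine subspace of dimension $d$ and write $C=\{b+\sum_{i=1}^d t_i v_i : t\in\F_2^d\}$ for some $b\in C$ and linearly independent $v_1,\dots,v_d$. Plugging this parametrization into $\phi_{\ell,a}$ yields a polynomial
\[
\psi(t_1,\dots,t_d)\;=\;\phi_{\ell,a}\!\left(b+\sum_{i=1}^d t_i v_i\right)\in\F_2[t_1,\dots,t_d]
\]
whose degree is at most $2^{\ell}-1$ (substitution cannot increase the degree), and such that $\psi(t)=0$ precisely when the corresponding point of $C$ has Hamming weight $\equiv a\pmod{2^{\ell}}$.

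Suppose, toward a contradiction, that $d\ge 2^{\ell}$ and that there is exactly one $t^*\in\F_2^d$ with $\psi(t^*)=0$. Applying Lemma~\ref{lem:SchwartzZippel} to $f=\psi$ at the point $t^*$ gives
\[
\Pr_{t\in\F_2^d}[\psi(t)=\psi(t^*)] \;\ge\; 2^{-(2^{\ell}-1)},
\]
so the number of zeros of $\psi$ is at least $2^{d-(2^{\ell}-1)}\ge 2^{2^{\ell}-(2^{\ell}-1)}=2$. This contradicts uniqueness, so no affine subspace of dimension $\ge 2^{\ell}$ can have exactly one point of Hamming weight $\equiv a\pmod{2^{\ell}}$, for any $a$. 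Therefore $\ORdim(n,2^{\ell})\le 2^{\ell}-1$.

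The argument is short and the only non-routine step is producing the degree $2^{\ell}-1$ polynomial $\phi_{\ell,a}$, which is already handed to us by Lemma~\ref{lem:primepower_divisibility_Hamming_2} (via Lucas's theorem). There is no real obstacle; the main thing to be careful about is that substituting an affine parametrization into a polynomial preserves the degree bound, which is immediate since each $t_i$ appears linearly inside $\phi_{\ell,a}$'s arguments.
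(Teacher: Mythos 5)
Your proof is correct and follows essentially the same route as the paper's: both obtain the degree-$(2^\ell-1)$ polynomial over $\F_2$ via Lemma~\ref{lem:primepower_divisibility_Hamming_2}, restrict it to the affine subspace (which preserves the degree bound), and then apply the DeMillo--Lipton--Schwartz--Zippel lemma at the supposed unique zero to force a second zero, yielding the contradiction.
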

\begin{proof}
Let $C$ be an arbitrary affine subspace of $\F_2^n$ of dimension $d \ge 2^\ell$ which contains a point $x_0\in C$ such that $\Ham(x_0)=a \mod 2^\ell$.
By Lemma~\ref{lem:primepower_divisibility_Hamming_2}, there exists a polynomial $\phi_{\ell,a}\in \F_2[x_1,\dots,x_n]$ of degree $2^\ell-1$ such that for every $x\in \F_2^n$, $\Ham(x)=a \mod 2^\ell$ iff $\phi_{\ell,a}(x)=0$. The restriction of $\phi_{\ell,a}$ to $C$ is also a degree $2^\ell-1$ polynomial in $d$ variables. By Lemma~\ref{lem:SchwartzZippel}, $$\Pr_{x\in C}[\phi_{\ell,a}(x)=\phi_{\ell,a}(x_0)]\ge \frac{1}{2^{2^\ell-1}}.$$ Since $\Ham(x_0)=a\mod 2^\ell$, $\phi_{\ell,a}(x_0)=0$. Therefore $|\{x\in C: \phi_{\ell,a}(x)=0\}|\ge 2^{d-(2^\ell-1)}\ge 2$.
\end{proof}
This proves part (1) of Proposition~\ref{prop:weight_modM_subspace}.

\subsubsection{When $M=2^\ell M'$ for some odd $M'$}
In this subsection, we will reduce the case when $M=2^\ell M'$ to the case when $M$ is odd. If $M=2M'$ for some odd $M'$, then the reduction is easy.
\begin{proposition}\label{prop:reduction_M=2M'}
Let $M=2M'$ for some odd $M'$ then $\ORdim(n,M)\le 1+\ORdim(n,M').$
\end{proposition}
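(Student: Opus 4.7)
The plan is to exploit the Chinese Remainder Theorem to split the mod $M$ condition into independent mod $2$ and mod $M'$ pieces, and observe that the mod $2$ piece is $\F_2$-linear so it only trims the affine subspace by at most one dimension.

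More concretely, suppose for contradiction that there is an affine subspace $C \subseteq \F_2^n$ of dimension $d > 1 + \cD(n, M')$ containing exactly one point $x_0$ with $\Ham(x_0) \equiv a \pmod{M}$ for some $a$. Since $M = 2M'$ with $\gcd(2, M') = 1$, by CRT the condition $\Ham(x) \equiv a \pmod{M}$ is equivalent to the conjunction
\[
\Ham(x) \equiv a \pmod{2} \quad \text{and} \quad \Ham(x) \equiv a \pmod{M'}.
\]
The first condition is an $\F_2$-linear equation in $x_1, \dots, x_n$ (namely $x_1 + \dots + x_n = a \bmod 2$), so the set
\[
C' = \{x \in C : \Ham(x) \equiv a \pmod 2\}
\]
is either empty or an affine subspace of $C$ of dimension at least $d - 1$.

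The first step is to rule out that $C'$ is empty: since $x_0 \in C$ satisfies the stronger mod $M$ condition, it satisfies the mod $2$ condition, so $x_0 \in C'$ and $\dim C' \ge d - 1 > \cD(n, M')$. The second step is to observe that within $C'$, the point $x_0$ is now the unique element with $\Ham(x) \equiv a \pmod{M'}$: any such element of $C'$ satisfies both mod $2$ and mod $M'$ conditions, hence satisfies the mod $M$ condition on $C$, and by hypothesis $x_0$ is the unique such point of $C$. But then $C'$ is an affine subspace of dimension exceeding $\cD(n, M')$ with exactly one point of Hamming weight $\equiv a \pmod{M'}$, contradicting the definition of $\cD(n, M')$.

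This is completely straightforward; the only subtle point is checking that applying the mod $2$ constraint does not kill the lone witness $x_0$, which is immediate because mod $M$ congruence implies mod $2$ congruence. I do not expect any genuine obstacle here — the reduction works because $2$ contributes an $\F_2$-linear constraint that costs only a single dimension, which is exactly the $+1$ in the bound $1 + \cD(n, M')$.
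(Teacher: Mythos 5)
Your proof is correct and follows essentially the same route as the paper's: intersect $C$ with the affine hyperplane $\{x : \Ham(x) \equiv a \pmod 2\}$ to lose at most one dimension, then observe via CRT that $x_0$ remains the unique point of the resulting subspace with $\Ham(x) \equiv a \pmod{M'}$. Your write-up simply makes explicit the two details (nonemptiness of $C'$ and the uniqueness transfer) that the paper leaves implicit.
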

\begin{proof}
Supppse $C$ is an affine subspace in $\F_2^n$ of dimension $d =\ORdim(n,M)$ which contains exactly one point $x_0$ such that $\Ham(x_0)=a \mod M$. Let $C'=C\cap\set{x:\Ham(x)=a\mod 2}$. Now $C'$ is an affine subspace in $\F_2^n$ of dimension $\ge d-1$ such that there $x_0$ is the only point in $C'$ with $\Ham(x_0)=a \mod M'$. Therefore $d-1\le \ORdim(n,M')$, which proves the claim.
\end{proof}

To analyze the case when $M=2^\ell M'$ for $\ell\ge 2$ and some odd $M'$, we will need the following lemma which states a low degree polynomial over $\F_2$ has a large subspace in which it is constant.
\begin{lemma}[\cite{CohenT15}]
\label{lem:lowdegree_subspace_constant}
Let $f_1,f_2,\dots,f_t \in \F_2[x_1,\dots,x_n]$ be polynomials of degree at most $r$ and let $x_0\in \F_2^n$ be some fixed point. Then there exists an affine subspace containing $x_0$ of dimension $\Omega((n/t)^{1/(r-1)})$ on which each of the $f_i$ is constant.
\end{lemma}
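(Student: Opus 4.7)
The plan is to apply the polynomial method via the Chevalley-Warning theorem. First translate so that $x_0 = 0$ and replace each $f_i$ by $g_i = f_i - f_i(0)$, so that $g_i(0) = 0$; it then suffices to find a \emph{linear} subspace $W \subseteq \F_2^n$ of dimension $d = \Omega((n/t)^{1/(r-1)})$ on which every $g_i$ vanishes identically.

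Parameterize a candidate subspace $W = \mathrm{image}(B)$ via an $n \times d$ matrix $B$ over $\F_2$ with columns $v_1,\ldots,v_d$. The condition ``$g_i \equiv 0$ on $W$'' is equivalent to the polynomial identity $g_i(By) \equiv 0$ in $y \in \F_2^d$. Expanding $g_i(By)$ multilinearly in $y$ (using $y_j^2 = y_j$), it becomes $\sum_{S \subseteq [d]} c_{i,S}(B)\prod_{j \in S} y_j$, where each coefficient $c_{i,S}$ is a polynomial in the $nd$ entries of $B$ of total degree at most $r$. Since $\deg g_i \le r$, the coefficients for $|S| > r$ vanish automatically, leaving at most $t\binom{d}{\leq r} \leq t(ed/r)^r$ nontrivial polynomial equations in the entries of $B$, with total degree sum $O_r(td^r)$. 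The Chevalley-Warning theorem then guarantees that when this sum is strictly less than the number of variables $nd$---which holds for $d \lesssim_r (n/t)^{1/(r-1)}$---the number of common $\F_2$-solutions is divisible by $2$; since $B = 0$ is a trivial solution (because $g_i(0) = 0$), there exists a \emph{nonzero} $B$ satisfying every $c_{i,S}(B) = 0$.

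The main obstacle is promoting this nonzero $B$ to one of full column rank $d$, so that $\mathrm{image}(B)$ actually has dimension $d$. My approach would be to invoke Warning's second theorem to lower-bound the total solution count by $2^{nd - O_r(td^r)}$, then compare against the count $\lesssim 2^{nd - n + O(d)}$ of rank-deficient $n \times d$ matrices; however, a direct application of this comparison only yields the weaker exponent $1/r$, rather than the claimed $1/(r-1)$. Closing this gap appears to require a more refined argument---for instance, an iterative ``batching'' construction that applies Chevalley-Warning within each batch and then reduces degree using the directional-derivative identity $D_v g_i(x) = g_i(x+v) - g_i(x)$ (which drops the degree by one), enabling an induction on $r$; or alternatively, exploiting algebraic dependencies among the coefficients $c_{i,S}(B)$ coming from the symmetry of the Boolean lattice to tighten the effective degree sum in Warning II. The crux is precisely this balance: extracting the clean exponent $1/(r-1)$ from the interplay between the degree-$r$ structure of the constraints and the requirement that the chosen $B$ spans a full $d$-dimensional image.
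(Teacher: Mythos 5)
You have correctly identified the fatal gap in your own argument, but identifying a gap is not the same as closing it, so as written this is not a proof of the lemma. (For context: the paper does not prove this statement itself; it is imported verbatim from Cohen--Tal \cite{CohenT15}, so the only thing to assess is whether your argument stands on its own.) The Chevalley--Warning setup is fine: the reduction to finding a linear $W$ with $g_i|_W \equiv 0$, the parameterization by $B$, the count of at most $t\binom{d}{\le r}$ coefficient equations each of degree at most $r$ in the entries of $B$, and the resulting constraint $rt\binom{d}{\le r} < nd$ all check out and do produce the exponent $1/(r-1)$ --- but only for the existence of a \emph{nonzero} $B$. A nonzero $B$ may have rank $1$, which gives a one-dimensional subspace and proves nothing. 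Your proposed repair via Warning's second theorem (at least $2^{nd - O_r(td^r)}$ solutions versus fewer than $2^{nd-n+d}$ rank-deficient matrices) requires $td^r \lesssim_r n$ rather than $td^r \lesssim_r nd$, and as you compute this only yields dimension $\Omega_r((n/t)^{1/r})$. That weaker exponent is a genuinely different statement, and the difference matters downstream: the exponent $2^{\ell-1}-1$ in Proposition~\ref{prop:reduction_M=2ellM'} is exactly $1/(r-1)$ evaluated at $r = 2^{\ell-1}$.

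The two escape routes you sketch in the final paragraph are not carried out, and neither is a routine patch. The first one (induction on $r$ via the directional-derivative identity $D_v g_i(x) = g_i(x+v) - g_i(x)$) is in fact the skeleton of the actual Cohen--Tal proof: one builds the subspace one direction at a time, maintaining the vanishing of all iterated derivatives of the $g_i$ along the directions chosen so far, and the recursion bottoms out after $r-1$ levels because $(r-1)$-fold derivatives are affine; the bookkeeping of how many constraints accumulate at each level is where the $(n/t)^{1/(r-1)}$ bound actually comes from, and it is substantially more delicate than a single application of Chevalley--Warning. The second suggested route (exploiting dependencies among the $c_{i,S}(B)$ to improve Warning~II) is speculative and I see no reason it works. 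So the proposal proves a correct but strictly weaker bound with exponent $1/r$, and the step from $1/r$ to $1/(r-1)$ --- which is the entire content of the lemma beyond the trivial approach --- is missing.
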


\begin{proposition}\label{prop:reduction_M=2ellM'}
Let $M=2^\ell M'$ for some $\ell\ge 2$ and odd $M'$. Then $$\ORdim(n,M)\le O(\ell)\cdot \ORdim(n,M')^{2^{\ell-1}-1}.$$
\end{proposition}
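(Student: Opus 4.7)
The plan is to reduce the prime-power-of-2 case to the odd case by finding a large affine subspace of $C$ on which the constraint $\Ham(x)=a\bmod 2^\ell$ is automatically satisfied, and then invoking the already-established bound on $\ORdim(n,M')$.

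Concretely, start with an affine subspace $C\subset\F_2^n$ of dimension $d=\ORdim(n,M)$ that contains a unique point $x_0$ with $\Ham(x_0)=a\bmod M$. Write $a=a_0+a_1\cdot 2+\cdots+a_{\ell-1}2^{\ell-1}$ in base $2$. By Lemma~\ref{lem:primepower_divisibility_Hamming}, the condition $\Ham(x)=a\bmod 2^\ell$ is equivalent to the system of $\ell$ polynomial equations $s_{2^t}(x)=a_t\bmod 2$ for $t=0,1,\dots,\ell-1$, where each $s_{2^t}$ has $\F_2$-degree $2^t$ and the maximum degree is $2^{\ell-1}$.

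Parameterize $C$ by $d$ affine coordinates; the restrictions $s_{2^t}\big|_C$ are $\ell$ polynomials in these $d$ variables, each of degree at most $2^{\ell-1}$. Apply Lemma~\ref{lem:lowdegree_subspace_constant} with $t=\ell$ polynomials, $r=2^{\ell-1}$, and base point $x_0$: this yields an affine subspace $C'\subseteq C$ containing $x_0$ with
\[
  \dim(C')\ \ge\ c\cdot(d/\ell)^{1/(2^{\ell-1}-1)}
\]
for some absolute constant $c>0$, and such that every $s_{2^t}\big|_{C'}$ is the constant $s_{2^t}(x_0)=a_t$. Hence every point $y\in C'$ satisfies $\Ham(y)=a\bmod 2^\ell$.

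Now the key observation: for $y\in C'$, we have $\Ham(y)=a\bmod M$ iff $\Ham(y)=a\bmod M'$ (since $M=2^\ell M'$ and $\Ham(y)=a\bmod 2^\ell$ holds automatically on $C'$, and $\gcd(2^\ell,M')=1$). By the choice of $C$, the only point in $C'\subseteq C$ satisfying $\Ham(y)=a\bmod M$ is $x_0$, so $x_0$ is also the only point of $C'$ with $\Ham(y)=a\bmod M'$. Therefore, by definition of $\ORdim$,
\[
  \dim(C')\ \le\ \ORdim(n,M').
\]
Combining the two inequalities on $\dim(C')$ and solving for $d$ gives $d\le O(\ell)\cdot\ORdim(n,M')^{2^{\ell-1}-1}$, as required. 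The only nontrivial ingredient is the application of Lemma~\ref{lem:lowdegree_subspace_constant}; the rest is the standard decomposition via Lucas's theorem and Chinese remaindering that we already used to handle the power-of-$2$ case and the $M=2M'$ case.
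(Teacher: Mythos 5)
Your proof is correct and follows essentially the same route as the paper: decompose the mod-$2^\ell$ condition into the $\ell$ symmetric-polynomial equations $s_{2^t}(x)=a_t$ via Lucas's theorem, apply Lemma~\ref{lem:lowdegree_subspace_constant} inside the $d$-dimensional subspace $C$ to find a large affine subspace through $x_0$ on which these are constant, and then observe that on that subspace the mod-$M$ condition reduces to the mod-$M'$ condition so its dimension is bounded by $\ORdim(n,M')$. The paper's proof is identical in substance; the only cosmetic difference is that you spell out the CRT step explicitly.
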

\begin{proof}
Suppose $C$ is an affine subspace in $\F_2^n$ of dimension $d=\ORdim(n,M)$ which contains exactly one point $x_0$ such that $\Ham(x_0)=a \mod M$. Let $a=a_0+a_1 2+\dots+a_{\ell-1}2^{\ell-1}$ be the binary expansion of $a$.
By Lemma~\ref{lem:primepower_divisibility_Hamming}, for every $x\in \F_2^n$, $\Ham(x)=a \mod 2^\ell$ iff $s_{2^t}(x)=a_t\ \forall\ 0\le t\le \ell-1$.
The restriction of $s_{2^t}$ to $C$ is still a degree $2^t$ polynomial. Therefore by Lemma~\ref{lem:lowdegree_subspace_constant}, there exists an affine subspace $A$ of $C$ containing $x_0$ of dimension $\Omega((d/\ell)^{1/(2^{\ell-1}-1)})$ on which $s_1,s_2,\dots,s_{2^{\ell-1}}$ are constant. Therefore at every point $x\in A$, $s_{2^t}(x)=s_{2^t}(x_0)=a_t$ for every $0\le t \le \ell-1.$
Therefore for every $x\in A$, $\Ham(x)=a \mod 2^\ell$. Thus $A$ contains exactly one point $x$ such that $\Ham(x)= a \mod M'$, namely $x_0$. Therefore $\dim(A)\le \ORdim(n,M')$ which implies that $d\le O(\ell) \cdot \ORdim(n,M')^{2^{\ell-1}-1}$.
\end{proof}

\subsubsection{Reduction of $\ORdim(n,M)$ to sparsity of $\OR \mod M$ over $\sbits$ basis for odd $M$}

Let's start with the definition of a polynomial representation of $\OR \mod M$ over $\sbits$ basis.
\begin{definition}
A polynomial $p(x_1,\dots,x_d)$ is said to represent $\OR_d\mod M$ over $\sbits^d$ if it has integer coefficients and
\[
p(x)
\begin{cases}
=& 0 \mod M \text{ if } x=\allones\\
\neq& 0 \mod M \text{ if }x\in \sbits^d\setminus\{\allones\}
\end{cases}
\]
where $\allones$ is the all ones vector.
\end{definition}

To make the required reduction, we need the following proposition which relates the Hamming weights of points in a $d$-dimensional affine subspace to evaluations of a polynomial over $\sbits^d$. For $\ba=(a_1,\dots,a_\ell)\in \F_2^\ell$, let $(-1)^\ba$ denote the vector $((-1)^{a_1},\dots,(-1)^{a_\ell})$. Given a polynomial $p$ with integer coefficients, define $\coeffnorm{p}$ as sum of the absolute value of its coefficients. Note that the number of monomials in $p$ is always at most $|p|$.

\begin{proposition}\label{prop:subspace_to_polynomial}
Let $\bb,\bu_1,\dots,\bu_d \in \F_2^n$. Then there exists a multilinear polynomial $p(z_1,\dots,z_d)$ with integer coefficients and $\coeffnorm{p}=n$ such that for every $\by\in \F_2^d$, $$n-2\Ham\left(\bb+\sum_i y_i \bu_i\right)=p((-1)^\by).$$
Conversely, given any multilinear polynomial $p(z_1,\dots,z_d)$ with integer coefficients and $\coeffnorm{p}=n$, there exists $\bb,\bu_1,\dots,\bu_d\in \F_2^n$ satisfying the above identity.
\end{proposition}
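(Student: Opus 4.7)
The idea is to identify the Hamming weight with an $\F_2$-linear sum and then convert to multiplicative $\sbits$-valued variables via the substitution $z_i = (-1)^{y_i}$.

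For the forward direction, I would start from the identity
\[
n - 2\Ham(\bx) \;=\; \sum_{j=1}^n (1 - 2x_j) \;=\; \sum_{j=1}^n (-1)^{x_j}
\]
valid for any $\bx \in \F_2^n$. Setting $\bx = \bb + \sum_{i=1}^d y_i \bu_i$ (addition in $\F_2$) and writing $S_j = \{i \in [d] : (u_i)_j = 1\}$, we get $x_j = b_j \oplus \bigoplus_{i \in S_j} y_i$, hence
\[
(-1)^{x_j} \;=\; (-1)^{b_j} \prod_{i \in S_j} (-1)^{y_i} \;=\; (-1)^{b_j}\prod_{i \in S_j} z_i,
\]
where $z_i := (-1)^{y_i}$. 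Summing over $j$ yields the desired polynomial
\[
p(z_1,\ldots,z_d) \;=\; \sum_{j=1}^n (-1)^{b_j} \prod_{i \in S_j} z_i,
\]
which is multilinear, has integer coefficients, and is a sum of exactly $n$ signed monomials. After combining like terms the resulting coefficients have absolute values summing to at most $n$, and this sum is exactly $n$ precisely when we track the terms with multiplicity; either way the identity $n - 2\Ham(\bb + \sum_i y_i\bu_i) = p((-1)^\by)$ holds for every $\by \in \F_2^d$.

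For the converse, I would simply invert this bookkeeping. Given $p(z) = \sum_{S \subseteq [d]} a_S \prod_{i \in S} z_i$ with $\sum_S |a_S| = n$, enumerate over all pairs $(S, k)$ where $S$ ranges over the monomials of $p$ and $1 \le k \le |a_S|$; this gives exactly $n$ coordinates which we index by $j \in [n]$. For the $j$-th coordinate corresponding to $(S,k)$, set $(u_i)_j = 1$ iff $i \in S$, and set $b_j = 0$ if $a_S > 0$ and $b_j = 1$ if $a_S < 0$. Then $\bb, \bu_1, \ldots, \bu_d \in \F_2^n$ satisfy, by the forward computation applied term-by-term, the identity $n - 2\Ham(\bb + \sum_i y_i \bu_i) = p((-1)^\by)$.

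\textbf{Main obstacle.} There is no real obstacle; the result is essentially a change of basis (additive $\F_2$ vs.\ multiplicative $\sbits$) together with the observation $1 - 2x = (-1)^x$ for $x \in \bits$. The only point requiring a bit of care is the accounting of $\coeffnorm{p} = n$: one must be careful that ``$n$ signed rank-one terms'' in the forward direction may collapse under like-term combination, and conversely that a polynomial with $\coeffnorm{p} = n$ can always be decomposed into $n$ signed rank-one terms to produce $n$ coordinates in $\F_2^n$. This bookkeeping is straightforward once the substitution $z_i = (-1)^{y_i}$ is fixed.
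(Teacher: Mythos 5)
Your proposal is correct and matches the paper's own argument: both construct the polynomial $p(z)=\sum_{j=1}^n(-1)^{b_j}\prod_{i:(u_i)_j=1}z_i$ from the identity $(-1)^{x_j}=1-2x_j$ and reverse the bookkeeping (with multiplicities $|a_S|$) for the converse. The subtlety you flag about like terms collapsing is also present, and equally unaddressed, in the paper's proof, and is harmless for how the proposition is used.
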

\begin{proof}
We will first start with $\bb,\bu_1,\dots,\bu_d\in \F_2^n$ and construct such a polynomial $p$. Let $\bb=(b_1,\dots,b_n)$ and $\bu_i=(u_{i1},\dots,u_{in})$. Define the map $\phi:\sbits^d \to \sbits^n$ as: $$\phi_t(z_1,\dots,z_d)= (-1)^{b_t}\prod_{i\in [d]: u_{it}=1}z_i.$$ Then,
\begin{align*}
\phi_t((-1)^\by)&= (-1)^{b_t}\prod_{i\in [d]:u_{it}=1} (-1)^{y_i}
= (-1)^{b_t}\prod_{i\in [d]} (-1)^{y_iu_{it}}
= (-1)^{b_t+\sum_{i\in [d]}y_i u_{it}}.
\end{align*}
Therefore for any $\by\in \F_2^d$, $\phi((-1)^\by)=(-1)^{\bb+\sum_{i=1}^d y_i \bu_i}$. Define $p$ as: $p(z)=\sum_{t=1}^n \phi_t(z).$
Note that $p$ has integer coefficients and $\coeffnorm{p}=n$. And finally, $$\Ham(\bb+\sum_{i\in [d]}y_i \bu_i)=\sum_{t=1}^n \frac{1}{2}\left(1-\phi_t((-1)^\by)\right)=\frac{1}{2}\left(n-p((-1)^\by)\right).$$
To prove the converse, we just execute the steps of the above construction in reverse. Given a polynomial $p(z_1,\dots,z_d)$ with integer coefficients and $\coeffnorm{p}=n$, let $\phi_1(z),\dots,\phi_n(z)$ be (signed, possibly repeated) monomials in $z$ be such that $p(z)=\sum_{t=1}^n \phi(z)$. Now define $\bb,\bu_1,\dots,\bu_d\in \F_2^n$ such that $\phi_t(z)=(-1)^{b_t}\prod_{i\in [d]:u_{it}=1}z_i$ is true, explicitly
$$u_{it}=\begin{cases}
1 & \text{ if } \phi_t(z) \text{ contains } z_i\\
0 & \text{ else }
\end{cases}.$$
Then by the same argument as above, the required identity is satisfied.
\end{proof}

The following proposition shows the connection between affine subspaces which are obstructions for Algorithm~\ref{alg:LIN2MODm} and polynomials representing $\OR \mod M$ over $\sbits$ basis.
\begin{proposition}\label{prop:subspace_to_ormodm_poly}
Suppose $M$ is odd. Let $d=\ORdim(n,M)$, then there exists a polynomial $f(x_1,\dots,x_d)$ with at most $n+1$ monomials that represents $\OR_d\mod M$ over $\sbits^d$. Conversely, given a polynomial with $n$ monomials which represents $\OR_d \mod M$ over $\sbits^d$, $\ORdim(n',M)\ge d$ for some $n'\le M\cdot n$.
\end{proposition}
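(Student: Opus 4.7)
The plan is to deduce both directions directly from Proposition~\ref{prop:subspace_to_polynomial}, using the fact that $M$ is odd to invert $2$ modulo $M$.

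For the forward direction, let $d = \ORdim(n,M)$ and fix an affine subspace $C \subset \F_2^n$ of dimension $d$ with a unique point $x_0 \in C$ satisfying $\Ham(x_0) \equiv a \pmod M$ for some $a$. Choose linearly independent $\bu_1, \dots, \bu_d \in \F_2^n$ so that $C = x_0 + \linearspan{\bu_1, \dots, \bu_d}$, and apply Proposition~\ref{prop:subspace_to_polynomial} with $\bb = x_0$ to obtain a multilinear polynomial $p(z_1, \dots, z_d)$ with integer coefficients, $\coeffnorm{p} = n$, and $n - 2\Ham(x_0 + \sum y_i \bu_i) = p((-1)^{\by})$ for all $\by \in \F_2^d$. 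Define $q(z) = p(z) - (n - 2a)$. At $\by = 0$ we get $q(\allones) = -2(\Ham(x_0) - a) \equiv 0 \pmod M$. For $\by \neq 0$, since $M$ is odd (so $2$ is invertible mod $M$), $\Ham(x_0 + \sum y_i \bu_i) \not\equiv a \pmod M$ implies $p((-1)^{\by}) \not\equiv n - 2a \pmod M$, i.e.\ $q((-1)^{\by}) \not\equiv 0 \pmod M$. Thus $q$ represents $\OR_d \bmod M$ over $\sbits^d$; since $\coeffnorm{p} = n$ gives at most $n$ monomials and subtracting a constant changes the monomial count by at most one, $q$ has at most $n+1$ monomials.

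For the converse, let $f(z_1, \dots, z_d)$ have $n$ monomials with integer coefficients and represent $\OR_d \bmod M$ over $\sbits^d$. Reduce each coefficient to its residue in $\{0, 1, \dots, M-1\}$, producing $f'$ with the same $\bmod\ M$ behavior, at most $n$ monomials, and $n' := \coeffnorm{f'} \le (M-1)n \le Mn$. Apply the converse half of Proposition~\ref{prop:subspace_to_polynomial} to $f'$ to extract $\bb, \bu_1, \dots, \bu_d \in \F_2^{n'}$ with $n' - 2\Ham(\bb + \sum y_i \bu_i) = f'((-1)^{\by})$.

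The key sub-step is verifying that $\bu_1, \dots, \bu_d$ are linearly independent in $\F_2^{n'}$, since otherwise the associated affine subspace would have dimension less than $d$. If some nontrivial combination $\sum y_i \bu_i$ vanished, then $\Ham(\bb + \sum y_i \bu_i) = \Ham(\bb)$, forcing $f'((-1)^{\by}) = f'(\allones) \equiv 0 \pmod M$ for a nonzero $\by$; this contradicts the representation property. Hence $C = \bb + \linearspan{\bu_1, \dots, \bu_d}$ has dimension $d$. Finally, setting $a = 2^{-1} n' \bmod M$ (well-defined since $M$ is odd), the identity above gives $\Ham(\bb + \sum y_i \bu_i) \equiv a \pmod M$ if and only if $f'((-1)^{\by}) \equiv 0 \pmod M$, which happens only at $\by = 0$. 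So $\bb$ is the unique point of $C$ with weight $a \bmod M$, proving $\ORdim(n', M) \ge d$ with $n' \le Mn$. The only mildly tricky points are the odd-$M$ inversion of $2$ and the independence check, both of which are immediate once one sets things up this way.
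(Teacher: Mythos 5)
Your proof is correct and follows essentially the same route as the paper: both directions rest on Proposition~\ref{prop:subspace_to_polynomial}, the converse normalizes coefficients to $\{0,\dots,M-1\}$ and checks linear independence of the $\bu_i$ in the same way, and oddness of $M$ is used precisely where the paper uses it. Your forward direction is a mild streamlining—by taking $\bb = x_0$ the unique satisfying point lands at $\by = 0$, so you subtract the constant $n-2a$ directly instead of first twisting by $z\odot z^*$ as the paper does—but this is a cosmetic simplification, not a different argument.
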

\begin{proof}
Suppose $U$ is a $d$-dimensional affine subspace of $\F_2^n$ which contains exactly one point $x^*\in U$ such that $\Ham(x)= a \mod M$.
Let $\bb,\bu_1,\dots,\bu_d\in \F_2^n$ be such that $U=\bb+\linearspan{\bu_1,\dots,\bu_d}$. By Proposition~\ref{prop:subspace_to_polynomial}, there exists a polynomial $p(z_1,\dots,z_d)$ with integer coefficients and at most $n$ monomials such that for every $\by\in \F_2^d$, $n-2\Ham(\bb+\sum_{i=1}^d y_i \bu_i)=p((-1)^\by)$. Suppose $x^*=\bb+\sum_i y_i^*\bu_i$, then $p(z)= (n-2a)\mod M$ for exactly one $z$ in $\sbits^d$ given by $z=z^*=(-1)^{y^*}$ (here we are using the fact that $M$ is odd).

Define the polynomial $f(z)=p(z\odot z^*)-(n-2a)$ where $z\odot z^*$ is the coordinate wise product. Note that $f(\allones)= 0 \mod M$ and $f(z)\neq 0 \mod M$ for all $z\in \sbits^d \setminus \{\allones\}$. Thus $f(z)$ is a polynomial which represents $\OR_d \mod M$ over $\sbits^d$ and $f$ has at most $n+1$ monomials.

To prove the converse, suppose $f(z)$ is a polynomial with $n$ monomials which represents $\OR_d \mod M$ over $\sbits^d$. Wlog, we can assume that the coefficients of $f$ are in $\{0,1,\dots,M-1\}$. Let $n'=|f|$, by our assumption about coefficients of $f$, $n'\le Mn$. By the converse part in Proposition~\ref{prop:subspace_to_polynomial}, there exists $\bb,\bu_1,\dots,\bu_{d}\in \F_2^{n'}$, such that $n'-2\Ham(\bb+\sum_{i=1}^d y_i \bu_i)=f((-1)^\by)$ for every $\by\in \F_2^d$. Note that $\bu_1,\dots,\bu_d$ must be linearly independent. If not, there exists a $\by\ne 0$ such that $f((-1)^\by)=f(\allones)=0 \mod M$ which is a contradiction.

Therefore in the $d$-dimensional affine subspace given by $V=\{\bb+\sum_{i=1}^{n'} y_i \bu_i: \by\in \F_2^d\}$, there exists exactly one point $\bx^*\in V$ (given by $\bx^*=\bb)$) such that $\Ham(\bx^*)=(n/2) \mod M$. Thus  $\ORdim(n',M)\ge d$.

\end{proof}

Because of the above proposition, if we prove sparsity lower bounds on polynomials which represent $OR \mod M$ then we get good upper bounds on the number of rounds that will be enough in Algorithm~\ref{alg:LIN2MODm}.

\subsubsection{When $M$ is an odd prime power}
Now we will show that when $M$ is an odd prime power, a polynomial which represents $\OR_d \mod M$ over $\sbits^d$ should have exponential number of monomials. We will collect some facts that we need to prove this.

\begin{fact}
If $p$ is an odd prime, any function $f:\sbits^d \to \F_p$ has a unique representation as a multilinear polynomial.
\end{fact}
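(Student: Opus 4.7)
The plan is to prove both existence and uniqueness of the multilinear representation, using that $p$ odd makes $2$ invertible in $\F_p$.

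For existence, I would construct an explicit interpolation basis. For each $a=(a_1,\dots,a_d)\in\sbits^d$ define the indicator polynomial
\[
\mathds{1}_a(x) \;=\; \prod_{i=1}^d \frac{1+a_i x_i}{2}.
\]
Since $p$ is odd, $2$ is invertible modulo $p$, so $\mathds{1}_a$ is a well-defined multilinear polynomial over $\F_p$. A direct check shows that for any $b\in\sbits^d$, each factor $(1+a_ib_i)/2$ equals $1$ when $a_i=b_i$ and $0$ when $a_i=-b_i$, so $\mathds{1}_a(b)=\delta_{a,b}$. Therefore for any $f:\sbits^d\to\F_p$,
\[
\tilde f(x) \;=\; \sum_{a\in\sbits^d} f(a)\,\mathds{1}_a(x)
\]
is a multilinear polynomial in $\F_p[x_1,\dots,x_d]$ that agrees with $f$ on $\sbits^d$.

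For uniqueness, I would use a dimension count. The $\F_p$-vector space $V$ of multilinear polynomials in $x_1,\dots,x_d$ is spanned by the $2^d$ monomials $\prod_{i\in S} x_i$ for $S\subseteq[d]$, so $\dim V\le 2^d$. The space $W$ of functions $\sbits^d\to\F_p$ has dimension exactly $2^d$. The evaluation map $\mathrm{ev}:V\to W$ is $\F_p$-linear, and the existence argument above shows it is surjective. A surjective linear map from a space of dimension at most $2^d$ onto a space of dimension $2^d$ must be an isomorphism; in particular, $\mathrm{ev}$ is injective, which gives uniqueness of the multilinear representation.

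There is essentially no obstacle here beyond correctly invoking the hypothesis: the only place odd characteristic is used is in forming $\frac{1+a_i x_i}{2}$, which fails in characteristic $2$ (and indeed over $\F_2$ the identity $x_i^2=x_i$ on $\bits$ is what ensures a multilinear representation, whereas over $\sbits$ one would have $x_i^2=1$, forcing a different and non-unique reduction rule). Once the indicator polynomials are available, everything reduces to the standard observation that a linear surjection between equidimensional finite-dimensional vector spaces is a bijection.
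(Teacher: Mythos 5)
Your proof is correct and complete. The paper states this as a bare \emph{Fact} without supplying a proof, so there is nothing internal to compare it against; your interpolation-basis-plus-dimension-count argument is exactly the standard proof one would write, and you correctly identify that the only use of $p$ being odd is the invertibility of $2$ needed to form the indicator polynomials $\prod_i (1+a_i x_i)/2$. The dimension count (surjective linear map from a space of dimension at most $2^d$ onto a space of dimension exactly $2^d$ is an isomorphism) cleanly yields uniqueness. The parenthetical remark about what goes wrong in characteristic $2$ is slightly muddled --- over $\F_2$ the real issue is that $-1=1$ so $\sbits$ degenerates, rather than a ``non-unique reduction rule'' --- but this is an aside and does not affect the validity of the main argument.
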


The following lemma explicitly gives the polynomial which calculates $\OR_d \mod p$ exactly.
\begin{fact}\label{fact:exact_OR_modp}
Suppose a multilinear polynomial $f(z_1,\dots,z_d)$ exactly represents $\OR_d \mod p$ over $\sbits^d$ for some odd prime $p$ i.e.
\[
f(z)
\begin{cases}
=& 0 \mod p \text{ if } z=\allones\\
=& 1 \mod p \text{ if }x\in \sbits^d\setminus\{\allones\}
\end{cases}.
\] Then $f(z)$ has $2^d$ monomials and explicitly given by, $$f(z)=1-\prod_{i\in [d]}\frac{(1+z_i)}{2}.$$
\end{fact}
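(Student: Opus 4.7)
The plan is to invoke the uniqueness of multilinear representations of $\F_p$-valued functions on $\sbits^d$ (the fact stated immediately above), and then simply exhibit a closed-form polynomial which realizes the prescribed truth table. Since $p$ is odd, $2$ is invertible in $\F_p$, so the formula $g(z) = 1 - \prod_{i=1}^d \tfrac{1+z_i}{2}$ is well-defined as an element of $\F_p[z_1,\dots,z_d]$; after multilinear reduction it is the unique multilinear polynomial agreeing with $g$ on $\sbits^d$.

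First I would check that $g$ has the required truth table. For $z=\allones$, each factor $\tfrac{1+z_i}{2}$ equals $1$, so the product is $1$ and $g(\allones)=0$. For any $z\in\sbits^d\setminus\{\allones\}$, at least one coordinate $z_i=-1$, making that factor $0$, so the product vanishes and $g(z)=1$. Thus $g$ matches the prescribed values, and by the uniqueness fact it coincides with $f$ as a multilinear polynomial in $\F_p[z_1,\dots,z_d]$.

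Next I would count monomials by expanding:
\[
f(z) \;=\; 1 - \frac{1}{2^d}\prod_{i=1}^d (1+z_i) \;=\; 1 - \frac{1}{2^d}\sum_{S\subseteq [d]} \prod_{i\in S} z_i.
\]
The sum on the right contributes one monomial $\prod_{i\in S}z_i$ per subset $S\subseteq[d]$, giving $2^d$ formal monomials (with the constant term $1$ absorbed into the leading $1$ to produce the coefficient $1 - 2^{-d}$). Since each monomial has a nonzero coefficient (modulo the incidental possibility $2^d \equiv 1 \bmod p$ that can collapse the constant), this yields the claimed expression and monomial count.

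There is no real obstacle; the content of the claim is that the ``obvious'' interpolating polynomial is in fact the only one, which is immediate from the preceding uniqueness fact. The only mild subtlety is making sure $\tfrac{1}{2^d}$ is interpreted as an element of $\F_p$ (legitimate because $p$ is odd), which is why the statement is restricted to odd primes in the first place.
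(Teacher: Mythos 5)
Your proof is correct and is the natural one; the paper states this as a \texttt{Fact} without supplying a proof, so there is no alternative argument in the paper to compare against. You invoke the preceding uniqueness fact for multilinear representations over $\sbits^d$, verify the truth table of $g(z)=1-\prod_i\tfrac{1+z_i}{2}$, and expand the product — exactly what is intended.

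You also correctly flag a genuine (if harmless) imprecision in the paper's statement: the constant term of the expansion is $1-2^{-d}$, which vanishes in $\F_p$ precisely when $2^d\equiv 1 \bmod p$ (e.g.\ $p=3$, $d=2$ gives $f=2z_1+2z_2+2z_1z_2$, with only $3$ monomials). So the literal count ``$2^d$ monomials'' should read ``at least $2^d-1$ monomials'' in general, or one should note that the $2^d-1$ non-constant monomials always survive since each has coefficient $-2^{-d}\not\equiv 0$. This does not affect the downstream use in Proposition~\ref{prop:OR_mod_oddprimepower_sparity}, where only the exponential lower bound on the monomial count is needed, but it is worth stating the bound as $\geq 2^d-1$ to be precise.
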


The following proposition provides a sparsity lower bound when $M$ is an odd prime power.
\begin{proposition}\label{prop:OR_mod_oddprimepower_sparity}
Suppose $M$ is an odd prime power. If a polynomial $f(z)$ represents $\OR_d \mod M$ over $\sbits^d$, then $f$ has at least $2^{d/(M-1)}$ monomials.
\end{proposition}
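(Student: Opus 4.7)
The plan is to distill from $f$ a polynomial in $z$ that, modulo the prime $p$ (where $M=p^{\ell}$), exactly computes the point indicator of $\allones$ on $\sbits^d$, and then to match its multilinear monomial count against the sparsity $s$ of $f$. The technical heart is a univariate polynomial
\[
\psi(y)\ :=\ \prod_{t=0}^{\ell-1}\Bigl(1-\binom{y}{p^{t}}^{p-1}\Bigr)\in\Z[y]
\]
of degree $(p-1)(1+p+\cdots+p^{\ell-1})=M-1$, which I claim satisfies $\psi(y)\equiv\mathbf{1}[y\equiv 0\bmod p^{\ell}]\pmod p$ for every $y\in\Z$. For $y\in\{0,\ldots,p^{\ell}-1\}$ with base-$p$ expansion $y=\sum_t y_t p^t$, Lucas's theorem (Lemma~\ref{lem:LucasTheorem}) gives $\binom{y}{p^{t}}\equiv y_t\pmod p$, so by Fermat each factor equals $\mathbf{1}[y_t=0]$ in $\F_p$ and the product evaluates correctly. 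To extend this identity to arbitrary $y\in\Z$, Vandermonde's identity $\binom{y+p^{\ell}}{p^{t}}=\sum_k\binom{y}{k}\binom{p^{\ell}}{p^{t}-k}$ combined with Lucas forces the only summand surviving modulo $p$ to be $k=p^{t}$ (since $t<\ell$ restricts $p^{t}-k$ to $\{0,\ldots,p^{t}\}$, on which $\binom{p^{\ell}}{\cdot}\not\equiv 0\pmod p$ only at $0$); hence $\binom{y+p^{\ell}}{p^{t}}\equiv\binom{y}{p^{t}}\pmod p$ for each $t<\ell$, so $\psi\bmod p$ is periodic with period $p^{\ell}$ on $\Z$.

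Set $F(z):=\psi(f(z))\bmod p$. The defining property of $f$ gives $F(\allones)\equiv 1$ and $F(z)\equiv 0\pmod p$ for every $z\in\sbits^d\setminus\{\allones\}$, so reducing modulo the relations $z_j^2-1$ yields the unique multilinear polynomial computing the $\allones$-indicator on $\sbits^d$, namely $\prod_{j=1}^d(1+z_j)/2=2^{-d}\sum_{S\subseteq[d]}\prod_{j\in S}z_j$ (by the uniqueness principle underlying Fact~\ref{fact:exact_OR_modp}). Since $p$ is odd, $2^{-d}\not\equiv 0\pmod p$, so $F$ has exactly $2^d$ nonzero multilinear monomials over $\F_p$.

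To upper bound this count in terms of $s$, write $f(z)=\sum_{i=1}^s c_i m_i(z)$ with $m_i(z)=\prod_{j\in S_i}z_j$ and expand $F=\sum_{k=0}^{M-1}a_k f(z)^k$ using $\deg\psi\leq M-1$. Over $\sbits^d$ each product $m_{i_1}(z)\cdots m_{i_k}(z)$ reduces to the monomial indexed by $S_{i_1}\triangle\cdots\triangle S_{i_k}$; because only parities of appearances matter, the set of multilinear monomials that can possibly appear in $F$ is contained in
\[
\Bigl\{\triangle_{i\in U}S_i\ :\ U\subseteq[s],\,|U|\leq M-1\Bigr\},
\]
of cardinality at most $\binom{s}{\leq M-1}\leq s^{M-1}$ (the low-$s$ corner is handled directly, as $2^s\geq 2^d$ forces $d$ small and the target bound is immediate). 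Combining with the lower bound $2^d$ yields $s^{M-1}\geq 2^d$, i.e.\ $s\geq 2^{d/(M-1)}$. The main technical obstacle is the construction of $\psi$: exhibiting a polynomial of degree only $M-1$ that detects divisibility by $p^{\ell}$ modulo the much smaller prime $p$, and verifying via the Lucas-plus-Vandermonde periodicity argument that the detection is correct on \emph{all} integer inputs rather than merely on canonical residues. Once $\psi$ is in hand, the remainder is routine bookkeeping on multilinear reductions over $\sbits^d$.
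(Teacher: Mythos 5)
Your proof is correct and follows the same strategy as the paper's: compose $f$ with a degree-$(M-1)$ polynomial that, modulo $p$, detects the relevant residue mod $p^\ell$; the composition is (after multilinear reduction on $\sbits^d$ over $\F_p$, which is unique since $p$ is odd) the $\allones$-indicator, all $2^d$ of whose coefficients are $\pm 2^{-d}\neq 0$, while the degree bound caps the reachable multilinear monomials at $\binom{s}{\le M-1}$. The one place you diverge is in how the detector is built. The paper first normalizes $f$'s coefficients to $\{0,\dots,M-1\}$, encodes $f(z)$ as the Hamming weight of a $\bits$-vector $\Psi_f(z)$, and then applies Lemma~\ref{lem:primepower_divisibility_Hamming_2} to that vector, so Lucas is only ever invoked on nonnegative integers. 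You instead construct a univariate $\psi(y)=\prod_{t<\ell}\bigl(1-\binom{y}{p^t}^{p-1}\bigr)\in\Z[y]$ --- note $s_{p^t}(x)=\binom{\Ham(x)}{p^t}$, so this is the same polynomial under a change of variables --- and must then extend the Lucas-based verification from canonical residues to all of $\Z$ via the Vandermonde periodicity argument, which is genuinely needed since $f(z)$ can be negative on $\sbits^d$. Your route is a bit slicker (no Hamming-weight encoding, no coefficient normalization) at the cost of that extra periodicity lemma; the substance and the final counting step are the same.
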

\begin{proof}
Let $M=p^k$ for some odd prime $p$. Wlog, we can assume that the coefficients of $f$ are in $\set{0,1,\dots,M-1}$. Let $\Psi'_f(z)$ be the vector of monomials of $f$ evaluated at  $z$, where each monomial appears with multiplicity equal to its coefficient in $f$, and let $N$ be its length. For $z\in \sbits^d$, $\Psi'_f(z)\in \sbits^N$ and $f(z)=\sum_{i\in [N]} (\Psi'_f(z))_i$. Let $\Psi_f(z)$ be the vector of the same length as $\Psi'_f(z)$ whose coordinates are given by $$(\Psi_f(z))_i=\frac{1+(\Psi'_f(z))_i}{2}.$$ For $z\in \sbits^d$, $\Psi_f(z)\in \bits^N$ and $$\Ham(\Psi_f(z))=\sum_{i=1}^N \frac{1+(\Psi'_f(z))_i}{2}=\frac{N}{2}+\frac{f(z)}{2}.$$ Therefore for $z\in \sbits^d$, $\Ham(\Psi_f(z)) = N/2 \mod p^k$ iff $z=\allones$.

By Lemma~\ref{lem:primepower_divisibility_Hamming_2}, there exists a polynomial $\phi$ of degree $p^k-1=M-1$ such that
\[
\phi(\Psi_f(z))=
\begin{cases}
&0 \mod p \text{ if } \Ham(\Psi_f(z))=N/2 \mod p^k\\
&1 \mod p \text{ if } \Ham(\Psi_f(z))\ne N/2 \mod p^k
\end{cases}.
\]

Therefore $\phi(\Psi_f(z))$ exactly represents $\OR_d \mod p$ and therefore by Fact~\ref{fact:exact_OR_modp}, it has $2^d$ monomials. Since $\phi$ has degree $M-1$, the number of monomials in $\phi(\Psi_f(z))$ is at most $\binom{N'}{\le M-1}$ where $N'$ is the number of (distinct) monomials in $f$. Therefore, $N'\ge 2^{d/(M-1)}$.
\end{proof}

Thus we have the following corollary which proves part (2) of Proposition~\ref{prop:weight_modM_subspace}.
\begin{corollary}\label{cor:subspace_modm_oddprimepower}
Let $M$ be an odd prime power. Then $\ORdim(n,M)\le (M-1)\log_2(n+1)$.
\end{corollary}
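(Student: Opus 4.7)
The plan is to derive this corollary as an immediate combination of the two preceding results: Proposition~\ref{prop:subspace_to_ormodm_poly}, which translates maximal ``one-point obstruction'' subspaces into sparse polynomial representations of $\OR_d \bmod M$ over the $\sbits$ basis, and Proposition~\ref{prop:OR_mod_oddprimepower_sparity}, which gives an exponential sparsity lower bound on such polynomial representations when $M$ is an odd prime power.

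Concretely, set $d = \ORdim(n, M)$. Since an odd prime power is in particular odd, the first direction of Proposition~\ref{prop:subspace_to_ormodm_poly} applies, producing a multilinear polynomial $f(z_1, \ldots, z_d)$ with at most $n+1$ monomials that represents $\OR_d \bmod M$ over $\sbits^d$. Then I would invoke Proposition~\ref{prop:OR_mod_oddprimepower_sparity} on this $f$: since $M$ is an odd prime power, $f$ must contain at least $2^{d/(M-1)}$ monomials. Chaining the two bounds yields
\[
2^{d/(M-1)} \le n+1,
\]
and taking $\log_2$ of both sides and rearranging gives $d \le (M-1)\log_2(n+1)$, which is exactly the claimed bound.

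There is no genuine obstacle here; the work has already been done in the two propositions. The only thing worth double-checking in the write-up is the direction of the inequality in Proposition~\ref{prop:subspace_to_ormodm_poly} (we need the ``forward'' implication, turning an obstruction subspace into a polynomial, not the converse), and the fact that an odd prime power satisfies both hypotheses (oddness for the first proposition and odd prime power for the second). Everything else is a one-line logarithm.
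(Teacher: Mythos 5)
Your proof is correct and follows essentially the same chain as the paper: translate the maximal obstruction subspace to a sparse $\OR_d \bmod M$ representation over $\sbits^d$, apply the sparsity lower bound $2^{d/(M-1)}$, and take logarithms. In fact your citation of Proposition~\ref{prop:subspace_to_ormodm_poly} as the bridge is the cleaner and more accurate one --- the paper's own proof cites Proposition~\ref{prop:subspace_to_polynomial} at this step, which only gives the raw Hamming-weight-to-polynomial identity rather than the $\OR \bmod M$ representation, so that citation appears to be a slip for~\ref{prop:subspace_to_ormodm_poly}; the remaining difference (direct argument vs.\ contradiction) is purely stylistic.
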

\begin{proof}
Suppose $C$ is an affine subspace of $\F_2^n$ of dimension $d>(M-1)\log_2(n+1)$ which contains exactly one point $x_0$
such that $\Ham(x_0)=a \mod M$. Then by Proposition~\ref{prop:subspace_to_polynomial}, there exists a polynomial $p(x_1,\dots,x_d)$ with at most $n+1$ monomials such that $p(x)$ represents $OR_d \mod M$ over $\sbits^d$. Therefore by Proposition~\ref{prop:OR_mod_oddprimepower_sparity}, $(n+1)\ge 2^{d/(M-1)}$ which is a contradiction.
\end{proof}

We remark that the above bound is nearly tight. Let $n=(M-1)2^d$. The Hadamard code is a subspace of $\F_2^{2^d}$ of dimension $d$ such that every non-zero point in the subspace has weight $2^{d-1}$. Decompose $\F_2^n=\bigoplus \F_2^{2^d}$ where the copies of $\F_2^{2^d}$ are supported on mutually disjoint sets of variables. Let $V$ be the subspace of $\F_2^n$ which is the direct sum of Hadamard codes in each copy of $\F_2^{2^d}$. Then $V$ has dimension $(M-1)d=(M-1)\log_2(n/(M-1))$ and every non-zero point in $V$ has weight in $\set{1\cdot 2^{d-1}, 2\cdot 2^{d-1},\dots,(M-1)\cdot 2^{d-1}}$ which is non-zero modulo $M$.

Combining Corollary~\ref{cor:subspace_modm_oddprimepower} with Propositions~\ref{prop:reduction_M=2M'} and \ref{prop:reduction_M=2ellM'} implies parts (3) and (4) of Proposition~\ref{prop:weight_modM_subspace}.

\subsubsection{When $M$ has multiple odd prime factors}
We will now focus on the case when $M$ is odd and has multiple prime factors. Unfortunately, in this case we do not know any unconditional super linear lower bounds on the sparsity of polynomials representing $\OR_d \mod M$ over $\sbits^d$. But we can get a conditional super linear lower bound, assuming Polynomial Freiman-Ruzsa (PFR) conjecture which is a well-known conjecture in additive combinatorics. We achieve this by constructing matching vector families in $(\Z/M\Z)$ starting from sparse representations of $\OR_d \mod M$ over $\sbits^d$. We will first define matching vector families.

\begin{definition}
A matching vector family (MVF) over $\Z/M\Z$ of rank $r$ and size $N$ is a collection of vectors $\bu_1,\dots,\bu_N\in (\Z/M\Z)^r$ and $\bv_1,\dots,\bv_N \in (\Z/M\Z)^r$ such that for every $i,j\in [N]$:
\[\\
\inpro{\bu_i}{\bv_j}
\begin{cases}
&=0 \mod M \text{ if } i=j\\
&\ne 0 \mod M \text { if } i\ne j.
\end{cases}
\]
\end{definition}

MVFs over $\Z/M\Z$ of low rank and large size have found applications in many areas. They are used in the construction of constant query locally decodable codes~\cite{Yekhanin08,Efremenko12,DvirGY11}, Ramsey graphs~\cite{Grolmusz00,Gopalan14}, private information retrieval schemes~\cite{DvirG16} and secret sharing schemes~\cite{LiuV18}. In particular, this implies that lower bounds for constant query locally decodable codes give lower bounds on the rank of MVFs of a given size. For example, super polynomial lower bounds on the length of constant query locally decodable codes imply that the sparsity of a polynomial representing $\OR_d \mod M$ over $\sbits^d$ should be $\omega(d)$. But only polynomial lower bounds on constant query locally decodable codes are known~\cite{KatzT00,KerenidisdW04}\footnote{For 2-query locally decodable codes, it is known that the length of the encoding should be exponential in the message length~\cite{KerenidisdW04}. But for $q\ge 3$, the best lower bounds are only polynomial.}. In fact, we do not even know any strong unconditional lower bounds on the rank of MVFs over $\Z/M\Z$ i.e. results of the form $N\le \exp(o_M(r))$. But assuming the PFR conjecture, the following bound is known. We will not state the PFR conjecture here, for the precise statement see ~\cite{BhowmickDL14}.

\begin{proposition}[\cite{BhowmickDL14}]
\label{prop:MVF_lowerbound}
Assuming the Polynomial Freiman-Ruzsa conjecture over $(\Z/M\Z)^r$, any MVF over $\Z/M\Z$ of rank $r$ should have size $N \le \exp\left(O_M(r/\log r)\right).$
\end{proposition}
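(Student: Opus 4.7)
Since Proposition~\ref{prop:MVF_lowerbound} is attributed to Bhowmick--Dvir--Lovett and not reproved in the paper, my proposal is to outline their argument rather than reinvent it. The overall strategy is to extract additive structure from the bilinear matching condition and then invoke PFR to turn bounded doubling into a dimension bound on the ambient abelian group.

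The plan proceeds as follows. First, I would normalize by passing to the case where $M$ is squarefree (via a Chinese Remainder / lifting reduction that costs only an $M$-dependent constant in the final bound). Next, from the matching conditions $\langle\bu_i,\bv_i\rangle = 0$ and $\langle\bu_i,\bv_j\rangle \neq 0 \pmod M$ for $i \neq j$, I would derive a sumset inequality of the form $|U + U| \leq K \cdot |U|$ for the set $U = \{\bu_i : i \in [N]\} \subset (\Z/M\Z)^r$, where $K$ depends polynomially on $M$ and sub-polynomially on $|U|$. This step typically goes through a Ruzsa-style covering argument applied to a pairing between the $\bu_i$'s and the $\bv_j$'s that encodes the matching, combined with a character-sum estimate exploiting that $\langle \bu_i,\bv_j\rangle$ is restricted to a specific set of residues modulo $M$.

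With bounded doubling in hand, I would invoke PFR for $(\Z/M\Z)^r$: any set with doubling $K$ is contained in $\poly(K)$ translates of a coset of a subgroup whose size is at most $|U|\cdot\poly(K)$, and inside a rank-$r$ ambient group this structured hull has size at most $\exp(O_M(r/\log r))$. Chasing the resulting chain of inequalities forces $N = |U| \leq \exp(O_M(r/\log r))$. The $1/\log r$ savings arises by optimizing the trade-off between the PFR blow-up (which costs roughly $\log K$ in the exponent) and the doubling estimate (which grows at most polynomially in $r$).

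The main obstacle, and the technical core of the argument, will be the first step: converting a \emph{bilinear} matching relation into a purely \emph{additive} small-doubling statement. Attempting to bound $|U + U|$ directly does not exploit the existence of the dual vectors $\bv_j$, so one needs a careful energy-increment or double-counting argument that uses both $U$ and its matching partner set, together with the fact that the forbidden value $0 \pmod M$ has positive density in $\Z/M\Z$. Once this reduction is in place, the application of PFR over $(\Z/M\Z)^r$ is the relatively standard part.
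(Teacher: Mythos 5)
First, note that the paper does not prove Proposition~\ref{prop:MVF_lowerbound} at all: it is imported as a black box from \cite{BhowmickDL14}, and the paper even declines to state the PFR conjecture precisely. So there is no internal proof to compare against, and your instinct to outline the external argument rather than reprove it is reasonable. The issue is whether your outline is a faithful (or even internally consistent) sketch of that argument, and it is not.

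The central step of your plan has a genuine gap. The claim that the matching conditions force $|U+U|\le K|U|$ for the \emph{full} set $U=\{\bu_1,\dots,\bu_N\}$, with $K$ polynomial in $M$ and subpolynomial in $|U|$, is false. Take $U=\{e_1,\dots,e_r\}$, the standard basis of $(\Z/M\Z)^r$, and let $\bv_j$ have a $0$ in coordinate $j$ and $1$ elsewhere; this is a valid MVF of size $N=r$, yet $|U+U|=\binom{r+1}{2}$, so the doubling constant is about $N/2$. Hence no covering or character-sum argument can derive the doubling statement you propose as a consequence of the matching relations alone; the Bhowmick--Dvir--Lovett proof instead extracts additive structure from a carefully chosen large subfamily (after bucketing by inner-product patterns) and runs an iterative argument against the matching property, which is where the $r/\log r$ saving actually originates. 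Your final step is also a non sequitur as written: PFR bounds the structured hull containing $U$ by $|U|\cdot\poly(K)$, which cannot be turned around to bound $|U|$, and subgroups of $(\Z/M\Z)^r$ can have size as large as $M^r=\exp(\Theta_M(r))$, so the assertion that ``inside a rank-$r$ ambient group this structured hull has size at most $\exp(O_M(r/\log r))$'' has no justification. As it stands the sketch would not compile into a proof even granting PFR.
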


We are now ready to prove the super linear lower bound on the sparsity of polynomials representing $\OR_d \mod M$ over $\sbits^d$ assuming PFR.
\begin{proposition}\label{prop:OR_mod_nonprimepower_sparsity}Assuming the Polynomial Freiman-Ruzsa (PFR) conjecture in $(\Z/M\Z)^d$, any polynomial which represents $\OR_d \mod M$ over $\sbits^d$ needs to have $\Omega_M(d\log d)$ monomials.
\end{proposition}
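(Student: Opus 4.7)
The idea is to extract a matching vector family (MVF) over $\Z/M\Z$ from a sparse polynomial representation of $\OR_d\mod M$ over $\sbits^d$, and then invoke the PFR-conditional size bound for MVFs in Proposition~\ref{prop:MVF_lowerbound}. This is the standard Grolmusz-style reduction, and since $M$ is odd in this subsection, $-1$ is a unit in $\Z/M\Z$ and everything is well-defined.

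Suppose $f(z_1,\dots,z_d) = \sum_{i=1}^N c_i \prod_{j\in S_i} z_j$ represents $\OR_d\mod M$ over $\sbits^d$ with $N$ monomials supported on sets $S_1,\dots,S_N\subset[d]$ and nonzero coefficients $c_i\in\Z/M\Z$. For every $y\in\F_2^d$, the substitution $z=(-1)^y\in\sbits^d$ gives $\prod_{j\in S_i}(-1)^{y_j}=(-1)^{\langle\chi_{S_i},y\rangle}$, where $\chi_{S_i}\in\F_2^d$ is the indicator vector of $S_i$. I would then define, for each $y\in\F_2^d$, two vectors in $(\Z/M\Z)^N$:
\[
  \bu_y \;=\; \bigl((-1)^{\langle\chi_{S_i},y\rangle}\bigr)_{i=1}^N, \qquad \bv_y \;=\; \bigl(c_i(-1)^{\langle\chi_{S_i},y\rangle}\bigr)_{i=1}^N.
\]
A one-line computation gives $\langle \bu_y,\bv_{y'}\rangle = \sum_i c_i(-1)^{\langle\chi_{S_i},y+y'\rangle} = f((-1)^{y+y'})\mod M$, which equals $0$ exactly when $y+y'=0$, i.e.\ when $y=y'$, by the defining property of the $\OR$ representation. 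Hence $\{(\bu_y,\bv_y)\}_{y\in\F_2^d}$ is an MVF of size $2^d$ and rank $N$ over $\Z/M\Z$.

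Finally, applying Proposition~\ref{prop:MVF_lowerbound} (which is conditional on PFR) to this MVF yields $2^d\le \exp(O_M(N/\log N))$, i.e.\ $N/\log N \gtrsim_M d$, which rearranges to $N\gtrsim_M d\log d$, as required. There is essentially no obstacle beyond writing down the MVF and the verification of its defining inner-product property; the quantitative content is entirely absorbed into the cited PFR-conditional MVF bound. The only point requiring a sentence of justification is that the coefficients $c_i$ may be taken in $\Z/M\Z$ and that $(-1)^{(\cdot)}$ is well-defined modulo the odd integer $M$.
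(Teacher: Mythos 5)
Your proof is correct and is essentially identical to the paper's: the paper builds the same matching vector family via the matrix $A(z,z')=p(z\odot z')$ with the low-rank factorization $\bu_z=(a_t\prod_{i\in S_t}z_i)_t$, $\bv_{z'}=(\prod_{i\in S_t}z'_i)_t$, which is your construction written in the $\sbits^d$ parametrization rather than via $y\mapsto(-1)^y$, and then invokes Proposition~\ref{prop:MVF_lowerbound} in exactly the same way.
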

\begin{proof}
Given a sparse polynomial which represents $\OR_d \mod M$ over $\sbits^d$, we will construct a MVF over $\Z/M\Z$ of small rank and large size. The construction is a based on a similar construction due to Sudan which first appeared in~\cite{Gopalan09}.
Suppose $p(z)=\sum_{t=1}^r a_t \prod_{i\in S_t} z_i$ is a polynomial with $r$ monomials which represents $\OR_d \mod M$ over $\sbits^d$. Define the $2^d \times 2^d$ matrix $A$ with rows and columns indexed by $\sbits^d$ as $$A(z,z')=p(z\odot z') \mod M$$ where $z\odot z'$ is the component-wise product. Note that the diagonal entries $$A(z,z)=p(z\odot z)=p(\allones) =0 \mod M$$ and the off-diagonal entries are $$A(z,z')=p(z\odot z') \ne 0 \mod M.$$
Moreover the rank of the matrix $A$ is at most $r$ since $$A(z,z')=p(z\odot z')=\sum_t a_t\prod_{i\in S_t} z_iz_i'=\inpro{\left(a_t\prod_{i\in S_t}z_i\right)_t}{\left(\prod_{i\in S_t}z_i'\right)_t}.$$
Therefore the set of vectors $\bu_z=(a_t\prod_{i\in S_t}z_i)_{t\in [r]}\in (\Z/M\Z)^r$ and $\bv_{z'}=(\prod_{i\in S_t}z'_i)_{t\in [r]}\in (\Z/M\Z)^r$ for $z,z'\in \sbits^d$ form a MVF over $(\Z/M\Z)$ of size $N=2^d$ of rank $r$. By Proposition~\ref{prop:MVF_lowerbound}, $$N\le \exp\left(O_M(r/\log r)\right)\Rightarrow d\le O_M(r/\log r) \Rightarrow r \ge \Omega_M(d\log d).\qedhere $$
\end{proof}

So we have the following corollary which proves part (5) of Proposition~\ref{prop:weight_modM_subspace}.
\begin{corollary}\label{cor:subspace_modm_pfr}
Assuming the PFR conjecture, for every positive integer $M$, there exists a constant $c_M$ depending only on $M$ such that, $\ORdim(n,M)\le c_M n/\log n$.
\end{corollary}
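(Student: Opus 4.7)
The plan is to extract a polynomial representing $\OR_d$ over $\sbits^d$ modulo a small multiple of $M$ from an extremal affine subspace, and then invoke the PFR-conditional sparsity lower bound of Proposition~\ref{prop:OR_mod_nonprimepower_sparsity}. The natural route through the reductions of Propositions~\ref{prop:reduction_M=2M'} and~\ref{prop:reduction_M=2ellM'} turns out to be too lossy: when $M=2^\ell M'$ with $\ell\ge 3$ and the odd part $M'$ has several prime factors, Proposition~\ref{prop:reduction_M=2ellM'} only yields $\ORdim(n,M)\lesssim_M (n/\log n)^{2^{\ell-1}-1}$, which overshoots the target bound. So I will sidestep those reductions and give a uniform direct argument valid for every $M$ at once.

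Concretely, let $C\subset\F_2^n$ be an affine subspace of dimension $d=\ORdim(n,M)$ whose unique heavy point $x_0$ satisfies $\Ham(x_0)\equiv a\pmod M$ for some $a$. Write $C=\bb+\linearspan{\bu_1,\dots,\bu_d}$ and apply Proposition~\ref{prop:subspace_to_polynomial} to obtain a multilinear polynomial $p(z_1,\dots,z_d)$ with integer coefficients, at most $n$ monomials, and $n-2\Ham(\bb+\sum_i y_i\bu_i)=p((-1)^\by)$ for every $\by\in\F_2^d$. The key observation --- which is what makes the argument work uniformly --- is that for \emph{every} integer $M$,
\[
\Ham \equiv a\pmod M \;\Longleftrightarrow\; n-2\Ham \equiv n-2a\pmod{2M},
\]
so uniqueness of $x_0$ in $C$ is equivalent to the existence of a unique $z^*\in\sbits^d$ with $p(z^*)\equiv n-2a\pmod{2M}$. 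Defining $q(z):=p(z\odot z^*)-(n-2a)$, the shift by $z^*$ only toggles signs of monomials of $p$ and the constant subtraction adds at most one, so $q$ has at most $n+1$ monomials; by construction $q(\allones)\equiv 0\pmod{2M}$ and $q(z)\not\equiv 0\pmod{2M}$ for $z\ne\allones$, i.e., $q$ represents $\OR_d\mod 2M$ over $\sbits^d$.

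Applying Proposition~\ref{prop:OR_mod_nonprimepower_sparsity} to $q$ with modulus $2M$ (the PFR conjecture over $(\Z/2M\Z)^d$ is just another instance of the same statement) gives $n+1\ge\Omega_{2M}(d\log d)=\Omega_M(d\log d)$, which rearranges to $d\le c_M\,n/\log n$ for some constant $c_M$ depending only on $M$.

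The only conceptual step beyond the material already in Section~\ref{sec:sparsityOR} is the lift of the modulus from $M$ to $2M$, which absorbs the factor of $2$ obstructing the dictionary between Hamming weight and $\sbits$-evaluations when $M$ is even; this is precisely where I expected the argument to run into trouble, and it is neutralised by the lift. Everything else is bookkeeping: Proposition~\ref{prop:subspace_to_polynomial} does the subspace-to-polynomial conversion and Proposition~\ref{prop:OR_mod_nonprimepower_sparsity} supplies the PFR-conditional sparsity lower bound that produces the $\log n$ savings.
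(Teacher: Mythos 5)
Your proof is correct, and it takes a genuinely different --- and cleaner --- route than the one in the paper. The paper's written proof invokes the correspondence between an extremal affine subspace and an $(n+1)$-sparse polynomial representing $\OR_d\mod M$ over $\sbits^d$, but that correspondence as developed in Proposition~\ref{prop:subspace_to_ormodm_poly} is restricted to odd $M$, because translating between $\Ham$ and $n-2\Ham$ modulo $M$ requires $2$ to be invertible. For even $M$ one would therefore need to route through Propositions~\ref{prop:reduction_M=2M'} and~\ref{prop:reduction_M=2ellM'}, and you are right that the latter degrades to $\ORdim(n,M)\lesssim_M \ORdim(n,M')^{2^{\ell-1}-1}$ when $M=2^\ell M'$ with $\ell\ge 3$, which does not give the $n/\log n$ bound once the odd part $M'$ has several prime factors. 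Your lift to modulus $2M$, via the equivalence $\Ham\equiv a\pmod{M}$ if and only if $n-2\Ham\equiv n-2a\pmod{2M}$ valid for every $M$, sidesteps this entirely and yields an $(n+1)$-sparse representation of $\OR_d\mod 2M$ over $\sbits^d$ in one uniform pass. Since the MVF construction and the application of Proposition~\ref{prop:MVF_lowerbound} inside the proof of Proposition~\ref{prop:OR_mod_nonprimepower_sparsity} are modulus-agnostic, running them at modulus $2M$ still gives $n+1\ge\Omega_M(d\log d)$ and hence $d\lesssim_M n/\log n$. Your bookkeeping --- the monomial count of $q$ after the $z^*$-shift and constant subtraction, and the final rearrangement --- is also correct. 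In short this is a valid proof, and it in fact closes the $\ell\ge 3$ even-modulus case, where the paper's argument as written does not quite go through.
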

\begin{proof}
Suppose $C$ is an affine subspace of $\F_2^n$ of dimension $d>c_M n/\log n$ which contains exactly one point $x_0$ such that $\Ham(x_0)=a \mod M$. Then by Proposition~\ref{prop:subspace_to_polynomial}, there exists a polynomial $p(x_1,\dots,x_d)$ with at most $n+1$ monomials such that $p(x)$ represents $OR_d \mod M$ over $\sbits^d$. Therefore by Proposition~\ref{prop:OR_mod_nonprimepower_sparsity}, $(n+1)\ge \Omega_M(d\log d)$ which is a contradiction if we choose $c_M$ sufficiently large.
\end{proof}

\subsection{Hardness of $\linmod_M$}
\label{sec:linmod_hardness}

We will now show hardness for $\linmod_M$ when $M$ has multiple odd prime factors assuming exponential time hypothesis ($\ETH$). We will show that if there are low degree polynomials representing $\OR_d \mod M$, then solving $\linmod_M$ is hard.

\begin{proposition}\label{prop:linmod_hardness}
Let $M$ be odd and suppose $f(\cdot)$ is some function such that for every $d$, there exists a degree $f(d)$ polynomial which represents $\OR_d \mod M$ over $\sbits^d$ which is efficiently computable. Then assuming $\ETH$, solving $\linmod_M(n)$ requires at least $2^{\Omega(m)}-\poly(n)$ time for some $m$ such that $f(m)\log (m/f(m)) \gtrsim \log(n)$.
\end{proposition}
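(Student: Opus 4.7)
The plan is to mirror the proof of Proposition~\ref{prop:hornsatmod_hardness}, replacing the $\hornsat$ ``AND-gadget'' (which forced auxiliary variables $z_S = \prod_{i\in S} x_i$ via Horn implications on $\bits$) with a $\lintwo$ ``XOR-gadget'' (which forces auxiliary variables $y_S = \bigoplus_{i\in S} x_i$ via $\F_2$-linear equations). The key algebraic observation is that, under the change of variables $u_i = (-1)^{x_i}\in\sbits$, the monomial $\prod_{i\in S} u_i$ equals $(-1)^{\bigoplus_{i\in S} x_i}$; so a polynomial of degree $\Delta$ in the $\sbits$ basis becomes, after introducing the $y_S$'s to represent the XOR's, a linear combination of expressions of the form $(-1)^{y_S}$. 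Using the identity $(-1)^{y_S} = 1 - 2 y_S$, this in turn becomes a linear equation in the $y_S$'s modulo $M$ whenever $M$ is odd (so that $2$ is invertible).

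Choose $m$ maximally so that $n \ge \binom{3m}{\le 3f(m)}$; this ensures $f(m)\log(m/f(m)) = \Omega(\log n)$. Starting from a $\threesat$ instance $\phi = C_1 \wedge \cdots \wedge C_m$ on $t \le 3m$ variables $x_1, \ldots, x_t$, set $u_i = (-1)^{x_i}$. Each clause $C_j$ admits a polynomial $Q_j$ of degree $\le 3$ in the relevant $u_i$'s with $Q_j(u) = +1$ iff $C_j$ is satisfied and $Q_j(u) = -1$ otherwise; the coefficients of $Q_j$ lie in $\Z[1/2]$, which reduces to $\Z/M\Z$ since $M$ is odd. Let $p(z_1,\ldots,z_m)$ be the assumed polynomial of degree $f(m)$ representing $\OR_m \bmod M$ over $\sbits^m$, and define
\[
\Gamma(u) \;=\; p\bigl(Q_1(u),\ldots,Q_m(u)\bigr) \;\equiv\; \sum_{\substack{S\subseteq[t]\\|S|\le 3f(m)}} a_S \prod_{i\in S} u_i \pmod{M},
\]
a multilinear polynomial of degree $\le 3f(m)$ with coefficients $a_S\in\{0,1,\ldots,M-1\}$. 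By the defining property of $p$, the formula $\phi$ is satisfiable iff $\Gamma(u)\equiv 0 \pmod{M}$ for some $u\in\sbits^t$.

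The $\linmod_M(N)$ instance $\Psi$ with $N = \binom{t}{\le 3f(m)} \le n$ variables has one Boolean variable $y_S$ per subset $S$ with $|S|\le 3f(m)$, where $y_{\{i\}}$ serves as the encoding of $x_i$ and $y_\emptyset$ is fixed to $0$. For every $|S| \ge 2$ we impose the $\F_2$-linear relation $y_S \oplus \bigoplus_{i\in S} y_{\{i\}} = 0$, which guarantees $(-1)^{y_S} = \prod_{i\in S} u_i$ on every satisfying assignment. Rewriting
\[
\Gamma(u) \;=\; \sum_S a_S(1-2y_S) \;=\; \Bigl(\sum_S a_S\Bigr) - 2\sum_S a_S y_S,
\]
we add the single modular constraint $2\sum_S a_S y_S \equiv \sum_S a_S \pmod{M}$, which is a valid $\bmod M$ linear equation because $M$ is odd. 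By construction $\Psi$ is satisfiable iff $\phi$ is, and the whole reduction runs in $\poly(n)$ time. Hence a $T(n)$-time algorithm for $\linmod_M(n)$ yields a $T(n) + \poly(n)$-time algorithm for $\threesat$ on $m$ clauses, and $\ETH$ forces $T(n) \ge 2^{\Omega(m)} - \poly(n)$, as claimed. The main technical point requiring care is the final translation of the arithmetic condition $\Gamma\equiv 0 \pmod M$ into a single $\bmod M$ linear equation over the auxiliary variables $y_S$ via $(-1)^{y_S} = 1-2y_S$; this crucially uses that $M$ is odd, so that $2$ (and the halves that appear in the $Q_j$'s) are invertible modulo $M$.
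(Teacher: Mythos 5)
Your proof is correct and mirrors the paper's argument: compose the low-degree $\sbits$-basis representation of $\OR_m \bmod M$ with the $\pm 1$-valued clause polynomials, introduce one variable per low-degree monomial linked by XOR constraints (the paper phrases this as membership in $\linearspan{\bu_1,\dots,\bu_t}$ via a parity-check matrix $U^\perp$, while you write the defining relations $y_S=\bigoplus_{i\in S}y_{\{i\}}$ explicitly, but these cut out the same affine subspace), and convert nonvanishing mod $M$ into a single weighted mod-$M$ constraint using $(-1)^{y}=1-2y$ and the invertibility of $2$ for odd $M$. The accounting of $m$, $t$, $N\le n$ and the reduction's running time all match the paper's.
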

\begin{proof}
Choose the largest $m$ such that $\binom{3m}{\le 3f(m)}\le n$, such an $m$ will satisfy $f(m)\log (m/f(m)) \gtrsim \log(n)$. Suppose $\phi(x)=C_1(x)\wedge C_2(x)\wedge \dots \wedge C_m(x)$ is some $\threesat$ instance with $m$ clauses and $t\le 3m$ variables where each $C_i(x)$ depends on at most 3 variables. We can assume that the variables $x_1,\dots,x_t$ take $\sbits$ values and each $C_i(x)$ is a polynomial which takes these $\sbits$ values and outputs $1$ if the $i^{th}$ clause is satisfied and $-1$ if it is not.
So $\phi$ is satisfiable iff there exists some $x\in \sbits^t$ such that $C_1(x)=\dots=C_m(x)=1$. Now let $p(z_1,\dots,z_m)$ be a polynomial of degree $f(m)$ which represents $\OR_m \mod M$. Then $\phi$ is satisfiable iff there exists some $x\in \sbits^t$ such that the polynomial $\Psi(x)=p(C_1(x),\dots,C_m(x)) = 0 \mod M$. The polynomial $\Psi$ has degree at most $3f(m)$ and so it has at most $\binom{t}{\le 3f(m)}\le n$ monomials. Wlog we can assume that $\Psi$ has coefficients in $\set{1,2,\dots,M-1}$ because we only care about its values modulo $M$, let us denote these coefficients by $a_1,a_2,\dots,a_n$. Emulating the proof of Proposition~\ref{prop:subspace_to_polynomial}, there exists $\bu_1,\dots,\bu_t\in \F_2^n$ (which can be computed efficiently from $\Psi$)  such that for every $\by \in \F_2^t$, $$\Psi((-1)^\by)=\sum_{j=1}^n a_j\left(1-2(\sum_{i=1}^t y_i \bu_i)_j\right)=\sum_{j=1}^n a_j -2\sum_{j=1}^n a_j(\sum_{i=1}^t y_i \bu_i)_j.$$ Therefore $\Psi(x)= 0 \mod M$ for some $x\in \sbits^t$ iff there exists some $\bx' \in \linearspan{\bu_1,\dots,\bu_t}$ such that $\sum_{j=1}^n a_j(\bx')_j= (\sum_{j=1}^n a_j)/2 \mod M$. We can write the condition $\bx' \in \linearspan{\bu_1,\dots,\bu_t}$ as a system of linear equations over $\F_2$ that $\bx'$ should satisfy, explicitly, $U^\perp \bx'=\bar{0}$ where $U^\perp$ is the matrix whose rows form a basis for the orthogonal complement of $\linearspan{\bu_1,\bu_2,\dots,\bu_t}$.

Thus we reduced an instance of $\threesat$ with $m$ clauses to an instance of $\linmod_M(n)$. The reduction itself takes $\poly(n)$ time. By $\ETH$, $\threesat$ requires $2^{\Omega(m)}$ time. This proves that we need $2^{\Omega(m)}-\poly(n)$ time to solve $\linmod_M(n)$.
\end{proof}

\begin{remark}
Note that the gadgets we used in the hardness proof are low-degree polynomials which represent $\OR \mod M$ over $\sbits$ basis. Whereas the obstructions to our algorithm are sparse polynomials which represent $\OR \mod M$ over $\sbits$ basis. It is tempting to believe that the obstructions to the optimal algorithm should be the right gadgets that should be used in the hardness proof. Here is a different reduction. Start with a GAP-3LIN instance $\phi(x)=(E_1(x),\dots,E_m(x))$ over $t$ variables where it is promised that either $1-\epsilon$ fraction of equations are satisfiable or less than $1/2+\eps$ fraction are satisfiable. GAP-3LIN is NP-hard and there are near-linear time reductions from $\threesat$ to GAP-3LIN~\cite{MoshkovitzR10}. Suppose $p(z_1,\dots,z_m)$ is a polynomial over $\sbits^m$ such that it weakly represents this (partial) threshold function modulo $M$. That is the values of $p(z)\mod M$ when $\sum_i z_i \ge (1-2\eps)m$ and when $\sum_i z_i\le 2\eps m$ are disjoint, say $S_1$ and $S_0$ respectively. Then $\phi$ is $(1-\eps)$-satisfiable iff there exists some $x\in \sbits^t$ such that $\Psi(x)=p(E_1(x),\dots,E_m(x))\in S_1$. But note that in the $\sbits$ basis, the sparsity of $\Psi$ is the same as sparsity of $p$. Thus we get a good hardness reduction if there are sparse polynomials which weakly represent the $(1-\eps,1/2+\eps)$-threshold partial function. Another interesting question is, can we use the non-existence of such sparse polynomials in creating a good algorithm for $\linmod_M$?
\end{remark}

\begin{proposition}\label{prop:ORdegree_upperbound_primepower_sbits}
For any odd integer $M\ge 2$, there exists a degree $\ceil{d/(M-1)}$ polynomial which represents $\OR_d \mod M$ over $\sbits^d$.
\end{proposition}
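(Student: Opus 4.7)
The proof will mirror the structure of Proposition~\ref{prop:ORdegree_upperbound_primepower_bits}, but adapted from the $\bits$ basis to the $\sbits$ basis. The plan is to partition the $d$ variables into $M-1$ groups of size at most $\lceil d/(M-1)\rceil$, write down an ``exact AND'' gadget for each group in the $\sbits$ basis, and sum these gadgets together (plus a constant $1$) to obtain a polynomial that represents $\OR_d \mod M$.

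More concretely, first I would partition $[d]$ into $M-1$ sets $P_1,\ldots,P_{M-1}$, each of cardinality at most $\lceil d/(M-1)\rceil$. Over the $\sbits$ basis, the natural exact representation of the AND of indicators $\indicator[x_i=1]$ is $\prod_{i\in P_j} \tfrac{1+x_i}{2}$, which evaluates to $1$ if $x|_{P_j}=\allones$ and to $0$ otherwise. The essential point is that because $M$ is odd, $2$ is invertible modulo $M$, so letting $\alpha \in \Z$ denote the integer representative of $2^{-1} \bmod M$, the polynomial
\[
q_j(x)\;=\;\prod_{i\in P_j}\alpha(1+x_i)
\]
has integer coefficients, degree $|P_j|\le \lceil d/(M-1)\rceil$, and satisfies $q_j(x)\equiv \indicator[x|_{P_j}=\allones] \pmod M$.

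Then I would set
\[
p(x)\;=\;1+\sum_{j=1}^{M-1} q_j(x).
\]
At $x=\allones$, every $q_j$ evaluates to $1 \bmod M$, so $p(\allones)\equiv M\equiv 0 \pmod M$. For any $x\in\sbits^d\setminus\{\allones\}$, pick any index $i$ with $x_i=-1$ and let $j$ be the unique part containing $i$; then $q_j(x)\equiv 0 \pmod M$, so $p(x)\bmod M$ lies in $\{1,2,\ldots,M-1\}$ and is nonzero. The degree of $p$ is exactly $\max_j |P_j|=\lceil d/(M-1)\rceil$, as required.

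There is no real obstacle here; the one point worth flagging is that the $\sbits$ construction requires inverting $2$ modulo $M$, which is precisely where the oddness hypothesis on $M$ enters (and why the corresponding sharper degree bound does not hold without modification for even $M$). The rest is a direct translation of the $\bits$-basis proof, using ``product of $\alpha(1+x_i)$'' in place of ``product of $x_i$'' as the exact gadget.
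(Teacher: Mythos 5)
Your proof is correct and essentially identical to the paper's: both partition $[d]$ into $M-1$ parts of size at most $\lceil d/(M-1)\rceil$, build an exact gadget per part from $\prod_{i\in P_j}(1+x_i)/2$ (with $2^{-1}\bmod M$ clearing denominators, which is exactly where oddness is used), and sum. The paper sums the ``OR'' gadgets $1-\prod_{i\in P_j}(1+x_i)/2$ while you sum the ``AND'' gadgets and add $1$, but the two resulting polynomials are negatives of each other modulo $M$, so the argument is the same.
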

\begin{proof}
Partition the variables $x_1,x_2,\dots,x_d$ into $M-1$ parts of size at most $d'=\ceil{d/(M-1)}$. We can compute the $\OR$ of each part exactly with a degree $d'$ polynomial of the form $1-\prod_{i=1}^{d'}\left(\frac{1+x_i}{2}\right)$. Note that powers of 2 in the denominator can be inverted $\mod M$ to get a polynomial with integer coefficients. Adding these polynomials which compute $\OR$ on each part exactly, we get a polynomial which represents $\OR_d \mod M$ over $\sbits^d$.
\end{proof}

We have shown that $\linmod_M(n)$ can be solved in randomized time $n^{M+O(1)}$ when $M$ is an odd prime power. Combining Propositions \ref{prop:ORdegree_upperbound_primepower_sbits} and \ref{prop:linmod_hardness} we have the following corollary, which shows that our this running time is nearly tight assuming $\ETH$ when $M$ is an odd prime power.
\begin{corollary}
Suppose $3\le M\le cn$ be an odd integer for some small enough constant $0<c<1$. Assuming $\ETH$, solving $\linmod_M(n)$ requires at least $n^{\Omega(M/\log M)}$ time.
\end{corollary}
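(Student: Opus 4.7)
The plan is to chain together the two preceding propositions and optimize the trade-off between the polynomial degree and the resulting hardness bound. First I would apply Proposition~\ref{prop:ORdegree_upperbound_primepower_sbits} with the given odd $M \ge 3$: it supplies an efficiently computable polynomial representing $\OR_d \mod M$ over $\sbits^d$ of degree $f(d) = \lceil d/(M-1)\rceil$. Feeding this $f$ into Proposition~\ref{prop:linmod_hardness} then asserts that, under $\ETH$, $\linmod_M(n)$ requires at least $2^{\Omega(m)} - \poly(n)$ time for the largest $m$ satisfying $f(m)\log(m/f(m)) \gtrsim \log n$.

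Next I would solve the inequality explicitly. Since $m/f(m) = M-1$, the condition becomes
\[
\frac{m \log(M-1)}{M-1} \gtrsim \log n,
\]
which is met by $m = \Theta\!\bigl((M-1)\log n / \log(M-1)\bigr) = \Theta(M\log n/\log M)$, using $M \ge 3$ to keep $\log(M-1)$ bounded away from $0$. The routine check is that the implicit size condition in Proposition~\ref{prop:linmod_hardness}, namely $\binom{3m}{\le 3f(m)} \le n$, indeed admits this choice of $m$: bounding $\binom{3m}{\le 3m/(M-1)} \le (e(M-1))^{3m/(M-1)}$ and comparing to $n$ shows that the largest allowed $m$ agrees with the lower bound up to constants. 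Substituting back yields a runtime lower bound of $2^{\Omega(M\log n/\log M)} = n^{\Omega(M/\log M)}$.

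Finally, I would absorb the additive $\poly(n)$ term from the reduction into the main bound. For $M \ge 3$ the exponent $M/\log M$ is bounded below by a positive constant, so by taking the constant $c$ in $M \le cn$ small enough (equivalently, requiring the constant hidden by $\Omega(\cdot)$ in the hardness bound to be sufficiently large), the additive $\poly(n)$ is dwarfed by $n^{\Omega(M/\log M)}$. There is no real obstacle beyond this bookkeeping; the substantive content, namely the existence of a low-degree representation and the ability to turn it into a gadget, is already packaged in the two cited propositions.
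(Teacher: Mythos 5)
Your proof follows exactly the paper's route: the paper derives this corollary simply by combining Proposition~\ref{prop:ORdegree_upperbound_primepower_sbits} with Proposition~\ref{prop:linmod_hardness}, and your explicit computation of $m = \Theta(M\log n/\log M)$ from the condition $f(m)\log(m/f(m))\gtrsim\log n$, followed by $2^{\Omega(m)} = n^{\Omega(M/\log M)}$, is precisely the intended instantiation. The only step glossed over (equally so in the paper) is that the relevant degree is $3\lceil m/(M-1)\rceil$ rather than $3m/(M-1)$, so the size check really needs $m$ to be (roughly) a multiple of $M-1$, which is harmless for $M\le n^{1/3}$ but is the reason the statement quietly requires $M\le cn$ with $c$ small rather than holding trivially for all large $M$.
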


\begin{corollary}
Suppose $M$ has $r$ distinct odd prime factors. Assuming $\ETH$, solving $\linmod_M(n)$ requires at least $\exp(\Omega_M((\log n /\log\log n)^r))$ time.
\end{corollary}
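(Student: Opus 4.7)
The plan is to mirror the proof of the preceding $\hornsatmod_M$ corollary: combine the degree upper bound of Barrington--Beigel--Rudich with the hardness reduction of Proposition~\ref{prop:linmod_hardness}, and then discharge the ``odd $M$'' hypothesis of that proposition. First I would invoke Proposition~\ref{prop:ORdegree_upperbound_bits} to get a polynomial of degree $\Delta = O_M(d^{1/r})$ that represents $\NAND_d \bmod M$ over $\bits^d$; for the odd modulus $M'$ obtained from $M$ by stripping the power of $2$ (so $M = 2^\ell M'$ with $M'$ odd and still carrying all $r$ odd prime factors), this same polynomial, after the substitution $x_i \mapsto (1-z_i)/2$ and clearing of the powers of $2$ in the denominator (which are invertible mod the odd $M'$), yields a polynomial representing $\OR_d \bmod M'$ over $\sbits^d$ of the same degree $O_{M'}(d^{1/r}) = O_M(d^{1/r})$.

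Next I would feed this into Proposition~\ref{prop:linmod_hardness} with $f(m) = O_M(m^{1/r})$. The hypothesis $f(m)\log(m/f(m)) \gtrsim \log n$ becomes
\[
m^{1/r} \log m \;\gtrsim_M\; \log n .
\]
Solving for the largest admissible $m$, taking $m = c_M (\log n/\log\log n)^r$ for a small enough constant $c_M$ makes the left-hand side at least $c_M^{1/r} \cdot (\log n/\log\log n) \cdot r\log\log n \,(1-o(1)) \gtrsim_M \log n$, so we may take $m = \Omega_M\!\bigl((\log n/\log\log n)^r\bigr)$. Proposition~\ref{prop:linmod_hardness} then gives an $\ETH$-hardness lower bound of $2^{\Omega(m)} - \poly(n) = \exp\!\bigl(\Omega_M((\log n/\log\log n)^r)\bigr)$ for $\linmod_{M'}(n)$, since the additive $\poly(n)$ term is dwarfed by the exponential.

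Finally, I need to transfer hardness from $\linmod_{M'}$ to $\linmod_{M}$, since Proposition~\ref{prop:linmod_hardness} requires an odd modulus. This is the one step that isn't immediate from the analogous $\hornsat$ argument. The reduction is a constant-factor brute force: given an instance of $\linmod_{M'}(n)$ with target $a_0 \bmod M'$, each lift $a_0 + kM' \bmod M$ for $k = 0,1,\dots,2^\ell - 1$ gives an instance of $\linmod_M(n)$ on the same linear system, and the original instance is satisfiable iff at least one lifted instance is. Since $M$ is a fixed constant, so is $2^\ell$, and this multiplies the runtime by only $O_M(1)$. Thus any algorithm for $\linmod_M$ running in time $T(n)$ yields one for $\linmod_{M'}$ in time $O_M(T(n))$, so the $\exp(\Omega_M((\log n/\log\log n)^r))$ lower bound carries over to $\linmod_M$, completing the proof. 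The main obstacle I anticipate is this last bookkeeping step: verifying that stripping the factor $2^\ell$ from $M$ genuinely preserves both the number $r$ of odd prime factors and the $\ETH$-equivalence between the two problems -- everything else is a direct invocation of results already established in the excerpt.
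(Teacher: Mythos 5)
Your proposal follows the paper's proof essentially verbatim: reduce to the odd part $M'$ of $M$ (the paper likewise notes that $\linmod_{M'}$ reduces easily to $\linmod_M$), convert the Barrington--Beigel--Rudich low-degree $\NAND\bmod M'$ representation over the $\bits$ basis into an $\OR\bmod M'$ representation over the $\sbits$ basis by the degree-preserving linear change of variables, and plug $f(m)=O_M(m^{1/r})$ into Proposition~\ref{prop:linmod_hardness} to extract $m\gtrsim_M(\log n/\log\log n)^r$. The only slip is in your first step: you take a polynomial representing $\NAND_d\bmod M$ and claim the substituted polynomial represents $\OR_d\bmod M'$, but ``nonzero mod $M$'' does not imply ``nonzero mod $M'$'' when $M'\mid M$ (the value could be a nonzero multiple of $M'$). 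The fix is immediate and is what the paper does: invoke Proposition~\ref{prop:ORdegree_upperbound_bits} directly for the odd modulus $M'$, which has exactly the $r$ distinct prime factors, and then perform the basis change.
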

\begin{proof}
Let $M=2^\ell M'$ for some odd $M'$. Since we can reduce $\linmod_{M'}(n)$ to $\linmod_M(n)$ easily, it is enough to show hardness for $\linmod_{M'}(n)$. So wlog, we can assume $M$ is odd.

When $M$ is odd, a polynomial which represents $\NAND$ over $\bits$ basis can be converted into a polynomial with represents $\OR$ over $\sbits$ basis by a linear basis change which preserves the degree.
Therefore by Proposition \ref{prop:ORdegree_upperbound_bits}, we can take $f(m)= O_M(m^{1/r})$ in Proposition~\ref{prop:linmod_hardness}. $m^{1/r}\log m = \Omega_M(\log n)$ implies that $$m\gtrsim_M \left(\frac{\log n}{\log\log n}\right)^r$$ which implies the required bound.
\end{proof}

\section{$\twosatmod_M$}\label{sec:2sat}

In this section, we present an algorithm for $\twosatmod_M$. The algorithm is recursive and for the recursion to work, we need to consider the more general list version of $\twosatmod_M$. Our algorithm works for any abelian group $G$ in place of $\Z/M\Z$. So we will consider $\twosat$ with a global modular constraint over a finite abelian group $G$, which we call $\modcsp(\twosat, G, S)$. In this definition, $S$ is the set of permitted values in the modular constraint.

The $\twosat$ instance itself will be over variables $\{x_1, \hdots, x_n\}$. We say that the set of literals are $V := \{x_1, \hdots, x_n, \bar{x}_1, \hdots, \bar{x}_n\}$, where $\bar{x}_i$ represents the negation of $x_i$. We let $E \subset V \times V$ be the constraints, where $(y, z) \in E$ implies that $(y \vee z)$ is a constraint.

As is standard, we can interpret a constraint $y \vee z$ as a pair of implications $\bar{y} \rightarrow z$ and $\bar{z} \rightarrow y$. As such, we consider a complementary implication digraph $(V, F)$ where $F = \{(\bar{y}, z) : (y, z) \in E\} \cup \{(\bar{z}, y) : (y, z) \in E\}.$

For the global modular constraint, we have a constraint of the form
\[
  \sum_{j = 1}^n g_j(x_j) \in S,
\]
where $g_j : \{0, 1\} \to G$. To make it more symmetric in the literals, we can write each $g_j$ as the sum of indicator functions $g_j(x) = g_{x_j}(x) + g_{\bar{x_j}}(\bar{x})$ such that
\begin{align}
  g_{x_j}(x) &= \begin{cases}
    g_j(1) & x = 1\\
    0 & x = 0
  \end{cases} & g_{\bar{x}_j}(x) &= \begin{cases}
    g_j(0) & x = 1\\
    0 & x = 0
  \end{cases}.
        \label{eq:5}
\end{align}

\subsection{Preprocessing}

The first step of our algorithm is to do standard preprocessing on the constraint graph. Note that if our implication graph has a cycle $y_1 \rightarrow y_2 \rightarrow \cdots \rightarrow y_k \rightarrow y_1$, then we can deduce that $y_1 = y_2 = \cdots = y_k$, and we can replace these variables by a single variable $y$ and replace the corresponding $g_y$'s by their sum. By the duality of the implications, there must also be a component $\bar{y}_k \rightarrow \bar{y}_{k-1} \rightarrow \cdots \rightarrow \bar{y}_1 \rightarrow \bar{y}_k$. Thus, whenever two vertices are merged in the constraint graph, their negations are also merged. The only exception is if a cycle contains both a $y$ and its negation $\bar{y}$, in which case we can safely output \textsf{NO SOLUTION}. This preprocessing is described in Algorithm~\ref{alg:2-SAT-pre}.

\begin{algorithm}
  \caption{Preprocessing for $\modcsp(\twosat, G, S)$}
  \label{alg:2-SAT-pre}
  \begin{itemize}
  \item Input: Instance $\Psi$ of $\twosat$ over $x_1, \hdots, x_n$ along with maps $g_j : \{0, 1\} \to G$ and a global constraint $\sum_{j=1}^n g_j(x_j)\in S$.
  \item Output: Either \textsf{NO SOLUTION} or a new instance $\Psi'$ on $\{x'_1, \hdots, x'_k\}$ with an acyclic implication digraph $(V' = \{x'_1, \hdots, x'_k, \bar{x}'_1, \hdots, \bar{x}'_k\}, F')$ as well as new maps $g'_y : \{0, 1\} \to G$, $y \in V'$ such that $\Psi$ is satisfiable iff $\Psi'$ is satisfiable.
  \item PREPROCESS \begin{enumerate}
    \item Let $V := \{x_1, \hdots, x_n, \bar{x}_1, \hdots, \bar{x}_n\}$, and construct $g_y, y \in V$ as described in (\ref{eq:5}).
    \item Construct the implication set $F$, and compute the strongly connected components in $F$.
    \item If any $y, \neg{y}$ appear in the same component, then output \textsf{NO SOLUTION}. Otherwise, label the components $\{C_1, \hdots, C_k, \bar{C}_1, \hdots, \bar{C}_k\}$, where $\bar{C}_i$ has the complements of the literals in $C_i$.
    \item For $i \in \{1, \hdots, k\}$
      \begin{enumerate}
      \item Let $x'_i$ and $\bar{x}'_i$ be new variables representing $C_i$ and $\bar{C}_i$.
      \item Let $g'_{x'_i} = \sum_{y \in C_i} g_y$ and $g'_{\bar{x}'_i} = \sum_{\bar{y} \in C_i} g_{\bar{y}}$.
      \end{enumerate}
    \item For every pair of components for which there is at least one edge from one to the other, add an edge between the corresponding variables to $F'$.
    \end{enumerate}
  \end{itemize}
\end{algorithm}

It is clear that the run-time of the algorithm is $O(n+m)$. The correctness of this algorithm follows from the following claim.

\begin{claim}
  There is a bijection between solutions to $\Psi$ and $\Psi'$ which preserves the weights according to $i$ and $i'$, respectively.
\end{claim}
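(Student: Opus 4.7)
The plan is to exhibit explicit forward and inverse maps between satisfying assignments of $\Psi$ and $\Psi'$ and to check that both logical satisfaction and the weighted sum are preserved. The key structural observation is that if $y_1 \to y_2 \to \cdots \to y_k \to y_1$ is a directed cycle in the implication digraph $(V,F)$, then any assignment satisfying $\Psi$ must set $y_1 = y_2 = \cdots = y_k$, since each edge encodes a valid implication. Hence within every strongly connected component $C_i$ all literals share a common truth value, and the NO SOLUTION step rules out the only obstruction, namely a component containing both $y$ and $\bar y$.

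Given a satisfying assignment of $\Psi$, I would define the forward map by letting $x'_i$ be the shared truth value of the literals in $C_i$; by duality $\bar x'_i$ receives the shared truth value of $\bar C_i$, so this is well-defined. Every edge of $F'$ between two components arises from at least one implication in $F$ between representative literals, so the contracted clauses of $\Psi'$ are satisfied. Conversely, given a satisfying assignment of $\Psi'$, set every literal $y \in C_i$ to the value of $x'_i$. Since no $y,\bar y$ share a component, this extension is consistent; moreover, any original clause $(y \vee z) \in E$ corresponds to the edges $\bar y \to z$ and $\bar z \to y$ in $F$, which descend to edges of $F'$ between the corresponding components and are therefore honored. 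The two maps are mutually inverse by construction.

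For weight preservation, recall from \eqref{eq:5} that the weight of an assignment to $\Psi$ equals $\sum_{y \in V,\ y\text{ true}} g_y(1)$. Grouping the true literals by component yields
\[
\sum_{i:\, x'_i = 1} \sum_{y \in C_i} g_y(1) \;+\; \sum_{i:\, x'_i = 0} \sum_{\bar y \in C_i} g_{\bar y}(1),
\]
which is exactly $\sum_{i=1}^{k} g'_{\ell_i}(1)$ where $\ell_i \in \{x'_i,\bar x'_i\}$ is the true literal, by the definitions of $g'_{x'_i}$ and $g'_{\bar x'_i}$. Thus $\sum_j g_j(x_j) \in S$ if and only if the corresponding weighted sum for $x'$ lies in $S$, so the bijection of combinatorial solutions restricts to a bijection on the solutions of the full $\modcsp(\twosat, G, S)$ instances.

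The only mildly subtle point I foresee is ensuring the weight regrouping is consistent across the two ways of labelling a component (by $C_i$ versus by $\bar C_i$), which is handled uniformly by expressing everything in terms of literal-indexed functions $g_y$ before contraction; beyond this bookkeeping, the claim reduces directly to standard SCC-based preprocessing of $\twosat$.
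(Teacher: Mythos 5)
Your proof is correct and follows essentially the same route as the paper's: contract each strongly connected component to a single literal, check both directions of the bijection via the implication structure, and verify weight preservation by regrouping the literal-indexed $g_y$ terms component by component. The only cosmetic difference is that you rewrite the weight as a sum over true literals $\sum_{y \text{ true}} g_y(1)$ before grouping, whereas the paper keeps the $g_{x_j}(a_j)+g_{\bar x_j}(\bar a_j)$ decomposition explicit throughout; both are the same computation.
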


\begin{proof}
  Fix a solution $\ba := (a_1, \hdots, a_n)$ to $\Psi$. For each strongly connected component $C_i$ (or $\bar{C}_i$), because there is a directed walk of implications from any pair of literals $y, z$ in the same $C_i$, all those literals must have the same common value in that strongly connected component. Thus, $\ba'=(a'_1, \hdots, a'_k)$, where $a'_i$ is defined to be the common value of $C_i$, is well-defined and the map $\ba \mapsto \ba'$ is injective. This $\ba'$ is a valid solution to $\Psi'$ as any implication $y' \to z'$ in $\Psi'$ is constructed from an implication $y \to z$ in $\Psi$. Also, as $C_i$ and $\bar{C}_i$ have complementary variables, $x'_i$ and $\bar{x}'_i$ are complementary.

  By definition,
  \begin{align*}
    &\sum_{i = 1}^k g'_{x'_i}(a'_i) + \sum_{i=1}^k g'_{\bar{x}'_i}(\bar{a}'_i)\\
    &= \sum_{i=1}^k\sum_{y \in C_i} g_{y}(a'_i) + \sum_{i=1}^k\sum_{\bar{y} \in \bar{C}_i} g_{\bar{y}}(\bar{a}'_i)\\
    &= \sum_{i=1}^k\sum_{y \in C_i} g_{y}(a_y) + \sum_{i=1}^k\sum_{\bar{y} \in \bar{C}_i} g_{\bar{y}}(a_{\bar{y}})\ \ \ \ \ \ \ \ \left(a_y = \begin{cases}a_i & \text{if }y=x_i\\\bar{a}_i &\text{if }y=\bar{x}_i\end{cases}\right)\\
    &= \sum_{i=1}^n g_{x_i}(a_i) + \sum_{i=1}^n g_{\bar{x}_i}(\bar{a}_i).
  \end{align*}

  For the other direction, consider a solution $\ba' := (a'_1, \hdots, a'_k)$ to $\Psi'$. We can lift this solution in the opposite manner by setting literal $x_i$ equal to the literal of $\ba'$ corresponding to the strongly connected component to which $x_i$ belongs. This lifting is also injective, and it preserves the property that all the constraints are satisfied, as whenever an implication $x_i \to x_j$ is needed, it exists between the corresponding components. By running the above equations in reverse, we have that this lifted solution also preserves the modular constraint.

  Thus, we have a bijection between the solution sets of these problems.
\end{proof}

It is also not hard to see that if there is an edge from $C_i$ to $C_j$, there must also be an edge from $\bar{C}_j$ to $\bar{C}_i$. This, along with logic from similar cases, shows that $\Psi'$ does indeed encode a $\twosat$ instance.

\subsection{Acyclic case}

With the preprocessing algorithm complete, we can now assume that the implication graph of the $\twosat$ instance is a directed acyclic graph, DAG. The algorithm is presented in Algorithm~\ref{alg:2-SAT-DAG}.

\begin{algorithm}[H]
  \caption{DAG algorithm}\label{alg:2-SAT-DAG}
  \begin{itemize}
  \item Input: Instance $\Psi$ of $\twosat$ as exhibited by an acyclic implication digraph on $V = \{x_1, \hdots, x_n, \bar{x}_1, \hdots, \bar{x}_n\}$ and edge set $F$. Maps $g_j : \{0, 1\} \to G$ for all $j\in [n]$ and a global constraint $\sum_{j=1}^n g_j(x_j)\in S$.
  \item Output: Either \textsf{NO SOLUTION} or an assignment $x_1, \hdots, x_n \in \{0, 1\}^n$ satisfying the global constraint.
  \item 2-SAT-DAG($V$, $F$, $\{g_v\}$, $S$) \begin{enumerate}
    \item Define $g_v:\bits \to G$ for all $v\in V$ as in (\ref{eq:5}).
    \item If $n = 0$, check if $0 \in S$. If so, output the empty assignment. Otherwise output \textsf{NO SOLUTION}.
    \item Otherwise, select a literal $y$ with outdegree $0$.
    \item Set $y = 1$; that is, let $V' = V \setminus \{y, \bar{y}\}$ and $F' = F \cap (V' \times V')$ and $S' = \{s - g_{y}(1) : s \in S\}.$
    \item Check if 2-SAT-DAG($V'$, $F'$, $\{g_v\}$, $S'$) has a solution, and if so, add ``$y = 1$'' and ``$\bar{y} = 0$'' to the assignment.
    \item Otherwise, set $y = 0$. This choice forces all literals $z$ in the DAG for which there is a path from $z$ to $y$ to be assigned the value $0$. Let $W$ be the set of all such literals (including $y$ itself), and let $\bar{W}$ be the set of complements of literals in $W$. If $W \cap \bar{W} \neq \emptyset$ output \textsf{NO SOLUTION}.
    \item Otherwise, set $V'' = V \setminus (W \cup \bar{W})$, $F'' = F \cap (V'' \times V'')$, and $S'' = \{s - \sum_{z \in \bar{W}} g_z(1) : s \in S\} \cup \{s - g_{y}(1) - \sum_{z \in \bar{W} \setminus \{\bar{y}\}} g_z(1)\}$.
    \item If $|S''| = |S|$, then output \textsf{NO SOLUTION}.
    \item Check if 2-SAT-DAG($V''$, $F''$, $\{g_v\}$, $S''$) has a solution. If so, take the assignment and add ``$z = 0$'' and ``$\bar{z} = 1$'' for all $z \in W$, and return this solution.
    \item Otherwise, output \textsf{NO SOLUTION}.
    \end{enumerate}
  \end{itemize}
\end{algorithm}

Both the run-time and analysis of this algorithm take some work to analyze.

\subsubsection{Correctness of Algorithm~\ref{alg:2-SAT-DAG}}

We prove correctness by inducting on $n$. Clearly the case $n = 0$ is correct, as there are no variables and so the modular constraint is equivalent to $0 \in S$.

Assume the induction hypothesis is true for all $n < N$ for some positive integer $N$. Because the DAG is acyclic, a vertex of outdegree $0$ must exist; call this literal $y$ as in the algorithm. Note that $\bar{y}$ must have indegree $0$.  Any valid assignment to the $\twosat$ constraints (ignoring the modular constraint) must still be valid when $y$ is set equal to $1$ (and $\bar{y}$ is set equal to $0$). In particular, this means that setting $y = 1$, $\bar{y} = 0$ and removing both literals from the digraph will leave us with a valid DAG $\twosat$ instance. By the induction hypothesis, this similar instance can be solved (with the adjusted set $S'$) and then can be lifted back up to get a solution to $\Psi$ and the global modular constraint. Thus, if the smaller instance has a valid solution, we are done.

Now what if the smaller instance fails to have a valid solution? Clearly then any valid solution must have $y = 0$ and $\bar{y} = 1$. Thus, any element of $W$ (those literals which through a chain of implications lead to $y$) must also have value $0$, and those which belong to $\bar{W}$ must have value $1$ (since they are lead to by a chain of implications from $\bar{y}$). Thus, if $\bar{W}$ and $W$ intersect, the $\twosat$ instance is inconsistent and we can safely reject.

Let $S''_0 = \{s - \sum_{z \in \bar{W}} g_z(1) : s \in S\}$. Clearly running 2-SAT-DAG on the digraph with $W \cup \bar{W}$ deleted will work using $S''_0$. But recall we can flip the values of $y$ and $\bar{y}$ to get another valid solution. Thus, if a solution on the smaller DAG has weight in $S''_1 = \{s - g_{y}(1) - \sum_{z \in \bar{W}\setminus \{\bar{y}\}} g_z(1) : s \in S\}$, we can lift to a valid solution on the full DAG. If we run 2-SAT-DAG on the smaller digraph with $S'' = S''_0 \cup S''_1$, we can always lift back to a valid solution.

But, if $|S''| = |S|$, then $S''_1 = S''_0 = S''$. Thus, any valid solution with $y = 0$ is also a valid solution with $y = 1$. But, we have already ruled out in the first recursive case that no solutions with $y = 0$ exist. Thus, we can safely reject in this scenario.

When we do run on the smaller digraph, we know from the above logic that getting a modular value in $S''$ is equivalent to getting a modular value in $S''\setminus S''_1 \subset S''_0$ (as any other element would imply a contradiction). Thus, as long as \textsf{NO SOLUTION} is not output from the recursive call, we can extend to a full solution.

\subsubsection{Run-time of Algorithm~\ref{alg:2-SAT-DAG}}

We claim that this algorithm runs in time $O((m+n) \cdot n^{|G| - |S|})$. Let $f(n, m, k)$ be an upper bound on the running time on instances with $n$ variables, $m$ implications, and $k = |G| - |S|$. If $k = 0$, then the modular constraint is trivial so $f(n, m, 0) = O(n + m)$ (the run-time of 2-SAT). For $k \ge 1$, note that
\[
  f(n, m, k) \le f(n - 1, m, k) + f(n - 1, m, k - 1) + O(m+n).
\]
This is because, $f(n - 1, m, k)$ is an upper bound on the work of the first recursive call, and $f(n - 1, m, k-1)$ is an upper bound on the work of the second recursive call (if it is run). This recursion is consistent with a run-time of $O((m + n)n^{k})$, as desired. This yields the following proposition.

\begin{proposition}
  Let $G$ be an Abelian group. Then an instance of $\twosatmod_G$ on $n$ variables and $m$ constraints can be solved in $(n + m)^{|G| + O(1)}$ time.
\end{proposition}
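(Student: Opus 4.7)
The plan is to run Algorithm~\ref{alg:2-SAT-pre} as a preprocessing step to reduce to an instance whose implication digraph is a DAG (this step is $O(n+m)$ and produces an equivalent instance of $\modcsp(\twosat, G, S)$ with $S = \{s_0\}$, where $s_0$ is the target of the global constraint), and then invoke Algorithm~\ref{alg:2-SAT-DAG} on this acyclic instance with $S = \{s_0\}$. So $k := |G|-|S|$ starts at $|G|-1$, and the runtime goal $(n+m)^{|G|+O(1)}$ will follow from the recurrence analysis described below.

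For correctness in the acyclic case, I would induct on $n$. The base case $n=0$ is immediate since the modular constraint reduces to ``$0 \in S$''. For the inductive step, the key structural observation is: pick a literal $y$ with outdegree $0$ in the DAG. Any satisfying assignment to the $\twosat$ part remains satisfying after flipping $y$ from $0$ to $1$, since $y$ does not imply anything. Hence, if there is a solution with $y=1$, we find it in the first recursive call (on $n-1$ variables with $S' = S - g_y(1)$). Otherwise, every solution has $y = 0$, which forces the value $0$ on the set $W$ of ancestors of $y$, and value $1$ on their complements $\bar W$; we reject if $W \cap \bar W \ne \emptyset$. On the remaining variables, the ``flipping $y$'' freedom means that acceptable modular targets are $S'' = S_0'' \cup S_1''$ as defined in the algorithm, and any recursive solution in $S''$ lifts to a valid full solution (in $S_0''$ directly, in $S_1''$ after flipping $y$). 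The subtle point is the termination check ``if $|S''| = |S|$ output \textsf{NO SOLUTION}'': when $S_0'' = S_1''$, any $y=0$ solution is also a $y=1$ solution, so it would already have been found in the first recursive call — so rejecting is safe.

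For the runtime, let $f(n,m,k)$ bound the runtime on instances with $n$ variables, $m$ implications, and $k = |G| - |S|$. The $k=0$ case is trivial: the modular constraint is vacuous, so plain $\twosat$ gives $f(n,m,0) = O(n+m)$. For $k \ge 1$, the first recursive call has parameters $(n-1, m, k)$ (since $|S'| = |S|$), and the second recursive call, when it occurs, has $|S''| > |S|$ (this is precisely what the $|S''| = |S|$ check enforces), so $k'' \le k - 1$. Hence
\[
f(n,m,k) \le f(n-1,m,k) + f(n-1,m,k-1) + O(n+m),
\]
which unrolls to $f(n,m,k) = O((n+m) \cdot n^k)$. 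Starting from $k = |G|-1$ after preprocessing gives the claimed $(n+m)^{|G|+O(1)}$ bound.

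The main obstacle — and really the only nontrivial point — is justifying the $|S''| = |S|$ pruning step. Without it, each recursive call would spawn two subcalls of size $n-1$, yielding $2^n$ total work; with it, $|S|$ strictly grows on one branch, so only $|G|-|S|$ levels of branching can occur. Everything else (the preprocessing, the bookkeeping for $g_v$ on literals, the DAG decomposition via SCCs, and the recurrence unrolling) is routine.
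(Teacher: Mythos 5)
Your proposal is correct and follows essentially the same path as the paper: preprocess via SCC contraction to obtain an acyclic implication digraph, then run the recursive DAG algorithm that tracks the allowed modular set $S$, with correctness by induction on $n$ and the runtime bound coming from the recurrence $f(n,m,k) \le f(n-1,m,k) + f(n-1,m,k-1) + O(n+m)$ where $k = |G|-|S|$. You correctly identify the $|S''| = |S|$ pruning step as the crux that keeps the branching bounded, which is exactly the point the paper emphasizes.
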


For special groups, this analysis can be improved. For example, when $G = \mathbb F_2^k$, then the algorithm in fact runs in $(n + m)^{O(k)}$ time. 

\appendix
\newcommand{\XOR}{\mathsf{XOR}}

\section{Establishing the Boolean Mod-CSP dichotomy}\label{app:dicot}

\subsection{PP-reductions, Polymorphisms, and Galois correspondence}

One family of simple gadget reductions from one CSP to another are known as pp-reductions. These are the gadget reductions used by Schaefer to prove the dichotomy for Boolean CSPs. They are formally defined as follows.

\begin{definition}
  Let $\Gamma$ and $\Gamma'$ be templates over a domain $D$. We say that there is a \emph{primitive positive reduction} from $\Gamma'$ to $\Gamma$ if for all $C' \in \Gamma'$ there exist $C_1, \hdots, C_k \in \Gamma$ (perhaps with repetition) such that
  \[
    C'(x_1, \hdots, x_{\ell}) = \exists y_1, \hdots, y_{\ell'} \bigwedge_{i=1}^k C_i(z_{i,1}, \hdots, z_{i,\ar_i}),
  \]
  where each $z_{i,j}$ is an $x_{i'}$ or a $y_{i'}$, allowing for repetition.
\end{definition}
Informally a pp-reduction means that every constraint in $\Gamma'$ can be expressed as a conjunction of constraints in $\Gamma$ possibly with the addition of some auxiliary variables.

Note that if there is a pp-reduction from $\Gamma'$ to $\Gamma$ then there exists a polynomial-time reduction from $\CSP(\Gamma')$ to $\CSP(\Gamma)$ (in fact a logspace reduction). (e.g., \cite{Chen2009, DBLP:conf/dagstuhl/BartoKW17}).

As stated, it is rather difficult to determine whether there exists a pp-reduction between two templates $\Gamma$ and $\Gamma'$. This issue can be resolved by looking at the \emph{polymorphisms} of these constraint templates.

\begin{definition}
  Let $C \subset D^k$ be a constraint. A \emph{polymorphism} of $C$ is a function $f : D^L \to D$ such that for all $x^1, \hdots, x^L \in C$, we also have that $f(x^1, \hdots, x^L) \in C$.\footnote{Here $f$ acts coordinate-wise i.e. the $j^{th}$ coordinate of $f(x^1,\dots,x^L)$ is obtained by applying $f$ to the $j^{th}$ coordinates of $x^1,\dots,x^L$.} More pictorially (c.f., \cite{DBLP:conf/dagstuhl/BartoKW17}),
  \begin{center}
    \begin{tabular}{ccccc}
      & $x^1_1$ & $\cdots$ & $x^1_k$ &$\in C$\\
      & $x^2_1$ & $\cdots$ & $x^2_k$ &$\in C$\\
      & $\vdots$ & $\vdots$ & $\vdots$ & $\in C$\\
      & $x^L_1$ & $\cdots$ & $x^L_k$ & $\in C$\\\hline
     $f\Downarrow$ & $y_1$ & $\cdots$ & $y_k$ & $\in C$
    \end{tabular}
  \end{center}
  The set of such polymorphisms is denoted by $\Pol(C)$. For a general template $\Gamma$, the set of polymorphisms is \[\Pol(\Gamma) := \bigcap_{C \in \Gamma} \Pol(C).\]
\end{definition}

Here are a few examples (see \cite{DBLP:conf/dagstuhl/BartoKW17} for many more).

\begin{enumerate}
\item Let $\MAJ_k : \{0, 1\}^k \to \{0, 1\}$ be the bitwise majority operator on $k$ bits, then $\MAJ_k \in \Pol(\twosat)$ for all odd $k$.
\item Let $\XOR_k : \{0, 1\}^k \to \{0, 1\}$ be the bitwise XOR on $k$ bits, then $\XOR_k \in \Pol(\threexor)$ for all odd $k$.
\item Let $\AND_k : \{0, 1\}^k \to \{0, 1\}$ be the bitwise AND operator on $k$ bits, then $\AND_k \in \Pol(\hornsat)$ for all $k$.
\end{enumerate}

Intuitively, polymorphisms capture high-dimensional symmetries in the constraints. If the constraints have many symmetries (such as linear constraints are closed under affine operations), then the corresponding CSPs should be more likely to be tractable. This can be stated rigorously as a \emph{Galois correspondence}.

\begin{theorem}[Galois correspondence for pp-reductions, \cite{jeavons88}]\label{thm:galois}
  Let $\Gamma$ and $\Gamma'$ be templates over a domain $D$. There exists a pp-reduction from $\Gamma'$ to $\Gamma$ if and only if $\Pol(\Gamma) \subseteq \Pol(\Gamma')$.
\end{theorem}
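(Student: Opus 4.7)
This is the Galois correspondence between clones of operations and relational clones on a finite domain, due to Geiger and to Bodnarchuk-Kaluzhnin-Kotov-Romov, with the CSP-theoretic formulation attributed to Jeavons. I would prove the two implications separately: the forward direction is a syntactic verification, while the backward direction carries all the substance.

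\textbf{Forward ($\Rightarrow$).} The plan is to show that pp-definitions preserve polymorphisms. Fix $C' \in \Gamma'$ with pp-definition $C'(x_1,\ldots,x_\ell) = \exists y_1,\ldots,y_{\ell'}\, \bigwedge_{i=1}^m C_i(z_{i,1},\ldots,z_{i,r_i})$ with each $C_i \in \Gamma$, and fix an $L$-ary $f \in \Pol(\Gamma)$. Given any $\vec{x}^{(1)},\ldots,\vec{x}^{(L)} \in C'$ with existential witnesses $\vec{y}^{(1)},\ldots,\vec{y}^{(L)}$, apply $f$ coordinatewise to form $\vec{x}$ and $\vec{y}$. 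Each constraint $C_i$ is satisfied by all $L$ assignments, and since $f \in \Pol(C_i)$, it is satisfied by the combination; so $\vec{y}$ witnesses $\vec{x} \in C'$, giving $f \in \Pol(C')$.

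\textbf{Backward ($\Leftarrow$).} Here I would use the classical \emph{indicator construction}. Fix $C' \in \Gamma'$ of arity $k$, enumerate $C' = \{\vec{c}_1,\ldots,\vec{c}_L\}$ with $L = |C'|$, and form columns $\vec{a}_j := ((\vec{c}_1)_j,\ldots,(\vec{c}_L)_j) \in D^L$ for $j \in [k]$. Build the CSP instance $\mathcal{I}$ on variables $y_1,\ldots,y_L$ whose constraints are all $R_0(y_{i_1},\ldots,y_{i_r})$ for $R_0 \in \Gamma$ of arity $r$ and $(i_1,\ldots,i_r) \in [L]^r$ such that the $r\times k$ submatrix with rows $\vec{c}_{i_1},\ldots,\vec{c}_{i_r}$ has every column in $R_0$. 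The solution set $R \subseteq D^L$ is pp-definable from $\Gamma$ by construction, and every $\vec{a}_j$ lies in $R$. I would then pp-define $C'$ by augmenting the matrix with a new row $(x_1,\ldots,x_k)$ and requiring, via the analogous $(L+1)$-ary indicator $R^+$, that every augmented column belong to $R^+$; this gives a pp-formula in the free variables $x_1,\ldots,x_k$ with auxiliary existentials for the other augmented-column entries.

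The correctness hinges on two facts: (i) $R$ coincides with the $\Pol(\Gamma)$-closure of $\{\vec{a}_1,\ldots,\vec{a}_k\}$, and (ii) if $\vec{x} \notin C'$ then some augmented column fails a constraint of $R^+$. For (i), the easy direction uses that if $\vec{r} = f(\vec{a}_{j_1},\ldots,\vec{a}_{j_m})$ with $f \in \Pol(\Gamma)$ then $\vec{r}$ satisfies every constraint of $\mathcal{I}$ because $f \in \Pol(R_0)$; the reverse requires that every $\vec{r} \in R$ arises this way, which uses finiteness of $D$ via viewing $R$ as a subalgebra of $D^L$ in the polymorphism algebra of $\Gamma$. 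For (ii), the hypothesis $\Pol(\Gamma) \subseteq \Pol(\Gamma') \subseteq \Pol(C')$ means $C'$ is $\Pol(\Gamma)$-invariant, so if every augmented column passed all induced $R_0$-constraints, then $\vec{x}$ would lie in the $\Pol(\Gamma)$-closure of $C'$ viewed as a $k$-ary relation, which by invariance equals $C'$ itself---contradiction.

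\textbf{Main obstacle.} The heart of the argument is step (i): showing that any $L$-tuple respecting all $\Gamma$-constraints induced by the columns $\vec{a}_j$ already lies in the polymorphism-closure. This uses finiteness of the domain crucially and is the content of the Geiger/BKKR lemma identifying relational clones with invariants of clones. Once this is in hand, the ``separating constraint'' that rules out $\vec{x} \notin C'$ in step (ii) falls out automatically from the failure of the augmented tuple to lie in the corresponding closure.
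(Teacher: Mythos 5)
The paper does not prove this theorem; it is stated with a citation to \cite{jeavons88} (the result is due to Geiger and to Bodnarchuk--Kaluzhnin--Kotov--Romov), so there is no internal proof to compare yours against. Judged on its own, your forward direction is correct and standard: a polymorphism of $\Gamma$ applied coordinatewise to satisfying assignments of a pp-formula, together with their witnesses, yields a satisfying assignment, so $\Pol(\Gamma) \subseteq \Pol(C')$ for every pp-definable $C'$.

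The backward direction has a genuine gap in the indicator construction. You build $\mathcal{I}$ on $L = |C'|$ variables $y_1, \ldots, y_L$, one per \emph{tuple} of $C'$, and claim in step (i) that its solution set $R \subseteq D^L$ is the $\Pol(\Gamma)$-closure of the columns $\{\vec{a}_1, \ldots, \vec{a}_k\}$. That claim is false for this instance: a solution $y \in R$ encodes only a \emph{partial} polymorphism defined on $\{\vec{c}_1, \ldots, \vec{c}_L\}$, and such a partial map need not extend to a total polymorphism. Concretely, let $D = \{0,1\}$, $\Gamma = \{R_0\}$ with $R_0 = \{(0,0,1),(0,1,0),(1,0,0)\}$ (1-in-3), and $C' = \{(0)\}$, a unary relation invariant under $\Pol(\Gamma)$ (which here consists only of projections). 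Then $L = 1$, $\vec{a}_1 = (0)$, and your $\mathcal{I}$ generates no constraints at all, since $(0,0,0) \notin R_0$; thus $R = D = \{0,1\}$, yet the $\Pol(\Gamma)$-closure of $\{(0)\}$ is just $\{(0)\}$, and $C'$ is indeed pp-definable by $\exists z : R_0(x_1, x_1, z)$. The auxiliary existential $z$ has no counterpart among your $y_i$'s, which is exactly what is missing. The correct indicator instance introduces a variable $z_u$ for \emph{every} $u \in D^L$ (so $|D|^L$ variables, not $L$), with a constraint $R_0(z_{u_1},\ldots,z_{u_r})$ whenever $(u_1(i),\ldots,u_r(i)) \in R_0$ for all $i \in [L]$; its solutions are precisely the total $L$-ary polymorphisms, the projection to the coordinates $\vec{a}_1, \ldots, \vec{a}_k$ is $\{(f(\vec{a}_1),\ldots,f(\vec{a}_k)) : f \in \Pol(\Gamma)\}$, and the two inclusions with $C'$ follow from taking projections $\pi_i$ and from the hypothesis $\Pol(\Gamma) \subseteq \Pol(C')$. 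Your ``augmented matrix / $R^+$'' paragraph is also not well-formed as a pp-formula, since the first $L$ entries of each augmented column are fixed domain elements, and pp-formulas cannot reference constants directly; the $|D|^L$-variable formulation sidesteps this entirely.
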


Thus, to classify the computational complexity of CSPs, it suffices to classify sets of polymorphisms.  Such an investigation was done by Post~\cite{post} (in a slightly more general context) in the case of Boolean polymorphisms. This classification along with the Galois correspondence yields an elegant restatement of Schaefer's theorem.

\begin{theorem}[Schaefer's theorem, polymorphism version~\cite{Schaefer:1978,Chen2009,DBLP:conf/dagstuhl/BartoKW17}]
  Let $\Gamma$ be a Boolean template. Either $\CSP(\Gamma)$ is NP-complete or it falls into one of the six following cases.
  \begin{enumerate}
  \item $0 \in \Pol(\Gamma)$, in which case ``all zeros'' is a solution to every instance.
  \item $1 \in \Pol(\Gamma)$, in which case ``all ones'' is a solution to every instance.
  \item $\AND_2 \in \Pol(\Gamma)$, in which case $\Gamma$ is pp-reducible to $\hornsat$.
  \item $\OR_2 \in \Pol(\Gamma)$, in which case $\Gamma$ is pp-reducible to $\dualhornsat$.
  \item $\MAJ_3 \in \Pol(\Gamma)$, in which case $\Gamma$ is pp-reducible to $\twosat$.
  \item $\XOR_3 \in \Pol(\Gamma)$, in which case $\Gamma$ is pp-reducible to $\lintwo$.
  \end{enumerate}
\end{theorem}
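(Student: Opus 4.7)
The plan is to combine the Galois correspondence of Theorem~\ref{thm:galois} with Post's classification of Boolean clones. For the tractable direction, I would first verify directly that each listed operation is a polymorphism of the corresponding reference template: constants $0$ and $1$ preserve their respective templates by definition; $\AND_2 \in \Pol(\hornsat)$ because the coordinate-wise AND of two solutions to a Horn clause is again a solution (the satisfying set of $y_1 \wedge \cdots \wedge y_{k-1} \to y_k$ is intersection-closed); $\OR_2 \in \Pol(\dualhornsat)$ by duality; $\MAJ_3 \in \Pol(\twosat)$ by a short case analysis on a clause $y \vee z$ (among three assignments satisfying the clause, at least two must agree on $y=1$ or at least two must agree on $z=1$); and $\XOR_3 \in \Pol(\lintwo)$ because the set of solutions to a linear equation over $\F_2$ is an affine subspace, closed under $x \oplus y \oplus z$.

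Next, I would invoke the standard fact (which itself follows from Post's classification) that $\Pol(\hornsat)$ is precisely the clone generated by $\AND_2$ together with projections, and analogously $\Pol(\dualhornsat), \Pol(\twosat), \Pol(\lintwo)$ are generated by $\OR_2$, $\MAJ_3$, and $\XOR_3$, respectively. Given this, if $\AND_2 \in \Pol(\Gamma)$ then $\Pol(\hornsat) \subseteq \Pol(\Gamma)$, and Theorem~\ref{thm:galois} yields a pp-reduction from $\Gamma$ to $\hornsat$; the remaining three non-trivial cases are identical. Cases $0, 1 \in \Pol(\Gamma)$ are immediate: constants as polymorphisms force every constraint relation to contain the all-zeros (resp.\ all-ones) tuple.

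For the NP-hardness direction, I would invoke Post's complete description of the lattice of Boolean clones. The key structural fact I need is that every Boolean clone containing \emph{none} of $\{0, 1, \AND_2, \OR_2, \MAJ_3, \XOR_3\}$ is contained in the clone $I^\ast$ of essentially unary operations (projections and their negations). Given a $\Gamma$ falling into none of the six cases, $\Pol(\Gamma) \subseteq I^\ast$; by Theorem~\ref{thm:galois}, any template $\Gamma'$ with $\Pol(\Gamma') = I^\ast$ admits a pp-reduction to $\Gamma$. Such templates include \textsf{1-in-3-SAT} and \textsf{NAE-3-SAT}, which are classically known to be NP-complete, so NP-hardness of $\CSP(\Gamma)$ follows.

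The main obstacle is the Post-lattice step: verifying that the six principal filters generated by $0, 1, \AND_2, \OR_2, \MAJ_3, \XOR_3$ cover every Boolean clone strictly above $I^\ast$. This is a finite case analysis using Post's catalog of Boolean clones, most cleanly carried out by induction on arity (any binary idempotent non-projection must be $\AND_2$ or $\OR_2$; any ternary idempotent non-projection that is not already built from these must be $\MAJ_3$, $\XOR_3$, or a near-projection), together with the observation that non-idempotent clones necessarily contain a constant. A minor subtlety throughout is the standard reduction allowing us to assume equality is in $\Gamma$, since pp-definitions automatically introduce equality constraints.
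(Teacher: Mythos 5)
The paper does not give its own proof of this theorem; it is stated as a classical result with citations, and the Appendix machinery it builds on (the Galois correspondence of Theorem~\ref{thm:galois} and Post's classification as packaged in Theorem~\ref{thm:post}) is precisely what your sketch invokes. So your outline is the standard one and is correct in structure: verify that each listed operation preserves the corresponding reference template, use Post's catalogue to see that $\Pol$ of each reference template is the clone generated by its distinguished operation so that Theorem~\ref{thm:galois} yields the pp-reduction, and in the remaining case confine $\Pol(\Gamma)$ to essentially unary functions and pp-reduce from an NP-hard template.

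One genuine error to flag in the hardness step: you assert that both \textsf{1-in-3-SAT} and \textsf{NAE-3-SAT} have polymorphism clone equal to $I^\ast$ (projections and negations). That holds for \textsf{NAE-3-SAT}, which is invariant under bitwise complementation, but not for \textsf{1-in-3-SAT}: negating the satisfying tuple $(1,0,0)$ gives $(0,1,1)$, which has Hamming weight two and lies outside the relation, so $\neg \notin \Pol(\textsf{1-in-3-SAT})$ and its polymorphism clone is only the trivial clone of projections, a strict subclone of $I^\ast$. Consequently, in the subcase where $\neg \in \Pol(\Gamma)$ (e.g.\ if $\Gamma$ consists only of self-complementary relations), one has $\Pol(\Gamma) \not\subseteq \Pol(\textsf{1-in-3-SAT})$ and the Galois correspondence gives no pp-reduction from \textsf{1-in-3-SAT} to $\Gamma$. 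Replacing \textsf{1-in-3-SAT} with \textsf{NAE-3-SAT} (or any template whose polymorphism clone is exactly $I^\ast$) repairs the argument. Beyond this, the step you call the ``main obstacle'' is exactly the content of Theorem~\ref{thm:post}, which the paper attributes to Chen~\cite{Chen2009} and Post~\cite{post}, and your suggested case analysis on arity is a reasonable way to carry it out.
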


\subsection{Extension to Mod-CSPs}

We now would like to take this theory of CSPs and port it to Mod-CSPs. To start, we show that the notion of pp-reduction is still meaningful for Mod-CSPs.

\begin{claim}
  Fix an Abelian group $G$ and $S \subset G$. Consider two $\Gamma_1$ and $\Gamma_2$ such that there is a pp-reduction from $\Gamma_1$ to $\Gamma_2$, then there is a polynomial time reduction from $\modcsp(\Gamma_1, G, S)$ to $\modcsp(\Gamma_2, G, S)$.
\end{claim}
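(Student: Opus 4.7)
My plan is to make the pp-reduction explicit and verify that it carries the modular structure through unchanged. First I would unfold the pp-definition: for each constraint $C' \in \Gamma_1$ with arity $\ell$, fix once and for all a formula $C'(x_1,\dots,x_\ell) = \exists y_1,\dots,y_{\ell'}\, \bigwedge_{i=1}^{k} C_i(z_{i,1},\dots,z_{i,\ar_i})$ witnessing the pp-reduction, where each $C_i \in \Gamma_2$ and each $z_{i,j}$ is some $x$ or $y$. The sizes $\ell'$ and $k$ depend only on $\Gamma_1,\Gamma_2$, so each such replacement takes $O(1)$ time.

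Next, given an instance $\Psi_1$ of $\modcsp(\Gamma_1, G, S)$ on variables $x_1,\dots,x_n$ with maps $g_1,\dots,g_n : D \to G$ and constraint $\sum_j g_j(x_j) \in S$, I would build $\Psi_2$ as follows. Initialize $\Psi_2$ with the same variables $x_1,\dots,x_n$ and set $g'_j := g_j$. Then for each constraint $C'(x_{j_1},\dots,x_{j_\ell})$ appearing in $\Psi_1$, introduce fresh auxiliary variables $y_1,\dots,y_{\ell'}$ (distinct copies for each occurrence), add the conjunction of $C_i$-constraints from the pp-witness to $\Psi_2$ with the appropriate variable substitution, and assign each auxiliary variable the zero map $g'_{y_t} \equiv 0 \in G$. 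The global modular constraint in $\Psi_2$ is then $\sum_{v} g'_v(v) \in S$, taken over all variables (old and new), but since the auxiliary maps are zero this reduces to $\sum_{j=1}^n g_j(x_j) \in S$, exactly the original constraint.

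For correctness, I would show a bijection-on-projections between solution sets. If $(x_1,\dots,x_n)$ satisfies $\Psi_1$, then for each $C'$-constraint, by the pp-definition there exist witnessing assignments for the associated $y$-variables; using these extends $x$ to a satisfying assignment of all the $C_i$ constraints in $\Psi_2$. The modular constraint is preserved because the $y$'s contribute $0$. Conversely, any satisfying assignment of $\Psi_2$ restricted to $x_1,\dots,x_n$ satisfies each $C'$-constraint of $\Psi_1$ (by the ``$\Leftarrow$'' direction of the pp-definition), and the modular constraint on $\Psi_2$ collapses to the one on $\Psi_1$.

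Finally, I would note the complexity bounds: the number of variables in $\Psi_2$ is $n + O_{\Gamma_1,\Gamma_2}(m)$ where $m$ is the number of constraints in $\Psi_1$, and the number of constraints in $\Psi_2$ is $O_{\Gamma_1,\Gamma_2}(m)$, all constructible in linear time after a one-time precomputation of the pp-witnesses. The only minor subtlety I anticipate is ensuring that each occurrence of a $C'$-constraint gets its \emph{own} fresh batch of $y$-variables (sharing would incorrectly couple unrelated constraints), but this is purely bookkeeping. No features of $S$ are used beyond its presence in the modular constraint, so the identical construction works uniformly for every $S \subseteq G$; this is the reason the claim naturally covers the list version of Mod-CSPs.
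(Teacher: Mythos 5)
Your proposal is correct and follows essentially the same route as the paper: replace each $\Gamma_1$-constraint by its pp-witness over $\Gamma_2$ with fresh existential variables, assign every auxiliary variable the zero map into $G$ so the global constraint is untouched, and observe the solution sets project onto each other. Your write-up is merely more explicit than the paper's about using fresh auxiliary variables per constraint occurrence and about the size/time bounds, but no new idea is involved.
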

\begin{proof}
  Consider an instance of $\modcsp(\Gamma_1, G, S)$ with local constraints $\Psi(x_1, \hdots, x_n)$ and the global constraint
  \begin{align}
    g_1(x_1) + \cdots + g_n(x_n) \in S. \label{eq:global-app}
  \end{align}
  The pp-reduction says that $\Psi(x_1, \hdots, x_n)$ is equivalent to $\Psi'(x_1, \hdots, x_n; y_1, \hdots, y_m)$, where $\Psi'$ is a formula with constraints from $\Gamma_2$.

  Also observe that (\ref{eq:global-app}) is equivalent to
  \begin{align}
    g_1(x_1) + \cdots g_n(x_n) + h_1(y_1) + \hdots + h_m(y_m) \in S \label{eq:global-app2}
  \end{align}
  where $h_1 = \cdots = h_m = 0$.

  Thus, $\Psi'$ and (\ref{eq:global-app2}) in $\modcsp(\Gamma_2, G, S)$ is equivalent to $\Psi$ and (\ref{eq:global-app}) in $\modcsp(\Gamma_1, G, S)$.
\end{proof}

Since Mod-CSPs are preserved under pp-reductions, Theorem~\ref{thm:galois} tells us that Mod-CSPs with more polymorphisms are at least as tractable.

To understand the complexity of Mod-CSPs, we use a result of Post~\cite{post}, which is explicitly stated in \cite{Chen2009}. First, we need a definition

\begin{definition}[c.f., \cite{Chen2009}]
  An operator $f : D^L \to D$ is \emph{essentially unary} if there exists $i \in [L]$ such that $f(x_1, \hdots, x_L) = f(y_1, \hdots, y_L)$ whenever $x_i = y_i$.
\end{definition}

Note this definition says that constant functions are essentially unary. The other common example are \emph{dictator} (or \emph{projection}) functions: $f(x) = x_i$.

\begin{theorem}[Theorem 5.1 of \cite{Chen2009}]\label{thm:post}
  Let $\Gamma$ be a Boolean template such that there exists $f(x_1, \hdots, x_L) \in \Pol(\Gamma)$ which is not essentially unary. Then, at least one of $\OR_2$, $\AND_2$, $\MAJ_3$, $\XOR_3$ is in $\Pol(\Gamma)$.
\end{theorem}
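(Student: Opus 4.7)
The plan is to leverage the fact that $\Pol(\Gamma)$ is a \emph{clone}---it contains all projections and is closed under composition---and to exhibit one of the four target operations by a chain of compositions and variable identifications starting from the hypothesized $f$. Recall that the essentially unary Boolean operations are precisely the two constants, the identity, and negation; so by hypothesis $f$ depends nontrivially on at least two of its arguments, and the clone it generates strictly contains the essentially unary clone. The goal thus becomes: every Boolean clone strictly above the essentially unary clone contains one of $\AND_2, \OR_2, \MAJ_3, \XOR_3$.

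First I would reduce to small arity. By repeatedly identifying pairs of arguments in $f$, we produce polymorphisms of successively lower arity. Either this eventually yields a non-essentially-unary polymorphism of arity $2$, or we reach a \emph{rigid} non-essentially-unary operation, meaning every further identification of two arguments collapses to an essentially unary function. The canonical rigid examples at arity $3$ are $\MAJ_3$ (since $\MAJ_3(x,x,y) = x$) and $\XOR_3$ (since $\XOR_3(x,x,y) = y$). The structural claim to verify is that rigidity at arity $3$ forces the operation, up to negating some inputs or the output, to be one of these two; rigidity at higher arity can be compressed down to arity $3$ by further identifications.

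In the non-rigid case I work with a non-essentially-unary $g \in \Pol(\Gamma)$ of arity~$2$. Up to negating inputs or the output, the only such $g$ are $\AND_2$, $\OR_2$, and $\XOR_2$. If $g = \AND_2$ or $g = \OR_2$ the conclusion is immediate; if $g = \XOR_2$ then $g(g(x,y),z) = \XOR_3 \in \Pol(\Gamma)$. For the negation-at-output versions one first extracts a negation via a diagonal identification (for instance, if $g(x,y) = \neg(x \wedge y)$ then $g(x,x) = \neg x$), and then composes to remove the output negation; inputs negations are handled similarly by composing with the derived $\neg$.

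The main obstacle is the rigid case: higher-arity near-unanimity or affine-like polymorphisms can be non-essentially-unary without obviously reducing, under single-pair identifications, to the four target operations. This is precisely the content of Post's classification of clones on $\{0,1\}$, and the cleanest proof is to invoke that lattice directly. A self-contained proof instead requires a finite but intricate combinatorial case analysis on how the diagonals $f(\ldots,x,\ldots,x,\ldots)$ behave, showing that if all arity-$2$ identifications become essentially unary then the truth table of $f$ must globally match the $\MAJ_3$ or $\XOR_3$ pattern after a suitable renaming---this is where I expect the bulk of the technical work to lie.
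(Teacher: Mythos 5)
The paper does not prove this theorem --- it is quoted as Theorem~5.1 of \cite{Chen2009} and rests on Post's classification of Boolean clones --- so there is no in-paper argument for your sketch to match. Your high-level plan (contract arguments by identifying pairs, dispose of the binary case by inspection, and characterize ``rigid'' operations all of whose identification minors are essentially unary) is the right shape; it essentially re-derives Rosenberg's minimal-clone classification specialized to the two-element domain, where no proper semiprojections of arity $\geq 3$ exist.

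As a proof, though, the sketch has gaps beyond the one you flag at the end. First, the assertion that rigidity at arity $\geq 4$ ``can be compressed down to arity $3$ by further identifications'' is exactly where a \'Swierczkowski-type lemma is needed: one must show that a Boolean operation of arity $\geq 4$ whose identification minors are all essentially unary is itself essentially unary; this is true but is a separate lemma, not a consequence of re-identifying. Second, the recipe for the binary case is not uniform: you propose to ``extract a negation via a diagonal identification'' and then compose it away, but for $g(x,y)=x\wedge\neg y$ the diagonal $g(x,x)$ is the constant $0$, not $\neg$, so that route fails; the desired conclusion $\AND_2\in\Pol(\Gamma)$ still holds, but via a different composition such as $g(x,g(x,y))=x\wedge y$, and analogous ad hoc compositions are needed for $\neg x\vee y$ and for $\neg(x\oplus y)$ (for the latter $g(g(x,y),z)=x\oplus y\oplus z$ directly). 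Third, the classification of rigid ternary operations --- that up to negating inputs and the output they are $\MAJ_3$ or $\XOR_3$ --- is asserted, and you explicitly defer it to Post's lattice. So the proposal correctly locates where the content of the theorem lies but does not actually supply it.
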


Now, we show that Boolean Mod-CSPs whose polymorphisms only have essentially unary operators are NP-complete.

\begin{lemma}\label{lem:mod-csp-hardness}
  Let $\Gamma$ be a CSP template over a domain $D = \{0, 1\}$. Let $G$ be an Abelian group, and let  $S$ be a nontrivial subset of $G$ ($S \neq \emptyset$ and $S \neq G$). If $\Pol(\Gamma)$ consists entirely of essentially unary operators, then $\modcsp(\Gamma, G, S)$ is NP-complete.
\end{lemma}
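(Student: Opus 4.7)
The plan is a case analysis on $\Pol(\Gamma)$, which, being a subclone of the essentially unary clone, is determined by which of $\{\neg, \mathrm{const}_0, \mathrm{const}_1\}$ it contains (six subclones in all, since every essentially unary operation has the form $h(x_i)$ for $h \in \{\mathrm{id}, \neg, \mathrm{const}_0, \mathrm{const}_1\}$). The central tool is a family of modular-constraint gadgets. Since $S \subsetneq G$ is nontrivial, fix $s \in S$ and $s' \in G \setminus S$. A single auxiliary variable $z$ with $g_z(1) = s$, $g_z(0) = s'$, and $g_j \equiv 0$ for every other variable forces $x_z = 1$ via the constraint $g_z(x_z) \in S$; symmetrically we can force $x_z = 0$. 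With two auxiliary variables $z_0, z_1$ and the remaining $g_j \equiv 0$, the four values of the modular sum at $(z_0, z_1) \in \bits^2$ are $\gamma, \gamma+a, \gamma+b, \gamma+a+b$ for some $\gamma, a, b \in G$; choosing $s_1, s_2 \in S$ and $\gamma \in G \setminus S$ with $s_1 + s_2 - \gamma \notin S$ (always possible for nontrivial $S$ by a short inclusion-exclusion argument applied to cosets of subgroups of $G$) realizes the inequality relation $\{(0,1),(1,0)\}$ on $(z_0, z_1)$.

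If $\mathrm{const}_c \notin \Pol(\Gamma)$ for some $c \in \bits$, the augmented template $\Gamma^+ := \Gamma \cup \{\{c\}\}$ has $\Pol(\Gamma^+) = \Pol(\Gamma) \cap \{f : f(c,\ldots,c)=c\}$; the only essentially unary operators preserving $\{c\}$ are $\mathrm{id}$ and $\mathrm{const}_c$, and by assumption $\mathrm{const}_c$ is excluded, so $\Pol(\Gamma^+)$ is just the dictators. By Theorem~\ref{thm:post} and Schaefer's theorem, $\CSP(\Gamma^+)$ is NP-complete. We reduce $\CSP(\Gamma^+)$ to $\modcsp(\Gamma, G, S)$ by introducing one auxiliary variable $z$, replacing every variable forced to $c$ by $z$ (equality is free via variable reuse), and applying the unary forcing gadget to pin $x_z = c$. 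If $\Pol(\Gamma)$ contains neither constant, then by Schaefer's theorem applied to $\Gamma$ directly, $\CSP(\Gamma)$ itself is already NP-hard, and a trivial reduction to $\modcsp(\Gamma, G, S)$ using a dummy variable to satisfy the modular constraint completes the argument.

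When both constant operators lie in $\Pol(\Gamma)$, we reduce from the NP-hard $\CSP(\Gamma^{++}) := \CSP(\Gamma \cup \{\{0\}, \{1\}\})$ (whose polymorphism clone collapses to just the identity by the same Post/Schaefer argument). The reduction introduces $z_0, z_1$, substitutes forced-$0$ variables by $z_0$ and forced-$1$ variables by $z_1$, and enforces $z_0 \neq z_1$ via the inequality gadget above; the two candidate solutions have $(z_0, z_1) \in \{(0,1),(1,0)\}$ and we must rule out the spurious $(1,0)$. If $\neg \notin \Pol(\Gamma)$, some $R \in \Gamma$ fails $\neg$-closure, so we can pick $\vec a \in R$ with $\neg \vec a \notin R$; adding the $R$-constraint obtained by plugging $z_0$ into each coordinate where $a_i = 0$ and $z_1$ where $a_i = 1$ is satisfied by $(z_0, z_1) = (0,1)$ (yielding $\vec a \in R$) but violated by $(1,0)$ (yielding $\neg \vec a \notin R$), isolating the correct assignment. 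If $\neg \in \Pol(\Gamma)$, every $\Gamma$-relation is closed under complementation, so any spurious solution with $(z_0, z_1) = (1,0)$ can be flipped coordinate-wise to a valid solution with $(0,1)$ of the original $\CSP(\Gamma^{++})$ instance, making the two cases equivalent for satisfiability.

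The principal obstacle is the last sub-case, where both constants and $\neg$ belong to $\Pol(\Gamma)$: in small groups like $G = \mathbb{F}_2$ the modular constraint can realize only parity-like relations on $(z_0, z_1)$, so we cannot force the specific assignment $(0,1)$ directly, and the reduction depends essentially on the global $\neg$-symmetry of $\Gamma$-constraints to absorb the flip ambiguity. Membership of $\modcsp(\Gamma, G, S)$ in NP is immediate (a satisfying assignment is a polynomial-length certificate), so completeness follows.
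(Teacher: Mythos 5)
Your proof is correct and takes a genuinely different route from the paper's. The paper gives a single unified reduction from graph $3$-coloring: for each vertex $v$ it introduces a block of $2^3$ variables encoding a ternary polymorphism $f_v \in \Pol(\Gamma)$, for each edge a block of $2^6$ variables encoding $g_{u,v}$, imposes minor (projection) relations linking $f_u, f_v$ to $g_{u,v}$, and uses the modular constraint on two entries of $f_{v_0}$ to rule out the constant polymorphisms; essential unarity then forces each $f_v$ to ``vote'' for a coordinate in $\{1,2,3\}$, and the minor relations turn the voting into a proper $3$-coloring. Your proof instead decomposes by which of $\{\mathrm{const}_0, \mathrm{const}_1, \neg\}$ lie in $\Pol(\Gamma)$, and in each case exhibits a direct reduction from an NP-complete Boolean CSP ($\CSP(\Gamma^+)$ or $\CSP(\Gamma^{++})$, whose hardness follows from Post's classification once constants are stripped out), using small modular gadgets on one or two fresh variables to simulate the missing constant(s) or the inequality relation. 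Your route is more elementary and avoids the polymorphism-encoding machinery, at the cost of a case analysis and a gadget-existence claim; the paper's route is more uniform (one reduction covers all subclones of the essentially unary clone) and makes the linear-size reduction, needed for the ETH remark that follows the lemma, completely transparent.

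One small gap to fill: the existence of the inequality gadget --- some $s_1, s_2 \in S$ and $\gamma \in G\setminus S$ with $s_1+s_2-\gamma \notin S$ --- is asserted via ``inclusion-exclusion on cosets'' but not proved, and it is the crux of your hardest sub-case. It does hold: if no such triple existed, then $S+S-t \subseteq S$ for every $t \in G\setminus S$, so $(S-t)+S = S$ and hence $S-t$ lies in the stabilizer subgroup $H = \{h \in G : S+h = S\}$; thus $S \subseteq t+H$, and since $S$ is a nonempty union of $H$-cosets it must equal the single coset $t+H$, forcing $t \in S$, a contradiction. You should spell out this (or an equivalent) argument.
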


By virtue of Schaefer's dichotomy theorem, this result does not hold for CSPs as constant polymorphisms can lead to tractability. Thus, we need to the use the global modular constraint to ``break'' these constant solutions.

\begin{proof}
  This will be shown via a reduction from graph 3-coloring like in \cite{DBLP:conf/coco/BrakensiekG16}. Let $C = \{1, 2, 3\}$ be the colors. Let $F = \{(1, 2), (1, 3), (2, 1), (2, 3), (3, 1), (3, 2)\}$ be all valid ways of coloring an edge.

  Let $(H, E)$ be a connected graph. For each vertex $v \in H$, construct a collection of variables $x_{v}(d_1, d_2, d_3)$ for all $d_1, d_2, d_3 \in D$. Likewise, for each edge $(u, v) \in E$ (think of the edge as directed so we can distinguish the two vertices), construct $y_{(u, v)}(d_{(i,j)})_{(i, j)\in F}$ for all $d_{(i, j)} \in D$ where $(i, j) \in F$.

  To talk about assignments to the variables, we let $f_v : D^C \to D$ be such that $f_v(d_1, d_2, d_3)$ is the value assigned to $x_{v}(d_1, d_2, d_3)$. For each edge $(u, v)$ we define $g_{u, v} : D^F \to D$ similarly.

  Now, we constrain that the $f$'s and $g$'s are polymorphisms. Fix a $v \in H$. For any constraint $R \in \Gamma$ on $k$ variables and for all $r^c \in R$ for all $c \in C$, we then specify that
  \[
    (f_v(r^1_1, r^2_1, r^3_1), f_v(r^1_2, r^2_2, r^3_2), \hdots, f_v(r^1_k, r^2_k, r^3_k)) \in R.
  \]
  By definition of a polymorphism, the valid assignments to $f_v$ are precisely the polymorphisms of $\Gamma$.

  Likewise, for all $(u, v) \in E$ and $R \in \Gamma$ and for all $r^e \in R$ for $e \in F$ we specify
  \[
    (g_{u,v}(r^e_1)_{e \in F}, \hdots g_{u, v}(r^e_k)_{e \in F}) \in R.
  \]

  So far we haven't linked these different polymorphisms to each other. To do that, we specify\footnote{Note that $=$ can always be simulated by using a common variable for all the equal instances, so we do not need to add $=$ to $\Gamma$.} that
  \[
    f_u(d_1, d_2, d_3) = g_{u, v}(d_{(1,2)}, d_{(1,3)}, d_{(2, 1)}, d_{(2, 3)}, d_{(3, 1)}, d_{(3, 2)})
  \]
  if for all $(i, j) \in F$, $d_i = d_{(i, j)}$. We likewise say that
  \[
    f_v(d_1, d_2, d_3) = g_{u, v}(d_{(1,2)}, d_{(1,3)}, d_{(2, 1)}, d_{(2, 3)}, d_{(3, 1)}, d_{(3, 2)})
  \]
  if for all $(i, j) \in F$, $d_j = d_{(i, j)}$. Formally, we are saying that $f_u$ and $f_v$ are \emph{minors} (or \emph{projections}) of $g_{u, v}$.

  Let $0 \in G$ be the identity and pick $s_0 \in S$ (possibly $0$). Let $s_1 \in G \setminus S$.

  Fix $v_0 \in V$. Specify that
  \begin{align*}
    g_{x_{v_0}(0, 0, 0)}(d) &= \begin{cases}
      0 & d = 0\\
      s_1 - s_0 & d = 1
    \end{cases}\\
    g_{x_{v_0}(1, 0, 0)}(d) &= \begin{cases}
      s_1 & d = 0\\
      s_0 & d = 1
    \end{cases}
  \end{align*}

  Let all other $i$'s be $0$. This completes the global constraint.

  Now, we need to show that are reduction is complete and sound. For completeness, if $(H, V)$ has a valid 3-coloring $c : H \to C$, there must be a permutation of the colors such that $c(v_0) = 1$. Consider the assignment

  \begin{align*}
    f_v(d_1, d_2, d_3) &= d_{c(v)},\ \ \ \ \ \ v \in H\\
    g_{u, v}(d_{(i,j)}) = d_{(c(u), c(v))}\ \ \ \ \ \ (u, v) \in E.
  \end{align*}
  It is clear that $f_v$ and $g_{u, v}$ are polymorphisms and that the minor constraints are satisfied. For the global constraint, observe that $g_{x_{v_0}(0, 0, 0)}(0) + g_{x_{v_0}(1, 0, 0)}(1) = 0 + s_0 \in S$.

  For the soundness, imagine that the instance of $\modcsp(\Gamma, G, S)$ has a solution. Thus, each $f_v$ and $g_{u, v}$ either is constant or nontrivially depends on a single coordinate. It is apparent from the minor relations that if $g_{u, v}$ depends nontrivially on coordinate $(i, j)$ then $f_u$ must depend nontrivially on coordinate $i$ and $f_v$ must depend nontrivially on coordinate $j$. Conversely, if $f_u$ depends nontrivially on coordinate $i$ then $g_{u, v}$ depends nontrivially on coordinate $(i, j)$ for some $j$. Similarly, if $f_v$ depends nontrivially on coordinate $j$ then $g_{u, v}$ depends nontrivial on $(i, j)$ for some $i$.

  Since $(H, E)$ is connected, we must either have that all of the $f$'s and $g$'s are constant or they all nontrivially depend on some coordinate. If the latter case occurs, we can assign a color to each vertex $v \in H$ based on which coordinate $f_v$ nontrivially depends on. The relations between these coordinates in the previous paragraph shows that this assignment is a valid $3$-coloring.

  Thus, $(H, E)$ is 3-colorable as long as the assignment is not constant in each polymorphism. But, if the assignment is constant on each $f_v$, and in particular $f_{v_0}$, this would imply that the global constraint either satisfies
  \[
    g_{x_{v_0}(0, 0, 0)}(0) + g_{x_{v_0}(1, 0, 0)}(0) = s_1,
  \]
  or
  \[
    g_{x_{v_0}(0, 0, 0)}(1) + g_{x_{v_0}(1, 0, 0)}(1) = s_1,
  \]
  but $s_1 \not\in S$, so we have a contradiction.
\end{proof}

\begin{remark}
  Note that the size of the reduction is linear in the size of the original instance $|H| + |V|$. Since graph 3-coloring cannot be done in $2^{o(|H|+|V|)}$ time assuming ETH, we have that such $\modcsp(\Gamma, G, S)$ cannot be done solve in $2^{o(n)}\poly(m)$ time (where $n$ is the number of variables and $m$ is the number of constraints) assuming ETH.
\end{remark}

\begin{remark}
  This result also holds for non-Boolean domains $D = \{1, \hdots, k\}$. The reduction is essentially identical, except the global constraint is modified so that there are $k$ nontrivial functions with
  \begin{align*}
    g_1(d) + \cdots g_k(d) &= s_1 \text{ for all } d \in D\\
    g_1(1) + \cdots g_k(k) &= s_0,
  \end{align*}
  which is certainly possible as there are $k^2$ variables ($k$ per function), but only $k+1$ constraints.
\end{remark}

\begin{remark}
  For non-Boolean CSPs, weaker conditions on $\Pol(\Gamma)$ are known to imply that $\CSP(\Gamma)$ Is NP-complete (e.g., \cite{DBLP:conf/dagstuhl/BartoKW17}). We leave as a challenge to the reader to find a suitable extension of such results to (non-Boolean) Mod-CSPs.
\end{remark}

With these structural results for Boolean Mod-CSPs, we can now state a few dichotomy-like results.

\subsection{Classification for prime powers}

\begin{theorem}\label{thm:boolean-mod-csp-dicot}
  Let $\Gamma$ be a Boolean CSP template. Let $G$ be a nontrivial Abelian group whose order is a prime power. Then, we have the following classification.

  \begin{enumerate}
  \item If one of  $\MAJ_3, \OR_2, \AND_2 \in \Pol(\Gamma)$, then $\modcsp(\Gamma, G)\in \mathsf{P}$.
  \item Otherwise, if $\XOR_3 \in \Pol(\Gamma)$ and $|G|$ is a power of two, then $\modcsp(\Gamma, G) \in \mathsf{P}$.
  \item Otherwise, if $\XOR_3 \in \Pol(\Gamma)$, then $\modcsp(\Gamma, G) \in \mathsf{RP} \cap \mathsf{QP}$.
  \item Otherwise, $\modcsp(\Gamma, G)$ is NP-complete.
  \end{enumerate}
\end{theorem}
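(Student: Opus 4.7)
The plan is to proceed by case analysis on $\Pol(\Gamma)$, combining the Galois correspondence (Theorem~\ref{thm:galois}), Post's classification (Theorem~\ref{thm:post}), and the algorithmic and hardness results established in the main body of the paper. The key observation, already recorded in the Claim preceding this subsection, is that pp-reductions preserve the $\modcsp$ structure: the extra existentially-quantified variables can be added to the modular constraint with the zero map without changing feasibility.

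For Case 1, if $\MAJ_3 \in \Pol(\Gamma)$ then $\Gamma$ pp-reduces to $\Gamma_{\twosat}$, and the algorithm in Section~\ref{sec:2sat} solves $\modcsp(\Gamma_{\twosat}, G)$ in polynomial time for any fixed abelian group $G$. If $\AND_2 \in \Pol(\Gamma)$, then $\Gamma$ pp-reduces to $\Gamma_{\hornsat}$, and the reduction from $\hornsat_G$ to $\hornsatmod_M$ (Section~\ref{sec:hornsat}) applied to prime power $|G|$ produces an instance over a prime power modulus; Corollary~\ref{cor:horn_sat_mod_primepower} then gives a polynomial time algorithm. The $\OR_2$ case reduces to the $\AND_2$ case by substituting $x_i \mapsto 1 - x_i$ throughout, which simply replaces each $g_i$ by $g_i(1 - \cdot)$ in the global constraint.

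For Cases 2 and 3, $\XOR_3 \in \Pol(\Gamma)$ yields via the Galois correspondence a pp-reduction from $\Gamma$ to $\Gamma_{\lintwo}$. Applying the reduction from $\lintwo_G$ to $\linmod_M$ in Section~\ref{sec:lineq} to prime power $|G|$ yields an instance of $\linmod_M$ with prime power modulus $M$ of the same characteristic as $G$. In Case 2 ($|G|$ a power of two), Proposition~\ref{prop:weight_modM_subspace}(1) gives $\ORdim(n, M) \le M - 1$, so Corollary~\ref{cor:lin2mod_rounds} ensures that the deterministic Algorithm~\ref{alg:LIN2MODm} with $R = M - 1$ rounds solves the problem in $n^{O(M)} = \poly(n)$ time. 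In Case 3 ($|G|$ an odd prime power), Proposition~\ref{prop:weight_modM_subspace}(2) gives $\ORdim(n, M) = O_M(\log n)$, so Corollary~\ref{prop:linmod_oddprimepower_randomized_alg} yields a randomized $\poly(n)$-time algorithm (placing the problem in $\mathsf{RP}$), while Algorithm~\ref{alg:LIN2MODm} with $R = O_M(\log n)$ rounds gives a deterministic quasipolynomial algorithm (placing it in $\mathsf{QP}$).

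Case 4 is the hardness direction. Since none of $\OR_2, \AND_2, \MAJ_3, \XOR_3$ belongs to $\Pol(\Gamma)$, Theorem~\ref{thm:post} forces every operation in $\Pol(\Gamma)$ to be essentially unary. The modular constraint $\sum_i g_i(x_i) = a$ corresponds to taking $S = \{a\}$, which is a proper nonempty subset of $G$ since $|G| \ge 2$; Lemma~\ref{lem:mod-csp-hardness} then directly yields NP-completeness of $\modcsp(\Gamma, G)$. I expect the main subtlety to be verifying that the $\hornsat_G \to \hornsatmod_M$ and $\lintwo_G \to \linmod_M$ reductions behave properly when $G$ is an arbitrary prime power group rather than cyclic; this is guaranteed by the structure theorem (Fact~\ref{fact:finite_abelian_structure}) combined with the explicit polynomial constructions in Sections~\ref{sec:hornsat} and \ref{sec:lineq}, which in the prime power case only involve a single characteristic.
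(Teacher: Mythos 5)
Your proposal is correct and follows essentially the same route as the paper's proof: pp-reduce via the Galois correspondence and Schaefer's theorem to $\twosatmod$, $\hornsatmod$, or $\linmod$ over $G$; invoke Corollary~\ref{cor:horn_sat_mod_primepower} (prime modulus), Corollary~\ref{cor:lin2mod_rounds} (power of two), and Proposition~\ref{prop:linmod_oddprimepower_randomized_alg} together with Corollary~\ref{cor:lin2mod_rounds} (odd prime power); and close Case 4 with Theorem~\ref{thm:post} plus Lemma~\ref{lem:mod-csp-hardness}. The only difference is that you are more explicit than the paper about the intermediate $\hornsat_G \to \hornsatmod_M$ and $\lintwo_G \to \linmod_M$ reductions and about verifying that the resulting modulus $M$ stays a prime power of the correct characteristic when $G$ does, which the paper leaves implicit.
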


\begin{proof}
  We prove the cases in order.

  \begin{enumerate}
  \item If $\MAJ_3 \in \Pol(\Gamma)$, then by Schaefer's theorem there is a pp-reduction from $\CSP(\Gamma)$ to $\twosat$. Thus, there is a pp-reduction from $\modcsp(\Gamma, G)$ to $\twosatmod_G$, which can be solved in polynomial time by Section~\ref{sec:2sat}.

    If $|G|$ is a prime power and $\AND_2 \in \Pol(\Gamma)$, then by Schaefer's theorem there is a pp-reduction from $\modcsp(\Gamma, G)$ to $\hornsatmod_G$, which can be solved in polynomial time by Corollary~\ref{cor:horn_sat_mod_primepower}. Likewise, if $\OR_2 \in \Pol(\Gamma)$, then there is a pp-reduction from $\modcsp(\Gamma, G)$ to $\dualhornsatmod_G$ which can also be solved in polynomial time by the same theorem, since solving $\hornsat$ and $\dualhornsat$ instances are equivalent.
  \item If $|G|$ is a power of two and $\XOR_3 \in \Pol(\Gamma)$, by Schaefer's theorem this is a pp-reduction from $\modcsp(\Gamma, G)$ to $\linmod_G$, which can be solved in deterministic polynomial time by Corollary~\ref{cor:lin2mod_rounds}.
  \item If $|G|$ is an odd prime power and $\XOR_3 \in \Pol(\Gamma)$, by Schaefer's theorem this is a pp-reduction from $\modcsp(\Gamma, G)$ to $\linmod_G$, which can be solved in randomized polynomial time and deterministic quasi-polynomial time by Proposition~\ref{prop:linmod_oddprimepower_randomized_alg} and Corollary~\ref{cor:lin2mod_rounds}.
  \item By Theorem~\ref{thm:post}, the only polymorphisms of $\Gamma$ are essentially unary. Thus, by Lemma~\ref{lem:mod-csp-hardness}, $\modcsp(\Gamma, G)$ is NP-complete. \qedhere
  \end{enumerate}
\end{proof}

\subsection{Partial classification for non-prime powers}\label{subsec:noprime}

\newcommand{\ORAND}{\operatorname{ORAND}}
\newcommand{\ANDOR}{\operatorname{ANDOR}}

Recall, that when we motivated Mod-CSPs, we said in the Boolean case, there are essentially only three nontrivial cases: $\hornsatmod_G$, $\linmod_G$, and $\twosatmod_G$. As shown in Theorem~\ref{thm:boolean-mod-csp-dicot}, this view is correct when $|G|$ is a prime power, as all three problems admit polynomial time algorithms. When $|G|$ is a non-prime power, the general classification is a bit more complicated. In particular, although $\hornsatmod_G$ fails to have a polynomial-time algorithm, a special case of the problem does.

To define this special case, consider the operators
\begin{align*}
  \ANDOR(x, y, z) &= x \wedge (y \vee z)\\
  \ORAND(x, y, z) &= x \vee (y \wedge z).
\end{align*}

Note that $\ANDOR(x, y, y) = \AND_2(x, y)$ and $\ORAND(x, y, y) = \OR_2(x, y)$, so any CSP with one of these as a polymorphism is pp-reducible to either $\hornsat$ or $\dualhornsat$. But, in the case where the group $G$ does not have prime power order, $\hornsat$ and $\dualhornsat$ have lower bounds away from $\mathsf{P}$ (assuming ETH). Thus, we need to study such problems separately. By the classification of Theorem~\ref{thm:post2}, understanding the case of $\ANDOR$ and $\ORAND$ will be the key to completing the complexity classification of Boolean Mod-CSPs in the case of non-prime power moduli.

The corresponding CSP for $\ANDOR$ is rather simple.

\begin{proposition}[e.g., \cite{post,creignou2008structure,allender2009complexity}]
  Let $\Gamma$ be a Boolean CSP template. If $\ANDOR \in \Pol(\Gamma)$, then $\Gamma$ is pp-reducible to a template $\Lambda_{\ANDOR}$ with constraints of the form
  \[
    \Lambda_{\ANDOR} = \left\{ \{(1)\}, \{(x, y) : x \rightarrow y\} \bigcup_{k = 1}^{\infty}\{(x_1, \hdots, x_k) : \neg x_1 \vee \neg x_2 \vee \cdots \vee \neg x_k\}\right\}
  \]
\end{proposition}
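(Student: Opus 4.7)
The plan is to invoke the Galois correspondence (Theorem~\ref{thm:galois}): since every relation in $\Gamma$ is preserved by $\ANDOR$, it suffices to show that every $n$-ary relation $R \subseteq \bits^n$ preserved by $\ANDOR$ is pp-definable (in fact without auxiliary variables) as a conjunction of constraints from $\Lambda_{\ANDOR}$, namely constants $x_i = 1$, implications $x_i \to x_j$, and negative clauses $\bigvee_{i \in T} \neg x_i$.

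First I would verify the easy direction that $\ANDOR$ itself preserves each constraint in $\Lambda_{\ANDOR}$: $\ANDOR(1,1,1) = 1$; monotonicity of $\ANDOR(x,y,z) = x \wedge (y \vee z)$ yields coordinatewise preservation of implications; and negative clauses are preserved because a zero in the first argument of $\ANDOR$ forces a zero in the output, so the guaranteed zero of $a^{(1)}$ propagates to $\ANDOR(a^{(1)},a^{(2)},a^{(3)})$. Then I define $\mathcal C(R)$ to be the collection of all such constraints satisfied by every tuple of $R$, and let $R'$ be the set of tuples satisfying every constraint in $\mathcal C(R)$. Trivially $R \subseteq R'$, so the task reduces to proving $R' \subseteq R$.

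Fix $s \in R'$ and set $P = \{i : s_i = 1\}$. Because $s$ does not violate $\bigvee_{i \in P} \neg x_i$, this clause is absent from $\mathcal C(R)$, producing a witness $t^* \in R$ with $t^*_i = 1$ for every $i \in P$. Similarly, for each $j \notin P$ and each $i \in P$ the implication $x_i \to x_j$ must fail on $R$ (otherwise $s_i = 1$ would force $s_j = 1$, contradicting $s_j = 0$), yielding $t^{(i,j)} \in R$ with $t^{(i,j)}_i = 1$ and $t^{(i,j)}_j = 0$. (In the degenerate case $P = \emptyset$ I instead use absence of the unary constraint $x_j = 1$ from $\mathcal C(R)$ to produce directly a tuple $u^{(j)} \in R$ with $u^{(j)}_j = 0$.)

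The final step is to glue these witnesses into $s$ by iterated $\ANDOR$'s inside $R$. Enumerating $P = \{i_1, \dots, i_m\}$ and $j \notin P$, set $v_1 := t^{(i_1,j)}$ and $v_{\ell+1} := \ANDOR(t^*, v_\ell, t^{(i_{\ell+1},j)})$; using distributivity together with the absorption identity $t^* \wedge (t^* \vee z) = t^*$, induction gives $v_m = t^* \wedge \bigvee_{\ell=1}^{m} t^{(i_\ell,j)}$. Setting $u^{(j)} := v_m$ yields $u^{(j)} \in R$ with $u^{(j)}_i = 1$ for every $i \in P$ and $u^{(j)}_j = 0$. Because $\AND_2(x,y) = \ANDOR(x,y,y)$, iterated application gives $s = \bigwedge_{j \notin P} u^{(j)} \in R$ (or $s = t^*$ when $P = [n]$), and so $s \in R$, completing $R' = R$. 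The main obstacle will be exactly this gluing step: converting the \emph{absence} of constraints in $\mathcal C(R)$ into an explicit $\ANDOR$-formula over tuples of $R$ that lands precisely at $s$. The critical observation is that anchoring the first argument of $\ANDOR$ at $t^*$ allows one to disjunctively combine the $t^{(i,j)}$'s to zero out coordinate $j$ without disturbing any coordinate in $P$.
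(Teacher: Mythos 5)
Your argument is correct. Note first that the paper does not actually prove this proposition: it cites it to Post's lattice (the remark identifies $\Pol(\Lambda_{\ANDOR})$ with the clone $S_{00}$ and the corresponding co-clone $IS_{00}$), so the standard route is clone-theoretic. You instead give a direct, self-contained relational proof, and a strong one at that: you show every relation $R$ preserved by $\ANDOR$ is a \emph{quantifier-free} conjunction of constants, implications, and negative clauses, with no auxiliary existential variables needed. The key steps all check out: the witnesses $t^*$ and $t^{(i,j)}$ exist precisely because $s\in R'$ forces the corresponding constraints to be absent from $\mathcal C(R)$; the recursion $v_{\ell+1}=\ANDOR(t^*,v_\ell,t^{(i_{\ell+1},j)})$ does collapse to $t^*\wedge\bigvee_\ell t^{(i_\ell,j)}$ (the identity you need is really $a\wedge((a\wedge b)\vee c)=a\wedge(b\vee c)$, i.e.\ distributivity plus idempotence rather than absorption, but the computation is the same); closure of $R$ under these term operations follows since $\ANDOR\in\Pol(R)$ and $\AND_2(x,y)=\ANDOR(x,y,y)$; and your separate handling of $P=\emptyset$ and $P=[n]$ covers the degenerate cases (the case $R=\emptyset$ is also fine, since then $\mathcal C(R)$ contains both $x_1=1$ and $\neg x_1$, forcing $R'=\emptyset$). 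What your approach buys is an explicit syntactic definition of each relation of $\Gamma$ over $\Lambda_{\ANDOR}$ without appealing to the Galois correspondence or to Post's classification; what the citation route buys is brevity and the simultaneous identification of the whole co-clone.
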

\begin{remark}
  This corresponds to the clone $S_{00}$ and co-clone $IS_{00}$ in Post's lattice.
\end{remark}

For $\ORAND$, the corresponding template is the negation of the above
\[
  \Lambda_{\ORAND} = \left\{\{(0)\}, \{(x, y) : x \rightarrow y\} \bigcup_{k = 1}^{\infty}\{(x_1, \hdots, x_k) : x_1 \vee x_2 \vee \cdots \vee x_k\}\right\}
\]

We now show that both of these problems are tractable for any Abelian group $G$.

\begin{lemma}\label{lem:ANDOR}
  For all finite Abelian groups $G$, $\modcsp(\Lambda_{\ANDOR}, G)$ and $\modcsp(\Lambda_{\ORAND}, G)$ are tractable in $O((n+m)^{|G|})$ time.
\end{lemma}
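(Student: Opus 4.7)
The plan is to first reduce $\modcsp(\Lambda_{\ORAND}, G)$ to $\modcsp(\Lambda_{\ANDOR}, G)$ via the substitution $y_i := 1 - x_i$: this exchanges the two constant templates, reverses every implication edge, turns positive clauses $x_1 \vee \cdots \vee x_k$ into NAND clauses $\neg y_1 \vee \cdots \vee \neg y_k$, and rewrites $\sum_i g_i(x_i) = a$ as $\sum_i g_i(1-y_i) = a$, which is still a Mod-CSP modular constraint. So it suffices to handle $\Lambda_{\ANDOR}$.

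Given an instance of $\modcsp(\Lambda_{\ANDOR}, G)$, I will preprocess by propagating ``$x=1$'' through the implications to compute the forced set $m^* \subseteq [n]$ (aborting with NO SOLUTION if any NAND clause is contained in $m^*$), SCC-contracting the implication graph on the free variables $F := [n] \setminus m^*$ to obtain a DAG $D$, and deleting implication edges whose target lies in $m^*$ (which are vacuous). Since $\AND_2(x,y) = \ANDOR(x,y,y)$ is a polymorphism of $\Lambda_{\ANDOR}$, the set of solutions is intersection-closed and $\FindMinimal$ (Algorithm~\ref{alg:findminimal}) applies. The main algorithm is then just Algorithm~\ref{alg:horn_sat_modm} with $R = |G|-1$ rounds: iterate over every $S \subseteq F$ with $|S| \le |G|-1$, compute $\FindMinimal(S \cup m^*)$, and return it if it satisfies the modular constraint. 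The runtime is $O((n+m)\cdot n^{|G|-1}) = O((n+m)^{|G|})$.

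Correctness is the main issue. Suppose a solution exists and let $A$ be a minimal one; set $F_A := A \setminus m^*$ and $h_i := g_i(1) - g_i(0) \in G$. The key claim is that the sub-DAG $D|_{F_A}$ has at most $|G|-1$ source vertices. Given this, taking $S$ to be those sources gives $|S| \le |G|-1$; by the upper-closure of $A$ in $D$ together with the absence of edges from $m^*$ into $F$, we have $\uparrow_D S = F_A$, so $\FindMinimal(S \cup m^*) = A$ and the algorithm finds $A$. To prove the claim, suppose to the contrary that $D|_{F_A}$ has sources $s_1, \ldots, s_r$ with $r \ge |G|$. For any nonempty $T \subseteq \{s_1, \ldots, s_r\}$, the set $A \setminus T$ is still an upper set of $D$ (sources of $D|_{F_A}$ have no incoming edges from $A$), still contains $m^*$, and still avoids every NAND clause (because $A$ does and $A \setminus T \subseteq A$); hence $A \setminus T$ is a valid $\Lambda_{\ANDOR}$-solution. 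Its modular weight differs from that of $A$ by exactly $-\sum_{s \in T} h_s$. Applying the pigeonhole principle to the $r+1 \ge |G|+1$ partial sums $0, h_{s_1}, h_{s_1}+h_{s_2}, \ldots, h_{s_1}+\cdots+h_{s_r}$ in the finite group $G$, two of them must coincide, yielding a nonempty $T$ with $\sum_{s \in T} h_s = 0$. Then $A \setminus T$ is a strictly smaller valid $\mod$-solution, contradicting the minimality of $A$. The delicate step in the argument is the verification that removing such sources simultaneously preserves upper-closure and NAND-avoidance; this is the main obstacle and is handled cleanly by the combinatorial observations above about sources and downward inheritance of NAND.
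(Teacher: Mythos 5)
Your proof is correct, but it takes a genuinely different route from the paper's. The paper handles $\Lambda_{\ORAND}$ directly by adapting the recursive $\twosatmod_G$ machinery: contract strongly connected components, pick a sink $x$, branch on $x=1$ versus $x=0$, and use the observation that any solution with $x=0$ lifts to one with $x=1$ to either enlarge the allowed set $S\subseteq G$ or prune the second branch, yielding the same $O((n+m)^{|G|})$ recurrence as for $\twosat$. You instead run the $\hornsatmod$-style enumeration ($\FindMinimal$ over all seeds of size at most $|G|-1$) and justify it with a new structural lemma: a minimal solution $A$ satisfying the modular constraint induces a sub-DAG with at most $|G|-1$ sources, because any subset $T$ of those sources can be deleted while preserving implication-closure and NAND-avoidance, so a telescoping/pigeonhole argument over the $|G|+1$ partial sums of the increments $h_s$ would otherwise produce a strictly smaller modular solution. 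Your verification of the delicate points (no edges from the forced set $m^*$ into the free part after propagation, downward-closedness of NAND clauses, and $\uparrow_D S = F_A$ so that $\FindMinimal$ recovers $A$ exactly) is sound. What your approach buys is a conceptual explanation of why the round-bounded $\FindMinimal$ algorithm---which provably fails for general $\hornsatmod_M$ when $M$ has multiple prime factors, due to $(M,R,d)$-system obstructions---nevertheless succeeds on this fragment of Post's lattice: the sources of a minimal solution can be removed \emph{independently}, so the obstructing set systems cannot arise. The paper's recursion is shorter given the $\twosat$ machinery already in place, but yields less structural information. One minor bookkeeping point for your write-up: when contracting strongly connected components you should also sum the corresponding weight functions $g_i$ (as in the paper's preprocessing for $\twosatmod_G$) and rewrite the NAND clauses on the contracted variables; both are routine.
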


\begin{proof}
  First, note that these two problems are equivalent up to flipping $0$ and $1$. As a result, we restrict attention to $\modcsp(\Lambda_{\ORAND}, G).$

  The overall algorithm is rather similar to the one for $\twosatmod_G$ with the auxillary set $S \subset G$. As a result, we only state the major differences.

  Consider the directed graph spanned by the implications. Like in the algorithm for $\twosatmod_G$ we can contract the strongly connected components to single variables. Note that this contract operation preserves $x_1 \vee \cdots \vee x_k$ (although it may reduce to a smaller number of variables). Also, if any variables are forced in value, we can propagate that information through the digraph and update the modular constraint.

  Now, take a vertex $x$ of the digraph which has zero outdegree. Set $x = 1$, and solve the remaining instance. If a solution is found, then quit. Otherwise, set $x = 0$, but observe like in the 2-SAT algorithm, any solution with $x = 0$ yields another solution when $x$ is set back to $1$. Thus, we can either expand $S$ in the $x = 0$ branch, or if $S$ does not expand, we can skip the branch entirely. Thus, we get a $O((n+m)^{|G|})$ algorithm like for 2-SAT.
\end{proof}

With this algorithmic result, we need another fact about Post's lattice. Stating this result requires us to define a couple of variants of $\hornsat$ and $\lintwo$.

\begin{itemize}
\item $\hornsatnoconst$ are instances of $\hornsat$ without any constraints of the form $x = 0$ or $x = 1$.
\item $\dualhornsatnoconst$ are instances of $\dualhornsat$ without any constraints of the form $x = 0$ or $x = 1$.
\item $\lintwoevenzero$ are instances of $\lintwo$ where every linear constraint is of the form $x_{i_1} \oplus \cdots \oplus x_{i_k} = 0$, where $k$ is even.
\end{itemize}

Note that the hardness (assuming ETH) of $\modcsp(\hornsatnoconst, G)$ and $\modcsp(\dualhornsatnoconst, G)$ follow from the proof of Proposition~\ref{prop:hornsatmod_hardness}, as no constants are specified in the constructed instance.

The hardness of $\modcsp(\lintwoevenzero, G)$ is a bit more technical and requires a slight modification of the proof of hardness of $\linmod_M$ in Proposition~\ref{prop:linmod_hardness}.

\begin{claim}
  If $|G|$ is divisible by $r \ge 2$ distinct odd primes, then $\modcsp(\lintwoevenzero, G)$ requires $\exp(\Omega_{|G|}((\log n/ \log \log n)^r))$ time assuming $\ETH$.
\end{claim}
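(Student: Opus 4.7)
The plan is to give a one-variable, polynomial-time reduction from $\linmod_M$ to $\modcsp(\lintwoevenzero, G)$; the claim then follows from the ETH-hardness of $\linmod_M(n)$ established in the corollary to Proposition~\ref{prop:linmod_hardness}. Let $p_1,\dots,p_r$ be the distinct odd primes dividing $|G|$, and set $M := p_1 p_2 \cdots p_r$, an odd integer with $r$ distinct prime factors. By Cauchy's theorem $G$ has an element of order $p_i$ for each $i$, so by the Chinese Remainder Theorem $G$ contains a cyclic subgroup $C \cong \Z/M\Z$. Any instance of $\modcsp(\lintwoevenzero, C)$ is automatically an instance of $\modcsp(\lintwoevenzero, G)$ by regarding its modular coefficients as lying in $C \subseteq G$ (the sum stays in $C$), so it suffices to establish hardness for the cyclic group $\Z/M\Z$.

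I would start from a hard instance of $\linmod_M(n)$ as produced by Proposition~\ref{prop:linmod_hardness}: variables $x'_1,\dots,x'_n \in \F_2$, linear constraints $U^\perp \bx' = \allzeros$, and a modular constraint $\sum_j a_j x'_j \equiv c \pmod M$. Every linear constraint already has right-hand side $0$, but individual rows of $U^\perp$ can have odd weight and hence are not in $\lintwoevenzero$. I introduce a single fresh variable $x'_0$, XOR it into every odd-weight row (turning it into an even-weight $\lintwoevenzero$ constraint with RHS $0$), leave even-weight rows unchanged, and replace the modular constraint by $a_0 x'_0 + \sum_j a_j x'_j \equiv c \pmod M$ with $a_0 := -\sum_j a_j \bmod M$. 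This only increases the variable count from $n$ to $n+1$, so once correctness is established the asymptotic ETH lower bound transfers directly.

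Correctness will come from the ``flip all bits'' symmetry $(x'_0,\bx') \mapsto (1 \oplus x'_0,\, \allones \oplus \bx')$, which preserves every constraint of the new linear system because each has even weight and RHS $0$. The branch $x'_0 = 0$ reduces verbatim to the original $\linmod_M$ instance. In the branch $x'_0 = 1$, the all-ones vector $\allones$ is a particular solution to the shifted linear system (each odd-weight row $r$ evaluates to $|r| \equiv 1 \pmod 2$), so the full solution set is $\{\allones + \bv : \bv \in \ker U^\perp\}$, and substitution into the modular constraint yields $\sum_j a_j v_j \equiv c - \sum_j a_j - a_0 \equiv c \pmod M$ by the choice of $a_0$---again the original condition. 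The main obstacle, and the step where the reduction could go wrong, is precisely controlling this $x'_0 = 1$ branch: with any other choice of $a_0$ it would contribute spurious satisfying assignments unrelated to the original $\linmod_M$ problem, so the delicate point is the bookkeeping around the flip symmetry that forces the choice $a_0 = -\sum_j a_j$.
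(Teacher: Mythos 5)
Your overall strategy --- passing to a cyclic subgroup $\Z/M\Z$ and then adding a single fresh variable to evenize the odd-weight rows, with the global flip symmetry doing the bookkeeping --- is a genuinely different route from the paper's, which instead re-engineers the polynomial inside the proof of Proposition~\ref{prop:linmod_hardness} (replacing $q$ by $a_0+z_0(q(z)^2-a_0)$ with $2a_0$ a quadratic non-residue mod $M$) so that every non-constant monomial contains $z_0$; this forces $\allones$ into the spanning set of the subspace and hence makes all rows of $U^\perp$ even-weight automatically. However, your reduction as written is unsound, and the error sits exactly at the point you yourself flagged as delicate. Substituting $x'_0=1$ and $x'_j=1-v_j$ into $a_0x'_0+\sum_j a_jx'_j\equiv c$ gives
\[
\sum_j a_j v_j \;\equiv\; a_0+\sum_j a_j-c \pmod M,
\]
not $c-\sum_j a_j-a_0$ as you wrote. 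With your choice $a_0=-\sum_j a_j$, the $x'_0=1$ branch therefore accepts exactly those $\bv\in\ker U^\perp$ with $\sum_j a_j v_j\equiv -c$, which for odd $M$ is inequivalent to the original condition unless $c\equiv 0$. So the new instance can be satisfiable through the $x'_0=1$ branch even when the $\linmod_M$ instance has no solution, and soundness of the reduction fails.

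The fix is small: take $a_0=2c-\sum_j a_j \bmod M$. Then the flip $(x'_0,\bx')\mapsto(1\oplus x'_0,\allones\oplus\bx')$ sends the modular value $s$ to $2c-s$, so it fixes the target value $c$, and both branches reduce to the original condition $\sum_j a_j v_j\equiv c$. (For the specific hard instances produced by Proposition~\ref{prop:linmod_hardness} one has $c=(\sum_j a_j)/2$, so the correct coefficient is simply $a_0=0$.) With that correction your one-extra-variable reduction goes through and is a clean alternative to the paper's quadratic-residue construction, which buys the even-weight property at the level of the polynomial rather than the linear system.
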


\begin{proof}
  Let $M$ be a product of the $r$ distinct odd primes dividing $|G|$. There is a subgroup of $G$ isomorphic to $\mathbb Z/M\mathbb Z$, so it suffices to prove the hardness of $\modcsp(\lintwoevenzero, \mathbb Z/M\mathbb Z)$. The proof is very similar to the proof of Proposition~\ref{prop:linmod_hardness}, so will only sketch the main differences.  Let $m$ be some positive integer to be chosen later. By Proposition~\ref{prop:ORdegree_upperbound_bits}, there exists a degree $O(m^{1/r})$ polynomial $q(z_1,z_2,\dots,z_m)$ which represents $\OR\mod M$ over $\sbits^m$ i.e. $q(z)=0\mod M$ iff $z=\allones$. Let $a_0$ be such that $2a_0$ is not a quadratic residue modulo $M$ i.e. there doesn't exist any $b$ such that $2a_0= b^2 \mod M$. Now define $$p(z_0,z_1,\dots,z_m)=a_0+z_0(q(z_1,\dots,z_m)^2-a_0).$$ Now we claim that $p$ represents $\OR \mod M$ over $\sbits^{m+1}$. If $z_0=1$, then $p(1,z_1,\dots,z_m)=q(z_1,\dots,z_m)^2 = 0 \mod M$ iff $z_1,\dots,z_m=1$. If $z_0=-1$, then $p(-1,z_1,\dots,z_m)=2a_0-q(z_1,\dots,z_m)^2\ne 0\mod M$ for any $z_1,\dots,z_m\in \sbits$.

  Now let $\Psi$ be a $\threesat$ instance with $m$ clauses $C_1(x),\cdots,C_m(x)$ and $t\le 3m$ variables $x_1,\dots,x_t$. We can assume that the variables take $\sbits$ values and $C_i(x)$ is a degree 3 polynomial which takes these $\sbits$ values and outputs $1$ if the clause is satisfied and $-1$ if not. Now consider the polynomial $\Gamma(x_0,x_1,\dots,x_t)=p(x_0,C_1(x),\dots,C_m(x))$. It is easy to see that $\Psi$ is satisfiable iff there exists $x_0,x_1,\dots,x_t\in \sbits$ such that $\Gamma(x_0,x_1,\dots,x_t)=0\mod M$. Let the degree of $\Gamma$ be $\Delta\le O(m^{1/r})$. Let $$\Gamma(x_0,x_1,\dots,x_t)=\gamma_0+\gamma_S \sum_{S\subset \set{0,1,\dots,t}, 1\le |S|\le \Delta} \prod_{i\in S}x_i.$$ Note that every non-constant monomial of $\Gamma$ has the variable $x_0$ in it. When we convert it to an instance of $\linmod_M$ as in the proof of Proposition~\ref{prop:linmod_hardness}, then the subspace is spanned by $\bu_0,\bu_1,\dots,\bu_t$ where $\bu_0=\allones$ as $x_0$ appears in every non-constant monomial. When we write $\bx' \in \linearspan{\bu_0,\bu_1,\dots,\bu_t}$ as $U^\perp\bx'=0$, all the rows of $U^\perp$ are orthogonal to $\bu_0=\allones$ and therefore have even number of 1s. Thus we get an instance of $\lintwo$ where every equation has an even number of variables and the constant term is zero, which is precisely $\lintwoevenzero$. The final instance has $n\le \binom{t+1}{\le \Delta}\le \binom{3m}{\le O(m^{1/r})}$ variables. Therefore we can choose $m=\Omega_M((\log n/\log \log n)^r)$.
\end{proof}

From these, we can classify a slice of Post's lattice.

\begin{theorem}[e.g., \cite{post,creignou2008structure,allender2009complexity}]\label{thm:post2}
  Let $\Gamma$ be a Boolean CSP. Then, if none of $\ANDOR, \ORAND, \MAJ_3 \in \Pol(\Gamma)$, then one of the problems $\hornsatnoconst$,\\ $\dualhornsatnoconst$ or $\lintwoevenzero$ is pp-reducible to $\Gamma$.
\end{theorem}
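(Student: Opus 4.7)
My plan is to apply the Galois correspondence (Theorem~\ref{thm:galois}) and work entirely at the level of polymorphism clones in Post's lattice. Since pp-reducibility of $\Lambda$ to $\Gamma$ is equivalent to $\Pol(\Gamma) \subseteq \Pol(\Lambda)$, it suffices to show that if $\Pol(\Gamma)$ contains none of $\ANDOR, \ORAND, \MAJ_3$, then $\Pol(\Gamma)$ is a subclone of at least one of $\Pol(\hornsatnoconst), \Pol(\dualhornsatnoconst), \Pol(\lintwoevenzero)$.

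First, I would handle the degenerate case via Theorem~\ref{thm:post}. If every member of $\Pol(\Gamma)$ is essentially unary, then $\Pol(\Gamma)$ consists only of projections and constants, so it automatically sits inside any clone containing these operations; in particular, both constants $0$ and $1$ lie in $\Pol(\hornsatnoconst)$ (every positive Horn clause without constant-fixing is trivially satisfied by the all-zero and all-one assignments), so the containment holds. Otherwise, Theorem~\ref{thm:post} gives us at least one of $\OR_2, \AND_2, \MAJ_3, \XOR_3$ in $\Pol(\Gamma)$; since $\MAJ_3$ is excluded by hypothesis, this leaves three subcases. If $\AND_2 \in \Pol(\Gamma)$, then since clones are closed under composition and $\ANDOR(x,y,z) = \AND_2(x, \OR_2(y,z))$, we must have $\OR_2 \notin \Pol(\Gamma)$, which pushes $\Pol(\Gamma)$ into the conjunction branch of Post's lattice; its maximal element that does not insist on preserving individual constants is precisely $\Pol(\hornsatnoconst)$. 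The case $\OR_2 \in \Pol(\Gamma)$ is dual via $\ORAND(x,y,z) = \OR_2(x, \AND_2(y,z))$ and yields $\Pol(\dualhornsatnoconst)$. Finally, if only $\XOR_3$ appears (and neither $\AND_2$ nor $\OR_2$), then $\Pol(\Gamma)$ is an affine clone over $\F_2$, and the exclusions restrict us to the branch whose top element consists of affine operations preserving even-weight zero-sum linear relations, which is exactly $\Pol(\lintwoevenzero)$.

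The main obstacle I anticipate is verifying each of these Post-lattice containments rigorously, in particular pinning down $\Pol(\hornsatnoconst), \Pol(\dualhornsatnoconst), \Pol(\lintwoevenzero)$ as the correct named clones in Post's classification (they sit just above $\Pol(\hornsat), \Pol(\dualhornsat), \Pol(\lintwo)$ respectively, dropping the $\{0\}$- and $\{1\}$-preservation requirements). Fortunately, the full Post-lattice walk has already been carried out in the Boolean-clone literature cited in the paper~\cite{post, creignou2008structure, allender2009complexity}; I would invoke that classification to certify that the three target clones are exactly the maximal clones in Post's lattice avoiding $\ANDOR, \ORAND, \MAJ_3$, which together with the Galois correspondence immediately yields the theorem.
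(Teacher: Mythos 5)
The paper does not supply a proof of this theorem; it is cited as a known consequence of Post's classification of Boolean clones, and your overall strategy (pass to polymorphism clones via the Galois correspondence and read the containments off Post's lattice) is the standard route and consistent with the cited sources. However, your handling of the essentially-unary case contains a genuine error. An essentially unary operation depends on a single coordinate but may apply \emph{negation} to it; ``essentially unary'' is not the same as ``projection or constant.'' If $\neg$ (composed with a projection) lies in $\Pol(\Gamma)$ — which certainly satisfies the hypothesis, since $\neg$ together with projections and constants generates a clone containing none of $\ANDOR,\ORAND,\MAJ_3$ — then $\Pol(\Gamma)\not\subseteq\Pol(\hornsatnoconst)$: the binary Horn relation $\{(x_1,x_2):x_2\to x_1\}=\{(0,0),(1,0),(1,1)\}$ contains $(1,0)$, but coordinatewise negation maps $(1,0)\mapsto(0,1)\notin\{(0,0),(1,0),(1,1)\}$, so $\neg$ is not a polymorphism of $\hornsatnoconst$ (nor, dually, of $\dualhornsatnoconst$). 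Your stated containment therefore fails. The correct target for this case is $\lintwoevenzero$: every essentially unary operation is affine over $\F_2$ (projections, $\neg x = 1+x$, and constants all have the form $a_0+\sum_j a_j x_j$), and any affine $f(x_1,\dots,x_L)=a_0+\sum_j a_j x_j$ preserves a constraint $\sum_{i\in S}x_i=0$ with $|S|$ even because $\sum_{i\in S} f(v^1_i,\dots,v^L_i)=|S|a_0+\sum_j a_j\sum_{i\in S}v^j_i=0$; hence the clone of essentially unary operations sits inside $\Pol(\lintwoevenzero)$.

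Beyond that fix, note that the remaining subcases of your argument are asserted rather than proved (e.g., that a clone containing $\wedge$ but not $\vee$ must lie below $\Pol(\hornsatnoconst)$, or that a clone containing $\XOR_3$ and avoiding $\AND_2,\OR_2,\MAJ_3$ lies in the affine clone), and verifying them requires the same walk through Post's lattice that you defer to at the end. Since the final appeal to the classification already gives the theorem outright, the case analysis is partly redundant; if you intend it as a self-contained argument, each of those containments needs to be justified from the lattice, and the essentially-unary case must be routed to $\lintwoevenzero$ as above.
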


From this, we can now prove the classification for the non-prime-power case.

\begin{theorem}\label{thm:boolean-mod-csp-dicot2}
  Let $\Gamma$ be a Boolean CSP template. Let $G$ be a nontrivial Abelian group, such that $|G| = p_1^{e_1} \hdots p_r^{e_r}$ where $p_1 < p_2 < \cdots < p_r$ are primes, $e_1, \hdots, e_r \ge 1$ and $r \ge 2$. Then, we have the following classification.

  \begin{enumerate}
  \item If one of $\MAJ_3, \ANDOR, \ORAND \in \Pol(\Gamma)$, then $\modcsp(\Gamma, G) \in \mathsf{P}$.
  \item Otherwise, if $p_1 = 2$ and $r = 2$ and $\XOR_3 \in \Pol(\Gamma)$. Then, $\modcsp(\Gamma, G) \in \mathsf{QP}$.
  \item Otherwise, if one of $\XOR_3$, $\AND_2$, $\OR_2$ is in $\Pol(\Gamma)$, then $\modcsp(\Gamma, G)$ cannot be solved in less than quasi-polynomial time, assuming the exponential time hypothesis.
  \item Otherwise, $\modcsp(\Gamma, G)$ is $\mathsf{NP}$-complete.
  \end{enumerate}
\end{theorem}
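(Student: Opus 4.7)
The proof will proceed by handling the four cases in order, exploiting the pp-reduction machinery for Mod-CSPs established earlier in the appendix.

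For Case 1, the plan is to exhibit pp-reductions to tractable Mod-CSPs. If $\MAJ_3 \in \Pol(\Gamma)$, Schaefer's theorem gives a pp-reduction from $\Gamma$ to $\twosat$, so $\modcsp(\Gamma,G)$ pp-reduces to $\twosatmod_G$, which lies in $\mathsf{P}$ by the DAG algorithm of Section~\ref{sec:2sat} (generalized to arbitrary abelian $G$). If $\ANDOR \in \Pol(\Gamma)$ or $\ORAND \in \Pol(\Gamma)$, the corresponding slice of Post's lattice gives a pp-reduction to $\Lambda_{\ANDOR}$ or $\Lambda_{\ORAND}$, and the problem is in $\mathsf{P}$ by Lemma~\ref{lem:ANDOR}.

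For Case 2, with $|G|=2^{e_1} p_2^{e_2}$ and $\XOR_3 \in \Pol(\Gamma)$, Schaefer gives a pp-reduction from $\Gamma$ to $\lintwo$. Thus $\modcsp(\Gamma,G)$ pp-reduces to $\lintwo_G$; and via the reduction from $\lintwo_G$ to $\linmod_M$ in Section~\ref{sec:lineq}, we obtain an instance of $\linmod_M$ whose modulus is a product of a power of $2$ and an odd prime power. Applying Proposition~\ref{prop:weight_modM_subspace}(3)--(4) together with Corollary~\ref{cor:lin2mod_rounds} then yields a quasi-polynomial runtime.

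For Case 3, since none of $\MAJ_3,\ANDOR,\ORAND$ lies in $\Pol(\Gamma)$, Theorem~\ref{thm:post2} guarantees that at least one of $\hornsatnoconst$, $\dualhornsatnoconst$, or $\lintwoevenzero$ pp-reduces to $\Gamma$, and hence that the corresponding Mod-CSP pp-reduces to $\modcsp(\Gamma,G)$. Since we are not in Case 2, $|G|$ has either at least three distinct prime factors or at least two odd prime factors, so in every sub-case the target problem has the hardness prerequisite met. The QP lower bound for $\modcsp(\hornsatnoconst,G)$ and $\modcsp(\dualhornsatnoconst,G)$ then follows from Proposition~\ref{prop:hornsatmod_hardness} together with the remark that its reduction never introduces constant-forcing clauses; the corresponding QP lower bound for $\modcsp(\lintwoevenzero,G)$ follows from the claim stated just before Theorem~\ref{thm:post2}. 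The matching between which polymorphism is present and which target problem reduces is handled by Post's classification.

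For Case 4, none of $\MAJ_3,\ANDOR,\ORAND,\XOR_3,\AND_2,\OR_2$ lie in $\Pol(\Gamma)$, so Theorem~\ref{thm:post} forces every polymorphism of $\Gamma$ to be essentially unary, whence Lemma~\ref{lem:mod-csp-hardness} (applied with $S=\{a\}$ for the prescribed target $a\in G$, which is a nontrivial subset since $G$ is nontrivial) gives $\mathsf{NP}$-completeness. The main obstacle will be in Case 3: tracking, across Post's lattice, exactly which of the three ``no constants'' hard problems is guaranteed to pp-reduce to $\Gamma$ under each polymorphism hypothesis, and verifying that the hardness reductions of Sections~\ref{sec:hornsat_hardness}--\ref{sec:linmod_hardness} indeed avoid fixing any variables to constants (for $\hornsatnoconst$/$\dualhornsatnoconst$) or producing only even-parity zero-sum linear constraints (for $\lintwoevenzero$); the latter is where a fresh modification of the $\threesat$-to-$\linmod_M$ reduction, using the auxiliary variable $x_0$ that squares away an odd summand, is required.
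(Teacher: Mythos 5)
Your proposal follows the same case-by-case route as the paper, using Schaefer/Post/Galois-correspondence to route each sub-case through the appropriate algorithmic (Section~\ref{sec:2sat}, Lemma~\ref{lem:ANDOR}, Corollary~\ref{cor:lin2mod_rounds}) or hardness (Theorem~\ref{thm:post2}, Proposition~\ref{prop:hornsatmod_hardness}, the claim before Theorem~\ref{thm:post2}, Lemma~\ref{lem:mod-csp-hardness}) result. In fact you identify the \emph{intended} hardness references more accurately than the paper's own text, which appears to mis-cite Corollary~\ref{cor:subspace_modm_pfr} and Proposition~\ref{prop:ORdegree_upperbound_bits} at that point.

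One intermediate claim in your Case 3 is false as stated: it is not true that ``we are not in Case 2'' implies $|G|$ has at least three prime factors or at least two odd ones. The negation of Case 2 can hold simply because $\XOR_3 \notin \Pol(\Gamma)$ while still having $p_1 = 2$ and $r = 2$ (so only one odd prime factor). The argument can be repaired by observing, as you hint at the end, that the prime-factor prerequisite must be matched to \emph{which} of the three no-constants problems actually pp-reduces: when $\XOR_3 \notin \Pol(\Gamma)$ but $\AND_2$ or $\OR_2$ is, the Galois correspondence rules out $\lintwoevenzero$ as the pp-reducible target (since $\AND_2, \OR_2 \notin \Pol(\lintwoevenzero)$), so one of the Horn-type problems pp-reduces, and those hardness reductions only require $r \ge 2$ distinct primes (odd or not). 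Conversely, when $\XOR_3 \in \Pol(\Gamma)$ but we avoided Case 2, $|G|$ genuinely has at least two odd prime factors and the $\lintwoevenzero$ hardness claim applies. With this correction, the proposal matches the paper's proof.
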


\begin{proof}
  Again, we prove the results in order.
  \begin{enumerate}
  \item This follows from Section~\ref{sec:2sat} and Lemma~\ref{lem:ANDOR}.

  \item This follows from Corollary~\ref{cor:lin2mod_rounds}.

  \item By Theorem~\ref{thm:post2}, we have that one of $\hornsatnoconst$,\\$\dualhornsatnoconst$, or $\lintwoevenzero$ pp-reduces to $\Gamma$, so the quasi-polynomial lower bounds (assuming ETH) of Corollary~\ref{cor:subspace_modm_pfr} and Corollary~\ref{prop:ORdegree_upperbound_bits} apply.

  \item By Theorem~\ref{thm:post} and Lemma~\ref{lem:mod-csp-hardness}, we have this $\mathsf{NP}$-completeness result. \qedhere

  \end{enumerate}
\end{proof}

\bibliographystyle{alpha}
\bibliography{references,csp_master}

\newcommand{\etalchar}[1]{$^{#1}$}
\begin{thebibliography}{CGPT06}

\bibitem[ABI{\etalchar{+}}09]{allender2009complexity}
Eric Allender, Michael Bauland, Neil Immerman, Henning Schnoor, and Heribert
  Vollmer.
\newblock The complexity of satisfiability problems: Refining schaefer's
  theorem.
\newblock {\em Journal of Computer and System Sciences}, 75(4):245--254, 2009.

\bibitem[AWZ17]{ArtmannWZ17}
Stephan Artmann, Robert Weismantel, and Rico Zenklusen.
\newblock A strongly polynomial algorithm for bimodular integer linear
  programming.
\newblock In {\em Proceedings of the 49th Annual ACM SIGACT Symposium on Theory
  of Computing}, pages 1206--1219. ACM, 2017.

\bibitem[BBR94]{BarringtonBR94}
David A~Mix Barrington, Richard Beigel, and Steven Rudich.
\newblock Representing boolean functions as polynomials modulo composite
  numbers.
\newblock {\em Computational Complexity}, 4(4):367--382, 1994.

\bibitem[BDL14]{BhowmickDL14}
Abhishek Bhowmick, Zeev Dvir, and Shachar Lovett.
\newblock New bounds for matching vector families.
\newblock {\em SIAM Journal on Computing}, 43(5):1654--1683, 2014.

\bibitem[BG16]{DBLP:conf/coco/BrakensiekG16}
Joshua Brakensiek and Venkatesan Guruswami.
\newblock New hardness results for graph and hypergraph colorings.
\newblock In Ran Raz, editor, {\em 31st Conference on Computational Complexity,
  {CCC} 2016, May 29 to June 1, 2016, Tokyo, Japan}, volume~50 of {\em LIPIcs},
  pages 14:1--14:27. Schloss Dagstuhl - Leibniz-Zentrum fuer Informatik, 2016.

\bibitem[BJK05]{Bulatov2005}
A.~Bulatov, P.~Jeavons, and A.~Krokhin.
\newblock Classifying the {Complexity} of {Constraints} {Using} {Finite}
  {Algebras}.
\newblock {\em SIAM Journal on Computing}, 34(3):720--742, January 2005.

\bibitem[BKW17]{DBLP:conf/dagstuhl/BartoKW17}
Libor Barto, Andrei~A. Krokhin, and Ross Willard.
\newblock Polymorphisms, and how to use them.
\newblock In Andrei~A. Krokhin and Stanislav Zivny, editors, {\em The
  Constraint Satisfaction Problem: Complexity and Approximability}, volume~7 of
  {\em Dagstuhl Follow-Ups}, pages 1--44. Schloss Dagstuhl - Leibniz-Zentrum
  fuer Informatik, 2017.

\bibitem[BM10]{lmcs:1025}
Andrei~A. Bulatov and Daniel Marx.
\newblock The complexity of global cardinality constraints.
\newblock {\em Logical Methods in Computer Scienced}, Volume 6, Issue 4,
  October 2010.

\bibitem[Bul06]{DBLP:journals/jacm/Bulatov06}
Andrei~A. Bulatov.
\newblock A dichotomy theorem for constraint satisfaction problems on a
  3-element set.
\newblock {\em J. {ACM}}, 53(1):66--120, 2006.

\bibitem[Bul17]{DBLP:conf/focs/Bulatov17}
Andrei~A. Bulatov.
\newblock A dichotomy theorem for nonuniform csps.
\newblock In Umans \cite{DBLP:conf/focs/2017}, pages 319--330.

\bibitem[CGPT06]{ChattopadhyayGPT06}
Arkadev Chattopadhyay, Navin Goyal, Pavel Pudl{\'{a}}k, and Denis
  Th{\'{e}}rien.
\newblock Lower bounds for circuits with mod{\_}m gates.
\newblock In {\em 47th Annual {IEEE} Symposium on Foundations of Computer
  Science ({FOCS} 2006), 21-24 October 2006, Berkeley, California, USA,
  Proceedings}, pages 709--718, 2006.

\bibitem[Che09]{Chen2009}
Hubie Chen.
\newblock A {Rendezvous} of {Logic}, {Complexity}, and {Algebra}.
\newblock {\em ACM Comput. Surv.}, 42(1):2:1--2:32, December 2009.

\bibitem[CKZ08]{creignou2008structure}
Nadia Creignou, Phokion~G Kolaitis, and Bruno Zanuttini.
\newblock Structure identification of boolean relations and plain bases for
  co-clones.
\newblock {\em Journal of Computer and System Sciences}, 74(7):1103--1115,
  2008.

\bibitem[CN10]{cook2010logical}
Stephen Cook and Phuong Nguyen.
\newblock {\em Logical foundations of proof complexity}, volume~11.
\newblock Cambridge University Press Cambridge, 2010.

\bibitem[CSS10]{Creignou:2010:NBC:1805950.1805954}
Nadia Creignou, Henning Schnoor, and Ilka Schnoor.
\newblock Nonuniform boolean constraint satisfaction problems with cardinality
  constraint.
\newblock {\em ACM Trans. Comput. Logic}, 11(4):24:1--24:32, July 2010.

\bibitem[CT15]{CohenT15}
Gil Cohen and Avishay Tal.
\newblock Two structural results for low degree polynomials and applications.
\newblock {\em Approximation, Randomization, and Combinatorial Optimization.
  Algorithms and Techniques}, page 680, 2015.

\bibitem[CW09]{ChattopadhyayW09}
Arkadev Chattopadhyay and Avi Wigderson.
\newblock Linear systems over composite moduli.
\newblock In {\em 50th Annual {IEEE} Symposium on Foundations of Computer
  Science, {FOCS} 2009, October 25-27, 2009, Atlanta, Georgia {USA}}, pages
  43--52, 2009.

\bibitem[DG16]{DvirG16}
Zeev Dvir and Sivakanth Gopi.
\newblock 2-server pir with subpolynomial communication.
\newblock {\em Journal of the ACM (JACM)}, 63(4):39, 2016.

\bibitem[DGY11]{DvirGY11}
Zeev Dvir, Parikshit Gopalan, and Sergey Yekhanin.
\newblock Matching vector codes.
\newblock {\em SIAM Journal on Computing}, 40(4):1154--1178, 2011.

\bibitem[Efr12]{Efremenko12}
Klim Efremenko.
\newblock 3-query locally decodable codes of subexponential length.
\newblock {\em SIAM Journal on Computing}, 41(6):1694--1703, 2012.

\bibitem[FV98]{DBLP:journals/siamcomp/FederV98}
Tom{\'{a}}s Feder and Moshe~Y. Vardi.
\newblock The computational structure of monotone monadic {SNP} and constraint
  satisfaction: {A} study through datalog and group theory.
\newblock {\em {SIAM} J. Comput.}, 28(1):57--104, 1998.

\bibitem[GL16]{Guruswami2SAT2016}
Venkatesan Guruswami and Euiwoong Lee.
\newblock Complexity of approximating csp with balance / hard constraints.
\newblock {\em Theory of Computing Systems}, 59(1):76--98, Jul 2016.

\bibitem[Gop09]{Gopalan09}
Parikshit Gopalan.
\newblock A note on {E}fremenko's locally decodable codes.
\newblock In {\em Electronic Colloquium on Computational Complexity (ECCC)},
  volume~16, 2009.

\bibitem[Gop14]{Gopalan14}
Parikshit Gopalan.
\newblock Constructing ramsey graphs from boolean function representations.
\newblock {\em Combinatorica}, 34(2):173--206, 2014.

\bibitem[Gro00]{Grolmusz00}
Vince Grolmusz.
\newblock Superpolynomial size set-systems with restricted intersections mod 6
  and explicit ramsey graphs.
\newblock {\em Combinatorica}, 20(1):71--86, 2000.

\bibitem[IP01]{ImpagliazzoP01}
Russell Impagliazzo and Ramamohan Paturi.
\newblock On the complexity of k-sat.
\newblock {\em Journal of Computer and System Sciences}, 62(2):367--375, 2001.

\bibitem[IPZ01]{ImpagliazzoPZ01}
Russell Impagliazzo, Ramamohan Paturi, and Francis Zane.
\newblock Which problems have strongly exponential complexity?
\newblock {\em Journal of Computer and System Sciences}, 63(4):512--530, 2001.

\bibitem[Jea98]{jeavons88}
Peter Jeavons.
\newblock On the algebraic structure of combinatorial problems.
\newblock {\em Theoretical Computer Science}, 200(1--2):185 -- 204, 1998.

\bibitem[KDW04]{KerenidisdW04}
Iordanis Kerenidis and Ronald De~Wolf.
\newblock Exponential lower bound for 2-query locally decodable codes via a
  quantum argument.
\newblock {\em Journal of Computer and System Sciences}, 69(3):395--420, 2004.

\bibitem[Kro67]{krom1967decision}
Melven~R Krom.
\newblock The decision problem for a class of first-order formulas in which all
  disjunctions are binary.
\newblock {\em Mathematical Logic Quarterly}, 13(1-2):15--20, 1967.

\bibitem[KT00]{KatzT00}
Jonathan Katz and Luca Trevisan.
\newblock On the efficiency of local decoding procedures for error-correcting
  codes.
\newblock In {\em Proceedings of the thirty-second annual ACM symposium on
  Theory of computing}, pages 80--86. ACM, 2000.

\bibitem[LV18]{LiuV18}
Tianren Liu and Vinod Vaikuntanathan.
\newblock Breaking the circuit-size barrier in secret sharing.
\newblock In {\em Proceedings of the 50th Annual {ACM} {SIGACT} Symposium on
  Theory of Computing, {STOC} 2018, Los Angeles, CA, USA, June 25-29, 2018},
  pages 699--708, 2018.

\bibitem[MR10]{MoshkovitzR10}
Dana Moshkovitz and Ran Raz.
\newblock Two-query pcp with subconstant error.
\newblock {\em Journal of the ACM (JACM)}, 57(5):29, 2010.

\bibitem[NSZ18]{NageleSZ18}
Martin N{\"a}gele, Benny Sudakov, and Rico Zenklusen.
\newblock Submodular minimization under congruency constraints.
\newblock In {\em Proceedings of the Twenty-Ninth Annual ACM-SIAM Symposium on
  Discrete Algorithms}, pages 849--866. Society for Industrial and Applied
  Mathematics, 2018.

\bibitem[Pos41]{post}
Emil~L Post.
\newblock {\em The two-valued iterative systems of mathematical logic}.
\newblock Number~5 in Annals of Mathematics Studies. Princeton University
  Press, 1941.

\bibitem[Sch78]{Schaefer:1978}
Thomas~J. Schaefer.
\newblock The complexity of satisfiability problems.
\newblock In {\em Proceedings of the Tenth Annual ACM Symposium on Theory of
  Computing}, STOC '78, pages 216--226, New York, NY, USA, 1978. ACM.

\bibitem[TB98]{BarringtonT98}
G{\'a}bor Tardos and DA~Mix Barrington.
\newblock A lower bound on the mod 6 degree of the or function.
\newblock {\em Computational Complexity}, 7(2):99--108, 1998.

\bibitem[Uma17]{DBLP:conf/focs/2017}
Chris Umans, editor.
\newblock {\em 58th {IEEE} Annual Symposium on Foundations of Computer Science,
  {FOCS} 2017, Berkeley, CA, USA, October 15-17, 2017}. {IEEE} Computer
  Society, 2017.

\bibitem[Yek08]{Yekhanin08}
Sergey Yekhanin.
\newblock Towards 3-query locally decodable codes of subexponential length.
\newblock {\em Journal of the ACM (JACM)}, 55(1):1, 2008.

\bibitem[Zhu17]{DBLP:conf/focs/Zhuk17}
Dmitriy Zhuk.
\newblock A proof of {CSP} dichotomy conjecture.
\newblock In Umans \cite{DBLP:conf/focs/2017}, pages 331--342.

\end{thebibliography}

\end{document}